\newcommand {\kahler}{K\"ahler }
\newcommand{\nn}{\nonumber\\}
\def\Lbox{\fbox{\phantom{a}}}
\def\T{{\rm T}}
\def\wt{\widetilde}
\newcommand{\beq}{\begin{eqnarray}}
\newcommand{\eeq}{\end{eqnarray}}
\newcommand{\p}{\partial}
\newcommand{\NF}{N_{\rm F}}
\newcommand{\hs}[1]{\hspace{#1 mm}}
\newcommand{\bpm}{\begin{pmatrix}}
\newcommand{\epm}{\end{pmatrix}}
\newcommand{\Z}{\mathbb{Z}}
\newcommand{\R}{\mathbb{R}}
\newcommand{\C}{\mathbb{C}}
\newcommand{\tr}{{\rm Tr}}
\newcommand{\D}{\mathcal D}
\newcommand{\ba}{\left(\begin{array}}
\newcommand{\ea}{\end{array} \right)}
\renewcommand{\thefootnote}{\arabic{footnote}}
\theoremstyle{plain}
\newtheorem{thm}{Theorem}[section]
\newtheorem{lmm}{Lemma}[section]
\theoremstyle{definition}
\renewcommand{\thefootnote}{\fnsymbol{footnote}}
\begin{document}
\thispagestyle{empty}
\title{{\LARGE Moduli Spaces of Instantons in Flag Manifold Sigma Models} \\
{\Large - Vortices in Quiver Gauge Theories -}\footnote{
This paper is dedicated to Prof. Norisuke Sakai, who passed away in June 2022.
He contributed to the development of the moduli matrix formalism, which forms the basis of this paper, and built it together with us.
}}

\author[1,3]{Toshiaki Fujimori\footnote{toshiaki.fujimori018@gmail.com}}
\author[2,3,4]{Muneto Nitta\footnote{nitta@phys-h.keio.ac.jp}}
\author[3]{Keisuke Ohashi\footnote{keisuke084@gmail.com}}
\affil[1]{Department of Fundamental Education, Dokkyo Medical University, 880 Kitakobayashi, Mibu, Shimotsuga, Tochigi 321-0293, Japan
}
\affil[2]{Department of Physics, Keio University, Hiyoshi 4-1-1, Yokohama, Kanagawa 223-8521, Japan}
\affil[3]{Research and Education Center for Natural Sciences, Keio University, Hiyoshi 4-1-1, Yokohama, Kanagawa 223-8521, Japan}
\affil[4]{
International Institute for Sustainability with Knotted Chiral Meta Matter(SKCM$^2$), Hiroshima University, 1-3-2 Kagamiyama, Higashi-Hiroshima, Hiroshima 739-8511, Japan
}

\maketitle

\setcounter{page}{1}
\setcounter{footnote}{0}
\renewcommand{\thefootnote}{\arabic{footnote}}
\begin{abstract}
In this paper, we discuss lumps (sigma model instantons)
in flag manifold sigma models. 
In particular, we focus on the moduli space of BPS lumps 
in general K\"ahler flag manifold sigma models.
Such a K\"ahler flag manifold, which takes the form 
$\frac{U(n_1+\cdots+ n_{L+1})}{U(n_1) \times \cdots \times U(n_{L+1})}$, 
can be realized as a vacuum moduli space of a $U(N_1) \times \cdots \times U(N_L)$ quiver 
gauged linear sigma model. 
When the gauge coupling constants are finite, 
the gauged linear sigma model admits BPS vortex configurations,
which reduce to BPS lumps in the low energy effective sigma model
in the large gauge coupling limit. 
We derive an ADHM-like quotient construction of the moduli space of BPS vortices and lumps 
by generalizing the quotient construction in $U(N)$ gauge theories by Hanany and Tong. 
As an application, we check the dualities of the 2d models 
by computing the vortex partition functions using the quotient construction. 

\end{abstract}
\newpage
\tableofcontents
\section{Introduction}
Since their discovery, nonlinear sigma models (NL$\sigma$Ms) have been studied extensively in diverse subjects, including high energy physics and condensed matter physics.
In high energy physics, NL$\sigma$Ms in two dimensions 
share many non-perturbative properties with gauge theories in four dimensions, 
such as asymptotic freedom, dynamical mass gap, confinement and instantons \cite{DAdda:1978vbw,Witten:1978bc,Polyakov:1975yp}, 
and thus they are investigated as toy models of gauge theories in four dimensions.
NL$\sigma$Ms are defined by a map from spacetime to target spaces. 
Among possible target spaces, 
the ${\mathbb C}P^{N-1}$ model with 
the complex projective space 
${\mathbb C}P^{N-1} \simeq SU(N)/[SU(N-1)\times U(1)]$ as the target space has been most considered 
\cite{Eichenherr:1978qa,Golo:1978de,Cremmer:1978bh,DAdda:1978vbw,Witten:1978bc}
together with the $O(N)$ model with $S^{N-1} \simeq O(N)/O(N-1)$ target space.
In particular, the ${\mathbb C}P^{N-1}$ model has instanton solutions, 
which play a central role in the non-perturbative dynamics of the model. 
The ${\mathbb C}P^{N-1}$ model appears as the effective theory of a single 
non-Abelian vortex in supersymmetric $U(N)$ gauge theories \cite{Hanany:2003hp,Auzzi:2003fs,Eto:2005yh,Eto:2006cx,Tong:2005un,Eto:2006pg,Shifman:2007ce,Shifman:2009zz}, 
dense QCD at high density \cite{Balachandran:2005ev,Nakano:2007dr,Eto:2009kg,Eto:2009bh,Eto:2009tr,Eto:2013hoa}, 
and two-Higgs doublets models \cite{Dvali:1993sg,Eto:2018hhg,Eto:2018tnk}.
In the recent development of the resurgence theory,
the ${\mathbb C}P^{N-1}$ model on ${\mathbb R}^1 \times S^1$ with 
a twisted boundary condition along $S^1$ 
has been extensively discussed, where a single ${\mathbb C}P^{N-1}$ instanton is decomposed into $N$ fractional instantons 
with induced domain wall charges that sum to zero \cite{Eto:2004rz,Eto:2006mz}. 
Then, a pair of fractional instanton and anti-instanton called a bion may play an 
essential role in the resurgence theory 
\cite{Dunne:2012ae,Dunne:2012zk,Misumi:2014jua,Misumi:2014bsa,Fujimori:2016ljw,Fujimori:2017oab,Fujimori:2018kqp,Misumi:2019upg,Fujimori:2020zka}.
Self-consistent non-homogeneous solutions of the ${\mathbb C}P^{N-1}$ model 
were discussed in the large-$N$ limit in infinite space 
\cite{Nitta:2017uog}
and a finite interval
\cite{Bolognesi:2016zjp,Betti:2017zcm,Bolognesi:2018njt,
Flachi:2017xat,Flachi:2019jus}. 
In condensed matter physics, the ${\mathbb C}P^{N-1}$ model appears in 
spin chains \cite{Haldane:1982rj,Affleck:1988nt},
deconfined criticality \cite{Senthil:2003eed,PhysRevB.70.144407,Nogueira:2013oza}, 
$SU(N)$ Heisenberg models \cite{Beard:2004jr}
and ultracold atomic gases \cite{Zohar:2015hwa,Laflamme:2015wma}.

Recently, yet another class of target spaces, 
flag manifolds, have attracted great attention from both high energy and 
condensed matter physics \cite{Affleck:2021ypq}.
The flag manifold sigma models are 
NL$\sigma$Ms whose target space is the generalized flag manifold ${\cal F}_{n_1n_2\cdots n_{L+1}}$, 
which is a homogeneous space $G/H$ of the form
\begin{equation}
{\cal F}_{n_1n_2\cdots n_{L+1}} ~\equiv~ G/H 
~\cong~ \frac{U(n_1+n_2+\cdots+ n_{L+1})}{U(n_1) \times U(n_2) \times \cdots \times U(n_{L+1})}
~\cong~ \frac{SU(n_1+n_2+\cdots+ n_{L+1})}{S[U(n_1) \times U(n_2) \times \cdots \times U(n_{L+1})]} .
\label{eq:flagmanifold}
\end{equation}
The flag manifold sigma models appear in various fields of physics as low-energy effective theories 
\cite{Bykov:2014efa,Bykov:2015pka,Bykov:2019jbz,Bykov:2019vkf,Hongo:2018rpy,Tanizaki:2018xto,Ohmori:2018qza,
PhysRevA.93.021606,Amari:2017qnb,Amari:2018gbq,Wamer:2020inf,Kobayashi:2021qfj}:
spin chains \cite{Wamer:2020inf,Bykov:2014efa},
flag manifold sigma model on ${\mathbb R}\times S^1$ \cite{Hongo:2018rpy},
anomaly and topological $\theta$ term \cite{Tanizaki:2018xto,Kobayashi:2021qfj},
world-sheet theories of composite non-Abelian vortices \cite{Eto:2010aj,Ireson:2019gtn}, 
and a non-Abelian vortex lattice \cite{Kobayashi:2013axa}.
As in other sigma models, the flag manifold sigma models admit 
topologically non-trivial configurations \cite{PhysRevA.93.021606,Amari:2017qnb,Amari:2018gbq}.
In particular, there exist sigma model lumps (also called sigma model instantons in two dimensions)
characterized by the second homotopy class $\pi_2(G/H)$ of the flag manifolds
\begin{eqnarray}
\pi_2 (G/H)
%=\pi_2 \left(\frac{SU(n_1+n_2+\cdots+n_{k+1})} {S[U(n_1)\times U(n_2)\times \cdots\times U(n_{k+1})]}\right) 
=\pi_1(H)=\pi_1 \left(S[U(n_1)\times U(n_2)\times \cdots\times U(n_{L+1})]\right) ={\mathbb Z}^L.
\end{eqnarray}
In the case of $L=1$, the target space is a Grassmaniann for which lumps have been studied in Refs.~\cite{Shifman:2006kd,Eto:2007yv}. 
For $L>1$, various properties of lumps have been elucidated in \cite{PhysRevA.93.021606,Amari:2017qnb}. 
In the previous works, many authors have focused on the symmetric points in the space of sigma model coupling constants (decay constants), such as the $\Z_3$ symmetric point in the $SU(3)/U(1)^2$ sigma model \cite{PhysRevA.93.021606,Amari:2017qnb}. 

There is another special subspace in the parameter space related to supersymmetric versions of the flag manifold sigma models \cite{Bando:1983ab,Bando:1984cc,Bando:1984fn,Itoh:1985ha,Itoh:1985jz,Nitta:2003dv}.
When the coupling constants  satisfy a certain relation,
the target space becomes a K\"ahler manifold \cite{Itoh:1985ha,Itoh:1985jz} for which the model can be made supersymmetric \cite{Zumino:1979et}.
In such K\"ahler sigma models, 
sigma model lumps are Bogomol'nyi-Prasad-Sommerfield (BPS) objects,
whose moduli spaces, in general, have rich structures 
due to the property that no static force is exerted among BPS objects. 
In this paper, we study the moduli space of BPS lumps in the flag manifold sigma models. 

A convenient way to describe NL$\sigma$Ms, 
particularly K\"ahler sigma models, is to use gauged linear sigma models (GL$\sigma$Ms)
whose moduli space of vacua gives the target space \cite{Donagi:2007hi}.
When the gauge coupling constants are finite, 
the GL$\sigma$Ms admit semi-local vortex solutions \cite{Vachaspati:1991dz,Achucarro:1999it,Shifman:2006kd,Eto:2007yv}
characterized by the fundamental group $\pi_1$ of the spontaneously broken gauge group. 
Since they reduce to the sigma model lumps in the large gauge coupling limit 
(or low-energy limit), 
the moduli space of BPS vortices is equivalent to that of BPS lumps 
except for the small lump singularities which are resolved 
by the finite gauge coupling constants.  
In the case of $L=1$ (and the case of local vortices), 
the moduli space of BPS vortices is conjectured in terms of a D-brane configuration in string theory \cite{Hanany:2003hp}, 
which is described by half of the Atiyah-Drinfeld-Hitchin-Mannin(ADHM) construction for Yang-Mills instantons.
This half-ADHM formalism was shown to coincide \cite{Eto:2005yh,Eto:2006cx} with one obtained in a purely field-theoretic way called 
the moduli matrix approach \cite{Isozumi:2004vg,Eto:2005yh,Eto:2006cx,Eto:2006pg} and has been used to analyze the structure of the vortex moduli spaces \cite{Eto:2006cx,Eto:2007yv,Eto:2010aj,Hanany:2014hia}.

In this paper, we consider GL$\sigma$Ms that realize 
the \kahler coset manifolds with arbitrary complex structures as its target manifolds: 
we formulate the flag manifold sigma models by quiver gauge theories \cite{Donagi:2007hi}.
We then construct BPS vortices (lumps, instantons), obtain their moduli space through the moduli matrix approach, 
and reformulate it from the viewpoint of the ADHM-like construction.
As applications of the half ADHM moduli space, 
we compute vortex partition functions
and use them to check the Seiberg-like duality in two dimensions.

This paper is organized as follows.
In Sec.~\ref{sec:GLSM}, we formulate the flag manifold sigma models by quiver GL$\sigma$Ms.
In Sec.~\ref{sec:vortices}, we construct BPS vortices and NL$\sigma$M instantons in the flag manifold sigma models.
In Sec.~\ref{sec:kahlerquotient}, the half-ADHM quotient construction of the moduli space of BPS vortices is formulated in the quiver GL$\sigma$Ms and the flag manifold NL$\sigma$Ms, 
and in Sec.~\ref{sec:lumpduality}, the moduli space of sigma model instantons is discussed.
In Sec.~\ref{sec:partition}, we calculate the vortex partition functions and check the Seiberg-like duality.
Sec.~\ref{sec:summary} is devoted to summary and discussion. 
In Appendix \ref{appendix:Riemann}, we clarify the relation between \kahler and Riemannian flag manifolds.
In Appendix \ref{appendix:uniqueness}, we give comments on the proposition on the existence of the BPS solutions addressed in the main text and on the non-existence of other solutions.
Appendix \ref{sec:ZPsiPatches} summarizes coordinate patches of half-ADHM data.
In Appendix \ref{sec:non-singular}, we give a condition of non-singular instantons.
Appendices \ref{sec:ZPsiPatches} and \ref{sec:non-singular} focus on the case of L = 1,
which forms the foundation of general cases with $L>1$. We give explicit proofs of the theorems related to 
the equivalence of the moduli spaces of the moduli matrix and the half-ADHM data, discussed previously in \cite{Eto:2005yh, Eto:2007yv,Eto:2006pg},
in a more comprehensive manner for the sake of self-containment.
In Appendix \ref{appendix:embedding}, we give embeddings of the moduli matrix and the half-ADHM data in the case of $L=1$ to those in the general cases. 
Appendix \ref{sec:brane} describes a D-brane configuration in string theory that provides a quotient construction of the moduli space of BPS vortices and flag manifold sigma model instantons.
In Appendix \ref{sec:tW}, a Lagrange multiplier and its vanishing theorem are described.
In Appendix \ref{sec:TorusActions}, we summarize the torus action on the \kahler quotient corresponding to the vortex moduli space.
In Appendix \ref{appendix:VP}, we derive the integration formula for the vortex partition function.

%\newpage

%If we consider  a map form $\mathbb R^2 \cup \{\infty\} =S^2$ to the flag manifold $ {\cal M}=G/H$,  
%we can defines topological charges 
%\begin{eqnarray}
%\pi_2 (G/H)
%%=\pi_2 \left(\frac{SU(n_1+n_2+\cdots+n_{k+1})} {S[U(n_1)\times U(n_2)\times \cdots\times U(n_{k+1})]}\right) 
%=\pi_1(H)=\pi_1 \left(S[U(n_1)\times U(n_2)\times \cdots\times U(n_{L+1})]\right) ={\mathbb Z}^L.
%\end{eqnarray}
%which are concretely  given by,  with  $1\le i\le L+1$,
%\begin{eqnarray}
%m_i \equiv\frac{i}{2\pi} \int_{\mathbb R^2} \tr [\tau_i\,  d U \wedge d U^\dagger ]
%=\frac{1}{2\pi i} \oint_{S^1} \tr [\tau_i\,  d U  U^\dagger ]\quad  \in \mathbb Z.
%\end{eqnarray}
%Note that one of these charges must be not independent and actually the total charge vanishes as,  using $\sum_i \tau_i ={\bf 1}$, 
%\begin{eqnarray}
%\sum_{i=1}^{k+1} m_i &=&\frac{1}{2\pi i} \oint_{S^1} \tr [d U \wedge  U^{-1}]
%=\frac{1}{2\pi i} \oint_{S^1} d \ln \det U =0,
%\end{eqnarray}
%where we used  that $\ln \det U$ is single-valued since $U\in G=SU(n_1+\cdots+n_{k+1})$. 

\section{Quiver gauge theories and flag manifold sigma models} \label{sec:GLSM}
%\subsection{Strong-coupling limits deriving  nonlinear-sigma models}
In this section, we present the gauged linear sigma model (GL$\sigma$M) description of the flag manifold sigma model.

\subsection{Flag manifolds}
Before describing the GL$\sigma$M for flag manifolds, we first recapitulate the basics of flag manifolds. 
Let $\mathcal V$ be an $N$-dimensional complex linear space and 
$(\mathcal V_0, \mathcal V_1, \cdots , \mathcal V_L, \mathcal V_{L+1})$ be a flag, i.e. a sequence of vector spaces such that 
\beq
\{0\} = \mathcal V_0 \subset {\cal V}_1 \subset \cdots \subset {\cal V}_L \subset {\cal V}_{L+1} = \mathcal V,
\eeq
where $\mathcal V_i~(i=0, 1, \cdots,L+1)$ are linear subspaces with ${\rm dim}_\C \, \mathcal V_i = N_i$ satisfying 
\beq
0=N_0<N_1<\cdots<N_L<N_{L+1}=N.
\eeq
A flag manifold is the space of possible configurations of the flag
\begin{eqnarray}
{\cal F}_{n_1 n_2 \cdots n_{L+1}} \equiv \Big \{ ({\cal V}_0, {\cal V}_1,\cdots, {\cal V}_{L+1}) \, \Big| \,
{\cal V}_i : \mbox{vector space}, \ \{0\} = \mathcal V_0 \subset {\cal V}_1 \subset \cdots \subset {\cal V}_L \subset {\cal V}_{L+1} = \mathcal V \Big\}.
\label{eq:def_flag}
\end{eqnarray}
In this paper, we label flag manifolds 
by a sequence of integers $(n_1,n_2,\cdots,n_L)$ defined by 
\begin{eqnarray}
n_i \equiv {\rm dim}_\C \, \mathcal W_i = N_i - N_{i-1}, \hs{10} 
\bigg(N_i = {\rm dim}_\C \, \mathcal V_i = \sum_{j=1}^i n_j \bigg),
\end{eqnarray}
where $\mathcal W_i$ is the orthogonal complements 
of $\mathcal V_{i-1}$ in $\mathcal V_i$ $(\mathcal V_i = \mathcal V_{i-1} \oplus \mathcal W_i)$. 
A point in the flag manifold $\mathcal F_{n_1 n_2 \cdots n_{L+1}}$ can be specified by a set of matrices $(\xi_1,\xi_2,\cdots,\xi_L)$, where $\xi_i$ is an $N_i$-by-$N$ matrix 
whose rows form a basis of $\mathcal V_i$
\beq
\xi_i = \left(\boldsymbol v_{i}^{(1)} , \, \boldsymbol v_{i}^{(2)} , \cdots, \, \boldsymbol v_{i}^{(N_i)} \right)^{\rm T} \hs{10} \mbox{$\big\{ \boldsymbol v_i^{(a)} \big\} $: basis of $\mathcal V_i$}.
\eeq
Since $\mathcal V_i$ is a linear subspace of $\mathcal V_{i+1}$, the basis vectors of $\mathcal V_i$ can be expressed as linear combinations of those of $\mathcal V_{i+1}$. 
Hence, there exist a $N_i$-by-$N_{i+1}$ matrix $q_i$ such that
\beq
\xi_i = q_i \xi_{i+1},~~(\exists \,q_i : \mbox{full rank $N_i$-by-$N_{i+1}$ matrix}), \hs{5} \xi_{L+1} = \mathbf 1_{L+1}.
\label{eq:xi_cond}
\eeq
Note that this condition implies that $\xi_i$ can be written as $\xi_i = q_i q_{i+1} \cdots q_{L}$. 
Two different sets of matrices $(\xi_1,\xi_2,\cdots,\xi_L)$ and $(\xi_1',\xi_2',\cdots,\xi_L')$ corresponds to the same flag if they are related by a change of basis of $( \mathcal V_1, \mathcal V_2, \cdots , \mathcal V_{L})$, i.e. by a $GL(N_1,\C) \times GL(N_2,\C) \times \cdots \times GL(N_L,\C)$ transformation
\beq
\xi_i' = V_i \xi_i  ~~~ \Longleftrightarrow ~~~
(\xi_1,\xi_2,\cdots,\xi_{L}) \sim (\xi_1',\xi_2',\cdots,\xi_{L}'),
\label{eq:xi_equiv}
\eeq
where $V_i \in GL(N_i,\C) ~ (i=1,2,\cdots,L)$.
Therefore, the flag manifold \eqref{eq:def_flag} 
can be identified with 
the space of the equivalence class \eqref{eq:xi_equiv}
satisfying the condition \eqref{eq:xi_cond}
\beq
\mathcal F_{n_1 n_2 \cdots n_{L+1}} = \left\{ \, (\xi_1,\xi_2,\cdots,\xi_L) \ \bigg| \ \begin{array}{l} \xi_i : \mbox{full rank $N_i$-by-$N$ matrix} \\ \xi_i = q_i \xi_{i+1}, \hs{3} \exists \, q_i : \mbox{full rank $N_i$-by-$N_{i+1}$ matrix}  \end{array} \right\} / \sim \,.
\eeq
Since $\xi_i = q_i q_{i+1} \cdots q_{L}$, 
the flag manifold \eqref{eq:def_flag} can also be regarded as 
the space of the equivalence classes of the matrices 
$(q_1,q_2,\cdots,q_L)$
\beq
\mathcal F_{n_1 n_2 \cdots n_{L+1}} = \left\{ \, (q_1,q_2,\cdots,q_L) \ \Big| \ q_i : \mbox{full rank $N_i$-by-$N_{i+1}$ matrix} \right\} / \sim \,,
\label{eq:flag_def_q}
\eeq
where the equivalence relation for $(q_1,q_2,\cdots,q_L)$ is given by
\beq
q_i' = V_i q_i V_{i+1}^{-1}  ~~~ \Longleftrightarrow ~~~
(q_1,q_2,\cdots,q_L) \sim (q_1',q_2',\cdots,q_L'),
\label{eq:q_equiv}
\eeq
with $V_i \in GL(N_i,\C)~ (i=1,2,\cdots,L)$ and $V_{L+1} = \mathbf 1_N$. 

We can show that the flag manifold is a homogeneous space given by the coset space
\beq
{\cal F}_{n_1 n_2 \cdots n_{L+1}} \cong \frac{U(N)}{U(n_1) \times U(n_2) \times \cdots \times U(n_{L+1})}.
\label{eq:flag_coset}
\eeq
To show this, let us note that  
any set of full rank matrices $(q_1,q_2,\cdots,q_L)$
can be rewritten by using the equivalence relation \eqref{eq:q_equiv} as
\begin{eqnarray}
q_i = \left\{ \begin{array}{ll} q_{i}^o & \mbox{for $1 \le i \le L-1$} \\ 
q_{L}^o \, U & \mbox{for $i = L$} \end{array} \right. ~~~\mbox{with}~~~q_i^o \equiv ( {\bf 1}_{N_i},{\bf 0}_{N_i\times n_{i+1}}),
\label{eq:o_unitary}
\end{eqnarray}
where $U$ is an element of $U(N)$ and 
$q_{i}^o$ are matrices corresponding to the standard flag $(\mathcal V_0^o, \mathcal V_1^o,\cdots,\mathcal V_{L+1}^o)$, i.e.
the flag consisting of the vector space $\mathcal V_i^o$ spanned by the first $i$ fundamental unit vectors.
This indicates that any flag is related to the standard flag by a $U(N)$ transformation. 
For a given flag, the corresponding unitary matrix $U$ is not unique since the flag is invariant under $U(n_1) \times \cdots \times U(n_{L+1})$ transformations, i.e. 
\beq
(q_1^o,\cdots,q_{L-1}^o,q_L^o U) \sim (q_1^o,\cdots,q_{L-1}^o,q_L^o U' U), ~~~\mbox{with}~~~ 
U' = 
{\renewcommand{\arraystretch}{0.9}
{\setlength{\arraycolsep}{0.8mm} 
\ba{ccc} U_1 & & \\ & \ddots & \\ & & U_L \ea}}, \hs{5} U_i \in U(n_i).
\eeq
The unitary matrices $U$ and $U'U$ give the same flag, 
and hence the flag manifold is given by the coset space \eqref{eq:flag_coset}. 

The denominator of the coset space \eqref{eq:flag_coset} 
implies that if $(n_1',n_2',\cdots,n_{L+1}')$ is 
a permutation of $(n_1,n_2,\cdots,n_{L+1})$, 
the flag manifolds ${\cal F}_{n_1 n_2 \cdots n_{L+1}}$ 
and ${\cal F}_{n_1' n_2' \cdots n_{L+1}'}$ are identical 
as a homogeneous space\footnote{
For a permutation $\sigma : (n_1,\cdots,n_{L+1}) \mapsto (n_1',\cdots,n_{L+1}') = (n_{\sigma(1)},\cdots,n_{\sigma(L+1)})$, one can define a diffeomorphism ${\cal F}_{n_1 \cdots n_{L+1}} \to {\cal F}_{n_1' \cdots n_{L+1}'}$ as $(\mathcal V_0,\mathcal V_1, \cdots , \mathcal V_{L+1}) \mapsto (\mathcal V_0',\mathcal V_1', \cdots , \mathcal V_{L+1}')$ with $\mathcal V_i = \mathcal W_{1} \oplus \cdots \oplus \mathcal W_{i}$ and $\mathcal V_i' = \mathcal W_{\sigma(1)} \oplus \cdots \oplus \mathcal W_{\sigma(i)}$.
}. 
However, in general, 
they have different complex structures and
hence they are distinct as complex manifolds.
To make the complex structure manifest, 
let us rewrite an arbitrary set of full rank matrices $(q_1,\cdots,q_L)$ by using the equivalence relation \eqref{eq:q_equiv} as
\beq
q_i = \left\{ \begin{array}{ll} q_{i}^o & \mbox{for $1 \le i \le L-1$} \\ 
q_{L}^o \, \mathcal G  & \mbox{for $i = L$} \end{array} \right., \hs{10} \mathcal G \in GL(N,\C). 
\label{eq:o_GL}
\eeq
In this case, the isotropy group of $(q_1^o,\cdots,q_L^o)$ is the parabolic subgroup $\hat H(n_1,\cdots,n_{L+1}) \subset GL(N,\C)$, i.e. the subgroup whose elements are matrices of the form  
\beq
\hat h = 
{\renewcommand{\arraystretch}{0.8}
{\setlength{\arraycolsep}{0.9mm} 
\ba{cccc}
h_1 & {\bf 0}&\cdots &{\bf 0}\\
\star &h_2&\ddots & \vdots \\
\vdots &\ddots  & \ddots &{\bf 0} \\
\star & \cdots &\star & h_{L+1}
\ea}}, \quad \quad 
\mbox{with} \quad \quad 
\begin{array}{l} 
h_i : \mbox{element of $GL(n_i,\C)$} \\  
\star : \mbox{complex block matrix}
\label{eq:parabolic_element}
\end{array}.
\eeq
Since the matrices $\mathcal G$ and $\hat h \mathcal G$ give the same flag
\beq
(q_1^o,\cdots,q_{L-1}^o,q_L^o \mathcal G) \sim (q_1^o,\cdots,q_{L-1}^o,q_L^o \hat h  \mathcal G),
\eeq
the flag manifold can also be written as the coset space 
\beq
{\cal F}_{n_1,\cdots,n_{L+1}} \cong \, GL(N,\C) / \hat  H(n_1,\cdots,n_{L+1}).
\label{eq:coset_complex}
\eeq
In general, for different ordering of the integers $(n_1,\cdots,n_L)$ and $(n'_1,\cdots,n'_L)$, the parabolic subgroups are not isomorphic to each other and hence give different complex manifolds. 
The only exception is the case with $(n_1',n_2',\cdots,n_{L+1}') = (n_{L+1},\cdots,n_2,n_1)$, 
for which the map between flags $(\mathcal V_0, \mathcal V_1,\cdots,\mathcal V_{L+1}) \mapsto (\mathcal V_0', \mathcal V_1', \cdots, \mathcal V_{L+1}') = ( \mathcal V_{L+1}^\perp , \cdots , \mathcal V_1^\perp, \mathcal V_0^\perp)$
 defines a biholomorphic map between ${\cal F}_{n_1 n_2 \cdots n_{L+1}}$ and ${\cal F}_{n_{L+1} \cdots n_2 n_1}$(see Sec.\,\ref{subsec:duality}). 
Correspondingly, there exists a duality between 
the GL$\sigma$Ms for ${\cal F}_{n_1 n_2 \cdots n_{L+1}}$ and ${\cal F}_{n_{L+1} \cdots n_2 n_1}$.

The holomorphic coordinates of 
${\cal F}_{n_1,\cdots,n_{L+1}}$ 
are the coordinates parameterizing the coset defined by ${\cal G} \sim \hat h {\cal G}$ with $\hat h \in \hat H(n_1,\cdots,n_{L+1})$. 
For example, in the neighborhood of ${\cal G} = \mathbf 1$, which corresponds to the standard flag $(q_1^o,\cdots,q_L^o)$,
we can decompose the matrix ${\cal G}$ as
\beq
{\cal G} = {\cal L} \, {\cal U}, \hs{10}
{\cal U} = 
{\renewcommand{\arraystretch}{0.9}
{\setlength{\arraycolsep}{1.1mm}
\ba{cccc} 
\mathbf 1_{n_1} & \varphi_{12} & \cdots & \varphi_{1,L+1} \\
\mathbf 0 & \mathbf 1_{n_2} & \ddots & \vdots \\
\vdots & \ddots & \ddots & \varphi_{L,L+1} \\
\mathbf 0 & \cdots & \mathbf 0 & \mathbf 1_{n_{L+1}}
\ea}},
\label{eq:inhomogeneous}
\eeq
where ${\cal L}$ is an element of the parabolic subgroup $\hat H(n_1,\cdots,n_{L+1})$ (lower-triangular block matrix)
and $\mathcal U$ is an upper-unitriangular block matrix 
whose blocks $\varphi_{ij}~(1\leq i < j \leq L+1)$ are
$n_i$-by-$n_j$ complex matrices. 
The entries of $\varphi_{ij}$ parameterizes the coset space and hence they can be regarded as the holomorphic coordinates
in this coordinate patch. 
For this matrix $\mathcal G$, 
the set of matrices $(q_1, \cdots, q_L) = (q_1^o,\cdots,q_L^o \mathcal G)$ can be rewritten by using the equivalence relation \eqref{eq:q_equiv} as
\beq
(q_1^o,\cdots,q_L^o \mathcal G) \sim (\mathcal U_1^{-1} q_1^o \,\mathcal U_2, \cdots, \mathcal U_L^{-1} q_L^o \, \mathcal U) 
\hs{3} \mbox{with} \hs{3}
\mathcal U_i^{-1} q_i^o \, \mathcal U_{i+1} = 
{\renewcommand{\arraystretch}{0.8}
{\setlength{\arraycolsep}{1.2mm}
\ba{ccc|c} 
\mathbf 1_{n_1} & & & \varphi_{1,i+1}' \\ 
& \ddots & & \vdots \\
& & \mathbf 1_{n_i} & \varphi_{i,i+1}' \ea, 
}}
\label{eq:q_coordinates}
\eeq
where $\mathcal U_i$ are the first $N_i$-by-$N_i$ submatrices of $\mathcal U$ and the $n_i$-by-$n_j$ block $\varphi_{ij}' = \varphi_{ij} + \mathcal O(\varphi^2)$ are certain polynomials of $\varphi$'s. 
In general, we can find a representative 
in each class $[q_1,\cdots,q_L]$ 
such that the matrices $(q_1,\cdots,q_L)$ are 
holomorphic in $\varphi$'s 
in each coordinate patch. 
%(Modified by K.O
%\beq
%q_i = {\cal U}_i^{-1} q_i^o {\cal U}_{i+1}
%=({\bf 1}_{N_i}, \star ), \quad 
%\xi_i ={\cal U}_i^{-1}\xi_i^o {\cal U}\quad 
%{\rm with~} 
%{\cal U}_i = 
%{\renewcommand{\arraystretch}{0.9}
%{\setlength{\arraycolsep}{1.1mm}
%\ba{cccc} 
%\mathbf 1_{n_1} & \varphi_{12} & \cdots & \varphi_{1,i} \\
%\mathbf 0 & \mathbf 1_{n_2} & \ddots & \vdots \\
%\vdots & \ddots & \ddots & \varphi_{i-1,i} \\
%\mathbf 0 & \cdots & \mathbf 0 & \mathbf 1_{n_{i}}
%\ea}}
%\eeq
%Here "$\star$" indicates a $N_i$-by-$n_{i+1}$ matrix of which elements are regarded as independent complex coordinates related to those in $\cal U$. 
%For instance, 
%\beq
%q_1=({\bf 1}_{n_1}, \varphi_{12}), \quad
%q_2=\left(
%\begin{array}{ccc}
%   {\bf 1}_{n_1}  &  {\bf 0} & \varphi_{13}-\varphi_{12}\varphi_{23}\\
%   {\bf 0}  & {\bf 1}_{n_2} & \varphi_{23}
%\end{array}\right),\dots.
%\eeq

Although the decomposition \eqref{eq:inhomogeneous} is not always possible, there exists at least one element of the symmetric group $\sigma : (1,\cdots,N) \mapsto (\sigma(1),\cdots,\sigma(N))$ such that 
\beq
\mathcal G = \mathcal L_\sigma \, \mathcal U_\sigma \, P_\sigma, \hs{10} P_\sigma \in S_N,
\eeq
where $\mathcal L_\sigma \in H(n_1,\cdots,n_{L+1})$, 
$\mathcal U_\sigma$ is an upper-triangular block matrix 
and $P_\sigma$ is the permutation matrix 
corresponding to the element of the symmetric group $\sigma$.  
For a generic $\mathcal G$, the element $\sigma$ is not unique and hence there are several ways to decompose $\mathcal G$
\beq
\mathcal G = \mathcal L_\sigma \, \mathcal U_\sigma \, P_{\sigma} = \mathcal L_{\sigma'} \, \mathcal U_{\sigma'} \, P_{\sigma'} = \cdots. 
\eeq
The relation between $\mathcal U_{\sigma}(\varphi_\sigma)$ and $\mathcal U_{\sigma'}(\varphi_{\sigma'})$ gives the coordinate transformation $\varphi_{\sigma} \leftrightarrow \varphi_{\sigma'}$ between the patches specified by $\sigma$ and $\sigma'$. 
The ``origin"of each patch 
$\mathcal U_{\sigma}=\mathbf 1_N$ $(\varphi=0)$
corresponds to the flag obtained 
by the permuting the basis of the standard flag $(\mathcal V_0^o,\mathcal V_1^o,\cdots,\mathcal V_L^o, \mathcal V)$ by $\sigma$.
Since the permutations of the basis within $\mathcal W_i^o$ (the orthogonal complements 
of $\mathcal V_{i-1}$ in $\mathcal V_i$) 
do not change the standard flag, it is invariant under the subgroup $S_{n_1} \times \cdots \times S_{n_{L+1}} \subset S_N$.  
Hence the number of the ``origins", which is also
the number of coordinate patches requiered to cover the whole manifold, is $N!/(n_1!n_2!\cdots n_{L+1}!)$.\footnote{
The ``origins" correspond to the fixed points of a torus action $U(1)^N \subset U(N)$ and their number  is given by Euler characteristic of the flag manifold $N!/(n_1!n_2!\cdots n_{L+1}!)$.}
%({\bf discussion:}  $\cal G$ is always decomposed to 
%$LUP$ with an appropriate permutation matrix $P$. 
%  Different $P$ corresponds to a different coordinate patch and thus a transition function given as 
%${\cal U}'=\hat h {\cal U} P$.
%Therefore, Euler chracteristic $=$ the minimal number of patches needed $=$ the number of effective permutations on the flag $=N!/(n_1!n_2!\cdots n_{L+1}!)$.)

Let us see the simplest example of $L=1$. 
In this case, the flag manifold is identified with the set of planes in a vector space, i.e. the Grassmaniann 
\beq
\mathcal F_{n_1 n_2} = \left\{ \mathcal V : \mbox{vector space in $\C^N$} \ | \ {\rm dim}_\C \, \mathcal V = M \right\} = G(M,N),
\eeq
with $M=n_1,~N=n_1+n_2$.
In particular, $\mathcal F_{n_1=1,n_2=1} = \C P^1$ for $n_1=n_2=1$. To see how $q_1$ is parametrized by the holomorphic coordinate $\phi$, let us consider the decomposition \eqref{eq:inhomogeneous} for $GL(2,\C)$. 
Any matrix $\mathcal G \in GL(2,\C)$ can be decomposed into at least one of the forms 
\begin{alignat}{3}
\bullet ~~ \mathcal G &= \ba{cc} A & B \\ C & D \ea = \mathcal L \, \mathcal U, &\hs{10}
\mathcal L &= \ba{cc} a & 0 \\ c & d \ea, &\hs{5}
\mathcal U &= \ba{cc} 1 & \phi \\ 0 & 1 \ea, \label{eq:G_decompose1} \\
\bullet ~~  \mathcal G &= \ba{cc} A & B \\ C & D \ea = \mathcal L' \, \mathcal U' \, P, &\hs{10}
\mathcal L' &= \ba{cc} a' & 0 \\ c' & d' \ea, &\hs{5}
\mathcal U' &= \ba{cc} 1 & \phi'  \\ 0 & 1 \ea, \hs{5} 
P = \ba{cc} 0 & 1 \\ -1 & 0 \ea, 
\label{eq:G_decompose2}
\end{alignat}
where
\beq
a = A, ~~~~ 
c = C, ~~~~ 
d = \frac{A D - B C}{A}, \hs{10}
a' = B, ~~~~ c' = D, ~~~~ d' = \frac{AD-BC}{B},
\eeq
and $\phi$ and $\phi'$ are inhomogeneous coordinates of $\C P^1$
\beq
\phi = \frac{B}{A}, \hs{10}
\phi' = - \frac{A}{B}.
\eeq
The decomposed forms \eqref{eq:G_decompose1} and \eqref{eq:G_decompose2} exist 
except for the matrices with 
$A=0$ and $B=0$, respectively.
Multiplying these decomposed forms of $\mathcal G$ and $q_1^o=(1,0)$, 
we find two different forms of $q_1$, each of which is parametrized by the holomorphic coordinate on the respective coordinate patch
\beq
q_1 \ = \ q_1^o \, \mathcal G \ \sim \ ( 1, 0 ) \ \mathcal U 
\ = \ ( 1 , \phi )
~~~~~~\mbox{or}~~~~~~
q_1 \ = \ q_1^o \, \mathcal G \ \sim \ ( 1, 0 ) \ \mathcal U' P 
\ = \ ( - \phi' , 1 ).
\eeq
Similarly, using the decomposition of the matrix $\cal G$, we can obtain holomorphic parametrizations of $q_i$ also for general $L$.

%In the later subsection, we will discuss duality of the sigma models. In the case of the Grassmaniann $\mathcal F_{n_1 n_2} = G(M,N)$, it is identical to 
%\beq
%\mathcal F_{n_2 n_1} = \left\{ \mathcal W : \mbox{vector space in $\C^N$} \ | \ {\rm dim}_\C \, \mathcal W = N-M \right\} =  G(N-M,N) , 
%\eeq
%since any plane $\mathcal V \in \C^N$ can also be specified by its orthogonal complement $\mathcal W = \mathcal V^{\perp}$.The GL$\sigma$Ms corresponding to $G(M,N)$ and $G(N-M,N)$ are $U(M)$ and $U(N-M)$ gauge theories with $N$ fundamentals, respectively. This is the simplest example of the duality of flag manifolds and the corresponding GL$\sigma$Ms. 

\subsection{GL$\sigma$M for flag manifolds}
In this subsection, we review 
the gauged linear sigma models (GL$\sigma$Ms) corresponding to the flag manifold sigma models. 

As shown in Appendix \ref{appendix:Riemann}, 
the flag manifold $\mathcal F_{n_1 n_2 \cdots n_{L+1}}$ becomes 
a K\"ahler manifold in an $L$ dimensional subspace of the $L(L+1)/2$ dimensional parameter space of Riemann metric on $\mathcal F_{n_1 n_2 \cdots n_{L+1}}$. 
In such a subspace, the flag manifold sigma model 
can be described by 
$U(N_1) \times U(N_2) \times \cdots \times U(N_L)$ 
GL$\sigma$M specified by the quiver diagram \cite{Donagi:2007hi}
\begin{eqnarray}
\xymatrix@M=0pt{
\stackrel{N_1}\bigcirc \ar@{>}[r]^{q_1}&
\stackrel{N_2}\bigcirc \ar@{>}[r]^-{q_2}&\bigcirc \cdots\cdots
 \bigcirc\ar@{>}[r]&
\stackrel{N_{L}}\bigcirc \ar@{>}[r]^{q_L}&
\stackrel{N_{L+1}}\Lbox} 
\hs{10}
(N_1 < N_2 < \cdots < N_L < N_{L+1} = N),
\label{eq:quiver}
\end{eqnarray}
where the $i$-th node corresponds to the $U(N_i)$ gauge group, the $i$-th arrow denotes a
bifundamental field of $U(N_i) \times U(N_{i+1})$
and the last box stands for 
the $U(N_{L+1})=U(N)$ global (flavor) symmetry\footnote{
The overall $U(1)$ of the global symmetry is unphysical since it can be absorbed into the gauge group $U(N_1) \times \cdots \times U(N_L)$.}.
The Lagrangian is written in terms of
$L$ bifundamental scalar fields 
$Q_i$ ($N_i$-by-$N_{i+1}$ matrix, $i=1,\cdots,L$), 
auxiliary $U(N_i)$ gauge fields $A_\mu^i~(i=1,\cdots,L)$ and Lagrange multipliers
$D^i$ ($N_i$-by-$N_i$ matrix, $i=1,\cdots,L$) 
in the adjoint representation of $U(N_i)$
\beq
\mathcal L_0 \ = \ \sum_{i=1}^L \tr \bigg[ (\D_\mu Q_i)(\D^\mu Q_i)^\dagger + D^i \left( Q_i Q_i^\dagger - Q_{i-1}^\dagger Q_{i-1} - r_i \mathbf 1_{N_i} \right) \bigg], 
\label{eq:L_0}
\eeq
where $Q_0 = 0$ and $r_i~(i=1,\cdots,L)$ are 
positive constants
parametrizing the K\"ahler metric. 
The gauge group acts on the bifundamental field $Q_i$ as
\beq
Q_i \rightarrow U_i Q_i U_{i+1}^\dagger, \hs{10} 
U_i \in U(N_i),~~~U_{i+1} \in U(N_{i+1}),
\eeq
The covariant derivatives are defined as
\beq
\D_{\mu} Q_i  = \p_{\mu} Q_i + i (A_\mu^i Q_i - Q_i A_\mu^{i+1}), 
\hs{10}
\D_{\mu} Q_{L} = \p_{\mu} Q_L + i A_\mu^{L} Q_{L}.
\eeq

To see that this GL$\sigma$M describes 
the flag manifold sigma model, 
we need to eliminate $(A_\mu^i,D^i)$ by solving 
their equations of motion. 
The variations of the action 
with respect to the Lagrange multipliers $D^i$ give the constraints 
\beq
Q_i Q_i^\dagger - Q_{i-1}^\dagger Q_{i-1} = r_i \mathbf 1_{N_i}. 
\label{eq:Dterm_0}
\eeq
To solve these constraints, it is convenient to write $Q_i$ as
\beq
Q_i = S_{i}^{-1} q_i S_{i+1}, \hs{10} (S_{L+1}= \mathbf 1_{N}), 
\label{eq:Q_Sq}
\eeq
where $q_{i}~(i=1,\cdots,L)$ are $N_i$-by-$N_{i+1}$ matrices of complex scalar fields 
and $S_{i}~(i=1,\cdots,L)$ are elements of 
the complexified gauge group $GL(N_i,\C)$\footnote{
The matrices $S_i$ can be regarded as (the lowest components of) the auxiliary vector superfields $S_i = e^{-V_i}$
in the supersymmetric version of our system, 
\begin{eqnarray}
 S=\int d^4 x \int d^4 \theta \sum_{i=1}^L \tr\left[ e^{-2V_i}q_i e^{2V_{i+1}} q_i^\dagger + 2 r_i V_i  \right].% + \sum_{i=1}^L S_{V_i},  
 \label{eq:action}
\end{eqnarray}
where $q_i$ are chiral superfields, $r_i$ are called Fayet-Iliopoulos parameters in this context.
}. 
Then, the constraints \eqref{eq:Dterm_0} can be rewritten as 
\beq
q_i \Omega_{i+1} q_i^\dagger \Omega_i^{-1} - \Omega_i q_{i-1}^\dagger \Omega_{i-1}^{-1} q_{i-1} = r_i \mathbf 1_{N_i}, 
\hs{5} \mbox{with} \hs{5}
\Omega_i = S_i S_i^\dagger.
\label{eq:Dterm}
\eeq
These equation can be uniquely solved for $\Omega_i \in GL(N_i \C)$ as long as the $q_i$ are full rank matrices. 
Once we obtain the solution $\Omega_i$ 
for a given set of matrices $(q_1,\cdots,q_L)$, 
we can determine $S_i$ up to gauge transformations.
Note that the expression \eqref{eq:Q_Sq} in terms of $q_i$ and $S_i$ is redundant since the scalar fields $Q_i$
do not change under the complexified gauge transformation 
\beq
q_i \rightarrow V_i q_i V_{i+1}^{-1}, \hs{10}
S_i \rightarrow V_i S_i, \hs{10}
(\Omega_i \rightarrow V_i \Omega_i V_i^\dagger), 
\label{eq:redundancy}
\eeq  
where $V_i$ are arbitrary elements of $GL(N_i,\C)$ and $V_{L+1} = 1$. 
Since $S_i$ are unique (up to gauge transformation) for a given set of matrices $(q_1,\cdots,q_L)$, 
the moduli space of vacua (the set of solutions of \eqref{eq:Dterm} modulo gauge transformations) is given by
\beq
\mathcal M_{\rm vac} = \left\{ \, (q_1,q_2,\cdots,q_L) \ \Big| \ q_i : \mbox{full rank $N_i$-by-$N_{i+1}$ matrix} \right\} / \sim \,,
\eeq
where $\sim$ denotes the equivalence relation $q_i \sim V_i q_i V_{i+1}^{-1}~(i=1,\cdots,L)$. 
This is nothing but one of the representations 
of the flag manifold \eqref{eq:flag_def_q} and 
hence the moduli space of vacua is isomorphic to 
$\mathcal F_{n_1 n_2 \cdots n_{L+1}}$
\beq
\mathcal M_{\rm vac} = \mathcal F_{n_1 n_2 \cdots n_{L+1}}.
\eeq

The general solution of Eq.\,\eqref{eq:Dterm} 
can be obtained as follows. 
As we have mentioned in Eq.\,\eqref{eq:o_unitary} , 
any set of full rank matrices $(q_1,\cdots,q_L)$ 
can be rewritten, 
by using the equivalence relation \eqref{eq:redundancy}, 
into a unitary transform of the standard flag 
$(q_1^o,\cdots,q_{L-1}^o,q_L^o U)$ with $U\in U(N)$. 
Since Eq.\,\eqref{eq:Dterm} is invariant under the $U(N)$ global symmetry,  
the solution $\Omega_i$ to Eq.\,\eqref{eq:Dterm} 
for $(q_1,\cdots,q_L)=(q_1^o,\cdots,q_{L-1}^o,q_L^o U)$ 
is given by the solution $\Omega_i^o$ for the standard flag $(q_1^o,\cdots,q_{L-1}^o,q_L^o)$
\begin{eqnarray}
\Omega_i = \Omega_{i}^o \equiv {\rm diag}\left( \frac1{a_{1i}}{\bf 1}_{n_1}, \frac1{a_{2i}}{\bf 1}_{n_2},\cdots, \frac1{a_{ii}} {\bf 1}_{n_i}\right) ~~~ \mbox{with} ~~~ a_{ij}=\prod_{l=j}^L \left(\sum_{m=i}^l r_m \right) > 0.
\label{eq:psol} 
\label{eq:general_sol}
\end{eqnarray}
From these solution $\Omega_i^o=S_i^o (S_i^o)^\dagger$, we obtain solution to Eq.\,\eqref{eq:Dterm_0} through Eq.\,\eqref{eq:Q_Sq} as
\begin{align}
Q_i = \left\{ 
\begin{array}{ll} \left( {\cal Q}_i^o, {\bf 0}_{N_i \times n_{i+1}} \right) & {\rm for~} i<L \\
\left( {\cal Q}_L^o, {\bf 0}_{N_L \times n_{L+1}} \right)U & {\rm for~} i=L
\end{array} \right., \quad 
{\cal Q}_i^o = {\rm diag}( b_{1i}{\bf 1}_{n_1}, b_{2i}{\bf 1}_{n_2},\cdots, b_{ii} {\bf 1}_{n_i}), \quad
b_{ij}= \Big( \sum_{m=i}^j r_m \Big)^{\frac{1}{2}},
\label{eq:Qo}
\end{align}
up to $U(N_1)\times U(N_2)\times \dots \times U(N_L)$ gauge transformations.
Although this is 
the general solution of the constraint \eqref{eq:Dterm}, 
the complex structure of 
${\cal F}_{n_1,\cdots,n_{L+1}}$ 
is not manifest in this form of the general solution. 
To describe the K\"ahler flag manifold sigma model, 
it is convenient to make the complex structure manifest 
by parametrizing the matrices $(q_1,\cdots,q_L)$
with holomorphic coordinates. 
To this end, let us rewrite $(q_1,\cdots,q_L)$ 
by using the equivalence relation \eqref{eq:redundancy}
into the form $(q_1^o,\cdots,q_{L-1}^o,q_L^o \mathcal G)$ 
given in \eqref{eq:o_GL}.
For $\mathcal G \in GL(N,\C)$, 
we can find a pair of matrices $(\hat h, U)$ such that\footnote{
For a given $\mathcal G \in GL(N,\C)$, the pair $(\hat h, U)$ is unique up to $U(n_1) \times \cdots \times U(n_{L+1})$ transformations $(\hat h, U)\rightarrow (\hat h {U'}^\dagger , U' U)$ with $U' \in U(n_1) \times \cdots \times U(n_{L+1})$.} 
\begin{eqnarray}
{\cal G}= \hat h \, U \hs{3} \mbox{with} \hs{3} 
\hat h \in \hat H(n_1,\cdots,n_{L+1}) \hs{3} \mbox{and} \hs{3} U \in U(N),
\label{eq:iwasawa}
\end{eqnarray}
where $\hat H(n_1,\cdots,n_{L+1}) \subset GL(N,\C)$ 
is the parabolic subgroup given in \eqref{eq:parabolic_element}. 
%By using $\mathcal G \in GL(N,\C)$, 
%the general solution of the constraint \eqref{eq:Dterm}
%can be written as 
%\begin{eqnarray}
%q_i = \left\{ \begin{array}{ll} q_{i}^o & \mbox{for $1 \le i \le L-1$} \\ 
%q_{L}^o \, {\cal G} & \mbox{for $i = L$} \end{array} \right., 
%\hs{10}
%\Omega_i =\hat h_i \Omega_i^o \hat h_i^\dagger, 
%\label{BKMUgauge}
%\end{eqnarray}
Noting that 
\beq
\hat h_L^{-1} q_L^o \, \mathcal G =  q_L^o U, \hs{5} 
\hat h_i^{-1} q_i^o \, \hat h_{i+1} = q_i^o
\hs{3} \mbox{with} \hs{3}
\hat h_i = 
{\renewcommand{\arraystretch}{0.8}
{\setlength{\arraycolsep}{1.0mm} 
\ba{cccc}
h_1 & {\bf 0}&\cdots &{\bf 0}\\
\star &h_2&\ddots & \vdots \\
\vdots &\ddots  & \ddots &{\bf 0} \\
\star & \cdots &\star & h_{i}
\ea}} \in \hat H(n_1,\cdots,n_i),
\label{eq:parabolic}
\eeq
we can rewrite the general solution of the constraint \eqref{eq:Dterm} by using the equivalence relation \eqref{eq:redundancy} as 
\begin{alignat}{3}
(q_1,\cdots,q_L) &= (q_1^o,\cdots,q_L^o U), \hs{10} &
\Omega_i &= \Omega_i^o \\
&\downarrow &
&\downarrow \notag \\
(q_1,\cdots,q_L) &= (q_1^o,\cdots,q_L^o \mathcal G), & 
\Omega_i &= \hat h_i \Omega_i^o \hat h_i^{\dagger},
\label{BKMUgauge}
\end{alignat}
where two forms of the solution are related
by \eqref{eq:redundancy} with $V_i = \hat h_i$.
In this form of the solution, 
the matrices $\{ q_i \}$ 
are parametrized by the holomorphic coordinates 
$\phi^\alpha~(\alpha = 1, \cdots, {\rm dim}_{\C} \, {\cal F}_{n_1,\cdots,n_{L+1}})$,  
which are entries of the block matrices 
$\varphi_{ij}~(1 \leq i < j \leq L+1)$
given in Eq.\,\eqref{eq:inhomogeneous}. 

Next, let us write down the Lagrangian of the NL$\sigma$M 
in terms of the complex coordinates 
$(\phi^\alpha, {\bar \phi}^{{\bar \beta}})$
by regarding them as scalar fields depending on the spacetime coordinates. 
The auxiliary gauge fields $A_\mu^i$ can be eliminated 
by solving their equations of motion 
\beq
i \Big[ Q_i (\D_\mu Q_i)^\dagger - \D_\mu Q_i Q_i^\dagger + Q_{i-1}^\dagger \D_\mu Q_{i-1} - (\D_\mu Q_{i-1})^\dagger Q_{i-1} \Big] = 0.
\eeq
These equations can be solved as
\beq
A_\mu^i = - i S_i^{-1} \left( \p_\mu - \p_\mu \phi^\alpha \frac{\p}{\p \phi^\alpha} \Omega_i \Omega_i^{-1} \right) S_i = - i \p_\mu \bar \phi^{\bar \alpha} S_i^{-1} \frac{\p S_i}{\p \bar \phi^{\bar \alpha}}+i  \p_\mu \phi^\alpha \frac{\p S_i^\dagger}{\p   \phi^\alpha} {S_i^\dagger}^{-1}. \label{eq:Amu}
\eeq
Substituting into the original action \eqref{eq:L_0}, 
we obtain the NL$\sigma$M in terms of the complex coordinates $(\phi^\alpha, {\bar \phi}^{{\bar \beta}})$
\beq
\mathcal L_0 = - g_{\alpha {\bar \beta}} \, \p_\mu \phi^{\alpha} \overline{\p^\mu \phi^\beta} ,
\eeq
where the K\"ahler metric $g_{\alpha {\bar \beta}}$ is given by the formula
\beq
g_{\alpha {\bar \beta}} \ = \ \sum_{i=1}^L r_i \frac{\p}{\p \bar \phi^\beta} \tr \left( \Omega_i^{-1} \frac{\p q_i}{\p \phi^\alpha} \Omega_{i+1} q_i^\dagger \right) 
\ = \ \frac{\p^2}{\p \phi^\alpha \p {\bar \phi^{\bar \beta}}} \sum_{i=1}^L r_i \log \det \Omega_i
\ = \ \frac{\p^2 K}{\p \phi^\alpha \p {\bar \phi^{\bar \beta}}}.
\eeq
This form of the K\"ahler metric implies that 
the K\"ahler potential $K$ takes the form
\beq
K = \sum_{i=1}^L r_i \log \det \Omega_i. 
\eeq
Using the solution of the constraint \eqref{BKMUgauge}, 
we find that
\begin{eqnarray}
\log \det \Omega_i= \log |\det \hat h_i \,|^2 + \log \det \Omega_i^o .
\end{eqnarray}
Although this is the general formula for the K\"ahler potential for 
${\cal F}_{n_1,\cdots,n_{L+1}}$, it is more convenient to express $|\det \hat h_i|^2$ in terms of holomorphic quantities.
Let us consider 
\begin{eqnarray}
\xi_i \equiv q_iq_{i+1}\cdots q_{L-1}q_{L}, 
\label{eq:xi}
\end{eqnarray}
which takes the form $\xi_i = q_i^o \cdots q_L^o {\cal G}$ for the set of matrices $(q_1,\cdots,q_L)=(q_1^o,\cdots,q_L^o \mathcal G)$. 
As we have seen in Eq.\,\eqref{eq:q_coordinates}, 
the matrices $(q_1,\cdots,q_L)$ 
are holomorphically parametrized by the coordinates and hence $\xi_i$ are also holomorphic. 
By using the decomposition \eqref{eq:iwasawa}, the relations $q_j^o \hat h_{j+1}=\hat h_j q_j^o$ 
and $q_i^o {q_i^o}^\dagger = \mathbf 1_{N_i}$, 
we can show that
\beq
\xi_i \xi_i^\dagger = \hat h_i \hat h_i^\dagger. 
\eeq
Thus, we find that the K\"ahler potential is given by
\begin{eqnarray}
K = \sum_{i=1}^L r_i \log \det(\xi_i \xi_i^\dagger),
\label{eq:kahler_target}
\end{eqnarray}
where we have neglected the unphysical constant term $\log \det \Omega_i^o$. 
This expression coincides with the  K\"ahler potential 
constructed in Refs.\,\cite{Bando:1983ab,Bando:1984fn,Bando:1984cc,Itoh:1985ha}.
Note that this formula is applicable 
for any gauge choice other than Eq.\,(\ref{BKMUgauge}) 
since this K\"ahler potential is invariant under 
the complexified gauge transformations (\ref{eq:redundancy}) up to a \kahler transformation.\footnote{
For any gauge choice, we can confirm that $\log \det \Omega_i = \log \det (\xi_i \xi_i^\dagger) + const$
by using the explicit form of the solution of 
(\ref{eq:Dterm}) 
\begin{eqnarray}
\Omega_i
= \frac1{a_{ii}} \xi_i \xi_i^\dagger + \sum_{j=1}^{i-1}\left(
 \frac1{a_{ji}}-\frac1{a_{j+1,i}}\right)  \xi_i\xi_j^\dagger (\xi_j\xi_j^\dagger)^{-1} \xi_j \xi_i^\dagger. \notag
\end{eqnarray}}

\paragraph{Example of K\"ahler potential \\}
Let us see an explicit example of the K\"ahler potential in the case of $L=2$, $n_1=n_2=n_3=1$. 
Let us introduce inhomogeneous complex coordinates 
$(\phi_{12},\phi_{13},\phi_{23})$ 
of the target manifold $G^{\mathbb C}/\hat H = \mathcal F_{1,1,1}$. 
They are contained in the matrix $\mathcal G$ 
in Eq.\,\eqref{eq:inhomogeneous} as
\begin{eqnarray}
{\cal G}= \mathcal L \left( 
\begin{array}{ccc} 
1 & \phi_{12} &\phi_{13} \\ 
0 & 1 & \phi_{23}  \\ 
0 & 0 & 1 
\end{array} \right) \in GL(3,\C),
\end{eqnarray}
where $\mathcal L \in \hat H$ is a lower-triangular matrix. 
Using $\xi_i = q_i^o \cdots q_L^o {\cal G}$, we obtain the holomorphic parametrization of $\xi_i$ as
\beq
\xi_1 = (1,0,0) \, \mathcal G \sim (1,\phi_{12},\phi_{13}), \hs{10}
\xi_2 = \ba{ccc} 1 & 0 & 0 \\ 0 & 1 & 0 \ea \mathcal G \sim 
\ba{ccc} 1 & 0 & \phi_{13} - \phi_{12} \phi_{23} \\ 0 & 1 & \phi_{23} \ea,
\eeq
where we have used the equivalence relation $\xi_i \sim V_i \, \xi_i$ 
with $V_1 \in GL(1,\C)$ and $V_2 \in GL(2,\C)$. 
Inserting these expressions into \eqref{eq:kahler_target}, 
we obtain 
\beq
K = r_1 \log (1+|\phi_{12}|^2+|\phi_{13}|^2) + r_2 \log ( 1 + |\phi_{23}|^2 + |\phi_{13}-\phi_{12}\phi_{23}|^2). 
\eeq
In this way, we can obtain the explicit forms of the K\"ahler potentials for the flag manifolds. 

\paragraph{Coefficients of the beta functions \\}
In two dimensions, the target space metric flows 
under the renormalization group flow
\beq
\mu \frac{\p}{\p \mu} g_{\alpha \bar \beta} = \frac{1}{2\pi} R_{\alpha \bar \beta} + \mathcal O(1/r_i),
\eeq
where $\mu$ is the renormalization scale
and $R_{\alpha \bar \beta}$ is the Ricci curvature. 
For a K\"ahler flag manifold,   
this renormalization group equation for the metric
can be rewritten into that for the parameters $r_i$
\beq
\mu \frac{\p}{\p \mu} r_i = \frac{N_{i+1} - N_{i-1}}{2\pi} + \mathcal O(1/r_i).
\eeq
At the one-loop order, the solutions can be written as 
\beq
r_i = \frac{\beta_i}{2 \pi} \log \frac{\mu}{\Lambda_i} + \cdots, \hs{10} 
\beta_{i} \equiv N_{i+1}-N_{i-1}.
\label{eq:beta_function}
\eeq
where $\Lambda_i$ are dynamically generated scale parameters. 
Since $\beta_{i} > 0$, 
the sigma model coupling constants $1/r_i$ 
become small for $\mu \rightarrow \infty$ and 
hence the system is asymptotically free. 
As we will see below, 
the coefficients $\beta_i$ are also related to 
the dimension of the moduli space of vortices. 

\subsection{Duality at classical level}\label{subsec:duality}
In general, a flag manifold ${\cal F}_{n_{\sigma(1)}, \cdots,n_{\sigma(L+1)}}$ obtained by permuting the integers 
$(n_1,\cdots,n_{L+1}) \rightarrow (n_{\sigma(1)}, \cdots, n_{\sigma(L+1)})$ has a different complex structure 
from that of ${\cal F}_{n_1,\cdots,n_{L+1}}$. 
However, 
when $(n_{\sigma(1)}, \cdots, n_{\sigma(L+1)}) = (n_{L+1},\dots,n_2,n_{1})$ 
the two manifold have an identical complex structure, 
i.e. ${\cal F}_{n_1,n_2,\dots,n_{L+1}}$ and ${\cal F}_{n_{L+1},\dots,n_2,n_{1}}$ are identical as a complex manifold
\begin{align}
{\cal F}_{n_1,n_2,\dots,n_{L+1}} = {\cal F}_{n_{L+1},\dots,n_2,n_{1}}.
\label{eq:dualF}
\end{align}
This equivalence is explicitly given by the biholomorphic map
given in terms of the matrix $\mathcal G$
\begin{eqnarray}
{\cal G} \in GL(N,\mathbb C) \quad \mapsto \quad  {\cal G}_{\rm dual}= R \, ({\cal G}^{\rm T})^{-1}R^\dagger
 \in GL(N,\mathbb C) 
\label{eq:dualG}
\end{eqnarray}
where the matrix $R$ is defined as
\begin{eqnarray}
R= 
{\renewcommand{\arraystretch}{1.0}
{\setlength{\arraycolsep}{0.1mm} 
\left(\begin{array}{cccc}
{\bf 0} & \cdots & ~{\bf 0} & ~~ {\bf 1}_{n_{L+1}} \\
\vdots& \iddots & ~\iddots & {\bf 0}\\
{\bf 0} & ~ {\bf 1}_{n_2}& ~\iddots & \vdots \\
~~{\bf 1}_{n_1} & {\bf 0} & ~\cdots &{\bf 0}
\end{array}\right)}} 
\in U(N). 
\end{eqnarray}
This transformation reduces to the map between 
the equivalence class 
$\mathcal G \sim \hat h \mathcal G$ with $\hat h \in \hat H(n_1,n_2,\cdots,n_{L+1})$ and
$\mathcal G_{\rm dual} \sim \hat h' \mathcal G_{\rm dual}$
with $\hat h' = R \, (\hat h^{\rm T})^{-1} R^\dagger$. 
We can show that matrices of the form 
$ R \, (\hat h^{\rm T})^{-1} R^\dagger$ 
are elements of $\hat H(n_{L+1},\dots,n_2,n_1)$
\begin{align}
\hat h' ~\in~ 
R \, \hat H(n_1,n_2,\dots,n_{L+1})^{{\rm T}-1} R^\dagger 
~ \cong ~ \hat H(n_{L+1},\dots,n_2,n_1) .
\end{align}
Therefore, the transformation \eqref{eq:dualG} gives a one to one map between the flag manifolds
$GL(N,\C)/\hat H(n_1,\cdots,n_{L+1})$ and
$GL(N,\C)/\hat H(n_{L+1},\cdots,n_1)$.
Suppose $\mathcal G$ can be decomposed into $\mathcal L$ and $\mathcal U$ as given in Eq.\,\eqref{eq:inhomogeneous} ($\mathcal G = \mathcal L \, \mathcal U$).
Then, $\mathcal G_{\rm dual}$ can be decomposed into $\mathcal L_{\rm dual} \in \hat H(n_{L+1},\dots,n_2,n_1)$ and 
an upper block-triangular matrix $\mathcal U_{\rm dual}$ as 
\beq
\mathcal G_{\rm dual} = \mathcal L_{\rm dual} \, \mathcal U_{\rm dual}
\eeq
where $\mathcal L_{\rm dual}$ and $\mathcal U_{\rm dual}$ are related to $\mathcal L$ and $\mathcal U$ as 
\beq
\mathcal L_{\rm dual} \equiv R \, (\mathcal L^{\rm T})^{-1} R^\dagger, \hs{10}
\mathcal U_{\rm dual} \equiv R \, (\mathcal U^{\rm T})^{-1} R^\dagger. 
\label{eq:dual_U}
\eeq
Since the complex coordinates are contained in the matrices $\mathcal U$ and $\mathcal U_{\rm dual}$ (see Eq.\,\eqref{eq:inhomogeneous}),  
the relation between $\mathcal U$ and $\mathcal U_{\rm dual}$ gives an explicit holomorphic coordinate transformation between the complex coordinates of $\mathcal F_{n_1, \dots, n_{L+1}}$ and $\mathcal F_{n_{L+1},\cdots,n_1}$.
 
Correspondingly, by replacing the ranks of gauge groups
and the FI parameters as 
\begin{eqnarray}
N_i \to N_i^{\rm dual }=N-N_{L+1-i}, \hs{5} 
r_i \to r_i^{\rm dual}=r_{L+1-i},
\label{eq:dualFIpara}
\end{eqnarray}
we obtain a dual GL$\sigma$M and an effective NL$\sigma$M 
whose K\"ahler potential is identical to the original one up to a K\"ahler transformation
\begin{align}
\sum_i^L r_i^{\rm dual}\ln \det \xi_i^{\rm dual} (\xi_i^{\rm dual})^\dagger = 
\sum_i^L r_i \ln \det \xi_i \xi_i^\dagger
+{\rm \hbox{K\"ahler trf}.},
\end{align}
with $\xi_i^{\rm dual} =({\bf 1}_{N_i^{\rm dual}},{\bf 0} ) \, {\cal G}_{\rm dual}$\footnote{
Here we have used the following identities for the determinants of $\xi_i \xi_i^\dagger$ and $\det \xi_i^{\rm dual} (\xi_i^{\rm dual})^\dagger$
\begin{align}
\det \xi_i \xi_i^\dagger= \det \hat h_i \hat h_i^\dagger =\prod_{j=1}^i |\det  h_j |^2,\quad
\det \xi_i^{\rm dual} (\xi_i^{\rm dual})^\dagger = \prod_{j=L-i+2}^{L+1}  |\det  h_j^{{\rm T}-1}|^2, \notag
\end{align}
where $h_i$ are the matrices given in Eq.(\ref{eq:iwasawa}). Taking determinants of the 
both sides of Eq.(\ref{eq:iwasawa}), we find that
\begin{align}
\sum_{i=1}^{L+1}\ln \det h_i= \ln  \det  ({\cal G}) \notag 
\end{align}
which is holomorphic and thus, can be removed using a K\"ahler transformation.
}. 
This shows that two GL$\sigma$Ms are equivalent at the classical level.
In Sec.\,\ref{sec:partition}, we will check this duality at the quantum level by comparing the vortex partition functions. 

As an example, let us consider the $L=1$ case. 
In the case of $U(M)$ gauge theory with $N$ fundamentals, 
the moduli space of vacua $\mathcal M_{\rm vac}$ is the Grassmaniann 
\beq
\mathcal F_{n_1 n_2} = \left\{ \mathcal V : \mbox{vector space in $\C^N$} \ | \ {\rm dim}_\C \, \mathcal V = M \right\} = G(M,N),
\eeq
with $M=n_1,~N=n_1+n_2$.
The dual theory is $U(N-M)$ gauge theory with $N$ fundamentals 
and its $\mathcal M_{\rm vac}$ is given by
\beq
\mathcal F_{n_2 n_1} = \left\{ \mathcal W : \mbox{vector space in $\C^N$} \ | \ {\rm dim}_\C \, \mathcal W = N-M \right\} = G(N-M,N) , 
\eeq
These spaces are identical since any plane $\mathcal V \in \C^N$ can also be specified by its orthogonal complement $\mathcal W = \mathcal V^{\perp}$. 
Let us see the explicit coordinate transformation between these spaces. Let $\varphi$ and $\tilde \varphi$ be $M$-by-$N-M$ and $N-M$-by-$M$ matrices whose entries are inhomogeneous coordinates of the Grassmaniann $G(M,N)$ and $G(N-M,M)$, respectively. 
They are contained in the matrices 
\beq
\mathcal U = \ba{cc} \mathbf 1_{n_1} & \varphi ~~ \\ \mathbf 0 & \mathbf 1_{n_2} \ea, \hs{10}
\mathcal U_{\rm dual} = \ba{cc} \mathbf 1_{n_2} & \tilde \varphi ~~ \\ \mathbf 0 & \mathbf 1_{n_1} \ea.
\eeq
From the duality relation $\mathcal U_{\rm dual} = R \, (\mathcal U^{\rm T} )^{-1} R^\dagger$, we can read off the coordinate transformation between $\varphi$ and $\tilde \varphi$ as
\beq
\tilde \varphi = -\varphi^{\rm T}. 
\eeq
This is the simplest example of the duality of the flag manifold sigma models and the corresponding GL$\sigma$Ms. 

\subsection{Sigma model instantons} 
For any K\"ahler manifold $\mathcal M$, 
the non-linear sigma model with target space $\mathcal M$ 
admits BPS instanton solutions. 
They are given by holomorphic maps $\p_{\bar z} \phi^i =0$, which saturates the lower bound of the action
\beq
\int d^2 x \, \mathcal L \ = \ \int d^2 x \, g_{i \bar j} \p_M \phi^i \p^M \bar \phi^j \ = \ 4 \int d^2 x \, |\!| \p_{\bar z} \phi^i |\!|^2 + \int_{\R^2} i g_{i \bar j} d\phi^i \wedge d \bar \phi^j \ \geq  \ \int_{\R^2} i g_{i \bar j} d\phi^i \wedge d \bar \phi^j,
\eeq
where $z=x_1+ix_2$ and $|\!| \p_{\bar z} \phi^i |\!|^2 = g_{i \bar j} \p_z \phi^i \p_{\bar z} \bar \phi^j$ is the norm of $\p_{\bar z} \phi^i$ with respect to the K\"ahler metric $g_{i \bar j}$. 
The lower bound is given by the topological charge obtained by integrating the pullback of the K\"ahler form $\phi^\ast(\omega) = \frac{i}{2} g_{i \bar j} d\phi^i \wedge d \bar \phi^j$. 
Once we fix the configuration at the spatial infinity to a point on the target space, $\phi$ can be viewed as a map $\phi : \R^2 \cup \{\infty\} = S^2 \rightarrow \mathcal M$ 
and hence the instanton configurations are classified by $\pi_2(\mathcal M)$. 

In the case of the flag manifold, 
the topological charge is given by
\beq
\int_{\R^2} i g_{i \bar j} d\phi^i \wedge d \bar \phi^j = 
i \sum_{i=1}^L r_i \int_{\R^2} \p \bar \p \log \det \xi_i(z) \xi_i(z)^\dagger = - \frac{i}{2} \sum_{i=1}^L r_i \oint_{S^1_\infty} ( dz \p_z  - d \bar z \p_{\bar z} ) \log \det \xi_i(z) \xi_i(z)^\dagger,
\eeq
where we have used the explicit form of the K\"ahler form obtained from the K\"ahler potential \eqref{eq:kahler_target}.
Assuming that the asymptotic form of $\det \xi_i(z) \xi_i(z)^\dagger$ for large $|z|$ is given by
\beq
\det \xi_i(z) \xi_i(z)^\dagger = |z|^{2k_i} + \cdots, \hs{10} (|z| \rightarrow \infty),
\eeq
we can determine the topological charge as
\beq
\int_{\R^2} i g_{i \bar j} d\phi^i \wedge d \bar \phi^j = 2 \pi \sum_{i=1}^L r_i k_i,
\eeq
where $(k_1,\cdots,k_L) \in \Z^L = \pi_2(\mathcal F_{n_1, \dots, n_{L+1}})$ are topological numbers.
The space of instanton solutions satisfying the boundary condition with fixed topological numbers is called 
the moduli space of sigma model instantons. 
As is well known, 
there exist small instanton singularities 
in the moduli space of sigma model instantons. 
Such singularities can be resolved by introducing 
the kinetic terms for the gauge fields in the GL$\sigma$M. 
In the next section, 
we discuss vortex solutions which 
can be viewed as resolved the sigma model instantons in the GL$\sigma$M.

%%%%%%%%%%%%%%%%%%%%%%%%%%%%%%%%%%%%%%%%%%%%%%%
%%%%%%%%%%%%%%%%%%%%%%%%%%%%%%%%%%%%%%%%%%%
\section{BPS vortices in GL$\sigma$M}\label{sec:vortices}
\def\mH{\xi}
\def\mD{\mathfrak D}
\def\mJ{\mathfrak J}
\def\mV{\mathcal V}
\def\mU{\mathcal U}
\def\mq{q}
%%%%%%%%%%%%%%%%%%%%%%%%

\subsection{BPS equations and moduli matrices}
In this section, we discuss BPS vortices 
in the framework of the GL$\sigma$M 
with finite gauge coupling constants $g_i$
\begin{eqnarray}
{\cal L} = \sum_{i=1}^L
\tr\left[ \frac1{2g^2_i} F_{\mu\nu}^i F^{i \mu\nu} +{\cal D}_\mu Q_i {\cal D}^\mu Q_i^\dagger - \frac{1}{g_i^2} D_i^2 + D_i \left(Q_i Q_i^\dagger-Q_{i-1}^\dagger Q_{i-1}-r_i {\bf 1}_{N_i}\right) \right],  
\end{eqnarray}
where $F_{\mu \nu}^i = \p_\mu A_\nu^i - \p_\nu A_\mu^i + i [A_\mu^i , A_\nu^i]$ are the field strength for the $i$-th gauge field and $g_i$ are gauge coupling constants.
One can go back to the original GL$\sigma$M \eqref{eq:L_0} 
by taking the $g_i \rightarrow \infty$ limit.
In the vacua of this system, 
the gauge symmetry $U(N_1) \times \cdots \times U(N_L)$ is spontaneously broken 
and hence this model admits BPS vortex configurations, 
satisfying the boundary conditions
\begin{align}
\lim_{|z|\to \infty }Q_i = U_i(\theta)^\dagger 
\left( {\cal Q}_i^o, {\bf 0}\right) U_{i+1}(\theta), \hs{7} 
\lim_{|z| \to \infty } A_\mu^i dx^\mu = - iU_i(\theta)^\dagger d U_i(\theta),
\label{eq:boundary_cond}
\end{align}
where ${\cal Q}_i^o$ is the constant square matrices defined in Eq.\,\eqref{eq:Qo} and $U_i(\theta) \in U(N_i)$ are nontrivial elements of the gauge group depending on $\theta = \arg z$.
Each $U_i(\theta)$ carries the topological charges of 
$\pi_1(U(N_i)) = \mathbb Z$, 
which is related to the magnetic flux of $i$-th overall $U(1)$ factor
\begin{eqnarray}
k_i \equiv - \frac1{2\pi} \int d^2x \, \tr F_{12}^i = \frac{i}{2\pi} \int_0^{2\pi} d \theta \, \tr \left[ U_i(\theta)^\dagger \p_\theta U(\theta) \right] \in \mathbb Z \quad {\rm for~} 1\le i\le L. 
\label{eq:charge}
\end{eqnarray}
In the large gauge coupling limit $g_i \rightarrow \infty$, 
these vortex solutions reduce to the sigma model instantons 
(or singular configurations). 
Eliminating the auxiliary field $D_i$ 
by solving their equations of motion
\beq
D_i = \frac{g_i^2}{2} \left(Q_i Q_i^\dagger-Q_{i-1}^\dagger Q_{i-1}-r_i {\bf 1}_{N_i}\right),
\eeq
we can rewrite the Lagrangian as
\begin{eqnarray}
\int d^2x \, {\cal L} \ = \ \sum_{i=1}^L \int d^2x \, \tr \left[\frac{g_i^2}4 {\cal E}_i^2+ 4 \tilde{\cal E}_i \tilde{\cal E}_i^\dagger - r_i F_{12}^i \right] 
\ \ge \ 2\pi \sum_{i=1}^L r_i k_i
\end{eqnarray}
where we have defined 
\beq
\tilde{\cal E}_i \hs{-2} & \equiv & \hs{-2} {\cal D}_{\bar z} Q^i,  \\
{\cal E}_i \hs{-2} & \equiv & \hs{-2}
Q_i Q_i^\dagger-Q_{i-1}^\dagger Q_{i-1}-r_i {\bf 1}_{N_i} -\frac{2}{g_i^2} F_{12}^i
\eeq
with ${\cal D}_{\bar z} Q^i \equiv \frac12 ({\cal D}_1+i {\cal D}_2) Q^i$.
For a fixed set of topological charges $(k_1,\cdots,k_L)$, 
the action is minimized when ${\cal E}_i=\tilde{\cal E}_i=0$, 
i.e. the following equations are satisfied
\beq
0 \hs{-2} &=& \hs{-2} {\cal D}_{\bar z} Q^i, \label{eq:BPSQ} \\
0 \hs{-2} &=& \hs{-2} Q_i Q_i^\dagger-Q_{i-1}^\dagger Q_{i-1}-r_i {\bf 1}_{N_i} -\frac{2}{g_i^2} F_{12}^i. 
\label{eq:BPSD}
\eeq
These equations are called the BPS equations for vortices. 

Solutions to these equations describe
configurations of vortices, that is, 
squeezed magnetic fluxes in the Higgs phase.
The vortices in this system are classified 
as {\it local vortices} or {\it semi-local vortices} depending on their asymptotic behaviors 
at the spatial infinity. 
The local vortex exhibits 
an exponentially dumping behavior and 
is obtained by, roughly speaking, embedding  well-known Abrikosov-Nielsen-Olesen vortex into the matrix elements.
The semi-local vortex has a (decreasing) 
power-law behavior due to the tails of the massless Nambu-Goldstone fields parameterizing the moduli space of vacua $\mathcal F_{n_1, \dots, n_{L+1}}$.
The size of a semi-local vortex is one of moduli parameters of vortices and hence it can become arbitrarily large without changing the energy. 
However, it has a minimum size of order ${\mathcal O}(1/g\sqrt{r})$. 
In the small size limit, the semi-local vortex reduces to the local type.

\paragraph{The master equation and the boundary condition \\}
Let us rewrite the BPS equations into a convenient form.
The first set of BPS equations (\ref{eq:BPSQ}) 
can be solved as
\begin{eqnarray}
Q_i= S_i^{-1} \mq_i(z) \, S_{i+1}, \hs{10}
A_{\bar z} \equiv \frac{1}{2} (A_1^i+iA_2^i) = - i S_i^{-1} \partial_{\bar z} S_i.\label{eq:QAtoS}
\end{eqnarray}
where $S_i$ are elements of $GL(N_i,\C)~(S_{L+1}={\bf 1})$
and $q_i(z)$ are $N_i$-by-$N_{i+1}$ matrices
whose entries are arbitrary polynomials of $z=x_1+i x_2$.\footnote{Although these entries are arbitrary entire functions in general, we can assume that they are polynomials without loss of generality 
as shown in Appendix \ref{subsec:patch_L=1}.} 
Note that the description in terms of $q_i$ and $S_i$ 
is redundant since the following transformation does not change the original fields $Q_i$ and $A_i$:
\beq
q_i(z) \rightarrow V_i(z) q_i(z) V_{i+1}^{-1}(z), \hs{10}
S_i(z,\bar z) \rightarrow V_i(z) S_i(z,\bar z), \hs{10}
\mbox{with} ~~~ V_i(z) \in GL(N_i,\mathbb C).
\label{eq:V-transf}
\eeq
This is a complexified gauge transformation depending on the holomorphic coodinate $z=x_1+i x_2$. 
We call this transformation the {\it V-transformation}.
To study the moduli space of vortices, 
it is convenient to define $N_i$-by-$N$ matrices $\mH_i(z)$ as
\begin{eqnarray}
\mH_i(z) \equiv \mq_i(z) \mq_{i+1}(z)\cdots \mq_{L}(z), \hs{10}
{\rm for~} 1\le i\le L.
\end{eqnarray}
For a given set of matrices $(q_1(z),\cdots,q_L(z))$, 
the matrices $S_i(z,\bar z)$ are determined 
(up to gauge transformations)
by solving the second BPS equation \eqref{eq:BPSD}, 
which can be rewritten in terms of 
$\Omega_i = S_i S_i^\dagger \in GL(N_i,\mathbb C)$ as
\begin{eqnarray}
q_i \Omega_{i+1} q_i^\dagger \Omega_i^{-1} - \Omega_i q_{i-1}^\dagger \Omega_{i-1}^{-1} q_{i-1} = r_i {\bf 1}_{N_i}
- \frac{4}{g_i^2} \partial_{\bar z} \left( \p_z \Omega_i \Omega_i^{-1} \right), \hs{10} (q_0=0).
\label{eq:mastereq}
\end{eqnarray}
We call this set of equations {\it the master equation} for vortices. 
Once we determine $\Omega_i$ by solving the master equation, we can obtain the original fields $Q_i$ and $A_\mu$ satisfying the BPS equation for vorties. 
The boundary conditions for $\Omega_i$ can be determined as follows.
Without loss of generality, 
we can fix the vacuum at the spatial infinity 
to the point on $\mathcal F_{n_1, \dots, n_{L+1}}$ 
corresponding to the standard flag 
${\mathcal V}_0^o \subset {\mathcal V}_1^o \subset \cdots \subset {\mathcal V}_N^o \subset {\mathcal V}$ as
\beq
\xi_i(z)\quad ``\to" \quad \xi_i^o = q_i^o q_{i+1}^o \cdots q_L^o=( \mathbf 1_{N_i}, \mathbf 0_{N_i,N-N_i} ),   \quad {\rm for~} |z| \to \infty,
\eeq 
up to the redundancy of the $V$-transformation \eqref{eq:V-transf}. 
To precisely describe what the limit ``$\to $" means, let us decompose the $N_i$-by-$N$ matrix $\xi_i(z)$ 
into a $N_i$-by-$N_i$ matrix $\mD(z)$ and a $N_i$-by-$(N-N_i)$ matrix $\widetilde \mD(z)$
as $\xi_i(z)=(\mD_i(z), \widetilde {\mD}_i(z))$.
This decomposition corresponds to the orthogonal decomposition of the vector space $\mathcal V = \mathcal V_i^o \oplus \mathcal V_i^{o\perp}$
\beq
\mH_i(z)=(\mD_i(z), \widetilde {\mD}_i(z)) ~~~ 
\leftrightarrow ~~~ \mD_i(z) \ \equiv \ \xi_i(z) \, {\xi_i^o}^\dagger, \hs{5}
\widetilde {\mD}_i(z) \ \equiv \ \xi_i(z) \, {\xi_i^{o\perp}}^\dagger,
\eeq
\label{eq:def_mD}
where 
$\xi_i^{o\perp} = (\mathbf 0_{N-N_i,N_i}, \mathbf 1_{N-N_i})$.   
%Due to the redundancy of $\{ q_i, S_i\}$, we can take different gauges for  different spatial domains. 
%In the vicinity of the spatial infinity, $D_\infty,   ( \{ \infty\} \subset D_\infty \subset S^2) $,  it is convenient to 
%take a singular gauge analog  $\{ q_i^{\rm bd}, S_i^{\rm bd}\}$.
Using the $N_i$-by-$N_i$ matrix $\mD_i(z)$, 
we set the boundary conditions for $\xi_i(z)$ as 
\begin{align}
\lim_{|z|\to\infty} \{\mD_i(z)^{-1} \xi_i(z)\} = \xi_i^o. 
\label{eq:BC_xi}
\end{align}
Correspondingly, the boundary coniditions for $\{q_i, S_i\}$ are given by
\begin{align}
\lim_{|z|\to\infty}\left\{\mD_i(z)^{-1} q_i(z) \, \mD_{i+1}(z)\right\}
= q_i^{\rm o}, \hs{10}
\lim_{|z|\to \infty} \left\{  \mD_i(z)^{-1}S_i(z,\bar z) \right\}
=S_i^{\rm o} U_i(\theta),  
%\left(  \lim_{|z|\to \infty} \Omega_i^{\rm bd}(z,\bar z)=\Omega_i^{\rm o}\right)  
\label{eq:qSboundaryCond}
\end{align}
where $\mD_{L+1}(z) = \mathbf 1_N$,
$U_i(\theta)\in U(N_i)$ and 
$S_i^o$ is a matrix such that $\Omega_i^o = S_i^o (S_i^o)^\dagger$ is the diagonal matrix defined in Eq.\,\eqref{eq:general_sol}. 
These boundary conditions correspond to those for the original quantities Eq.\,\eqref{eq:boundary_cond}. 
%\footnote{
%Since the gauge fields $A_\mu^i$ behave asymptotically as
%\begin{align}
%A_r^i \sim 0,\quad A_\theta^i\sim  - iU_i(\theta)^\dagger \partial_\theta U_i(\theta), 
%\end{align}
%the vorticities $k_i$ can be calculated in a different, but more familiar way as
%\begin{align}
%k_i =-\oint  d\theta \tr [A_\theta^i]=i \oint d\theta \partial_\theta \log \det U_i(\theta).
%\end{align}
%}
Note that the transformations with $\mD_i(z)$ in Eq.\eqref{eq:qSboundaryCond}
can be regraded as local $V$-transformations analogous to singular gauge transformations, which are regular in the asymptotic region $|z| \rightarrow \infty$. 
Substituting these settings to the master equation, 
we find that the asymptotic behavior of $\Omega_i$ is given by
\begin{align}
\mD_i(z)^{-1} \, \Omega_i(z\,\bar z) \, \mD_i(z)^{\dagger-1} \ = \ \Omega_i^o+ {\cal O}(|z|^{-2}) \quad {\rm for} \quad |z| \to \infty. 
\label{eq:asymptoticS}
\end{align}
These are the boundary conditions for the master equations \eqref{eq:mastereq}. 

In terms of these matrices satisfying the boundary conditions given above, 
the vortex numbers $k_i$ defined in \eqref{eq:charge} are given by
\begin{eqnarray}
k_i=\frac1{4\pi i}\oint (dz \partial_z - d{\bar z} \partial_{\bar z}) \, \log | \det  \mD_i(z) |^2, 
\label{eq:chargeH}
\end{eqnarray}
where we have used the following formula for 
the magnetic flux and its asymptotic behavior
\begin{eqnarray}
\tr F_{12}^i \,
=-2 \p_{\bar z}\p_z \log \det \Omega_i,\qquad \log \det \Omega_i= \log |\det \mD(z)|^2+{\rm const.}+{\cal O}(|z|^{-2}) 
 ~~~ \mbox{for} ~~~ |z| \rightarrow \infty. 
\label{eq:flux_asymptotic}
\end{eqnarray}
Eq.\,\eqref{eq:chargeH} implies that 
the matrices $(q_1(z),\cdots,q_L(z))$ must be chosen such that $\det\mD_i(z)$ has $k_i$ zeros in the topological sector with vortex numbers $(k_1,\cdots,k_L)$.\footnote{
Using the $V$-transformation,   $\det\mD_i(z)$ can be set to be a monic polynomial of degree $k_i$.}
The zeros of $\det \mD_i(z)$ can be regarded as the positions of vortices.
To see this, let us consider the $p$-th moment of 
the $i$-th magnetic fluxes. 
If all vortices are well separated from each other, the magnetic flux $F_{12}^i$ is symmetrically localized around $k_i$ distinct points $\{z_{(i,\alpha)}| \alpha=1,2,\dots,k_i\}$.
In such a case, the $p$-th moment is given by
\begin{align}
\langle z^p \rangle_i \equiv -\frac1{2\pi} \int d^2x \, z^p \, \tr[F_{12}^i] =
\sum_{\alpha=1}^{k_i} (z_{(i,\alpha)})^p,
\label{eq:moment_position}
\end{align}  
On the other hand, using \eqref{eq:flux_asymptotic}, we can calculate $\langle z^p \rangle_i$ as
\begin{align}
\langle z^p \rangle_i = 
\frac1{2\pi i} \lim_{R\to \infty }\oint_{|z|=R} dz \, z^p \partial_z \log \det \mD_i(z) =
\sum_{\alpha=1}^{k_i} (w_{(i,\alpha)})^p,
\label{eq:moment_position_2}
\end{align}
where $\{w_{(i,\alpha)}| \alpha=1,2,\dots,k_i\}$ are zeros of $\det \mD_i(z)$.
From \eqref{eq:moment_position} and \eqref{eq:moment_position_2}, the vortex positions 
$\{ z_{i,\alpha} \}$ can be identified with the zeros 
$\{ w_{i,\alpha} \}$ of $\det \mD_i(z)$\footnote{
Since $\mD_i(z)$ can be reconstructed as $\det \mD_i(z)=z^{k_i}\exp(-\sum_{p=1}^\infty \langle z^p \rangle_i /(p z^p) )$,  
each zero $w_{(i,\alpha)}$ can always be uniquely read from any configuration of the magnetic flux.}.
%\footnote{For instance, in the limit $g_i \rightarrow \infty$, the magnetic flux of a local vortex configuration consists of superposition of $k_i$ delta functions localized at the zeros of $\det\mD_i(z)$.}
Extending this identification, we adopt $\{w_{(i,\alpha)}\}$ as the definition of the vortex positions 
even if several vortices are in close proximity and their flux profiles are overlapping. 
 
\paragraph{Uniqueness and existence of solution \\}
Under the boundary conditions given above, 
one can prove the uniqueness of the solution $\{\Omega_i\}$ to the set of the master equations \eqref{eq:mastereq} with a given set of $\{q_i(z)\}$ (see Appendix \ref{appendix:uniqueness} for the proof).
In the following, we only assume that 
\begin{enumerate}
\item[]
{\it there exists a solution for the set of the master equations \eqref{eq:mastereq}  
with a given set of $q_i=q_i(z)$ such that 
$ \Omega_i=\Omega_i(z,\bar z)$ is an element of $GL(N_i,\mathbb C)$ everywhere 
and all entries of $\Omega_i$ are smooth functions of $z$ and $\bar z$.}
\end{enumerate}
At least, this assumptions is true in the large coupling limit $g_i^2\to \infty$
as will be explained later.
The set of the master equations is a generalization of the so-called Taubes equation, where this assumption has been shown to be true \cite{Taubes:1979tm}. 

\paragraph{The moduli matrices and the moduli space of vortices \\}
If the above assumption for 
the existence of the solution is true, 
there is a one-to-one correspondence between 
the moduli space of vortices and the set of the equivalence classes defined by the $V$-transformation
\beq
q_i(z) \sim V_i(z) \, q_i(z) \, V_{i+1}^{-1}(z).
\label{eq:V-transf_q}
\eeq 
Hence the matrices $q_i(z)$ are called {\it the moduli matrices}. 
As shown in Appendix \ref{sec:ZPsiPatches}, 
all the entries of the matrices $q_i(z)$ and $V_i(z)$ can be assumeed to be polynomials.
From Eqs.\,\eqref{eq:BC_xi} and \eqref{eq:chargeH}, 
we find that the boundary conditions for vortex configurations 
carrying the topological charges $\{k_i\}$ are expressed in terms of $\mH_i(z)=(\mD_i(z), \widetilde {\mD}_i(z))$ as
\begin{eqnarray}
\det \mD_i(z) ={\cal O}(z^{k_i}), \quad
\mD_i(z)^{-1} \widetilde \mD_i(z)={\cal O}(z^{-1}).
\label{eq:patches}
\end{eqnarray}
Roughly speaking, 
the first and second conditions in \eqref{eq:patches} 
specify the vortex numbers $\{k_i\}$
and the vacuum at the spacial infinity, respectively. 
For fixed vortex numbers and boundary conditions, 
we define the moduli space of vortices 
as the space of equivalence classes of the matrices 
$(q_1(z),\dots,q_L(z))$ satisfying the condition \eqref{eq:patches}
\begin{align}
{\cal M}_{{\rm vtx}}{}^{n_1,n_2,\dots,n_{L+1}}_{\, k_1,k_2,\cdots,k_L} = \left\{ (q_1(z),q_2(z),\dots,q_L(z)) \, \Big| \, q_i(z) : \mbox{conditions} \,\eqref{eq:patches} \right\}/\sim,
\label{eq:vtx_moduli}
\end{align}
where $\sim$ denotes the equivalence relation \eqref{eq:V-transf_q}. 
We can determine the dimension of the moduli space 
by counting the number of zero modes satisfying 
the linearized BPS equations around a BPS configuration. 
As discussed in \cite{Hanany:2003hp} and \cite{Eto_2009}, 
the number of zero modes can be determined by the index theorem as
%\paragraph{moduli space of vortices and the index theorem \\}
%It is expected that for an arbitrary set of $\{q_i(z) \,|\, i=1,2,\dots,L\}$, Eq.\,\eqref{eq:mastereq} always has a unique solution, at least for finite gauge couplings. For instance, roughly speaking, in the area farther than the typical distance represented by $1/g_i\sqrt{r_i}$ from cores of vortices, the second term in the r.h.s.of Eq.\eqref{eq:mastereq} can be omitted and solutions are well approximated by those given by Eq.\eqref{eq:explicitSolution}. Therefore, under this assumption, the moduli space of vortices is expressed as
\begin{align}
 {\rm dim}_{\mathbb C} \, {\cal M}_{{\rm vtx}}{}^{n_1,n_2,\dots,n_{L+1}}_{\, k_1,k_2,\cdots,k_L} 
\ = \ - \sum_{i=1}^L \frac{N_{i+1}-N_{i-1}}{2\pi} \int d^2x \, \tr F_{12}^i 
\ = \ \sum_{i=1}^L (N_{i+1}-N_{i-1}) k_i
\ = \ \sum_{i=1}^L \beta_{i} k_i,
\label{eq:dim_M}
\end{align}
where $\beta_i = N_{i+1}-N_{i-1}$ is 
the first coefficient of the beta function \eqref{eq:beta_function}.

The moduli space of BPS vortices is endowed with a natural complex structure such that the variables $\phi^A~(A=1,\cdots,\sum_{i=1}^L \beta_{i} k_i)$ holomorphically parametrizing the equivalence class of the $V$-transformation \eqref{eq:V-transf_q} are the complex coordinates of the vortex moduli space. Furthermore, the vortex moduli space is equipped with a K\"ahler metric that determines classical dynamics of the vortices. As shown in Appendix \ref{sec:Kahlermetric}, 
the K\"ahler metric on the vortex moduli space is given by the formula
\begin{align}
g_{A\bar B} = 
\int d^2x \, 
\frac{\p}{\p \bar \phi^{\bar B}}\sum_{i=1}^L \tr \left[ \Omega_{i}^{-1} 
\frac{\partial q_i}{\partial \phi^A} \Omega_{i+1}q_i^\dagger \right]_{\Omega=\Omega^{\rm sol}}.
%\label{eq:Kmetricformula}
\end{align}

\paragraph{Local and semi-local vortices \\}
The first condition in \eqref{eq:patches} implies that $\xi_i(z)$ are full rank matrices at a generic point $z \in \C$. 
Such a set of full rank matrices $(\xi_1(z),\cdots,\xi_L(z))$ at a point $z$ specifies a flag ${\mathcal V}_0 \subset {\mathcal V}_1 \subset \cdots \subset {\mathcal V}_L \subset {\mathcal V}$
and hence a point in the flag manifold $\mathcal F_{n_1,\cdots,n_{L+1}}$. 
Therefore, if $\xi_i(z)$ are full rank matrices $({\rm rank}(\xi_i(z))=N_i)$ everywhere on $\C$, that is,  
\beq
\det \xi_i(z) \xi_i(z)^\dagger \not=0,~~\forall z \in \C,~~(i=1,\cdots,L),
\label{eq:cond_rank}
\eeq 
the set of matrices $(\xi_1(z),\cdots,\xi_L(z))$ gives a holomorphic map $\C \to \mathcal F_{n_1,\cdots,n_{L+1}}$. 
In such a case, we can solve the equation \eqref{eq:mastereq} in the large gauge coupling limit $g_i \rightarrow \infty$ 
by promoting the vacuum solution satisfying \eqref{eq:Dterm}
into a $z$-dependent configuration
\begin{eqnarray}
\Omega_i
= \frac1{a_{ii}} \xi_i(z) \xi_i(z)^\dagger + \sum_{j=1}^{i-1}\left(
 \frac1{a_{ji}}-\frac1{a_{j+1,i}}\right)  \xi_i(z) \xi_j(z)^\dagger (\xi_j(z) \xi_j(z)^\dagger)^{-1} \xi_j(z) \xi_i(z)^\dagger.
 \label{eq:sol_omega_large}
\end{eqnarray} 
Comparing physical quantities such as energy density, 
we can confirm that the vortex configuration reduces to the instanton solution specified by the same set of matrices $(\xi_1,\cdots,\xi_L)$
\footnote{
From these matrices $\xi_i(z)$,
we can always construct an instanton solution 
in terms of the inhomogeneous coordinates 
by comparing $\xi_i(z)$ and ${\cal U}={\cal U}(z)$ 
in Eq.\eqref{eq:inhomogeneous} at each $z$ as 
\begin{align}
\xi_i(z)= \mathcal L_i(z) \, \xi_i^o  \, {\cal U}(z)
% = q_i^oq_{i+1}^o\cdots q_L^o \, \mathcal L_i(z) \, {\cal U}(z),
\notag
\end{align} 
where $\mathcal L_i(z)$ is an appropriate element of the parabolic subgroup $\hat H(n_1,\cdots,n_i)$ except for some singular points corresponding to the zeros of $\det \mathcal L_i(z) = \det \mD_i(z) ={\cal O}(z^{k_i})$. \notag
}. 
Therefore, the moduli space of sigma model instantons is given by restricting the vortex moduli space \eqref{eq:vtx_moduli} with the additional condition \eqref{eq:cond_rank}
\begin{align}
    {\cal M}_{{\rm inst}}{}^{n_1,n_2,\dots,n_{L+1}}_{~k_1,k_2,\cdots,k_L}
    &=\left\{(q_1(z),q_2(z),\dots,q_L(z)) \, \Big| \, q_i(z):{\rm Eq.\,}\eqref{eq:patches}, \, {\rm Eq.\,} \eqref{eq:cond_rank} \right\}/\sim.
\end{align}
The points removed by the condition \eqref{eq:cond_rank} are configurations 
with matrices $\xi_i(z)$ whose rank becomes smaller $({\rm rank}(\xi_i(z))<N_i)$ at some points on $\C$.
Such configuration cannot be viewed as a holomprphic map since $\xi_i$ must be full rank matrices on the flag manifold. 
As we approach the removed points on the instanton moduli space, the sizes of some insntantons become infinitesimally small and hence such points are called the small instanton singularities.
Although instanton configurations are singular at such points, 
the corresponding vortex solutions are regular 
as long as the gauge coupling constants $g_i$ are finite. 
Instead of the singular instantons, 
regular vortices with size $\mathcal O(1/g \sqrt{r})$ are located at the points where ${\rm rank}(\xi_i) < N_i$ when the gauge coupling constants are finite. 
Such vortices are the so-called local vortices
whereas vortices corresponding to regular instantons are called the semi-local vortices. 
A set of matrices $\{\xi_i\}$ with $\wt \mD_i = 0$ corresponds to a configuration in which all the vortices are of local type. The corresponding modui subspace is called the local vortex moduli space
while we call
the subspace with no local vortex the semi-local vortex moduli space. 
As shown in Appendix \ref{sec:inst-sol}, there is a one-to-one correspondence between the moduli spaces of semi-local vortices and sigma model instantons
\beq
 {\cal M}_{{\rm semi}}{}^{n_1,n_2,\dots,n_{L+1}}_{~k_1,k_2,\cdots,k_L} 
 = 
 {\cal M}_{{\rm inst}}{}^{n_1,n_2,\dots,n_{L+1}}_{~k_1,k_2,\cdots,k_L}. 
\eeq
The semi-local vortex moduli space can be obtained from the local one by turning on $\wt \mD_i$,
which corresponds to the fibration on the local vortex moduli space.
In the large gauge coupling limit $g_i \rightarrow \infty$, the local vortex moduli space shrinks to the small instanton singularity. 

For a dual pair of GL$\sigma$Ms, 
the moduli space of sigma model instantons 
are identical since the corresponding NL$\sigma$Ms agree in the $g_i \rightarrow \infty$ limit. 
On the other hand, as we will see below, 
the votex moduli spaces can be viewed as different regularizations of the instanton moduli space
\begin{align}
\begin{matrix}
{\cal M}_{{\rm inst}}{}^{n_1,n_2,\dots,n_{L+1}}_{~k_1,k_2,\cdots,k_L} & = & {\cal M}_{{\rm inst}}{}^{n_{L+1},\dots,n_2,n_1}_{~k_L,\cdots,k_2,k_1} \\ \cap & & \cap \\  {\cal M}_{{\rm vtx}}{}^{n_1,n_2,\dots,n_{L+1}}_{~k_1,k_2,\cdots,k_L} & \not = & 
{\cal M}_{{\rm vtx}}{}^{n_{L+1},\dots,n_2,n_1}_{~k_L,\cdots,k_2,k_1}.
\end{matrix}
\end{align}
In Sec.\ref{sec:lumpduality}, we will check the duality of instanton solutions by presenting explicit biholomorphic map on the moduli parameters. 
The small-instanton singularities are regularized in different ways in the dual pair of GL$\sigma$Ms, i.e. both models have distinct local vortex moduli spaces. 
Nevertheless, as we will check in Sec.\,\ref{sec:partition},
the vortex partition functions computed using the information on the local vortex moduli spaces are in perfect agreement. 

\subsection%[Example 1: Review of L=1 case]
{Example 1: review of $L=1$ case}
\label{subsec:ExampleMM_L=1}
As the simplest example, 
let us review the case with 
$L=1$ \cite{Eto:2005yh}
where the target manifold of the NL$\sigma$M is the complex Grassmaniann $G(N,n)$ $(n_1=n,\,n_2=N-n)$. 
Here we omit the index $i~(i=1,\cdots,L)$ 
since there is only one gauge group factor $U(n)$ in this case. 

For a $k$-vortex configuration, 
the matrices $\mD(z)$ and $\tilde \mD(z)$ are $n$-by-$n$ and $n$-by-$(N-n)$ matrices satisfying 
\begin{eqnarray}
\det \mD(z)={\cal O}(z^{k}), \hs{5} 
\mD(z)^{-1} \widetilde \mD(z)={\cal O}(z^{-1}).
\label{eq:DDcond}
\end{eqnarray}
Two pairs of matrices $(\mD(z),\,\widetilde \mD(z))$ and 
$(\mD(z)',\,\widetilde \mD(z)')$ are 
equivalent if there is a matrix $V(z) \in GL(N,\C)$
such that
\begin{eqnarray}
(\mD(z),\,\widetilde \mD(z)) = (V(z) \mD(z)',\, V(z) \widetilde \mD(z)')
 ~~~ \Longleftrightarrow ~~~
(\mD(z),\wt \mD(z)) \sim (\mD(z)',\wt \mD(z)').
\label{eq:Vtr} 
\end{eqnarray}
Therefore, the moduli space of $k$-vortex configurations is given by
\beq
{\mathcal M_{\rm vtx \,}}_{k}^{n,N-n} \cong \left\{ (\mD(z), \wt \mD(z) ) \, \Big| \det \mD(z) = \mathcal O(z^k)\,,\, \mD(z)^{-1} \wt \mD(z) = \mathcal O(z^{-1}) \right\} / \sim.
\label{eq:moduli_v_L=1}
\eeq
This moduli space can be parametrized in the following way (see Appendix \ref{subsec:patch_L=1} for more details).
Let $\lambda=(l_1,l_2,\cdots,l_n)$ be a set of non-negative integers 
such that $k=\sum_{b=1}^n l_b$. 
By using the $V$-transformation \eqref{eq:Vtr}, 
a generic matrix $\mD(z)$ with $\det \mD(z)={\cal O}(z^k)$
can be transformed into the following form  
%\begin{equation}
%(\mD(z))^a{}_{b}= \delta^a{}_{b} z^{l_a} - \sum_{m=1}^{l_b}T^a{}_{b,m} z^{m-1} 
%\label{eq:Dpatch}
%\end{equation}
\beq
\mD(z) = 
\mD_\lambda(z) \equiv 
{\renewcommand{\arraystretch}{0.8}
{\setlength{\arraycolsep}{0.8mm} 
\ba{ccc}
z^{l_1} & & \\
& \ddots & \\
& & z^{l_n}
\ea 
+ 
\ba{cccc}
P^{11} & \cdots & P^{1n} \\
\vdots & \ddots & \vdots \\
P^{n1} & \cdots & P^{nn} 
\ea}}, \hs{5}
P^{ab}(z) =\sum_{m=1}^{l_b} T^{ab}_{m} \, z^{m-1}. 
\label{eq:Dpatch}
\eeq
%({\bf K.O.s' comment:}
%With a given set $(l_1,l_2,\dots, l_n)$, the above is a general form of a matrix $\mD(z)$ satisfying \begin{align}
%\mD(z) =\left[{\bf 1}_n+{\cal O}(z^{-1})\right]
%\times {\rm diag.}(z^{l_1},z^{l_2},\dots,z^{l_n}).
%\end{align} 
%Therefore, if $V=\mD^{(1)}(\mD^{(2)})^{-1}$ is regular %everywhere for two matrices $\mD^{(1)},\mD^{(2)}$ in this %gauge, then we can conclude $V=1$.)
For each ``gauge choice", 
i.e. the choice of $(l_1,\cdots,l_n)$, 
all the degrees of freedom of 
the $V$-transformation \eqref{eq:Vtr} is fixed.
Therefore, the coefficients $T^{ab}_{m}$
can be regarded as part of the complex coordinates of the moduli space of vortices
in this coordinate patch, 
which we call the $(l_1,l_2,\cdots,l_n)$-patch.
The other coordinates parameterize the degrees of freedom contained in the matrix $\wt \mD(z)$ obeying the condition $\mD(z)^{-1} \widetilde \mD(z)={\cal O}(z^{-1})$. 
To extract such degrees of freedom, 
let us consider $n$-component column vectors $j_i(z)$ 
with polynomial entries satisfying  
\begin{eqnarray}
\mD(z)^{-1} j_i(z) = {\cal O}(z^{-1}) ~~~ (z \rightarrow \infty).
\label{eq:cond_j}
\end{eqnarray}
We can show that there exist $k$ linearly independent solutions 
$j_i(z)~(i=1,\cdots,k)$ satisfying this condition.
Let $\mJ(z)$ be an $n$-by-$k$ matrix whose columns form a basis of the solutions to \eqref{eq:cond_j}, that is,
$\mJ(z)=(j_1(z),\cdots,j_k(z))$.
For instance, in the $(l_1,l_2,\cdots,l_n)$-patch (\ref{eq:Dpatch}), 
$\mJ(z)$ can be chosen as 
\beq
\mJ_\lambda = \ba{cccc} \boldsymbol{\mJ}_1 & \boldsymbol{\mJ}_2 & \cdots & \boldsymbol{\mJ}_n \ea, \hs{10}
(\boldsymbol{\mJ}_\alpha)^a{}_p = \frac{ \mD^a{}_\alpha}{z^p}   \Big|_{\rm reg} = \delta^a{}_{\alpha} \, z^{l_\alpha-p} + \sum_{m=1}^{l_\alpha-p} T^{a\alpha}_m z^{m-1},
\label{eq:Jpatch}
\eeq
where $\boldsymbol{\mJ}_\alpha$ are $n$-by-$l_\alpha$ block matrices and $|_{\rm reg}$ stands for the regular part. 
Since $\wt \mD(z)$ satisfies the condition $\mD(z)^{-1} \widetilde \mD(z)={\cal O}(z^{-1})$, 
it can be written as linear combinations of $j_i(z)$, that is, the matrix $\wt \mD(z)$ in the $(l_1,l_2,\cdots,l_n)$-patch can be written as
\beq
\wt \mD(z) = \wt \mD_\lambda(z) \equiv \mJ_\lambda \wt \Psi_\lambda, 
\eeq
where $\wt \Psi$ is a $k$-by-$(N-n)$ matrix. 
The components of $\wt \Psi$ parameterize the degrees of freedom of $\wt \mD$ 
and hence can be regarded as the remaining coordinates of the moduli space of vortices.

One can check that the number of the coordinates $T^{ab}_{m}~(a,b=1,\cdots,n,~m=1,\cdots,l_b)$ and $\wt \Psi_{ic}~(i=1,\cdots,k,~c=1,\cdots,N-n)$ agrees with the dimension of the moduli space ${\rm dim}_{\C} \,{\mathcal M_{\rm vtx \,}}_{k}^{n,N-n} = k N$ 
obtained through the analysis of the index theorem \cite{Hanany:2003hp} (see Eq.\,\eqref{eq:dim_M}). 
There are $(k+n-1)!/(k! (n-1)!)$ patches and 
the transition functions can be read off from 
the $V$-transformation 
between two different fixed forms 
$(\mD_{\lambda'},\wt \mD_{\lambda'}) = (V \, \mD_{\lambda} , V \, \wt \mD_{\lambda})$ (see the example below).

Next, let us read off the sigma model instanton solutions 
in the large coupling limit from $\xi=(\mD(z),\wt \mD(z))$.  
Let $\varphi$ is the $n$-by-$(N-n)$ matrix 
which appears in the matrix $\mathcal G$ in Eq.\,\eqref{eq:inhomogeneous} as
\begin{eqnarray}
{\cal G} = \mathcal L 
\left( 
\begin{array}{cc} 
{\bf 1}_n & \varphi ~~ \\ {\bf 0} & {\bf 1}_{N-n}
\end{array} \right) \in GL(N,\mathbb C),
\end{eqnarray}
where $\mathcal L$ is a lower-triangular block matrix. 
The components of $\varphi$ can be regarded as the inhomogeneous coordinates of Grassmannian $G(N,n)$.
From the relation 
$(\mD(z),\wt \mD(z)) \sim (\mathbf 1_n, \mathbf 0) \, \mathcal G$, 
the matrix $\varphi(z)$ can be read off as
\begin{eqnarray}
\varphi(z) = \mD(z)^{-1} \widetilde \mD(z).
\end{eqnarray}
This implies that the inhomogeneous coordinate
are rational functions of $z$. 
Note that only semilocal vortices can be observed as 
sigma model instantons. 
To see this, let us assume that 
if $l$ vortices in $(\mD(z),\wt \mD(z))$ are of local type. 
Then, as shown in Appendix \ref{sec:LSdecomposition}, 
the matrices $\mD(z)$ and $\wt \mD(z)$ have a common factor $\mD_{\rm lc}(z)$ representing local vortices 
\begin{align}
(\mD(z),\wt \mD(z))= \mD_{\rm lc}(z)  (\mD_{\rm sm}(z),\wt \mD_{\rm sm}(z)),\quad 
\det \mD_{\rm lc}(z)={\cal O}(z^l),
\end{align}
where $\xi_{\rm sm}(z)=(\mD_{\rm sm}(z),\wt \mD_{\rm sm}(z))$ satisfies the condition \eqref{eq:cond_rank}. 
From this factorized form, we find that the local vortices do not appear in the sigma model instantons in the large coupling limit:
\begin{align}
\varphi(z) = \mD(z)^{-1} \widetilde \mD(z)= \mD_{\rm sm}(z)^{-1} \widetilde \mD_{\rm sm}(z).
\end{align}
%A local vortex can be observed only as a squeezed magnetic flux in the Higgs phase. Since a typical size of a vortex is $1/g_1\sqrt{r_1}$, which is an inverse of the gauge-boson mass,
%it has a singular configuration in the large coupling limit.
%

\paragraph{Abelian case \\}
For $n=1$, the gauge group is $U(1)$ and 
$\mD(z)$ is a monic polynomial of $z$
\begin{eqnarray}
{\mD(z)}= \prod_{\alpha}(z-z_\alpha)^{d_\alpha}, 
\hs{5} \sum_\alpha d_\alpha =k,
\end{eqnarray}
where $d_\alpha$ denotes the multiplicity of the vortex sitting at $z=z_\alpha$. 
The condition $\mD(z)^{-1} \widetilde \mD(z)={\cal O}(z^{-1})$ for $\widetilde \mD(z)=(\wt \mD_2(z), \wt \mD_3(z), \cdots, \wt \mD_N(z))$ can be solved 
by setting $\wt \mD_a(z)$ to be polynomials of degree $k-1$.  
Then the sigma model instanton solution takes the form
\begin{eqnarray}
\frac{\wt \mD_a(z)}{\mD(z)} = \sum_\alpha \sum_{p=1}^{d_\alpha} \frac{c_{a,\alpha,p}}{(z-z_\alpha)^p}, \hs{5} {\rm for~} 2\le a \le N.
\end{eqnarray}
Since the number of the coefficients is $k$ for each flavor $a$, 
this solution has $k N$ moduli parameters. 
For separated vortex configuration ($d_\alpha=1$ for all $\alpha$), 
each vortex has $N-1$ parameters 
in addition to its position modulus. 
For example, for $N=2$ and $d_\alpha=1$, 
the absolute values of the coefficient $c_{2,\alpha,1}$ determines the size of the vortex at $z=z_\alpha$, 
and hence $c_{2,\alpha,1}~(\alpha=1,\cdots,k)$ are called {\it size moduli}. 
The phase of $c_{2,\alpha,1}$ corresponds to
the Nambu-Goldstone mode of the $H=U(1)$ global symmetry that is preserved by the vacuum but broken by the vortex. 
The configurations with $c_{a,\alpha,p=d_{\alpha}}=0$ correspond to a small-instanton singularities. 

\paragraph{Non-Abelian case \\}
In the case of $n>1$, 
vortices possesses another type of moduli parameters that the Abelian vortices do not have. 
Since $\mD(z)$ has a smaller rank at the center of each vortex
($z=z_\alpha$, $\det \mD(z_\alpha)=0$), 
there exists an $n$-column vector $\psi$ satisfying\footnote{A set of this type of relations for all vortices will be summarized in the equation \eqref{eq:hADHM1} in the next section.}
\begin{align}
\mD(z_\alpha) \, \psi_\alpha =0. 
\end{align}
Each vector $\psi_\alpha$, 
defined up to a normalization factor, specifies a point on 
$\mathbb CP^{n-1}=U(n)/ [U(n-1)\times U(1)]$.  
These degrees of freedom correspond 
to the Nambu-Goldstone zero mode 
of the $U(n)$ color-flavor locked symmetry  broken due to existence of a vortex. 
Even when a vortex has a vanishing size modulus, 
this type of {\it orientational moduli} survives and thus describes internal degrees of freedom of the local vortex.

Next, let us demonstrate how to describe the moduli space of vortices with the simplest example of non-Abelian vortices in the case of $N=3, n=2, k=1$ ($n_1=2,n_2=1$).
There are two coordinate patches $(l_1,l_2) = (1,0)$ and $(0,1)$ 
for which the ``gauge-fixed forms" of the matrix $\mD$ in Eq.\,\eqref{eq:Dpatch} are 
respectively given by
\begin{eqnarray}
\mD_{(1,0)}(z)=
\left( \begin{array}{cc} z-a& 0  \\ -b& 1
\end{array} \right), \hs{10} 
\mD_{(0,1)}(z)=
\left( \begin{array}{cc} 1& -\tilde b \\ 0 & z - \tilde a
\end{array} \right). \label{eq:Dk2}
\end{eqnarray}
The conditions for $\mJ_{(1,0)}(z)=(J_1,J_2)^\T$ and 
$\mJ_{(0,1)}(z)=(J'_1, J'_2)^\T$ are given by
\begin{eqnarray}
\mD_{(1,0)}^{-1} \, \mJ_{(1,0)}
=\frac{J_1(z)}{z-a} \left( \begin{array}{c} 1 \\ b \end{array}\right) + J_2(z) \left( \begin{array}{c} 0\\ 1 \end{array}\right)  ={\cal O}(z^{-1}). \\
\mD_{(0,1)}^{-1} \, \mJ _{(0,1)}
=\frac{J'_2(z)}{z-a} \left( \begin{array}{c} \tilde b \\ 1 \end{array}\right) +  J'_1(z) \left( \begin{array}{c} 1 \\ 0 \end{array}\right)  ={\cal O}(z^{-1}).
\end{eqnarray}
By using the solutions to these conditions 
$\mJ_{(1,0)} = (1,0)^T$ and 
$\mJ_{(0,1)} = (0,1)^T$, 
the matrices $\wt \mD_{(1,0)}$ and $\wt \mD_{(0,1)}$ can be written as 
\beq
\wt \mD_{(1,0)} = c \, \mJ_{(1,0)} = \ba{c} c \\ 0 \ea, \hs{5}
\wt \mD_{(0,1)} = \tilde c \, \mJ_{(0,1)} = \ba{c} 0 \\ \tilde c \ea,
\label{eq:tDk2}
\eeq
where $c$ and $\tilde c$ are constants. 
The parameters $(a, b, c)$ and $(\tilde a, \tilde b, \tilde c)$ are 
the coordinates of the moduli space 
in the $(l_1,l_2)$-patches with $(l_1,l_2)=(1,0)$ and $(l_1,l_2)=(0,1)$, respectively. 
They are related by the coordinate transformation
\beq
(\tilde a, \tilde b, \tilde c) = ( a, b^{-1}, c b), 
\label{eq:coordinate_transformation}
\eeq
which can be determined from 
the regularity condition for the $V$-transformation
between the two gauge-fixed forms of $\mD$
\begin{eqnarray}
V \, \mD_{(1,0)}=\mD_{(0,1)} ~~ \Longrightarrow ~~
V(z)= \mD_{(0,1)} (\mD_{(1,0)})^{-1} = \left( \begin{array}{cc} \frac{1-b \tilde b}{z-a} & -\tilde b  \\ b \frac{z-\tilde a}{z-a} & z-\tilde a
\end{array} \right),
\end{eqnarray}
and the relation $V \, \widetilde \mD_{(1,0)} = \widetilde \mD_{(0,1)}$.
The parameter $a$ and $c$ are 
the position and size moduli of the vortex, respectively. 
The parameter $b$ is the inhomogeneous coordinate of 
the orientational moduli $\mathbb CP^1$.
%\footnote{
%In this case, the whole moduli space is 
%given by a weighted projective space 
%$\mathbb CP^3(0,1,1,-1)$, where
%\begin{eqnarray}
%(a, 1, b, c) \sim (a, \lambda, \lambda b, \lambda^{-1}c)= (a, \tilde b, 1, \tilde c),\quad {\rm with~} \lambda =\tilde b =b^{-1}.
%\notag
%\end{eqnarray}
%}
The sigma model solution 
$\varphi_{12}(z) = \mD^{-1} \widetilde \mD$ 
in the $(1,0)$- and $(0,1)$-patches take the forms
\begin{equation}
\varphi_{12}^{(1,0)}(z) =\frac{c}{z-a}\left( \begin{array}{c} 1 \\ b \end{array}\right), \hs{10}
\varphi_{12}^{(0,1)}(z) =\frac{\tilde c}{z-a}\left( \begin{array}{c}  \tilde b \\ 1 \end{array}\right).
\label{eq:instanton_L=1}
\end{equation}
By using the coordinate transformation \eqref{eq:coordinate_transformation}, 
one can confirm that these are identical  
$\varphi_{12}^{(1,0)}(z) = \varphi_{12}^{(0,1)}(z)$ 
on the overlap of the coordinate patches.

\paragraph{Duality and moduli spaces \\}
In the case of $N=3, n=2, k=1$ ($n_1=2,n_2=1$) discussed above, the total vortex moduli space is given by
\begin{align}
{\cal M}_{\rm vtx \,}{\,}_{k=1}^{n_1=2,n_2=1} = \mathbb C \times  \mathcal O_{\C P^1}(-1),
\end{align}
where the first factor corresponds to the vortex position 
$a \in \C$ and 
the second factor is the total space of the line bundle $\mathcal O_{\C P^1}(-1)$ over $\C P^1$ 
\begin{align}
\mathcal O_{\C P^1}(-1) = \left \{ (b_1,b_2,c) \, \big| \, (b_1,b_2)\in \mathbb C^2 \backslash \{(0,0)\} , \, c \in \mathbb C \right\}/ \sim \hs{3} \mbox{with} \hs{3} 
(b_1,b_2, c) \sim ( \lambda b_1,\lambda b_2,\lambda^{-1} c ).
\end{align}
By removing the subspace $c=0$ 
corresponding to the small-instanton singularity, 
this space reduces to the moduli space of sigma model instanton
\begin{align}
{\cal M}_{\rm inst}{\,}_{k=1}^{n_1=2,n_2=1} = \mathbb C\times \left \{(b_1,b_2,c) \, \big| \, (b_1,b_2) \in \mathbb C^2 \backslash \{(0,0)\},~c \in \mathbb C \backslash \{0\} \right\} / \sim \ = \mathbb C \times \left( \mathbb C^2 \backslash \{(0,0)\} \right).
\end{align}
In the dual theory, 
which is an Abelian theory with $N=3,n=1$ ($n_1=1,n_2=2$), 
the vortex and instanton moduli spaces are respectively given by
\begin{align}
{\cal M}_{\rm vtx \,}{\,}^{n_1=1,n_2=2}_{k=1} = \mathbb C \times \mathbb C^2, \qquad 
{\cal M}_{\rm inst}{\,}^{n_1=1,n_2=2}_{k=1} = \mathbb C \times \big( \mathbb C^2 \backslash \{(0,0)\} \big). 
\end{align}
We can confirm that 
even though the instanton moduli spaces are identical, 
the vortex moduli spaces are different in the dual theories 
\beq
\begin{matrix}
{\cal M}_{{\rm inst}}{\,}^{n_1=2,n_2=1}_{k=1} & = & {\cal M}_{{\rm inst}}{\,}^{n_1=1,n_2=2}_{k=1} \\ \cap & & \cap \\  {\cal M}_{{\rm vtx}}{\,}^{n_1=2,n_2=1}_{k=1}& \not = & 
{\cal M}_{{\rm vtx}}{\,}^{n_1=1,n_2=2}_{k=1}.
\end{matrix}
\eeq
In each theory, the small-instanton singularity is regularized by
replacing it with the local vortex moduli space in each theory
\begin{align}
  {\cal M}_{\rm vtx \,}{\,}^{n_1=2,n_2=1}_{k=1} \backslash 
  {\cal M}_{\rm inst}{\,}^{n_1=2,n_2=1}_{k=1} 
  = \mathbb C  \times \mathbb CP^1, \qquad 
 {\cal M}_{\rm vtx \,}{\,}^{n_1=1,n_2=2}_{k=1} \backslash 
 {\cal M}_{\rm inst}{\,}^{n_1=1,n_2=2}_{k=1} =\mathbb C. 
\end{align}
In other words, 
the singularity is blown up in the case of $n_1=1,\,n_2=2$ whereas 
$\C \times \{(0,0)\}$ is added
along the singularity in the case of $n_1=2,\,n_2=1$. 

%%%%%%%%%%%%%%%%%%%%%%%%%%%%%%%%%%%%%%%%%%%%%%%%%%%%%%%%%%%%%%
\subsection%[Example 2: L=2 case]
{Example 2: $L=2$ case}
\label{subsec:example_L=2}
Next, we consider the case with $L=2$. 
As we have seen in the previous example with $L=1$, 
all the information on the moduli space of vortices is 
contained in the matrix $\mH = (\mD, \widetilde \mD)$
obeying the constraints $\det \mD(z) = \mathcal O(z^k)$ and $\mD(z)^{-1} \wt \mD(z) = \mathcal O(z^{-1})$. 
For $L>1$, the matrices $\mH_i = (\mD_i, \widetilde \mD_i)$
must satisfy additional constraints 
since they are composite quantities 
obtained from $q_i~(L=1,\cdots,N)$.
For example, in the case of $L=2$, $\mH_1$ and $\mH_2$ 
are related as $\mH_1(z)=q_1(z) \mH_2(z)$ 
and hence they are not independent. 
To read off the information on the vortex moduli space, 
we need to clarify the constraints for $\mH_i(z)$. 

For simplicity, let us focus on the case of 
vortex numbers $(k_1,k_2)=(1,1)$
in the model with $n_1=n_2=n_3=1$   
as the simplest nontrivial example\footnote{
Another simple example is the case where
the matrices $\mH_i$ can be obtained 
by embedding that of the $L=1$ case. 
A general discussion on the embedding of the $L=1$ case is given in Appendix \ref{appendix:embedding}.}. 
%The asymptotic behavior of the magnetic fluxes \eqref{eq:flux_asymptotic}
%implies that the numbers of vortices $(k_1,k_2)$ can be 
%read off from the asymptotic behavior of $\det (\mH_i \mH_i^\dagger)$
%\beq
%\det (\mH_i \mH_i^\dagger) \rightarrow |z|^{2k_i}. 
%\eeq
It follows from the conditions 
$\det \mD_i = \mathcal O(z^{k_i})$ and $\mD_i^{-1} \wt \mD_i = \mathcal O(z^{-1})$ that 
the matrices $\mH_1(z)$ and $\mH_2(z)$ 
for $(k_1,k_2)=(1,1)$ take the forms
\beq
\mH_1(z)=(z-a', c', c''), \hs{10} 
\mH_2(z)=(\mD_{(1,0)}(z),\widetilde \mD_{(1,0)}(z)) = \ba{ccc} z-a & 0 & c \\ -b & 1 & 0 \ea,
\label{eq:L=2_(1,0)}
\eeq
where we have fixed $\mH_2(z)=(\mD_2,\widetilde \mD_2)$ 
so that it takes the $(l_1,l_2)=(1,0)$ form
given in Eq.\,(\ref{eq:Dk2}) and Eq.\,(\ref{eq:tDk2}). 
These matrices must be related as 
$\mH_1(z)=q_1(z) \mH_2(z)$ 
with a certain non-singular 1-by-2 matrix $q_1(z)$. 
The regularity of $q_1(z)$ requires that the parameters are related as
\begin{eqnarray}
a' = a+b c' \hs{10}
c''=c.
\label{eq:para_relation1}
\end{eqnarray}
In the $(0,1)$-patch, $\mH_1(z)$ and $\mH_2(z)=(\mD_{(0,1)}(z),\widetilde \mD_{(0,1)}(z))$ take the forms
\beq
\mH_1(z)=(z-a', c', c''), \hs{10}
\mH_2(z)=(\mD_{(0,1)}(z),\widetilde \mD_{(0,1)}(z)) = 
\ba{ccc} 1 & - \tilde b & 0 \\ 0 & z-\tilde a & \tilde c \ea,
\label{eq:L=2_(0,1)}
\eeq
where the parameters $(\tilde a, \tilde b, \tilde c)$ are 
related to $(a,b,c)$ in the same way as the case of $L=1$ 
\eqref{eq:coordinate_transformation}. 
The regularity of $q_1(z)$ requires that
\begin{eqnarray}
\quad c'=\tilde b\,  (a'-\tilde a), \quad c''=\tilde b\, \tilde c.
\label{eq:coordinate_transformation2}
\end{eqnarray}
This relation is consistent with \eqref{eq:para_relation1} and \eqref{eq:coordinate_transformation}. 
The constraints on  \eqref{eq:L=2_(1,0)} and \eqref{eq:L=2_(0,1)} with \eqref{eq:coordinate_transformation} implies that the moduli space is given by
\beq
{{\cal M}_{\rm vtx\,}}^{n_1=1,n_2=1,n_3=1}_{k_1=1,k_2=1} = \C \times (\mathcal O_{\C P^1}(-1) \oplus \mathcal O_{\C P^1}(-1)),
\eeq 
where $\C$ is the center of mass position parametrized by $a+a'$ and $\mathcal O_{\C P^1}(-1) \oplus \mathcal O_{\C P^1}(-1)$ is the space given by
\begin{align}
\mathcal O_{\C P^1}(-1) \oplus \mathcal O_{\C P^1}(-1) = \left \{ (b_1,b_2,c_1,c_2) \, \big| \, (b_1,b_2) \in \mathbb C^2 \backslash \{(0,0)\} , \, (c_1,c_2) \in \mathbb C^2 \right\}/ \sim,
\end{align}
where the equivalence relation $\sim$ is defined as 
\beq
(b_1,b_2, c_1,c_2) \sim ( \lambda b_1,\lambda b_2,\lambda^{-1} c_1,\lambda^{-1} c_2 ) ~~~ \mbox{with} ~~~ \lambda \in \C^\ast.
\eeq
The coordinates in the $(1,0)$ and $(0,1)$ patches are related to the $\C^\ast$ invariants as
\begin{alignat}{3}
b &= \frac{b_1}{b_2}, \hs{5}& c &= b_2 c_2, \hs{3} & c' &= b_2 c_1, \\
\tilde b &= \frac{b_2}{b_1}, \hs{5}& \tilde c &= b_1 c_2, \hs{3}& a'-a &= b_1c_1. 
\end{alignat}
Here, a complex parameter $b (\tilde b)$ parametrizing   
$\mathbb CP^1$ appears like in the non-Abelian case given in Eq.\eqref{eq:Dk2}.
Its argument, $\arg (b)$, is again the Nambu-Goldstone zero mode due to a broken $U(1)$-symmetry,
but unlike in the non-Abelian case, 
its absolute value, $|b|$ is no longer a Nambu-Goldstone mode. We can observe that 
the configurations of the magnetic fluxes depend on $|b|$, as seen in Fig.\ref{fig:fluxes}.
\begin{figure}[ht]
 \centering
{\includegraphics[keepaspectratio,scale=0.9]
      {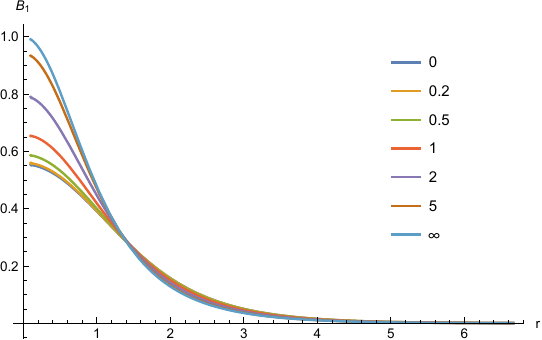}}\quad
{\includegraphics[keepaspectratio,scale=0.9]
      {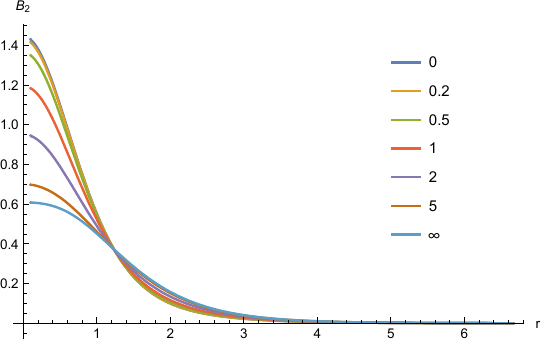}}
 \caption{Profiles of the magnetic fields, $B_{i=1,2}\equiv -\tr[F_{12}^{i=1,2}]$ for local vortices in the $L=2$ model with setting $r_1=r_2=g_1=g_2=1$. 
 Here we use the moduli matrix in Eq.\eqref{eq:L=2_(1,0)} with setting $a=c=c'=0$ and
varying values of $b$ as $b=0,0.2,0.5,1,2,5,\infty$.}
 \label{fig:fluxes}
\end{figure}

Next, let us consider the corresponding sigma model 
instanton solution. 
Let us introduce inhomogeneous complex coordinates 
$(\phi_{12},\phi_{13},\phi_{23})$ 
of the target manifold $G^{\mathbb C}/\hat H = \mathcal F_{1,1,1}$. 
They are contained in the matrix $\mathcal G$ 
in Eq.\,\eqref{eq:inhomogeneous} as
\begin{eqnarray}
{\cal G}= \mathcal L \left( 
\begin{array}{ccc} 
1 & \phi_{12} &\phi_{13} \\ 
0 & 1 & \phi_{23}  \\ 
0 & 0 & 1 
\end{array} \right) \in G^{\mathbb C},
\end{eqnarray}
where $\mathcal L$ is a lower-triangular matrix. 
In this parametrization, 
$\mD_i^{-1} \widetilde \mD_i$ are given by
\begin{eqnarray}
\mD_1^{-1} \widetilde \mD_1 = (\phi_{12},\phi_{13}), \hs{10}
\mD_2^{-1} \widetilde \mD_2 = \left( \begin{array}{c} \phi_{13}' \\ \phi_{23} \end{array} \right),  \label{eq:DD2phi}
\end{eqnarray}
where $\phi_{13}'$ is related to $\phi_{13}$ as\footnote{
The flag manifold ${\cal F}_{n_1,n_2,n_3}$ can be realized as two orthogonal flags, $\xi_1 \in G(n_1,N)$ and $\tilde\xi\in G(n_3,N)$. Eq.\eqref{eq:phirel} can be regard as the orthogonality condition $\xi_1\tilde \xi^{\rm T} =0$ with identifying
$\xi_1 = ({\bf 1},\phi_{12},\phi_{13})$ and $\tilde\xi=(-{\phi_{13}'}^{\rm T},-\phi_{23}^{\rm T},{\bf 1})$.
}
\begin{eqnarray}
\phi_{13}'=\phi_{13}-\phi_{12}\phi_{23}.  \label{eq:phirel}
\end{eqnarray}
If we ignore the constraints $\xi_i = q_i \xi_{i+1}$, 
each $\xi_i$ can be regarded as 
a flag for the Grasmannian $G(N_i,N)$.
In this case, $\xi_1$ and $\xi_2$ specify points on 
$G(1,3) = \C P^2$ and $G(2,3) = \C P^2$, respectively. 
In this case, both $(\phi_{12},\phi_{13})$ and $(\phi_{23},\phi_{13}')$ are the inhomogenrous coordinates of 
$\mathbb CP^2$, and hence the sigma model instanton solutions with $k_1=1$ and $k_2=1$ are generally takes the forms (see \eqref{eq:instanton_L=1})
\begin{align}
 (\phi_{12},\phi_{13})=\frac1{z-a'}(B,A'),\quad
 (\phi_{23},\phi_{13}')=\frac1{z-a}(C,A). \label{eq:inst11solution}
\end{align}
The additional condition \eqref{eq:phirel} 
gives rise to the following constaraint on the moduli parameters $a',A',B,a,A,C$
\begin{align}
AD=BC, \quad  A=A', \quad {\rm with} \quad D \equiv a'-a.    
\end{align}
The space given by the condition $AD=BC$ has singularities 
but they are covered by the small-instanton singularities located at $(B,A')=(0,0)$ and $(C,A)=(0,0).$\footnote{
There are two type of singularity:  the former is algebraic singularity where the tangent space is ill-defined, and the latter is ``physical" singularity where the NL$\sigma$M breaks down.}
These singularities can be simultaneously 
removed by requiring $A \not =0$, 
and hence the moduli space of instantons is given by 
\begin{align}
{{\cal M}_{\rm inst\,}}^{n_1=1,n_2=1,n_3=1}_{k_1=1,k_2=1} 
= \mathbb C \times \mathbb C^2 \times (\mathbb C\backslash \{0\}).
\end{align}
The vortex configuration given in \eqref{eq:L=2_(1,0)} and \eqref{eq:L=2_(0,1)} is mapped to 
$(\phi_{12},\phi_{23},\phi_{13},\phi_{13}')$ 
through relation \eqref{eq:DD2phi}, 
from which the parameters 
in \eqref{eq:inst11solution} can be read as, 
%This singularity can be resolved by considering browing up along a center $A=C=0$, as 
\begin{align}
(B,A')=b_2(c_1,c_2),\quad (C,A)=c_2(b_1,b_2) \quad {\rm and~} a'-a=b_1 c_1,
\end{align} 
with $[b_1:b_2:c_1:c_2] \in \mathcal O_{\C P^1}(-1) \oplus \mathcal O_{\C P^1}(-1)$.
The above mapping can be regarded 
as a blowup of the space $AD=BC$ along the center $A=C=0$.
Therefore, this resolution of the singularity defines an inclusion map between them as
\begin{align}
 {{\cal M}_{\rm inst\,}}^{n_1=1,n_2=1,n_3=1}_{k_1=1,k_2=1} = \C \times \C^2 \times (\C \backslash \{(0)\})
~ \hookrightarrow ~ {{\cal M}_{\rm vtx\,}}^{n_1=1,n_2=1,n_3=1}_{k_1=1,k_2=1} = \C \times (\mathcal O_{\C P^1}(-1) \oplus \mathcal O_{\C P^1}(-1)).
\end{align}
Removing the small instanton 
singularities at $b_2=0$ and $c_2=0$, 
we obtain the condition for 
the moduli space of the instanton 
\begin{align}
{{\cal M}_{\rm vtx\,}}^{n_1=1,n_2=1,n_3=1}_{k_1=1,k_2=1} ~~ \underset{b_2\not=0,~c_2 \not =0}{\longrightarrow} ~~ {{\cal M}_{\rm inst\,}}^{n_1=1,n_2=1,n_3=1}_{k_1=1,k_2=1}= \mathbb C \times \mathbb C^2 \times (\mathbb C\backslash \{0\}).
\end{align}

Next, let us see how the moduli spaces are 
related under the duality map. 
In the $n_1=n_2=n_3=1$ case, 
the duality theory is identical 
but the inhomogeneous coordinates $(\phi_{12},\phi_{23},\phi_{13})$ and $\phi_{13}'=\phi_{13}-\phi_{12} \phi_{23}$ are swapped as (see Eq.\,\eqref{eq:dual_U})
\begin{align}
(\phi_{12}, \phi_{23}, \phi_{13}, \phi_{13}')_{\rm dual} 
= - (\phi_{23}, \phi_{12},\phi_{13}', \phi_{13}). 
\end{align}
From this relation, 
we can read off 
the duality transformation 
for the moduli parameters as
\begin{align}
(b_1, b_2, c_1, c_2)_{\rm dual} =(c_1,c_2,-b_1,-b_2).
\end{align}
This map is well-defined on 
${{\cal M}_{\rm inst\,}}^{n_1=1,n_2=1,n_3=1}_{k_1=1,k_2=1}$ 
but ill-defined on 
${{\cal M}_{\rm vtx\,}}^{n_1=1,n_2=1,n_3=1}_{k_1=1,k_2=1}$
since the point $(c_1,c_2)_{\rm dual}=(0,0)$ of the vortex moduli space is mapped to the forbidden point $(b_1,b_2)=(0,0)$ in the original theory. This indicates that there are vortex configurations that do not have corresponding configuration in the dual theory. 

This asymmetry of the vortex moduli space becomes manifest if we focus on its subspaces containing local vortices. 
The vortex described by $\xi_1 \, (\xi_2)$ 
becomes a local vortex in the limit of $b_2 \to 0~(c_2\to 0)$
 Interestingly, there exist two subspaces ($\mathbb C^2$ and $\mathbb C \times \C P^1$ shown in the bottom row of Fig.\,\ref{fig:moduli_space}) where both of the two vortices becomes local ones.
%\begin{alignat}{4}
%& & &\hs{-15} {{\cal M}_{\rm vtx\,}}^{n_1=1,n_2=1,n_3=1}_{k_1=1,k_2=1} & & & & \\ & & & \rotatebox{90}{=} & & & & \nn & & \mathbb C \times \mathbb (\mathcal O(-1) & \oplus \mathcal O(-1)) & & & \nn b_2 \to 0 & \swarrow & & & \searrow & \,c_2 \to 0 \nn \mathbb C^3 & & & & & \mathbb C \times \mathcal O(-1) \nn c_2\to 0 ~& \rotatebox{135}{$\subset$} & & & \swarrow & \, b_2 \to 0 & \searrow \, c_1 \to 0~(a'\to a) \nn & & \mathbb C^2 \ni ( & a,a') & & & \hs{-3} \mathbb C \times \mathbb CP^1~~~ \notag
%\end{alignat}
\begin{figure}[ht]
 \centering
\includegraphics[width=120mm]{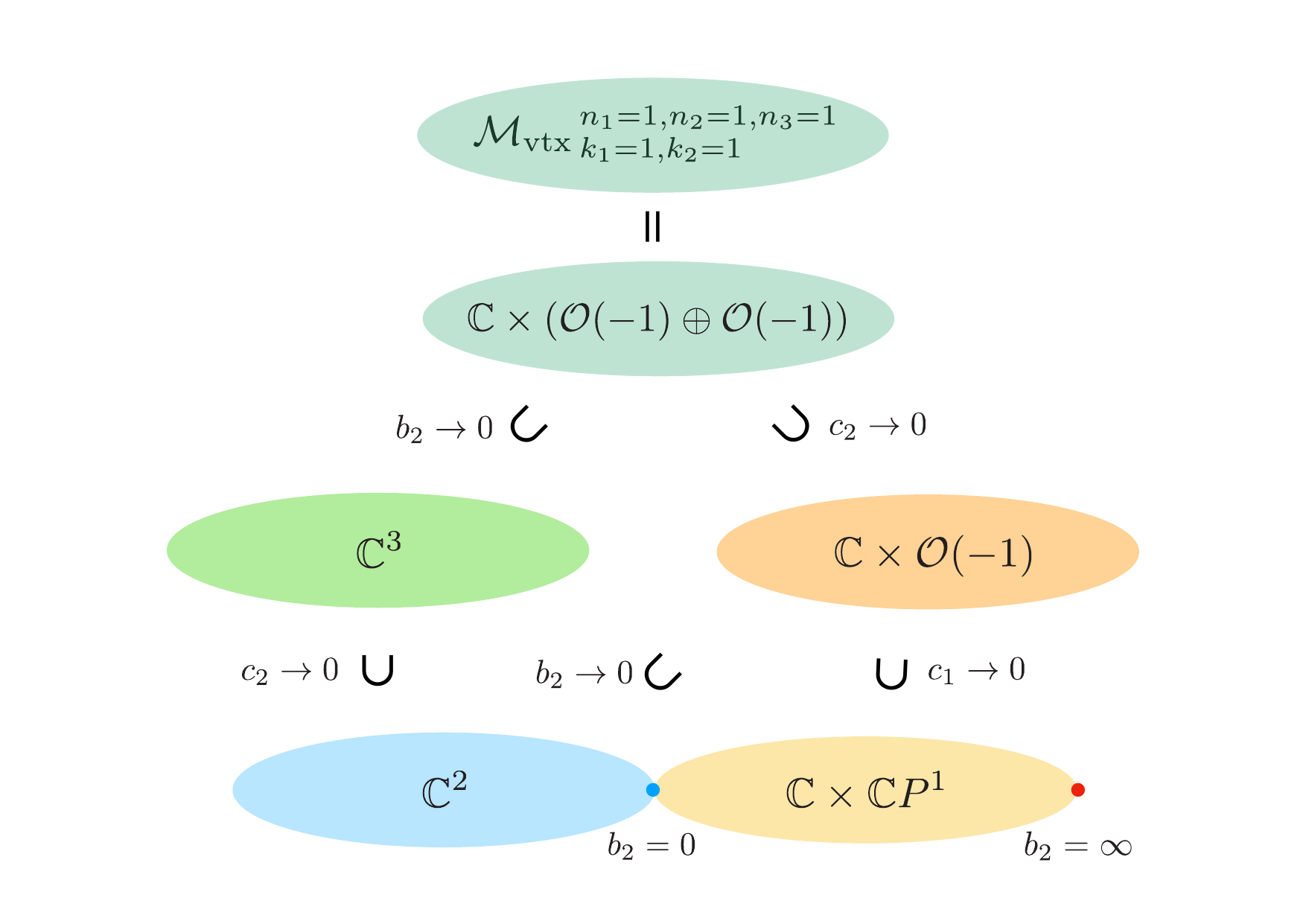}
\caption{The structure of moduli space of vortices with $n_1=n_2=n_3=1$, $k_1=k_2=1$.}
\label{fig:moduli_space}
\end{figure}
%
%The limit $b=0$ means  coincidence of vortices $a'=a$ 
%\begin{eqnarray}
%\phi_{12}(z)=\frac{c'}{z-a},\quad \phi_{13}(z)=\frac{c}{z-a},\quad \phi_{23}(z)=0.
%\end{eqnarray}
%In the limit of $\tilde b=0$,  we find 
%\begin{eqnarray}
%\phi_{12}(z)=0,\quad \phi_{13}(z)=0,\quad \phi_{23}(z)=\frac{\tilde c}{z-a}.
%\end{eqnarray}
%which means a small lamp singularity.

\paragraph{Fixed points of torus action \\}
The two subspaces $\mathbb C^2$ and $\mathbb C \times \C P^1$ shown in the bottom row of Fig.\,\ref{fig:moduli_space} 
contain two fixed points of a torus action which will be discussed in Appendix \ref{sec:TorusActions}. 
They are also viewed as the origins of the 
It is convenient to characterize these 
fixed points by Young tableaux as
\begin{align}
\bigg( \, \lower3mm\hbox{\yng(1,1)} \,, {\bf 1} \bigg) : & \quad q_1=(1,0),  \quad q_2 =
    \left(\begin{array}{ccc}
         z& 0&0 \\
         0& 1 &0
    \end{array} \right) \in C_2, \\
    \Big( \lower1mm\hbox{\yng(1)}\,,\lower1mm\hbox{\yng(1)} \Big):&\quad
    q_1=(z,0), \quad 
    q_2= \left(\begin{array}{ccc}
         1& 0&0 \\
         0& z &0
    \end{array} \right) \in C_1 \cap C_2.
\end{align}
The height $d$ of each young diagram indicates  
a composite state of $d$ different types of vortices. 
For $n_1=n_2=n_3=1$, the general fixed point and the corresponding set of Young diagrams is given by 
\begin{align}
  \bigg( \, \lower3mm\hbox{\young({1}{{\scriptstyle{\cdots}}}{l},{}{}{}{1}{{\scriptstyle{\cdots}}}{m})} \, ,\, \lower3mm \hbox{\young({1}{{\scriptstyle{\cdots}}}{n})} \,
\bigg):
\quad
    q_1=(z^{m},0), \quad 
    q_2= \left(\begin{array}{ccc}
         z^l& 0&0 \\
         0& z^n &0
    \end{array} \right),
\end{align}
with $(k_1,k_2)=(l+m,\,l+n)$.
In Sec.\,\ref{subsec:torus_action}, we will see the way to classify the fixed points in terms Young tableaux.

\section{Quotient construction of vortex moduli space}\label{sec:kahlerquotient}
In this section, we discuss a quotient construction of the vortex moduli space. We show that the vortex moduli space \eqref{eq:vtx_moduli} can be identified with a quotient of a vector space of matrices by a complex Lie group. 

\subsection%[L=1 case]
{$L=1$ case and half-ADHM mapping relation} 
Let us first review 
the quotient construction in the $L=1$ case \cite{Eto:2005yh,Eto:2006cx}. 
We show that the vortex moduli space \eqref{eq:moduli_v_L=1}
is given by the $GL(N,\C)$ quotient of the vector space of matrices $\{ Z, \Psi,\wt \Psi\}$ 
\begin{eqnarray}
{\cal M}_{\rm vtx \,}{\,}^{n, N-n}_{k} \, \cong \, \left\{ (Z,\Psi,\wt \Psi) \, \Big| \, \{Z,\Psi\} \ \mbox{on which $GL(k,\mathbb C)$ action is free} \right\}/GL(k,\mathbb C),
\end{eqnarray}
where $Z$ $k$-by-$k$ matrix, $\Psi$ is a $n$-by-$k$ matrix and $\wt \Psi$ is a $k$-by-$(N-n)$ matrix on which $GL(k,\C)$ acts
\beq
Z \rightarrow g^{-1} Z g, \hs{5} 
\Psi \rightarrow \Psi g, \hs{5}
\wt \Psi \rightarrow g^{-1} \wt \Psi,
\eeq
They are related to the moduli matrix 
$\mH(z)=q(z)=(\mD,\widetilde \mD)$ through the relations, which we call 
{\it the half-ADHM mapping relation}
\begin{eqnarray}
\mD(z)\Psi =\mJ(z)(z{\bf 1}_k -Z),
\hs{10}
\wt \mD(z) = \mJ(z) \wt \Psi,
\label{eq:hADHM}
\end{eqnarray} 
where $\mJ(z)=(j_1(z),\cdots,j_k(z))$ is the $n$-by-$k$ matrix defined in section \ref{subsec:ExampleMM_L=1}, which is characterized by the property 
\beq
\mD(z)^{-1} \mJ(z) = \mathcal O(z^{-1}) ~~~(z \rightarrow \infty).
\label{eq:J_property}
\eeq
The $GL(k,\C)$ transformation acts on $\mJ(z)$ as
\beq
\mJ(z) \rightarrow \mJ(z) \, g ~~~~ (g \in GL(k,\C)), 
\eeq
and can be regarded as the change of basis $\{j_1(z),\cdots,j_k(z) \}$ of the solutions of \eqref{eq:J_property}.

From the matrices $\mD$ and $\mJ$, 
the matrices $(Z,\Psi)$ can be obtained 
through the first equation in \eqref{eq:hADHM}. 
The existence of such constant matrices $(Z,\Psi)$ 
can be shown by using the following decomposition algorithm. 
By using $\mD$ and $\mJ$, 
any column vector $\vec f(z)$ 
with arbitrary polynomial entries can be decomposed as
 \begin{eqnarray}
%\forall \vec f(z),\, \exists \vec g(z),\, \exists \vec v: \quad 
\vec f(z) = \mD(z) \vec g(z) + \mJ(z) \boldsymbol{v}, \label{eq:PolynomialDiv}
\end{eqnarray}
with a column vector $\vec g(z)$ 
with polynomial entries and 
a constant vector $\boldsymbol{v} \in \C^k$. 
Note that for a given $\vec f(z)$, 
the column vectors $\vec g(z)$ and $\boldsymbol{v}$ are unique 
since the columns of $\mD(z)$ and $\mJ(z)$ are independent
in the sense that
\begin{eqnarray}
\vec 0 =\mD(z) \vec g(z)+ \mJ(z) \boldsymbol{v} 
\quad \Leftrightarrow \quad 
\vec g(z) =\vec 0, \quad 
\boldsymbol{v} = \boldsymbol{0}. 
\end{eqnarray}

Applying the decomposition \eqref{eq:PolynomialDiv}
to each column of $z \mJ(z)$, 
we can show that there exist 
a $n$-by-$k$ matrix $\Psi$ and 
a $k$-by-$k$ matrix $Z$ such that
\begin{eqnarray}
 z\mJ(z) = \mD(z) \Psi+\mJ(z) Z, \qquad \hbox{or equivalently } \qquad \mD(z) \Psi = \mJ(z)(z{\bf 1}_k -Z).
 \label{eq:hADHM1}
\end{eqnarray}
Note that $\Psi = \mD(z)^{-1} \mJ(z)(z{\bf 1}_k -Z)$ must be a constant matrix since it is regular everywhere and $\mD(z)^{-1} (z \mJ(z)) = \mathcal O(1)$ in the limit $z \rightarrow \infty$. 
Similarly, by applying the decomposition \eqref{eq:PolynomialDiv}
to each column of $\wt \mD(z)$, 
we obtain the $k$ by $(N-n)$ matrix $\wt \Psi$ as 
\begin{eqnarray}
\wt \mD(z) = \mJ(z) \wt \Psi.
\label{eq:hADHM2}
\end{eqnarray} 
Note that $\wt \mD(z)$ has no term proportional to $\mD(z)$ 
since $\wt \mD(z)$ satisfies the condition $\mD(z)^{-1}\wt \mD(z)={\cal O}(z^{-1})$. 

As we have seen, the set of matrices $\{Z, \Psi, \wt \Psi\}$ 
can be extracted from the moduli matrix 
$\mH(z) = (\mD(z), \wt \mD(z))$. 
However, $\{Z, \Psi, \wt \Psi\}$ is not unique 
for a given $\mH(z) = (\mD(z), \wt \mD(z))$ 
due to the degrees of freedom 
of the change of the basis $\{ \vec j_a(z)\}$. 
Thus, for a given $\mH(z) = (\mD(z), \wt \mD(z))$, 
we obtain a unique equivalence class of matrices defined by
\begin{eqnarray}
\{Z,\Psi,\wt \Psi\} \sim \{ g^{-1} Z g \, , \, \Psi g \, , \,  g^{-1} \wt \Psi\} \quad {\rm with}~~ g \in GL(k,\mathbb C). \label{eq:GLk}
\end{eqnarray}
We can show that 
this $GL(k,\mathbb C)$ action is free 
on the part of the data $\{Z, \Psi \}$ obtained from $\mD(z)$;
that is, for any infinitesimal $GL(k,\C)$ action 
$\delta_X Z = [Z,X]$, $\delta_X \Psi = \Psi X$ with $X \in {\mathfrak gl}(k,\C)$, 
\beq
\delta_X Z=0, ~~~ \delta_X \Psi = 0 ~~~ \Rightarrow ~~~ X=0.
\eeq
As shown in Appendix \ref{sec:ADHMdata}, this condition on the data is equivalent to
the following statement for a vector $\vec v$:
\beq
\forall z\in \mathbb C,\quad \Psi (z{\bf 1}-Z)^{-1} \vec v=0\quad \Rightarrow \quad \vec v=0.\label{eq:free_cond}
\eeq
Since the relation \eqref{eq:hADHM1} is rewritten as
\beq
\Psi(z \mathbf 1 - Z)^{-1} = \mD(z)^{-1} \mJ(z),
\eeq
the above $GL(k,\mathbb C)$-free condition is 
satisfied when the $k$ columns of 
$\mJ(z)$ are linearly independent. 
Since this is true by construction, 
the infinitesimal action of $GL(k,\C)$ 
on $\{Z,\Psi\}$ is free.

Through the half-ADHM mapping relations  (\ref{eq:hADHM}), 
we can show that there exists a one-to-one map 
(see Appendix \ref{sec:ZPsiPatches})
between the moduli matrix $(\mD(z), \wt \mD(z))$ 
and the half-ADHM data $\{Z, \Psi, \wt \Psi \}$ 
in each coordinate patch given in Eq.\,(\ref{eq:patches})\footnote{
We can confirm that the number of the degrees of freedom 
of the equivalence class \eqref{eq:GLk} coincides with that of the moduli matrices $\# T^a{}_{b,m} =\# Z+\# \Psi- \# g=kn$.
See Appendix \ref{sec:ZPsiPatches} for more details.}.
Thus, we find that the moduli space of 
BPS vortices turnes out to be given by 
\begin{eqnarray}
{\cal M}_{\rm vtx \,}{\,}_k^{n,N-n} \hs{-3} &\cong& \hs{-2} \left\{ (\mD(z), \wt \mD(z) ) \, \Big| \det \mD(z) = \mathcal O(z^k)\,,\, \mD(z)^{-1} \wt \mD(z) = \mathcal O(z^{-1}) ) \right\} / \{\mbox{$V$-transf. in Eq.\,(\ref{eq:Vtr})}\} \\
\hs{-3} &\cong& \hs{-2} \left\{ (Z,\Psi,\wt \Psi) \, \Big| \, \{Z,\Psi\} \ \mbox{on which $GL(k,\mathbb C)$ action is free} \right\}/GL(k,\mathbb C).
\end{eqnarray}

Indeed, we can show that the matrices $\{Z, \Psi\}$ have all the $V(z)$ invariant information contained in $\mD(z)$ 
from the fact that 
all the invariants under the $V$-transformation 
consisting of $\mD(z)$ and $\mJ(z)$ can be
expressed in terms of $\{Z, \Psi\}$ as\footnote{Although any minor determinants of the matrix $(\mD(z), \mJ(z))$ are invariants, they are related to $\det \mD(z)$ and $\mD(z)^{-1}\mJ(z)$ through the Pl\"ucker relations. }
\begin{eqnarray}
\det \mD(z)=\det (z{\bf 1}_k-Z), \quad {\rm and~} \quad  \mD(z)^{-1}\mJ(z) = \Psi (z{\bf 1}_k-Z)^{-1}.
\end{eqnarray}    
% $\mD(z)^{-1}$ can be rewritten to
%\begin{eqnarray}
%\mD(z)^{-1}=P^\mD+\mD(z)^{-1}\mJ(z) P^\mJ =P^\mD+\Psi (z{\bf 1}-Z)^{-1} P^\mJ 
%\end{eqnarray}
Note that the second relation obeys from (\ref{eq:hADHM})
and the first one can be derived as follows. 
By applying the decomposition (\ref{eq:PolynomialDiv}) 
to each column of the unit matrix, we obtain
\begin{eqnarray}
{\bf 1}_n = \mD(z) P^\mD+\mJ(z) P^\mJ, \label{eq:PP}
\end{eqnarray}  
where $P^\mD$ and $P^\mJ$ are $n$-by-$n$ and $k$-by-$n$ constant matrices, respectively.
Since this equation is not invariant 
under the $V$-transformation, 
$P^\mJ$ and $P^\mD$ depend on the choice of the coordinate patch.\footnote{ 
For the $(l_1,l_2,\cdots, l_n)$ patch  in Eq.(\ref{eq:Dpatch})  and 
Eq.(\ref{eq:Jpatch}),  
 these matrices can be easily obtained as
 \begin{eqnarray}
(P^\mD)^a{}_b=\delta^a_b\, \delta^{l_a}_0,\quad (P^\mJ)^{(a,p)}{}_b=\delta^a_b (1-\delta^{l_a}_0) \delta^p_{l_a},
\end{eqnarray}
% since $\mD(z)$ and $\mJ(z)$ have the following column vectors
%\begin{eqnarray}
%(\mD(z))^a{}_b=\delta^a{}_{b} \quad {\rm for~} b \in \{b| 1\le b\le n,  l_b=0\},\quad 
%(\vec j (z)))^a= \delta^a_{b}  \quad {\rm for~}  p=l_b {\rm ~and~}  b\in  \{b| 1\le b\le n,  l_b\not=0\}
%\end{eqnarray}
%and a set of them forms an unit matrix of order $n$.
%Therefore  Eq.(\ref{eq:PP})  indicates that  a certain set of $n$ columns within  $((P^\mD)^\T, (P^\mJ)^\T)$ forms  
%an unit matrix of order $n$ and all the other columns vanish, and that is,  
which are independent of any moduli parameters in $\mD(z)$.
}
%after appropriate fixing of the V-transformation a squared matrix $P^\mD$ of order $n$ is also constant 
%since all elements of $\mD(z)^{-1}$ are  of ${\cal O}(1)$ at most.\footnote{
%This equation is not invariant under the $V$-transformation.
%With a patch given in Eq.(\ref{eq:Dpatch})  and 
%Eq.(\ref{eq:Jpatch}),  these matrices can be easily obtained since $\mD(z)$ and $\mJ(z)$ have the following column vectors
%\begin{eqnarray}
%(\mD(z))^a{}_b=\delta^a{}_{b} \quad {\rm for~} b \in \{b| 1\le b\le n,  l_b=0\},\quad 
%(\vec j (z)))^a= \delta^a_{b}  \quad {\rm for~}  p=l_b {\rm ~and~}  b\in  \{b| 1\le b\le n,  l_b\not=0\}
%\end{eqnarray}
%and a set of them forms an unit matrix of order $n$.
% } 
 Using the half-ADHM mapping relation \eqref{eq:hADHM} and (\ref{eq:PP}),  one can show that
\begin{eqnarray}
\left( \begin{array}{cc}
\mD(z)& \mJ(z) \\ {\bf 0} & {\bf 1}_k
\end{array} \right)
=\left( \begin{array}{cc}
{\bf 1}_n & {\bf 0} \\ P^\mJ & z{\bf 1}-Z
\end{array} \right){\cal N}^{-1},\quad {\rm with~~}{\cal N}\equiv \left( \begin{array}{cc}
P^\mD & -\Psi \\ P^\mJ & z{\bf 1}-Z
\end{array} \right).
\end{eqnarray}
By taking the determinant of the both sides 
and counting their degrees, 
we conclude this polynomial $\det {\cal N}$ is ${\cal O}(1)$, 
that is, $\det {\cal N}=1$ 
when $\det \mD(z)$ is chosen to be a monic polynomial.
Thus we find that $\det \mD(z) = \det (z \mathbf 1 - Z)$. 

%\paragraph{Example: $n_1=2$, $n_2=1$, $k=1$}
%Notice that there is coincidence as $\# T^a{}_{b,m} =kn =\# Z+\# \Psi- \# \mU$. 
%
%The above feature means $GL(k,\mathbb C)$ can completely be fixed by   taking an appropriate coordinate patch and one can find 
%$(k+n-1)!/(k! (n-1)!)$ patches again.

%%%%%%%%%%%%%%%%%%%%%%%%%%%%%%%%%%%%%%%%%%%%%%%%%%%%%%%%%%%%%%%%%%
\subsection%[Quotient construction for general L]
{Quotient construction for general $L$}
\def\mY{\Upsilon}
\def\mW{W}
\def\supW{{\cal W}}

For the case of general $L$, 
we can define $L$ copies of the matrices (and relations) 
defined in the previous subsection 
by attaching an index $i=1,\cdots,L$. 
For example, we can define $N_i$-by-$k_i$ matrix $\mJ_i(z)$ 
with polynomial entries by solving the condition 
\beq
\mD_i(z)^{-1} \mJ_i(z) = \mathcal O(z^{-1}). 
\eeq
Then, we can obtain matrices $\{Z_i, \Psi_i, \wt \Psi_i\}$ from 
the $N_i$-by-$N$ matrix $\mH_i(z) = (\mD_i(z),\wt \mD_i(z))$ 
thought the realtions
\beq
\mD_i(z) \Psi_i =\mJ_i(z) (z{\bf 1}_{k_i}-Z_i), \hs{10}
\wt \mD_i(z) =  \mJ_i(z) \wt \Psi_i,
\label{eq:h-ADHM_L}
\eeq
where $\{Z_i, \Psi_i, \wt \Psi_i\}$ are $k_i$-by-$k_i$, $N_i$-by-$k_i$ and $k_i$-by-$(N-N_i)$ matrices, respectively. 
Conversely, from a given set of constant matrices 
$\{Z_i, \Psi_i, \wt \Psi_i| 1\le i\le L\}$, 
we can obtain the matrices $\{\mH_i(z)| 1\le i \le L\}$
up to $V$-transformations. 
However, such matrices $\{\mH_i(z)\}$ do not necessarily 
satisfy the constraint that there must be everywhere non-singular $N_i$-by-$N_{i+1}$ matrices $\mq_i(z)$ such that 
\beq
\mq_i(z) \mH_{i+1}(z) = \mH_i(z). 
\label{eq:relation_H}
\eeq 
To guarantee that these constraints are satisfied, 
the matrices $\{Z_i, \Psi_i, \wt \Psi_i\}$ 
must satisfy some constraints.
To write down the constraints, 
let us decompose the matrices $\Psi_i$ and $\wt \Psi_i$ as
\beq
\Psi_{i+1} = \ba{c} \mY_{i+1}' \\ \mY_{i+1} \ea, \hs{10}
 \wt \Psi_i=(\wt \mY_i, \, \wt \mY_i' ), 
\label{eq:def_upsilon}
\eeq
where 
$\mY_{i+1}'$ is an $N_i$-by-$k_{i+1}$ matrix, 
$\mY_{i+1}$ is an $n_{i+1}$-by-$k_{i+1}$ matrix,
$\wt \mY_i$ is a $k_i$-by-$(N_{i+1}-N_i)$ matrix and
$\wt \mY_i'$ is a $k_i$-by-$(N-N_{i+1})$ matrix.
Then, the relation \eqref{eq:relation_H} can be rewritten as
\begin{eqnarray}
\wt \mY_i'&=&\mW_i \,\wt \Psi_{i+1}, \label{eq:const1} \\
 \mY_{i+1}' &=& \Psi_i \mW_i, \label{eq:const2} \\
 \wt \mY_i \mY_{i+1}&=&Z_i\mW_i-\mW_i Z_{i+1}, \label{eq:const3}
\end{eqnarray}
where $W_i$ is a $k_i$-by-$k_{i+1}$ matrix such that
\begin{eqnarray}
 \mq_i(z) \mJ_{i+1}(z) = \mJ_i(z) \mW_i.   
\end{eqnarray}
The constraints \eqref{eq:const1}-\eqref{eq:const3} can be
derived as follows.

\paragraph{Constraints on half-ADHM data \\}
Let us first show the half-ADHM data 
$\{Z_i, \Psi_i, \wt \Psi_i\}$ 
obtained through \eqref{eq:h-ADHM_L} satisfy
\eqref{eq:const1}-\eqref{eq:const3}. 
Then, we show that the matrices $\{ \mH_i(z) \}$
obtained from the half-ADHM data obeying the constraints  
\eqref{eq:const1}-\eqref{eq:const3} 
satisfy the relation \eqref{eq:relation_H} 
with suitable matrices $q_i(z)$. 

\paragraph{\eqref{eq:relation_H} to \eqref{eq:const1}-\eqref{eq:const3} \\}
We can rewrite the relation \eqref{eq:relation_H} 
for the matrices $\mH_i(z)=(\mD_i(z),\wt \mD_i(z)) = (\mD_i(z),\mJ_i(z) \wt \Psi_i)$ as
\begin{eqnarray}
\mq_i(z) \mD_{i+1}(z)&=&\left( \mD_i(z), \, \mJ_i(z) \wt \mY_i \right), \label{eq:rel1}\\ 
\mq_i(z)\mJ_{i+1}(z)\wt \Psi_{i+1}&=&\mJ_i(z) \wt \mY_i'. \label{eq:rel2}
\end{eqnarray}
It follows from the first equation
that $\mD_i(z)^{-1} \left[ \mq_i(z) \mJ_{i+1}(z) \right] = \mathcal O(z^{-1})$
and hence there exist a $k_i$-by-$k_{i+1}$ matrix $W_i$ such that\footnote{
Since
\begin{eqnarray}
\mD_i^{-1}\left( \mq_i \mJ_{i+1} \right) \ = \ \mD_i^{-1} \left( \mq_i \mD_{i+1} \right) \left(\mD_{i+1}^{-1}\mJ_{i+1}\right)
\ = \ \left({\bf 1}_{N_i},\,  {\mD}_i^{-1}\mJ_i \wt \mY_i  \right)  \left(\mD_{i+1}^{-1}\mJ_{i+1} \right) \ = \ {\cal O}(z^{-1}),
\notag
\end{eqnarray}
the columns of the matrix $\mq_i(z) \mJ_{i+1}(z)$ can be written as linear combinations of the columns of $\mJ_i(z)$, and hence 
there exist a matrix $W_i$ such that $\mq_i(z) \mJ_{i+1}(z) = \mJ_i(z) W_i$.
}
\begin{eqnarray}
\mq_i(z) \mJ_{i+1}(z) = \mJ_i(z) \mW_i.   \label{eq:W}
\end{eqnarray}
Then, by substituting this relation into (\ref{eq:rel2})
we find that \eqref{eq:const1} is satsified
\begin{eqnarray}
\mJ_i(z) \mW_i \,\wt\Psi_{i+1} = \mJ_i(z)\wt \mY_i' \quad \Longleftrightarrow \quad 
\wt \mY_i'=\mW_i \,\wt \Psi_{i+1}, 
\end{eqnarray}
where we have used the fact that 
the columns of $\mJ_i(z)$ are linearly independent. 
We can see that \eqref{eq:const2} and \eqref{eq:const3}
are satisfied as follows. 
By multiplying the both sides of Eq.(\ref{eq:rel1}) 
by $\Psi_{i+1} = (\mY_{i+1}', \mY_{i+1})^T$ from the right, 
we obtain
\beq
q_i(z) \mD_{i+1}(z) \Psi_{i+1} = \mD_i(z) \mY_{i+1}' + \mJ_i(z) \wt \mY_i \mY_{i+1} . 
\label{eq:eq30}
\eeq
The left hand side can be rewritten 
by using the half-ADHM mapping relation and (\ref{eq:W}) as 
\begin{eqnarray}
\mq_i \mD_{i+1} \Psi_{i+1} = \mq_i \mJ_{i+1} (z{\bf 1}_{k_{i+1}}-Z_{i+1}) = \mJ_{i} \mW_i (z{\bf 1}_{k_{i+1}}-Z_{i+1}) = \mD_i \Psi_i W_i + \mJ_{i} (Z_i \mW_i - \mW_i Z_{i+1}). 
\end{eqnarray}
Comparing this with the right hand side of \eqref{eq:eq30} 
and using the linear independence of $(\mD_i(z),\mJ_i(z))$, 
we find that\eqref{eq:const2} and \eqref{eq:const3} are satisfied. 

\paragraph{\eqref{eq:const1}-\eqref{eq:const3} to \eqref{eq:relation_H} \\}
Next, let us show that if the half-ADHM data satisfies 
the constraints \eqref{eq:const1}-\eqref{eq:const3}, 
the corresponding $\mH_i(z)$ related through \eqref{eq:h-ADHM_L} satisfy the relation \eqref{eq:relation_H}
(or equivalently \eqref{eq:rel1} and \eqref{eq:rel2}), 
with suitable matrices $q_i(z)$. 
Although formally the condition \eqref{eq:rel1} is always satisfied 
by adopting $q_i(z) = (\mD_i(z),\mJ_i(z) \wt \mY_i') \mD_{i+1}(z)^{-1}$, 
such matrices $q_i(z)$ may not be suitable 
since they can have some poles. 
We can show that 
$q_i(z) = (\mD_i(z),\mJ_i(z) \wt \mY_i') \mD_{i+1}(z)^{-1}$ 
do not have any pole 
if \eqref{eq:const1}-\eqref{eq:const3} are satisfied.  
To show this, let us rewrite $\mD_{i+1}^{-1}$ as
\beq
\mD_{i+1}^{-1} = P^\mD_{i+1} + \mD_{i+1}^{-1} \mJ_{i+1} P_{i+1}^\mJ = P^\mD_{i+1} + \Psi_{i+1} (z \mathbf 1_{k_{i+1}} - Z_{i+1})^{-1} P_{i+1}^\mJ, 
\eeq
where $P^\mD_{i+1}$ and $P_{i+1}^\mJ$ are matrices defined by
$\mathbf 1_{N_{i+1}} = \mD_{i+1}(z) P^\mD_{i+1} + \mJ_{i+1}(z) P_{i+1}^\mJ$. 
Then, $q_i= (\mD_i,\mJ_i \wt \mY_i')\mD_{i+1}^{-1}$ can be rewritten as
\beq
q_i(z) ~=~ ( \mD_i, \, \mJ_i \wt \mY_i ) \mD_{i+1}^{-1} ~=~ ( \mD_i, \, \mJ_i \wt \mY_i ) P^\mD_{i+1} + ( \mD_i, \, \mJ_i \wt \mY_i ) \Psi_{i+1} (z \mathbf 1_{k_{i+1}} - Z_{i+1})^{-1} P_{i+1}^\mJ. 
\eeq
Obviously the first term has no pole and 
the regularity of the second term can be shown by rewriting 
\beq
( \mD_i, \, \mJ_i \wt \mY_i ) \Psi_{i+1} (z \mathbf 1_{k_{i+1}} - Z_{i+1})^{-1} 
 &=& ( \mD_i \mY_{i+1}' +\mJ_i \wt \mY_i \mY_{i+1} ) (z \mathbf 1_{k_{i+1}} - Z_{i+1})^{-1} \notag \\
&=& ( \mD_i \Psi_i W_i + \mJ_i (Z_i W_i - W_i Z_{i+1}) ) (z \mathbf 1_{k_{i+1}} - Z_{i+1})^{-1} ~=~ \mJ_i W_i . \hs{5}
\eeq
Since this has no pole, $q_i(z) = ( \mD_i, \, \mJ_i \wt \mY_i ) P^\mD_{i+1} + \mJ_i W_i P_{i+1}^\mJ$ is regular
and hence \eqref{eq:rel1} is satisfied. 
Then we can show that \eqref{eq:rel2} is also satisfied as
\beq
q_i \mJ_{i+1} \wt \Psi_{i+1} = q_i \mD_{i+1} \mD_{i+1}^{-1} \mJ_{i+1} \wt \Psi_{i+1} = (\mD_i, \mJ_i \wt \mY_i) \Psi_{i+1}(z \mathbf 1_{k_{i+1}} - Z_{i+1})^{-1} \wt \Psi_{i+1} = \mJ_i W_i \wt \Psi_{i+1} = \mJ_i \wt \mY_i'. 
\eeq
Thus, we find that 
{\it no further constraints} other than 
\eqref{eq:const1}-\eqref{eq:const3} 
is needed on a data set $\{Z_i,\,\mY_i,\, \wt \mY_i,\,\mW_i\}$
to guarantee that $q_i(z)$ obtained through 
the half-ADHM mapping relation \eqref{eq:h-ADHM_L} 
satisfy the relation \eqref{eq:relation_H}.

The constraints \eqref{eq:const1} and \eqref{eq:const2} imply that $\mY_{i}'$ and $\wt \mY_{i}'$ are not independent but 
can be rewritten in terms of $\{W_i, \mY_i,\wt \mY_i\} $ as
\begin{eqnarray}
\Psi_i&=& \Big( \Psi_{i-1}\mW_{i-1} \ , \ \mY_{i} \Big)^T =
\Big( \mY_1 \mW_1 \mW_2 \cdots \mW_{i-1} \ , \ \mY_2 \mW_2 \cdots \mW_{i-1} \ , \ \cdots \ , \ \mY_{i-1}\mW_{i-1} \ , \ \mY_{i} \Big)^T, \label{eq:psi}
\\
\wt \Psi_i &=& \left( \wt \mY_i \ , \ \mW_i \wt \Psi_{i+1}  \right) \hs{6}
=\left( \wt \mY_i \ , \ \mW_i \wt \mY_{i+1} \ , \ \mW_i \mW_{i+1} \wt \mY_{i+2} \ , \ \cdots \ , \  \mW_i \mW_{i+1} \cdots \mW_{L-1} \wt \mY_{L} \right), \quad
\label{eq:psit}
\end{eqnarray}
with $\mY_1 \equiv \Psi_1$ and 
$\wt \mY_L \equiv \wt \Psi_L$. 
Therefore, all information describing vortex moduli are 
contained in the set of matrices 
$\{ Z_i,\mY_i,\wt \mY_i| 1\le i \le L\}$ and 
$\{ W_i| 1\le i\le L-1\}$ obeying the constraints
\begin{eqnarray}
 Z_i\mW_i-\mW_i Z_{i+1}=\wt \mY_i \mY_{i+1} \qquad 
 {\rm for~}  1\le i\le L-1
 \label{eq:constraint}.
\end{eqnarray}
Since $q_i(z)$ related to 
$\{Z_i,\,\mY_i,\, \wt \mY_i,\,\mW_i\}$ 
though the half-ADHM mapping relation 
do not change under the $GL(k_1, \mathbb C) \times \cdots \times GL(k_L, \mathbb C)$ transformation
\begin{eqnarray}
\{ Z_i, \,\mY_i,\,\wt \mY_i, \,\mW_i\} \simeq \{ \mU_i^{-1}Z_i\mU_i, \,\mY_i \mU_i,\, \mU_i^{-1}\wt \mY_i,\, 
\mU^{-1}_i\mW_i \mU_{i+1}\} \quad{\rm with~} \forall \, \mU_i \in GL(k_i, \mathbb C),
\end{eqnarray}
the quotient space of the data $\{Z_i,\,\mY_i,\, \wt \mY_i,\,\mW_i\}$ by $GL(k_1, \mathbb C) \times \cdots \times GL(k_L, \mathbb C)$ can be identified with the vortex moduli space. 
Note that the action of $GL(k_1, \mathbb C) \times \cdots \times GL(k_L, \mathbb C)$ must be free on $\Psi_i, Z_i$, that is,    
\begin{equation}
\exists\, \hbox{$k_i$-column vector~} \vec v_i: \mY_j \mW_j \mW_{j+1}\cdots \mW_{i-1} (z{\bf 1}_{k_i}-Z_i)^{-1}
 \vec v_i =0 ~ {\rm for~} 
1\le \forall \,j \le i, ~ \forall \, z \in \mathbb C ~
\Rightarrow ~ \vec v_i=0. \label{eq:GLkifree}
\end{equation}
Note that this set of conditions is a generalization of the condition \eqref{eq:free_cond} for $L=1$. 

In summary, the moduli space of vortices 
turns out to be given by the quotient
\begin{eqnarray}
{\cal M}_{{\rm vtx}}{}^{n_1,n_2,\dots,n_{L+1}}_{~k_1,k_2,\cdots,k_L}=\left\{ Z_i, \,\mY_i,\,\wt \mY_i,\, \mW_j \, \Big| \, %1\le i\le L,\,1\le j\le L-1,
{\rm constraints}(\ref{eq:constraint}),\,  {\rm condition}(\ref{eq:GLkifree}) \right\}
/GL(k_1, \mathbb C) \times \cdots \times GL(k_L, \mathbb C).
\label{eq:moduli_space_genearl}
\end{eqnarray}
The contents of this quotient are summarized in the quiver diagram Fig.\,\ref{fig:ADHM_quiver} and 
the corresponding gauged linear sigma model can also be obtained from the D-brane configuration for BPS vortices (see Appendix \ref{sec:brane}).

In Appendix \ref{sec:tW}, we prove that  
all possible singularities due to the constraints \eqref{eq:constraint} is removed by the condition \eqref{eq:GLkifree} and the moduli space defined above is smooth everywhere.

\begin{figure}[!t]
\begin{center}
\includegraphics[bb = 100 150 1920 500, width=170mm]{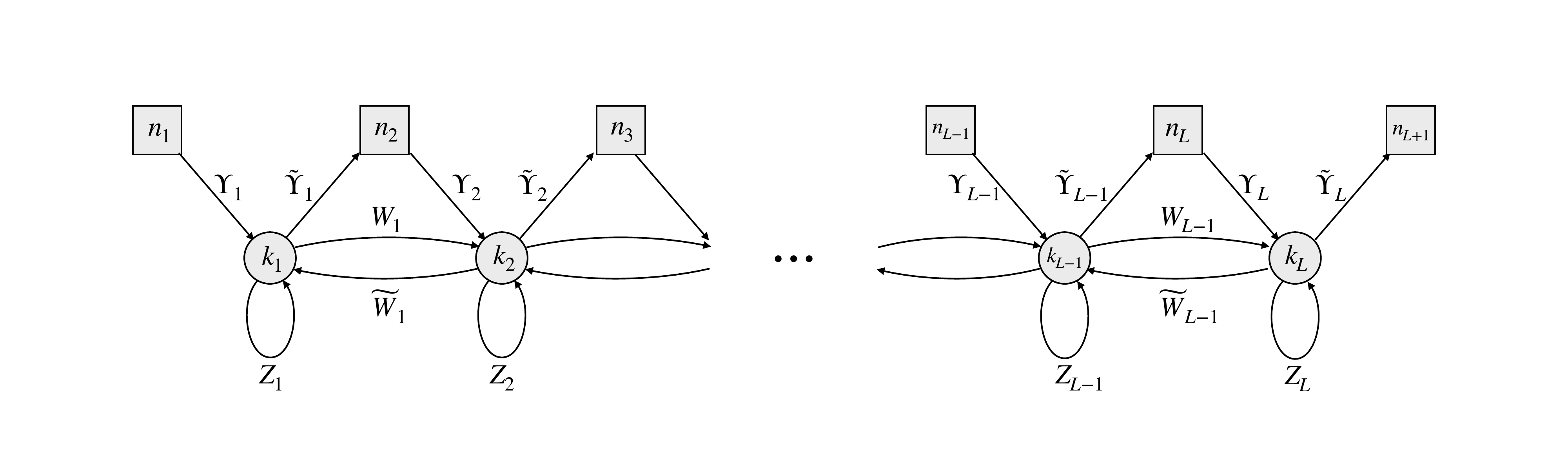}
\end{center}
\caption{Quiver diagram for vortex moduli. Here $\wt \mW_i$ represent Lagrange multipliers for the constraint (\ref{eq:constraint}), which will be explained in the next section.
\label{fig:ADHM_quiver}}
\end{figure}

\subsubsection
%[An example of L=2 : SU(3)/U(1)**2 sigma model]
{An example of $L=2$ : $SU(3)/U(1)^2$ sigma model}
Here, we illustrate the moduli space of vortices in the gauged linear sigma model corresponding to the $SU(3)/U(1)^2$ sigma model ($L=2, N_1=1, N_2=2, N_3=3, n_1=n_2=n_3=1$).
The model is the $U(1) \times U(2)$ gauge theory with an $SU(3)_F$ flavor symmetry. 
The matter content consists of two scalar fields $Q_1$ and $Q_2$ in the $(1,\bar{\mathbf 2},\mathbf 0)$ and $(0, \mathbf 2,\bar{\mathbf 3})$ of the $U(1) \times U(2) \times SU(3)_F$ symmetry, respectively. 
The topological sectors are labeled by two integers $(k_1,k_2)$
corresponding to the vortex numbers of $U(1) \times U(2)$ gauge group.   

\paragraph{Two coordinate patches in $(k_1,k_2)=(1,1)$ case \\}

First, let us consider the case of $(k_1,k_2)=(1,1)$.
As we have seen in subsection \ref{subsec:example_L=2}, 
there are two coordinate patches: 
\beq
\bullet \ \mbox{$(1,0)$-patch :} \hs{5} \mH_1(z)=(z-a', c', c), \hs{5} 
\mH_2(z) = \ba{ccc} z-a & 0 & c \\ -b & 1 & 0 \ea, \label{eq:L=2_(1,0)_2}\\
\bullet \ \mbox{$(0,1)$-patch :} \hs{5} 
\mH_1(z)=(z-a', c', c), \hs{5}
\mH_2(z) = 
\ba{ccc} 1 & - \tilde b & 0 \\ 0 & z-a & \tilde c \ea,
\label{eq:L=2_(0,1)_2}
\eeq
where the parameters are realted as
\beq
b = \tilde b^{-1}, \hs{5}
c' = \tilde b (a'-a), \hs{5}
c = \tilde b \tilde c. 
\eeq
We can move from the $(1,0)$-patch to the $(0,1)$-patch by using the $V$-transformation $\xi_i \rightarrow V_i \xi_i$ with
\beq
 V_1 = 1, \hs{5} V_2 = \ba{cc} 0 & -\tilde b \\ b & z-a \ea. 
\eeq

\noindent $\bullet$ $(1,0)$-patch \\
From the definition $\xi_i = (\mD_i,\wt \mD_i)$ and \eqref{eq:L=2_(1,0)_2}, 
we find that the matrices $\mD_i$ and $\wt \mD_i$ are givenby 
\beq
\mD_1 = z - a', \hs{5}
\wt \mD_1 = (c',c), \hs{10}
\mD_2 = \ba{cc} z - a & 0 \\ -b & 1 \ea, \hs{5}
\wt \mD_2 = \ba{c} c \\ 0 \ea. 
\eeq
From these matrices, the corresponding half-ADHM data can be read off as
\begin{alignat}{4}
&\mD_i^{-1} \mJ_i = \mathcal O(z^{-1}) & \hs{3} &\rightarrow & \hs{5} & \mJ_1 = 1, &\hs{5} \mJ_2 &=(1,0)^T \\
&\mD_i \Psi_i = \mJ_i (z - Z_i) & &\rightarrow & &Z_1 = a', &\hs{5}
Z_2 &= a, \hs{10} 
\Psi_1 = 1, \hs{3} \Psi_2 = (1, b)^T, \\
&\wt \mD_i = \mJ_i \wt \Psi_i & &\rightarrow & & \wt \Psi_1 = ( c', c ), &\hs{5} 
\wt \Psi_2 &= c, \\
&\Psi_i = (\mY_i', \mY_i)^T & &\rightarrow & &\mY_1 = 1, &\hs{5} \mY_2 &= b, \\
&\wt \Psi_i = (\wt \mY_i,\wt \mY_i') & &\rightarrow & &\wt \mY_1 = c', &\hs{5} \wt \mY_2 &= c, \\
&\wt \mY_1 \mY_2 = Z_1 W_1 - W_1 Z_2 & &\rightarrow & &W_1 = 1.
\end{alignat}  

\noindent $\bullet$ $(0,1)$-patch \\
From the definition $\xi_i = (\mD_i,\wt \mD_i)$ and \eqref{eq:L=2_(0,1)_2}, 
we find that the matrices $\mD_i$ and $\wt \mD_i$ are given by
\beq
\mD_1 = z-a', \hs{5} \wt \mD_1 = (c',c), \hs{5}
\mD_2= \ba{ccc} 1 & - \tilde b \\ 0 & z-\tilde a \ea, \hs{5}
\wt \mD_2 = \ba{c} 0 \\ \tilde c \ea. 
\eeq
From these matrices, the corresponding half-ADHM data can be read off as
\begin{alignat}{4}
&\mD_i^{-1} \mJ_i = \mathcal O(z^{-1}) & \hs{3} &\rightarrow & \hs{5} & \mJ_1 = 1, &\hs{5} \mJ_2 &=(0,1)^T \\
&\mD_i \Psi_i = \mJ_i (z - Z_i) & &\rightarrow & &Z_1 = a', &\hs{5}
Z_2 &= a, \hs{10} 
\Psi_1 = 1, \hs{3} \Psi_2 = (\tilde b, 1 )^T, \\
&\wt \mD_i = \mJ_i \wt \Psi_i & &\rightarrow & & \wt \Psi_1 = ( c', c ), &\hs{5} 
\wt \Psi_2 &= \tilde c, \\
&\Psi_i = (\mY_i', \mY_i)^T & &\rightarrow & &\mY_1 = 1, &\hs{5} \mY_2 &= 1, \\
&\wt \Psi_i = (\wt \mY_i,\wt \mY_i') & &\rightarrow & &\wt \mY_1 = c', &\hs{5} \wt \mY_2 &= \tilde c, \\
&\wt \mY_1 \mY_2 = Z_1 W_1 - W_1 Z_2 & &\rightarrow & &W_1 = \tilde b.
\end{alignat}  
%The matrices $\mJ_i$ are determined from the condition $\mD_i^{-1} \mJ_i = \mathcal O(z^{-1})$ as
%\beq
%\mJ_1 = 1, \hs{5} \mJ_2 = \ba{c} 0 \\ 1 \ea.
%\eeq
%Then, by solving the relations $\mD_i \Psi_i = \mJ_i %(z - Z_i),~\wt \mD_i = \mJ_i \wt \Psi_i$, we can determine $(Z_i,\Psi_i,\wt \Psi_i)$ as
%\beq
%Z_1 = a', \hs{5}
%Z_2 = \tilde a, \hs{5} 
%\Psi_1 = 1, \hs{3} \Psi_2 = \ba{c} \tilde b \\ 1 \ea, \hs{5}
%\wt \Psi_1 = ( c', c'' ), \hs{5} 
%\wt \Psi_2 = \tilde c. 
%\eeq
%From the definition $\Psi_i =(\mY_i', \mY_i)^T$, $\wt \Psi_i = (\wt \mY_i,\wt \mY_i')$, 
%we can determine $\mY_i$ and $\wt \mY_i$ as
%\beq
%\mY_1 = 1, \hs{5} \mY_2 = 1, \hs{5}
%\wt \mY_1 = c', \hs{5} \wt \mY_2 = \tilde c . 
%\eeq
%By solving the constraint $Z_1 W_1 - W_1 Z_2 = \wt \mY_1 \mY_2$, we can determine $W_1$ as
%\beq
%W_1 = \tilde b.
%\eeq
%From the half-ADHM mapping relation, the matrices $J_i$, $Z_i$ and $\Psi_i$ can be read off as
%\beq
%\mJ_1 = 1, \hs{5} Z_1 = z_2 + \tilde \alpha \tilde \beta, \hs{5}
%\Psi_1 = 1, \hs{5} \wt \Psi_1 = ( \tilde \alpha, \tilde \delta ),
%\eeq
%and
%\beq
%\mJ_2 = \ba{c} 1 \\ 0 \ea, \hs{5} Z_2 = z_2, \hs{5}
%\Psi_2 = \ba{c}1 \\ \tilde \beta \ea, \hs{5} \wt \Psi_2 = \tilde \delta.
%\eeq
%The matrices $\mY_i$ and $\wt \mY_i$ are given by
%\beq
%\mY_1 = 1, \hs{5} \wt \mY_1 = \tilde \alpha, \hs{10}
%\mY_2 = \tilde \beta, \hs{5} \wt \mY_2 = \tilde \delta. 
%\eeq
%From the constraint $Z_1 W_1 - W_1 Z_2 = \tilde \mY_1 \mY_2$, $W_1$ can be determined as
%\beq
%W_1 = 1. 
%\eeq

The coordinate transformation from the $(1,0)$-patch to $(0,1)$-patch is given by a group element $(g_1, g_2) \in GL(1,\C) \times GL(1,\C)$
\beq
\mY_i \rightarrow \mY_i g_i, \hs{5}
\wt \mY_i \rightarrow g_i^{-1} \wt \mY_i, \hs{5} \mbox{with} \hs{5} 
g_1 = 1, \hs{5} g_2 = \tilde b. 
\eeq

In Appendix \ref{appendix:embedding}, the half-ADHM data obtained by embedding from the $L=1$ case is discussed as another example.
%%%%%%%%%%%%%%%%%%%%%%%%%%%%%%%%%%%%%%%%%%%

\subsection{Coordinate patches on moduli space}
In this section, we discuss the coordinate patches of the moduli space of the half-ADHM data.
To define an analogue of the $(l_1,\cdots,l_k)$-patch for the $L=1$ case,
we first discuss the fixed point of a torus action that plays the role of the ``origin" in each coordinate patch. 

\subsubsection{Torus action and fixed points on vortex moduli space}\label{subsec:torus_action}
The torus action we discuss here is a combination of an Abelian subgroup of the flavor symmetry and the spatial rotation (see Appendix \ref{sec:TorusActions} for the details of the torus action). 
Its fixed point configurations can be viewed as the BPS vortex solutions 
in the presence of the omega background and the mass deformation. 
Such configurations must satisfy 
the following conditions
in addition to the vortex equations \eqref{eq:BPSQ} and \eqref{eq:BPSD}
\beq
 i \epsilon (z \D_z - \bar z \D_{\bar z}) Q_i + \Sigma_i Q_i - Q_i \Sigma_{i+1} = 0, \hs{5}
\mbox{with} \hs{5} \Sigma_{L+1} = - M,
\label{eq:inv_cond}
\eeq
where $\Sigma_i~(i=1,\cdots,L)$ are $SU(N_i)$ adjoint scalar fields\footnote{In 2d $\mathcal N = (2,2)$ supersymmetric models, $\Sigma_i$ can be interpreted as the adjoint scalar fileds in the vector multiplets and become auxiliary fields in the nonlinear sigma model limit.}, $\epsilon$ is the omega deformation parameter and 
$M={\rm diag}(m^1, \cdots, m^N)$ is the mass matrix.
A configuration satisfying \eqref{eq:inv_cond}  
is invariant under the infinitesimal spatial rotation and flavor rotation up to an infinitesimal gauge transformation $\Sigma_i$. 
For such a fixed point configuration, 
the magnetic fluxes take the diagonal forms
\beq
\frac{1}{2\pi} \int F_i ~=~ {\rm block\mbox{-}diag}(\boldsymbol l^{(i,1)},\cdots,\boldsymbol l^{(i,i)}) =
{\renewcommand{\arraystretch}{0.8}
{\setlength{\arraycolsep}{1.2mm}
\ba{ccc} \boldsymbol l^{(i,1)} & & \\ & \ddots & \\ & & \boldsymbol l^{(i,i)} \ea}} 
\hs{5} \mbox{with} \hs{5}
\boldsymbol l^{(i,j)} = 
{\renewcommand{\arraystretch}{0.6}
{\setlength{\arraycolsep}{0mm}
\ba{ccc} l^{(i,j,1)} & & \\ & \ddots & \\ & &~ l^{(i,j,n_j)} \ea}},
\eeq
where $\boldsymbol l^{(i,j)}$ denotes the $n_j$-by-$n_j$ diagonal block of 
the $SU(N_i)$ magnetic flux of the $i$-th gauge group. 
The labels $(i,j)$ and $(i,j,\alpha)~(i=1,\cdots,L,\,j=1,\cdots,i,\,\alpha=1,\cdots,n_j)$ specify the following subgroup of the gauge group:
\begin{itemize}
    \item $(i,j)$ \ \, :  $j$-th $U(n_j)$ subgroup of $i$-th gauge group $U(N_i) \supset U(n_1) \times \cdots \times U(n_i)$.
    \item $(i,j,\alpha)$ :  $\alpha$-th Cartan subgroup $U(1)$ of $j$-th $U(n_j)$ subgroup of $i$-th gauge group $U(N_i)$.
\end{itemize}
The magnetic fluxes $l^{(i,j,\alpha)}$ are also related to winding numbers of the scalar fields 
\beq
q_i = 
{\renewcommand{\arraystretch}{0.5}
{\setlength{\arraycolsep}{0.8mm}
\ba{ccc|ccc} z^{\boldsymbol \nu^{(i,1)}} & & & 0 & \cdots & 0 \\ & \ddots & & \vdots & \ddots & \vdots \\ & & z^{\boldsymbol \nu^{(i,i)}} \phantom{\Big|} & \phantom{\Big|} 0 & \cdots & 0 \ea}} 
\hs{5} \mbox{with} \hs{5} \boldsymbol \nu^{(i,j)} \equiv \boldsymbol l^{(i,j)} - \boldsymbol l^{(i+1,j)}, 
\eeq
where $\boldsymbol \nu^{(i,j)}$ are diagonal matrices of winding numbers. 
We can confirm that $q_i(z)$ is invariant under 
the torus action (the Cartan part of the spatial rotation and the flavor rotation) up to $V$-transformations
\begin{eqnarray}
q_i(z) = V_i \, q_i(e^{i\epsilon }z) \, V_{i+1}^{-1}, 
\hs{10} 
V_i = \exp (i \Sigma_i), \hs{5} 
V_{L+1}(z) = \exp (-i M). 
\end{eqnarray}
The element of the $V$-transformations are 
specified by the fixed point values of the adjoint scalar 
$\Sigma_i$, which take the forms 
\begin{eqnarray}
\Sigma_i = 
\mbox{block-diag}(
\boldsymbol \sigma^{(i,1)},\cdots, \boldsymbol \sigma^{(i,i)}), 
\hs{10}
\boldsymbol \sigma^{(i,j)} = 
{\rm diag} (\sigma^{(i,j,1)}, \cdots, \sigma^{(i,j,n_j)}),
\eeq
with the eigenvalues
\beq
\sigma^{(i,j,\alpha)} = - m^{(j,\alpha)} - l^{(i,j,\alpha)} \epsilon,
\eeq
where we have labeled the eigenvalues of the mass matrix as
\beq
\Sigma_{L+1} = M = 
\mbox{block-diag}(\boldsymbol m^1, \cdots, \boldsymbol m^{L+1} ), 
\hs{10} 
\boldsymbol m^j = 
{\rm diag}(m^{(j,1)}, \cdots, m^{(j,n_i)}).
\label{eq:mass_matrix}
\end{eqnarray}

Since the winding numbers $\boldsymbol \nu^{(i,j)} \equiv \boldsymbol l^{(i,j)} - \boldsymbol l^{(i+1,j)}$ of the scalar fields $q_i$ must be non-negative integers, 
the magnetic fluxes must satisfy $l^{(i,j,\alpha)} \geq l^{(i+1,j,\alpha)}$. 
Therefore, the fixed points are classified 
by a set of $N$ Young tableaux 
$Y^{(j,\alpha)}$ where $\alpha = 1, \cdots, n_j$ for each $j=1,\cdots,L$. 
The height of $Y^{(j,\alpha)}$ is $L-j+1$ and 
we denote the length of $i$-th row as $l^{(i+j-1,j,\alpha)}$, i.e. 
\beq
Y^{(j,\alpha)} = \left( l^{(j,j,\alpha)},l^{(j+1,j,\alpha)},\cdots,l^{(L,j,\alpha)} \right), \hs{10} 
l^{(j,j,\alpha)} \ge l^{(j+1,j,\alpha)} \ge \cdots \ge l^{(L,j,\alpha)} \ge 0.
\label{eq:Young_tab0}
\eeq
\begin{figure}
\centering
\fbox{
\includegraphics[width=100mm, bb = 0 0 900 600]{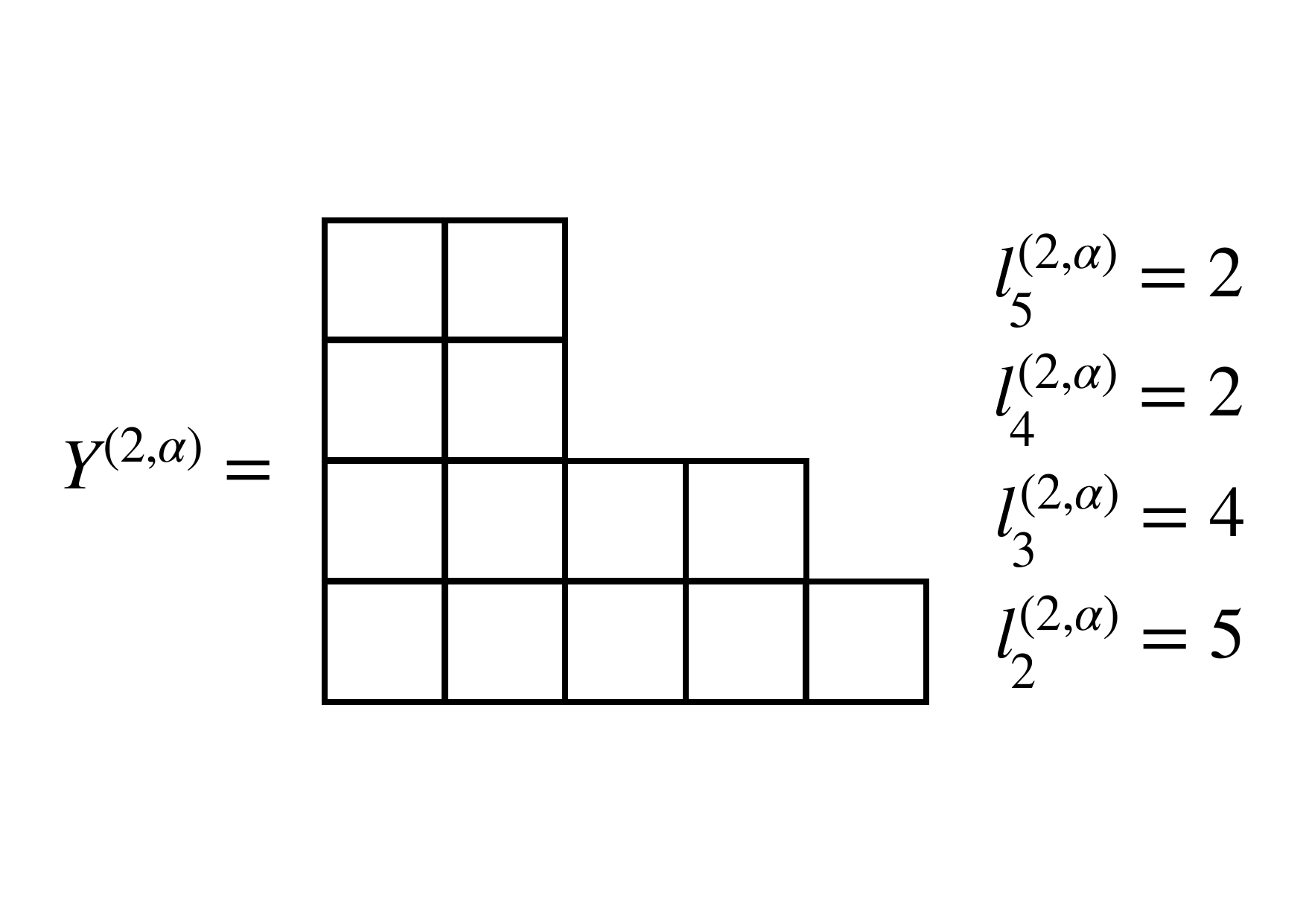}}
\caption{An example of Young tabuleax}\label{fig:Young_tab}
\end{figure}
See Fig.\,\ref{fig:Young_tab} for an example.

\paragraph{Half-ADHM data at fixed points \\ }
We can show that the invariant vortex data 
corresponding to the Young tableaux $Y^{(j,\alpha)}$ 
take the form
\beq
\mD_i = 
\mbox{block-diag}
(\boldsymbol \mD^{(i,1)},\cdots, \boldsymbol \mD^{(i,i)})
\hs{3} \mbox{with} \hs{3} 
\boldsymbol \mD^{(i,j)} = 
{\rm diag} (z^{l^{(i,j,1)}}, \cdots, z^{l^{(i,j,n_j)}}), \hs{3} \mbox{and} \hs{3} \wt \mD_i = 0.
\eeq
This implies that each diagonal component of $\boldsymbol \mD^{(i,j)}$ represents axially   symmetric Abelian vortices with flux $l_i^{(j,\alpha)}$
and hence all the half-ADHM data can be obtained by embedding those of Abelian vortices. 
For an axially symmetric Abelian vortex configuration $\mD = z^l$, 
the vortex data satisfying $\mD \Psi = \mJ (z \mathbf 1_{l} - Z)$ 
are given by (see Sec. \ref{subsec:patch_L=1})
\beq
\mJ(l) = ( z^{l-1} \,,\, z^{l-2} \,,\, \cdots \,,\, 1 ), \hs{10}
\Psi(l) = ( 1 \,,\, 0 \,,\, \cdots \,,\, 0 ), \hs{10}
Z(l) = \left. 
{\renewcommand{\arraystretch}{0.7}
{\setlength{\arraycolsep}{0.8mm}
\ba{c|ccc} 0 \phantom{|} & 1 & & \\ \vdots \phantom{|} & & \ddots & \\ 0 \phantom{|} & & & 1 \\ \hline 0 \phantom{|} & 0 & \cdots & 0 
\ea}} \ \right\} l.
\eeq
By embedding these matrices, 
we can construct the matrices satisfying 
$\mD_i \Psi_i = \mJ_i (z \mathbf 1_{k_i} - Z_i)$
as 
\beq
\!\!
\mJ_i = 
\mbox{block-diag} (\boldsymbol \mJ^{(i,1)}, \cdots, \boldsymbol \mJ^{(i,i)}), \hs{2}
\Psi_i = 
\mbox{block-diag}(\boldsymbol \Psi^{(i,1)}, \cdots, \boldsymbol \Psi^{(i,i)}), \hs{2}
Z_i = 
\mbox{block-diag} (\boldsymbol Z^{(i,1)}, \cdots, \boldsymbol Z^{(i,i)}),
\eeq
with
\beq
&\boldsymbol \mJ^{(i,j)} = 
\mbox{block-diag} \big(\mJ(l^{(i,j,1)}), \cdots, \mJ(l^{(i,j,n_j)})\big), \hs{5}
\boldsymbol \Psi^{(i,j)} = 
\mbox{block-diag} \big(\Psi(l^{(i,j,1)}),\cdots, \Psi(l^{(i,j,n_j)}) \big),& \\
&\boldsymbol Z^{(i,j)} = 
\mbox{block-diag}\big(Z(l^{(i,j,1)}), \cdots, Z(l^{(i,j,n_j)}) \big). &
\eeq
Note that $\wt \Psi_i =0$ since $\wt \mD_i = 0$
for the fixed point configurations. 
The matrices $\mY_i$ and $\wt \mY_i$ defined in \eqref{eq:def_upsilon} can be extracted from 
$\Psi_i$ and $\wt \Psi_i$ as
\beq
\mY_i = \ba{c|c} \mathbf 0_{n_i,k_{i-1}} & \boldsymbol \Psi^{(i,i)}  \ea, \hs{10}
\wt \mY_i = 0. 
\eeq
The matrix $W_i$ can be determined 
by solving the constraint 
$Z_i W_i - W_i Z_{i+1} = \wt \mY_i \mY_{i+1}$ as 
\beq
W_i = 
{\renewcommand{\arraystretch}{0.8}
{\setlength{\arraycolsep}{0.7mm}
\ba{ccc|ccc} \boldsymbol W^{(i,1)} & & & ~ \mathbf 0 & ~ \cdots & ~ \mathbf 0 \\ & \ddots & & ~ \vdots & ~ \ddots  & ~ \vdots \\ & & \boldsymbol W^{(i,i)} & ~ \mathbf 0 & ~ \cdots & ~ \mathbf 0 \ea
}}, \hs{5}
\boldsymbol W^{(i,j)} = 
{\renewcommand{\arraystretch}{0.6}
{\setlength{\arraycolsep}{0.1mm}
\ba{ccc} W\big(l^{(i,j,1)}, l^{(i+1,j,1)}\big) & & \\ & \ddots & \\ & & W\big(l^{(i,j,n_j)},l^{(i+1,j,n_j)}\big)
\ea}},
\eeq
where $W(l,l')$ is the matrix satisfying 
$Z(l) W(l,l') - W(l,l') Z(l') = 0$, which takes the form
\beq
W(l,l') = \ba{c} \mathbf 1_{l'} \\ \mathbf 0_{l-l',l'} \ea.
\eeq
Note that these half-ADHM data for the fixed points can also be obtained by solving the fixed point condition for the half-ADHM data.  
We can check these matrices are invariant under the torus action on the half-ADHM data (see Appendix \ref{Appendix:Torus action on Half-ADHM data}). 

%%%%%%%%%%%%%%%%%%%%%%%%%%%%%%%%%%%%%%%%%%%
\subsubsection{Coordinates around fixed points}
The coordinate patches around the fixed points discussed above can be obtained by considering fluctuations around the fixed point, eliminating the $GL(k_1,\C) \times \cdots \times GL(k_L,\C)$ gauge degrees of freedom and imposing the constraints \eqref{eq:constraint}.
After fixing the $GL(k_i,\mathbb C)$ transformations, 
we find the following non-zero components of the fluctuations
\begin{alignat}{4}
& (\delta \mY_i)^{\alpha}{}_{(i,j,\beta,p)} \quad & 
& {\rm for~} \alpha \in \{ \alpha \,|\, l^{(i,i,\alpha)}=0 \} \hs{10} & 
&  (\delta \wt \mY_i)^{(i,j,\alpha,p)}{}_{(i+1,\beta)} & 
& \\ 
& (\delta Z_i)^{(i,j,\alpha,p)}{}_{(i,k,\beta,q)} \quad & 
& {\rm with~} p=l^{(i,j,\alpha)}, \quad & 
& (\delta W_i)^{(i,j,\alpha,p)}{}_{(i+1,k,\beta,q)} \quad {\rm for~} & 
& 2 \le p \le l^{(i,j,\alpha)}.
\end{alignat}
%& (\delta \mY_i)^{\alpha}{}_{(i,j,\beta,p)} \quad & \hs{10}
%& {\rm for~} \alpha \in \{ \alpha \,|\, l^{(i,i,\alpha)}=0 \}, & \hs{10}
%& (\delta \wt  \mY_i)^{(j;a,p)_i}{}_{(b)_{i+1}}, \nn 
%& (\delta Z_i)^{(i,j,\alpha,p)}{}_{(i,k,\beta,q)} \quad &
%& {\rm with~} p=l^{(i,j,\alpha)}, \quad &
%& (\delta W_i)^{(i,j,\alpha,p)}{}_{(i+1,k,\beta,q) \quad {\rm for~} & 
%& 2 \le p \le l^{(i,j,\alpha)}.
Not all of these fluctuations independent 
since they are subject to the constraints (\ref{eq:constraint}). 
The total number of the components of (\ref{eq:constraint}) is given by
\begin{eqnarray}
%\sum_{i=1}^{L=1}\# \wt \mW_i \equiv 
d_{\rm c} = \sum_{i=1}^{L-1}k_i k_{i+1}. 
\end{eqnarray}
In Appendix \ref{sec:tW} we show that 
all components of the constraints \eqref{eq:constraint} are linearly independent of each other
for all points satisfying the condition \eqref{eq:GLkifree}.
Fixing $d_{\rm c}$ degrees of freedom 
by solving the constraints (\ref{eq:constraint}),
we can obtain the coordinates of the moduli space 
around this fixed point. 
%\footnote{
%\begin{eqnarray}
%&&\# \delta \mY_i+\# \delta Z_i +\# \delta \wt \mY_i +\# \delta \mW_j \nn
%&=&
%\sum_{i=1}^L \sum_{a=1}^{n_i} \delta^{l_{ii}^a}_0 k_i+\sum_{i=1}^L\sum_{j=1}^i \sum_{a=1}^{n_j}(1- \delta^{l_{ij}^a}_0) k_i+\sum_{i=1}^L k_i n_{i+1}
%+\sum_{i=1}^{L-1}\sum_{j=1}^i \sum_{a=1}^{n_j}(l_{ij}^a-(1-\delta^{l_{i+1,j}^a}_0))k_{i+1} \nn
%&=&\sum_{i=1}^L k_i(n_i+n_{i+1})+\sum_{i=1}^{L-1}k_ik_{i+1}.  
%\end{eqnarray}
%%where we used
%%\begin{eqnarray}
%%\sum_{i=1}^L\sum_{j=1}^{i-1} \sum_{a=1}^{n_j}(1- \delta^{l_{ij}^a}_0) k_i
%%-\sum_{i=1}^{L-1}\sum_{j=1}^i \sum_{a=1}^{n_j}(1- \delta^{l_{i+1,j}^a}_0)k_{i+1} 
%%=\sum_{i=1}^L\sum_{j=1}^{i-1} \sum_{a=1}^{n_j}(1- \delta^{l_{ij}^a}_0) k_i
%%-\sum_{i=2}^L\sum_{j=1}^{i-1} \sum_{a=1}^{n_j}(1- \delta^{l_{ij}^a}_0) k_i=0.
%%\end{eqnarray}
%}
%Actually, noting the constraints (\ref{eq:constraint}) as $0=f_A(\varphi) =f_A(\varphi_0)+g_{A x}(\varphi^x-\varphi_0^x)+{\cal O}((\varphi-\varphi_0)^2)$ for $1\le A\le d_{\rm c}$ 
% with  coordinates $\varphi^x$ around  arbitrary points $\varphi^x=\varphi^x_0$ satisfying the condition (\ref{eq:GLkifree}),
% a metric $g_{Ax}$ must have the maximal rank as  ${\rm rank}(g_{Ax})=d_{\rm c}$
%  as explained in the subsection \ref{sec:tW}.  Since the above coordinates keep the condition (\ref{eq:GLkifree}),  the constraints (\ref{eq:constraint})  can be, therefore, 
%solved with respect to appropriate $d_{\rm c}$ parameters within them, without causing any poles and brunch cuts,
%and that is, 
We can show that in the vicinity of the fixed point 
the linearized constraint can be solved without any singularity 
and hence a smooth coordinate patch can be constructed 
around each fixed point.
%The coordinate transformation 
%between two different patches is obtained 
%through the $GL(k_i,\mathbb C)$ transformations.
We can check that the complex dimension of 
the moduli space of vortices is given by
\begin{eqnarray}
 {\rm dim}_{\mathbb C} \, {\cal M}_{{\rm vtx}}{}^{n_1,n_2,\dots,n_{L+1}}_{~k_1,k_2,\cdots,k_L}
  &=&\sum_{i=1}^L (\# \mY_i+\# \wt \mY_i +\# Z_i -\# GL(k_i,\mathbb C))
 +\sum_{i=1}^{L-1} (\# \mW_i - \# \wt \mW_i )\nn
 &=& \sum_{i=1}^L(n_i k_i+k_i n_{i+1}+k_i^2-k_i^2)+\sum_{i=1}^{L-1} (k_i k_{i+1} -k_{i+1} k_i)\nn
 &=&\sum_{i=1}^Lk_i(n_i+n_{i+1}),
\end{eqnarray}
which agrees with a result given by the index theorem \eqref{eq:dim_M}.

%%%%%%%%%%%%%%%%%%%%%%%%%%%%%%%%%%%%%%%%%%%
\paragraph{Solutions of the constraints for separated vortices\\}
%\subsubsection{General solutions of the constraints for separated vortices}
Here we discuss solutions of the constraints (\ref{eq:constraint}). 
For a generic point on the moduli space, 
we can easily construct a solution in the following way.
Let us consider the case of separated vortices given by 
\begin{eqnarray}
z_{i,\alpha} \not = z_{i,\beta}\quad {\rm for~~} \alpha\not=\beta  \quad~~ {\rm with} ~~\quad 
\det(z{\bf 1}_{k_i}-Z_i) =\prod_{\alpha=1}^{k_i}(z-z_{i,\alpha})  % (Z_i)^\alpha{}_\beta =\delta^\alpha{}_\beta  \, z_{i,\alpha} 
 \quad  {\rm for~} 1 \le i \le L. 
\end{eqnarray}
In this case, the square matrices $\{Z_i \,|\, 1\le  i\le L\}$ 
can be diagonalized as 
$(Z_i)^\alpha{}_\beta =\delta^\alpha{}_\beta \, z_{i,\alpha}$.
In addition, if we assume that 
\begin{eqnarray}
z_{i,\alpha}\not=z_{i+1,\beta} \quad {\rm for} \quad 1\le \alpha \le k_i,\quad 1\le \beta \le k_{i+1} \quad {\rm and} \quad 1\le  i\le L-1,
\end{eqnarray}
we find the constraints are solved with respect to $\{W_i\}$ as
\begin{eqnarray}
(\mW_i)^\alpha{}_\beta= \frac{(\wt \mY_i \mY_{i+1})^{\alpha}{}_\beta}{z_{i,\alpha}-z_{i+1,\beta}}.
\end{eqnarray}
This result implies that all the components 
of the constraints (\ref{eq:constraint}) 
are independent and each of them has a solution.  
\subsection{Metric on the moduli space and K\"ahler quotient}\label{subsec:metric}
As we have seen that the vortex modui space is given by 
the $GL(k_1,\C) \times \cdots \times GL(k_L,\C)$ quotient \eqref{eq:moduli_space_genearl} of the matrices $(Z_i,\mY_i,\wt \mY_i,W_i)$ as a complex manifold. 
One may think that it is also possible to consider 
the corresponding $U(k_1) \times \cdots \times U(k_L)$  K\"ahler quotient by introducing an appropriate K\"ahler potential on the space of the matrices $(Z_i,\mY_i,\wt \mY_i,W_i)$. 
A natural choice of the K\"ahler potential would be
\beq
\mathcal K = \sum_{i=1}^L \tr \left[ Z_i Z_i^\dagger + \mY_i^\dagger \mY_i + \wt \mY_i \wt \mY_i^\dagger + W_i W_i^\dagger  \right],
\label{eq:naive_kahler}
\eeq
which gives a flat metric on the linear space of the matrices. 
In addition, we need to impose the constraint \eqref{eq:constraint}.
Following the standard procedure of the K\"ahler quotient construction, one can obtain a K\"ahler potential and metric on the moduli space. 
However, the K\"aher metric obtained in this way does not agree with the correct metric, shown in Appendix \ref{sec:Kahlermetric}, that describes the classical dynamics of the vortices. 
Nonetheless, the 2d $\mathcal N = (2,2)$ 
$U(k_1) \times \cdots \times U(k_L)$ gauge theory constructed based on the above K\"ahler potential $\mathcal K$ and the constraint \eqref{eq:constraint} captures some quantum aspects of the original $U(N_1) \times \cdots \times U(N_L)$ quiver gauge theory. 
In section \ref{sec:partition}, we compute the vortex partition function from the viewpoint of the quotient construction as an example of the quantities that do not depend on the detail of the K\"ahler potential.   

%%%%%%%%%%%%%%%%%%%%%%%%%%%%%%%%%%%%
\section{Sigma model instantons and duality} \label{sec:lumpduality}
\def\cA{{\cal A}} 
\def\cB{{\cal B}}

In this section, we discuss the sigma model solutions in the flag manifold sigma model.
We check that the duality of the sigma model (\ref{eq:dualFIpara}) defined by the relation (\ref{eq:dualG}) holds even on the moduli space of sigma model instantons, except for the instanton singularities.

\subsection%[Grassmannian case : L=1]
{Grassmannian case : $L=1$} 
 
In the Grassmannian case ($L=1$), 
the inhomogeneous coordinates $\phi$ is an $n$-by-$(N-n)$ matrix related to $\mH = (\mD, \wt \mD)$ as $\varphi = \mD^{-1} \wt \mD$.
Using the half-ADHM mapping relation Eq.\,\eqref{eq:hADHM}, 
we can show that the sigma model instanon solution $\varphi(z)$ corresponding to the half-ADHM data $\{Z,\Psi,\wt \Psi\}$ can be written as
\begin{eqnarray}
\varphi(z) ~=~ \Psi (z{\bf 1}_k-Z)^{-1} \wt \Psi. \label{eq:inst-sol}
\end{eqnarray}
For separated vortices, 
the marix $Z$ can be diagonalized as $(Z)^\alpha{}_\beta=\delta^\alpha{}_\beta\, z_\alpha$
with $z_\alpha\not=z_\beta~(\alpha \not =\beta)$
and hence the instanton solution takes the form 
\begin{eqnarray}
(\varphi(z))^a{}_b =\sum_{\alpha=1}^k \frac{(\Psi)^a{}_\alpha (\wt \Psi)^\alpha{}_b }{z-z_\alpha}. 
\end{eqnarray}
The column vectors $(\Psi)^a{}_\alpha$ and 
row vectors $(\wt \Psi)^\alpha{}_b$ 
for each $\alpha$ are respectively called 
the orientational moduli and size moduli 
of the vortex sitting at $z=z_\alpha$. 
Here we have partially fixed 
the $GL(k,\mathbb C)$ redundancy by 
diagonalizing the $k$-by-$k$ matrix $Z$. 
The remnant group which 
does not change the form of $Z$
is $(U(1)^\mathbb C)^k \subset GL(k,\C)$
and each $U(1)^\mathbb C \cong \C^\ast$ acts on 
the orientational moduli $\Psi^a{}_\alpha$. 
Due to the condition that the $GL(k,\C)$ action is free, $\Psi^a{}_\alpha$ cannot be a zero column vector for each $\alpha$ 
and hence the orientational moduli space of each vortex described by $\Psi^a{}_\alpha$ is a $\mathbb CP^{n-1} = (\C^n \backslash \{0\})/\sim$.   
On the contrary, the ``size" moduli $\wt \Psi^a{}_\alpha$ can be a zero vector, which corresponds to a local vortex when the gauge coupling constant is finite.
%and its typical real size is given by an inverse of a gauge boson mass, $1/g_1\sqrt{r_1}$. In the strong coupling limit, $g_1\to \infty$, such a local vortex becomes a singular lump configuration in the nonlinear sigma models. The points on the moduli space corresponding to such singular lumps are known as small-lump singularities of the instanton moduli space ${\cal M}_{\rm inst}$.

As shown in Eq.\,\eqref{eq:cond_rank} in Appendix \ref{sec:cond_semi},  
the condition that all vortices are of semi-local type can be rewritten in terms of the half-ADHM data as
\begin{align}
\exists \,\vec v \in \C^k ~\mbox{(row vector)}~ ~~\mbox{s.t.}~~
 \vec v \,(z {\mathbf 1}_k- Z)^{-1} \wt \Psi=0 ~~\mbox{for}~~ \forall z \in \mathbb C \quad \Rightarrow \quad \vec v=0. \label{eq:wtfree_cond}
\end{align}
This condition requires that the $GL(k,\mathbb C)$ acts freely not only on $\{Z,\Psi\}$, but also on $\{Z,\wt \Psi\}$. On the other hand, 
we can uniquely determine the half-ADHM data satisfying \eqref{eq:wtfree_cond} corresponding to any given sigma model instanton solution (See Appendix \ref{sec:inst-sol}).
Therefore the moduli space of  instanton solutions ${\cal M}_{\rm inst}$ are written in terms of the half-ADHM data as
\begin{align}
{\cal M}_{\rm inst} &\equiv \left\{ (Z,\Psi,\wt \Psi) \, \big| \,
\hbox{ $GL(k,\mathbb C)$ actions on $\{Z,\Psi\}, \{Z,\wt \Psi\}$ are free}  
\right \} / GL(k,\mathbb C). 
\end{align} 
This is a subspace of the total vortex moduli space ${\cal M}_{\rm vortex}$, for which the $GL(k,\mathbb C)$ free condition is imposed only $\{Z,\Psi\}$
\begin{align}
{\cal M}_{\rm vortex} &\equiv \left\{ (Z,\Psi,\wt \Psi) \, \big| \,
\hbox{ $GL(k,\mathbb C)$ action on $\{Z,\Psi\}$  is free}  
\right \} / GL(k,\mathbb C). 
\end{align} 
Note that, as we have discussed, 
there is a correspondence between 
sigma model instanton solutions 
in the dual pair of theories. 
The half-ADHM data 
$\{Z^{\rm dual},\Psi^{\rm dual}, \wt \Psi^{\rm dual}\}$ describing the dual 
sigma model instanton is given by
\begin{align}
\{Z^{\rm dual},\Psi^{\rm dual}, \wt \Psi^{\rm dual}\}=\{Z^{\rm T}, \wt \Psi^{\rm T},- \Psi^{\rm T}\},\label{eq:dual-origin}
\end{align}
up to $GL(k,\mathbb C)$ transformations. 
We can read off this relation from 
the duality transformation ${\cal U}^{\rm dual} =R \, {\cal U}^{\rm T-1} R^\dagger$ (see Eq.\eqref{eq:dual_U}), which maps a solution $\varphi(z)$ to a dual solution $\varphi^{\rm dual}(z)$ as
\begin{align}
\varphi^{\rm dual}(z)=-\varphi(z)^{\rm T}=-\wt \Psi^{\rm T} (z {\bf 1}_k-Z^{\rm T})^{-1} \Psi^{\rm T}.
\end{align}
The total moduli space of the dual vorties are given by 
\beq
{\cal M}_{\rm vortex}^{\rm dual}=
\left\{(Z^{\rm dual},\Psi^{\rm dual},\wt \Psi^{\rm dual}) \, \big| \, \hbox{ $GL(k,\mathbb C)$ action on $\{Z^{\rm dual},\Psi^{\rm dual}\}$ is free}  
\right \} / GL(k,\mathbb C).
%\simeq {\cal M}_{\rm vtx \,}{}^{N-n,n}_k.
\eeq
Here the $GL(k,\mathbb C)$-free condition on 
$\{Z^{\rm dual},\Psi^{\rm dual}\}$ is nothing but the condition \eqref{eq:wtfree_cond} through the relation \eqref{eq:dual-origin}. 
Therefore ${\cal M}_{\rm inst}$ is given as an 
intersection of the original vortex moduli space and the dual one as
\begin{align}
{\cal M}_{\rm inst} ={\cal M}_{\rm vortex}\cap {\cal M}^{\rm dual}_{\rm vortex}.
\end{align}

%
%{\bf (to be elaborated)} To remove the subspace containing all small-lump singularities form ${\cal M}_{\rm vortex}$ and obtain ${\cal M}_{\rm lump}$, 
%we have to impose that $GL(k,\mathbb C)$ acts freely not only on $\{Z,\Psi\}$ but also on $\{Z,\wt \Psi\}$ \cite{Eto:2007yv}
%\begin{eqnarray}
%{\cal M}_{\rm lump}={\cal M}_{\rm vortex}\cap {\cal M}^{\rm dual}_{\rm vortex},\quad {\rm with} \quad
%{\cal M}^{\rm dual}_{\rm vortex} = {\cal M}_{\rm vortex} \Big|_{\Psi \leftrightarrow \wt \Psi^{\rm T},  Z \leftrightarrow Z^{\rm T}}.
%\end{eqnarray}
\subsection%[General L]
{General $L$}

For general $L$, a well-defined sigma model instanton solution is 
given if and only if 
\begin{align}
\forall z:  \quad \det \mH_i(z) \mH_i(z)^\dagger \not=0,\quad i=1,2,\cdots,L.
\end{align}
Repeating the discussion in the case of $L=1$, we can show that the above condition for each $i$ is equivalent to the following two conditions on $\{Z_i,\Psi_i,\widetilde \Psi_i\}$ 
\begin{enumerate}
\item $GL(k_i,\mathbb C)$ action is free on $\{Z_i,\Psi_i\}$,
\item $GL(k_i,\mathbb C)$ action is free on $\{Z_i,\widetilde \Psi_i\}$.
\end{enumerate}
From the viewpoint of the original gauge theory, the first condition comes from the definition of the vortex moduli space and 
the second condition is imposed to avoid small-instanton singularities. 
On the other hand,
from the viewpoint of the dual gauge theory characterized by (\ref{eq:dualFIpara}),
the roles of the above two conditions are interchanged.

For general $L$, a solution is written in terms of
$n_i$-by-$n_j$ matrices $\varphi_{ij}(z)~(1\le i < j \le L+1)$ whose entries are inhomogeneous coordinates \eqref{eq:inhomogeneous} of $G^\mathbb C /\hat H$ 
\begin{eqnarray}
{\cal G}={\cal G}(\varphi_{ij}) \equiv \left( 
\begin{array}{ccccc}
{\bf 1}_{n_1} & \varphi_{12} & \varphi_{13}&\cdots&\varphi_{1,L+1}\\
{\bf 0} & {\bf 1}_{n_2}& \varphi_{23}\\
{\bf 0}&{\bf 0}& \ddots &\ddots& \vdots\\
\vdots&&\ddots&{\bf 1}_{n_{L}}&\varphi_{L,L+1}\\
{\bf 0}& \cdots &&{\bf 0}&{\bf 1}_{n_{L+1}}
\end{array}\right).
\label{eq:Gcal}
\end{eqnarray}
For a given half-ADHM data $\{Z_i,\mY_i, \wt \mY_i,\,\mW_j\}$,
the corresponding $\varphi_{ij}(z)$ are given by
\begin{eqnarray}
\varphi_{ij}(z)=\mY_i\left(z{\bf 1}-Z_i\right)^{-1} \mW_i\mW_{i+1}\cdots \mW_{j-2}\wt \mY_{j-1}\quad {\rm for~} i<j.
\label{eq:GenRmap}
\end{eqnarray}
This can be shown as follows. 
The matrices $\varphi_{ij}(z)~(i=i+1,\cdots,L+1)$ 
are contained in $\xi_i = \xi_i^p {\mathcal G}$
\begin{eqnarray}
\xi_i= \left( \begin{array}{ccc}
{\bf 1}_{N_{i-1}} & \cA_i & \wt \cA_i \\ {\bf 0} & {\bf 1}_{n_{i}} & \cB_i 
\end{array} \right) \sim 
\left( \begin{array}{ccc}
{\bf 1}_{N_l} & {\bf 0} & \wt \cA_i - \cA_i \cB_i \\ {\bf 0} & {\bf 1}_{n_{i+1}} & \cB_i
\end{array} \right)
\quad {\rm with} \quad \cB_i = (\varphi_{i,i+1},\varphi_{i,i+2},\cdots,\varphi_{i,L+1}). 
\label{eq:lthxi}
\end{eqnarray}
 It follows from 
 $\xi_i \sim (\mathbf 1_{N_{i+1}}, \mD_i^{-1} \mD_i) =(\mathbf 1_{N_{i+1}}, \Psi_{i}(z \mathbf 1 - Z_{i})^{-1} \wt \Psi_{i} )$ that 
\beq
\left( \begin{array}{c}
\wt \cA_i-\cA_i \cB_i  \\ \cB_i
\end{array} \right) 
= 
\left( \begin{array}{c}
\Psi_{i-1} W_{i-1} (z \mathbf 1 - Z_{i})^{-1} \wt \Psi_{i} \\
\mY_{i} (z \mathbf 1 - Z_{i})^{-1} \wt \Psi_{i}
\end{array} \right). 
\eeq
From the lower blocks of the both hand sides, we find that 
\beq
\cB_i = \mY_{i} (z{\bf 1}-Z_{i})^{-1} \wt \Psi_{i} = \left( \mY_{i} (z{\bf 1}-Z_{i})^{-1} \wt \mY_i \ , \ \cdots \ , \  \mY_{i} (z{\bf 1}-Z_{i})^{-1} \mW_i \mW_{i+1} \cdots \mW_{L-1} \wt \mY_{L} \right). 
\eeq
This indicates the relation \eqref{eq:GenRmap}.\footnote{
One can check the equation for the upper blocks is also satisfied. 
Since 
$\xi_{i-1} = p_{i-1} \xi_{i} \sim ({\bf 1}_{N_{i-1}}, \cA_i, \wt \cA_i)
= (\mathbf 1_{N_{i-1}}, \mD_{i-1}^{-1} \wt \mD_{i-1})
= (\mathbf 1_{N_{i-1}}, \Psi_{i-1} (z{\bf 1}-Z_{i-1})^{-1}\wt \Psi_{i-1})$, one finds that
\begin{eqnarray}
\cA_i = \Psi_{i-1} (z{\bf 1}-Z_{i-1})^{-1} \wt \mY_{i-1}, \hs{10}
\wt \cA_i = \Psi_{i-1} (z{\bf 1}-Z_{i-1})^{-1} W_{i-1} \wt \Psi_{i}. \notag
\end{eqnarray}
Then, one can show that
\beq
\wt \cA_i - \cA_i \cB_i &=& \wt \cA_i - \Psi_{i-1} (z{\bf 1}-Z_{i-1})^{-1} \wt \mY_{i-1} \mY_{i} (z{\bf 1}-Z_{i})^{-1} \wt \Psi_{i} \notag \\
&=& \wt \cA_i - \Psi_{i-1} (z{\bf 1}-Z_{i-1})^{-1} (Z_{i-1}W_{i-1}-W_{i-1}Z_i) (z{\bf 1}-Z_{i})^{-1} \wt \Psi_{i} ~=~ \Psi_{i-1} W_{i-1} (z{\bf 1}-Z_{i})^{-1} \wt \Psi_{i}. \notag
\eeq
}

Next, let us discuss 
how the half-ADHM data transform 
under the duality transformation. 
For given half-ADHM data $\{Z_i,\mY_i,\wt \mY_i,W_i\}$ 
which give a non-singular lump solution, 
there exist data $\{Z_i^{\rm dual},\mY_i^{\rm dual}, \wt \mY_i^{\rm dual}, W_i^{\rm dual} \}$ in the dual theory
which give the same lump solution. 
They are related as 
\begin{eqnarray}
\{Z_i^{\rm dual}, \, \mY_i^{\rm dual},\, \wt \mY^{\rm dual}_i,\,\mW_j^{\rm dual}\}
=\{Z_{L+1-i}^\T,\, \wt \mY_{L+1-i}^\T,\, -\mY_{L+1-i}^\T,\, \mW_{L-j}^\T  \}. \label{eq:duality}
\end{eqnarray}
This relation can be shown as follows. 
As shown in \eqref{eq:dualG}, for the matrix $\mathcal G \in G^\C$ in \eqref{eq:Gcal}, 
the corresponding matrix $\mathcal G_{\rm dual}$ in the dual theory is given by
\begin{eqnarray}
\quad {\cal G}_{\rm dual}=R ({\cal G}^{-1})^T R^\dagger \in G^\mathbb C \quad {\rm with~}\quad
R=\left(\begin{array}{cccc}
{\bf 0} & \cdots & {\bf 0}& {\bf 1}_{n_{L+1}}\\
\vdots& \iddots & \iddots & {\bf 0}\\
{\bf 0} &{\bf 1}_{n_2}&\iddots & \vdots \\
{\bf 1}_{n_1}&{\bf 0}& \cdots  &{\bf 0}
\end{array}\right) \in U(N).
\label{eq:g_dual5}
\end{eqnarray}
One can check that the inverse of ${\cal G}={\cal G}(\varphi_{ij}(z))$ takes the form ${\cal G}(\varphi^{\rm inv}_{ij}(z))$ with 
\begin{eqnarray}
\varphi_{ij}^{\rm inv}(z)=-\mY_i\mW_i\mW_{i+1}\cdots \mW_{j-2}\left(z{\bf 1}-Z_{j-1}\right)^{-1} \wt \mY_{j-1}\quad {\rm for~} i<j.
\end{eqnarray}
Here, one can confirm ${\cal G}(\varphi_{ij}(z)){\cal G}(\varphi^{\rm inv}_{ij}(z))={\bf 1}$ by using the following identity
\begin{eqnarray}
\sum_{k=i+1}^{j-1}\varphi_{ik}(z)\varphi_{kj}^{\rm inv}(z)
&=&-\sum_{k=i+1}^{j-1}\mY_i\left(z{\bf 1}-Z_i\right)^{-1} \mW_i\cdots \mW_{k-2}\wt \mY_{k-1}
\mY_k\mW_k\cdots \mW_{j-2}\left(z{\bf 1}-Z_{j-1}\right)^{-1} \wt \mY_{j-1}\nn
&=&-\sum_{k=i+1}^{j-1}\mY_i\left(z{\bf 1}-Z_i\right)^{-1} \mW_i\cdots \mW_{k-1}\left(z{\bf 1}-Z_{k}\right)W_k\cdots \mW_{j-2}\left(z{\bf 1}-Z_{j-1}\right)^{-1} \wt \mY_{j-1}\nn
&&{}+\sum_{k=i+1}^{j-1}\mY_i\left(z{\bf 1}-Z_i\right)^{-1} \mW_i\cdots \mW_{k-2}\left(z{\bf 1}-Z_{k-1}\right)W_{k-1}\cdots \mW_{j-2}\left(z{\bf 1}-Z_{j-1}\right)^{-1} \wt \mY_{j-1}\nn
&=&-\mY_i\left(z{\bf 1}-Z_i\right)^{-1} \mW_i\cdots \mW_{j-2} \wt \mY_{j-1}
+\mY_iW_{i}\cdots \mW_{j-2}\left(z{\bf 1}-Z_{j-1}\right)^{-1} \wt \mY_{j-1}\nn
&=&-\varphi_{ij}(z)-\varphi_{ij}^{\rm inv}(z).
\end{eqnarray}
%giving the duality  $G^\mathbb C/\hat H \mapsto G^\mathbb C/\hat H_{\rm dual} $
%where the equivalence relation is expressed as $\ {\cal G}_{\rm dual} \simeq  {\hat h}_{\rm dual} {\cal G}_{\rm dual} $ 
%with ${\hat h}_{\rm dual}\in \hat H_{\rm dual} =R ({\hat H}^\T)^{-1}R^\dagger $ and  the order  
%$(n_1,n_2,\cdots, n_{L+1})$ characterizing the complex structure is exchanged as 
% $(n_1^{\rm dual},n_2^{\rm dual},\cdots,n_{L+1}^{\rm dual})=(n_{L+1},n_L,\cdots, n_1)$.
By substituting the inverse ${\cal G}(\varphi^{\rm inv}_{ij}(z))$ into Eq.\,\eqref{eq:g_dual5}, 
we find that ${\cal G}_{\rm dual}$ takes the form ${\cal G}(\varphi_{ij}^{\rm dual}(z))$ with
\begin{eqnarray}
\varphi_{ij}^{\rm dual}(z)=-\wt \mY^\T_{L+1-i}(z{\bf 1} -Z_{L+1-i}^\T)^{-1}\mW_{L-i}^\T \cdots  \mW_{L+1-j}^\T \mY^\T_{L+2-j}, 
\end{eqnarray}
from which we can read the duality transformation 
\begin{eqnarray}
\{Z_i, \, \mY_i,\, \wt \mY_i,\,\mW_j\}\quad \mapsto \quad \{Z_i^{\rm dual}, \, \mY_i^{\rm dual},\, \wt \mY^{\rm dual}_i,\,\mW_j^{\rm dual}\}.
%=\{Z_{L+1-i}^\T,\, \wt \mY_{L+1-i}^\T,\, -\mY_{L+1-i}^\T,\, \mW_{L-j}^\T \}.
\label{eq:dual_hADHM}
\end{eqnarray}

Note that the $GL(k_i,\mathbb C)$ free condition on $Z_i,\wt \Psi_i$ for each $i$ 
\begin{equation}
\exists\, \hbox{$k_i$-column vector~} \vec v: \wt \mY_j^\T \mW_{j-1}^\T\mW_{j-2}^\T\cdots \mW_{i}^\T (z{\mathbf 1}_{k_i}-Z_i^\T)^{-1} \vec v =0  \quad {\rm for~} 
i \le \forall \,j \le L,  \quad \forall \, z\in \mathbb C 
~ \Rightarrow ~ \vec v=0, 
\label{eq:cond_dual}
\end{equation}
is equivalent to the $GL(k_i,\mathbb C)$ free condition on $Z_{L+1-i}^{\rm dual},\Psi^{\rm dual}_{L+1-i}$.
By denoting the conditions (\ref{eq:GLkifree}) and \eqref{eq:cond_dual} by ${\cal C}_i$ and $\wt {\cal C}_i$, respectively,
the condition for lump configuration without any small-lump singularity can be written as
\begin{eqnarray}
\bigwedge_i\left({\cal C}_i \wedge \wt {\cal C}_i\right)= \left(\bigwedge_i {\cal C}_i\right)\wedge  \left(\bigwedge_i \wt {\cal C}_i\right)
\end{eqnarray}
%\begin{eqnarray}
%{\cal M}^{\rm dual}_{{\rm vtx}}{}^{n_1,n_2,\dots,n_{L+1}}_{~k_1,k_2,\cdots,k_L} =
%{\cal M}_{{\rm vtx}}{}^{n_{L+1},\dots,n_2,n_1}_{~k_L,\cdots,k_2,k_1}
%\end{eqnarray}
and thus the moduli space for lump configurations is obtained as 
\begin{eqnarray}
{\cal M}_{{\rm lump}}{}^{n_1,n_2,\dots,n_{L+1}}_{~k_1,k_2,\cdots,k_L} =
{\cal M}_{{\rm vtx}}{}^{n_1,n_2,\dots,n_{L+1}}_{~k_1,k_2,\cdots,k_L} \cap {\cal M}_{{\rm vtx}}{}^{n_{L+1},\dots,n_2,n_1}_{~k_L,\cdots,k_2,k_1}.
\end{eqnarray}

%%%%%%%%%%%%%%%%%%%%%%%%%%%%%%%%%%%%%%%%%%%%%
\section{Vortex partition function from  K\"ahler quotient}\label{sec:partition}
\subsection{Vortex effective action}
In this section, we consider the vortex partition functions in the quiver GL$\sigma$Ms. 
In the $L=1$ case, the vortex partition functions have been calculated from the viewpoint of the half-ADHM formalism in \cite{Yoshida:2011au, Bonelli:2011fq}.
As an application of the half-ADHM formalism for general $L$, we compute the vortex partition function and check the duality between GL$\sigma$Ms \cite{Hori:2006dk} as was done in \cite{Benini:2012ui} for the $L=1$ case. 

In three dimensions, the dynamics of vortices can be 
described by the quantum mechanical GL$\sigma$M 
specified by the quiver diagram \eqref{eq:quiver}. 
Let us introduce chemical potentials for the conserved charges 
in the vortex effective theory
\beq
\mathcal Z_{{\footnotesize \mbox{hADHM-QM}}}(m_a,\epsilon,\mu_f) = \tr \left[ e^{-\beta ( H + i m_a q^a + i \epsilon J + i \mu_f F )} \right],
\label{eq:Z_3d}
\eeq
where 
$q^a~(a=1,\cdots,N)$ are the Cartan part of the flavor charge, $J$ is the angular momentum operator,
$F$ is the Fermion number operator
and 
$(m_a,\epsilon,\mu_f)$ are the (imaginary) chemical potentials 
for the corresponding charges.
It is well known that $Z$ at $\mu_f = \pi/\beta$, 
which we consider in the floowing, 
is exactly calculable through the supersymmetric localization method. Although it is possible to calculate $Z$ in three dimensions, for simplicity, we focus on the 2d limit $\beta \rightarrow 0$ in the following.

The vortex effective theory 
in the 2d quiver GL$\sigma$M is described 
by the 0d half-ADHM GL$\sigma$M specified by the quiver diagram Fig.\,\ref{fig:ADHM_quiver}. 
Using its action $S_{\rm eff}$, 
we can write down the integral expression 
for the 2d limit of the partition function \eqref{eq:Z_3d}
\beq
\mathcal Z_{k_1,k_2,\cdots,k_L}^{n_1,n_2,\cdots,n_{L+1}} = e^{-\sum_{i=1}^L 2\pi r_i k_i} \int d\mu \, \exp \left( - S_{\rm eff} \right),
\label{eq:vortex_partition_funciton_k}
\eeq
where $d\mu$ stands for the measure for all the degrees of freedom of the 0d half-ADHM GL$\sigma$M. 

Let us focus on the case in which 
the original model has 2d $\mathcal N=(2,2)$ supersymmetry. 
Since the vortices preserve the half of supersymmetry, 
the effective theory possesses two (real) supercharges. 
The supermultiplets in the vortex effective theory are 
chiral multiplets 
$(\varphi_I,\psi_I)$ whose scalar components are $\varphi_I \in (Z_i,\mY_i,\wt \mY_i,W_i)$, 
$U(k_i) \times U(k_{i+1})$ anti-bifundamental Fermi multiplets $(\tilde \psi_i, \wt W_i)$
and $U(k_i)$ adjoint gauge multiplets $(\Phi_i,\lambda_i,D_i)$. 
Their supersymmetry transformations are given by
\begin{alignat}{4}
\delta \varphi_I &= \varepsilon \psi_I, & 
\delta \tilde \psi_i &= \varepsilon \wt W_i, & 
\delta \Phi_i &= \delta D_i = 0, & 
\delta \lambda_i &= \varepsilon D_i  
\\
\delta \psi_I &= \bar \varepsilon \Delta \varphi_I, \hs{10} &
\delta \wt W_i &= \bar \varepsilon \Delta \tilde \psi_i, \hs{10} & 
\delta \bar \Phi_i &= i( \bar \varepsilon \lambda_i - \varepsilon \bar \lambda_i), \hs{10} & 
\delta \bar \lambda_i &= \bar \varepsilon D_i,
\end{alignat}
where $\Delta \varphi_I$ deenote the infinitesimal transformation of the $U (k_1)\times \cdots \times U(k_L)$ gauge, spatial and flavor rotations
\beq
&\Delta Z_i = [\Phi_i,Z_i] + \epsilon Z_i , \quad
\Delta \mY_i = M_i \mY_i - \mY_i \Phi_i, \quad
\Delta \wt \mY_i = \Phi_i \wt \mY_i - \wt \mY_i M_{i+1} + \epsilon \wt \mY_i,& \phantom{\bigg|} \\
&\Delta W_i = \Phi_i W_i - W_i \Phi_{i+1}, \quad 
\Delta \wt W_i = \Phi_{i+1} \wt W_i - \wt W_i \Phi_i - \epsilon \wt W_i,& \phantom{\bigg|}
\eeq
where $M_i={\rm diag}(m^{(i,1)},\cdots,m^{(i,n_i)})$ and $\epsilon$ are the parameters
corresponding to 
the twisted masses and 
the omega deformation parameter
in the original 2d $\mathcal N=(2,2)$ model.
If we adopt the naive K\"ahler potential \eqref{eq:naive_kahler}, 
the explicit form effective action is given by
\beq
S_{\rm eff} &=& \sum_{i=1}^L \tr \Bigg[ \big|\!\big| \Delta Z_i \big|\!\big|^2 + \big|\!\big| \Delta \mY_i \big|\!\big|^2 + \big|\!\big| \Delta \wt \mY_i \big|\!\big|^2 + \big|\!\big| \delta W_i \big|\!\big|^2 + \wt W_i \wt W_i^\dagger + \left\{ \wt W_i ( Z_i\mW_i-\mW_i Z_{i+1} - \wt \mY_i \mY_{i+1}) + (c.c) \right\} \Bigg] \notag \\
&+& \sum_{i=1}^L \tr\left[ D_i \left( [Z_i,Z_i^\dagger] - \mY_i^\dagger \mY_i + \wt \mY_i \wt \mY_i^\dagger + W_i W_i^\dagger - W_{i-1}^\dagger W_{i-1} + \frac{4\pi}{g_i^2} \right) \right] + (\mbox{fermionic terms}), \phantom{\bigg[}
\label{eq:vortex_effective_action}
\eeq
where $\tr \big|\!\big| \Delta A_I \big|\!\big|^2$ 
stands for the norms of the infinitesimal transformations 
\beq
\tr \big|\!\big| \Delta Z_i \big|\!\big|^2 = \tr \left[\Delta Z_i (\Delta Z_I)^\dagger + \Delta Z_i^\dagger (\Delta Z_i^\dagger)^\dagger) \right], ~~~ etc.
\eeq
The FI parameters $\frac{4\pi}{g^2}$ are chosen so that
the areas of the two cycles in the vortex moduli space 
agree with that calculated from the 2d perspective.  
Eliminating the auxiliary fields $\wt W_i$, 
we obtain a potential whose minimization condition gives the constraints \eqref{eq:constraint}
\beq
\wt W_i = Z_i W_i - W_i Z_{i+1} - \wt \mY_i \mY_{i+1} = 0.
\eeq
The variations with respect to $D_i$ give the $D$-term constraints, whose set of solution modulo gauge transformations agrees with the vortex moduli space \eqref{eq:moduli_space_genearl}. 
In the presence of the twisted masses and the omega deformation parameters, the conditions $\big|\!\big| \delta A_I \big|\!\big|^2=0$ 
are satisfied only at the fixed points of the torus action 
discussed in subsection \ref{subsec:torus_action}. 
When we apply the supersymmetric localization method, 
the integral \eqref{eq:vortex_partition_funciton_k} 
localizes to those fixed points.

\subsection{Contour integral for vortex partition function}
Although this explicit effective action \eqref{eq:vortex_effective_action} 
does not give a correct moduli space metric, 
it gives the correct vortex partition function 
since the deformation of the K\"ahler potential is a $Q$-exact deformation, which do not change the vortex partition function thanks to the supersymmetric localization. 

%After eliminating the gauge degrees of freedom and %solving the constraints, 
%the effective action is described by 
%the bosonic moduli parameters $\varphi^A$ and 
%the partner fermionic moduli $\psi^A$.
%In the presence of the omega-background $\epsilon$ and 
%twisted masses $m^a~(a=1,\cdots,N)$, 
%the contribution to the partition function 
%from each topological sector takes the form
%\beq
 %\mathcal Z_{k_1,k_2,\cdots,k_L}^{n_1,n_2,\cdots,n_{L+1}} = e^{-\sum_{i=1}^L 2\pi r_i k_i}  \int_{\mathcal M} d^n \varphi \, d^n \psi \, d^n \bar \varphi \, d^n \bar \psi \, \exp \left( - S_{\rm eff} \right), 
%\label{eq:def_Z}
%\eeq
%where $S_{\rm eff}$ is 
%the effective potential induced on the moduli space,
%which is determined by the holomorphic Killing vector %$\xi^A$ 
%for the torus action of the spatial rotation and the %flavor rotation
%\beq
%S_{\rm eff} = g_{A \bar B} \left( \xi^A \bar \xi^B + %\nabla_C \xi^A \psi^C \bar \psi^B \right). 
%\eeq
It is well known that the integral \eqref{eq:vortex_partition_funciton_k} can be evaluated 
by using the localization formula, 
which relates the partition function 
to the weights of the torus action at the fixed points
\beq
\mathcal Z_{k_1,k_2,\cdots,k_L}^{n_1,n_2,\cdots,n_{L+1}} \ = \ \left( \, \prod_{i=1}^L \Lambda_i^{\beta_i k_i} \right) \sum_{s \in \mathfrak S} \frac{1}{\det \mathcal M_s}, 
\label{eq:PF_wegiht} 
\eeq
where $\mathfrak S$ is the set of the fixed points and 
$\mathcal M_s$ is the generator of the torus action at the fixed point $s$ discussed in subsection \ref{subsec:torus_action}. 
Explicitly, it can be read off from the contour integral (see Appendix \ref{appendix:VP})
\beq
\mathcal Z_{k_1,k_2,\cdots,k_L}^{n_1,n_2,\cdots,n_{L+1}} \ = \ \prod_{i=1}^L \left[ \frac{1}{k_i!} \, \Lambda_i ^{\beta_{0i} k_i} \prod_{r=1}^{k_i}\oint_{C_i^+} \frac{d \phi_i^r}{2\pi i \epsilon} \right] 
 \left[\prod_{i=1}^L {\cal Z}_i^{\mY \wt \mY} {\cal Z}_i^{Z \Phi} \prod_{i=1}^{L-1} {\cal Z}_i^{W \wt W} \right],
 \label{eq:topopartition}
\eeq
where ${\cal Z}_i^{\mY\wt \mY}$, ${\cal Z}_i^{Z \Phi}$ and $ {\cal Z}_i^{W \wt W}$ are given by
\begin{eqnarray}
{\cal Z}_i^{\mY \wt \mY} &\equiv&
\prod_{r=1}^{k_i} \left[\prod_{\alpha=1}^{n_i} \frac{1}{\phi_i^r-m^{(i,\alpha)}}\prod_{\beta=1}^{n_{i+1}}\frac{1}{m^{(i+1,\beta)}-\phi_i^r-\epsilon} \right], \label{eq:Z_UUt} \\
{\cal Z}_i^{Z \Phi} &\equiv& \prod_{r=1}^{k_i} \prod_{s=1}^{k_i} \hs{-1} {\phantom{\bigg|}}' \frac{\phi_i^r-\phi_i^s}{\phi_i^r-\phi_i^s-\epsilon}, \\
{\cal Z}_i^{W \wt W} &\equiv& 
\prod_{r=1}^{k_i} \prod_{s=1}^{k_{i+1}}\frac{\phi_{i+1}^s-\phi_i^r-\epsilon}{\phi_{i+1}^s-\phi_i^r}.
\end{eqnarray}
where $\prod'$ indicates that the factors with $s=r$ are omitted from the product. 
\begin{figure}
\centering
\fbox{
\includegraphics[width=82mm, bb = 40 20 800 585]{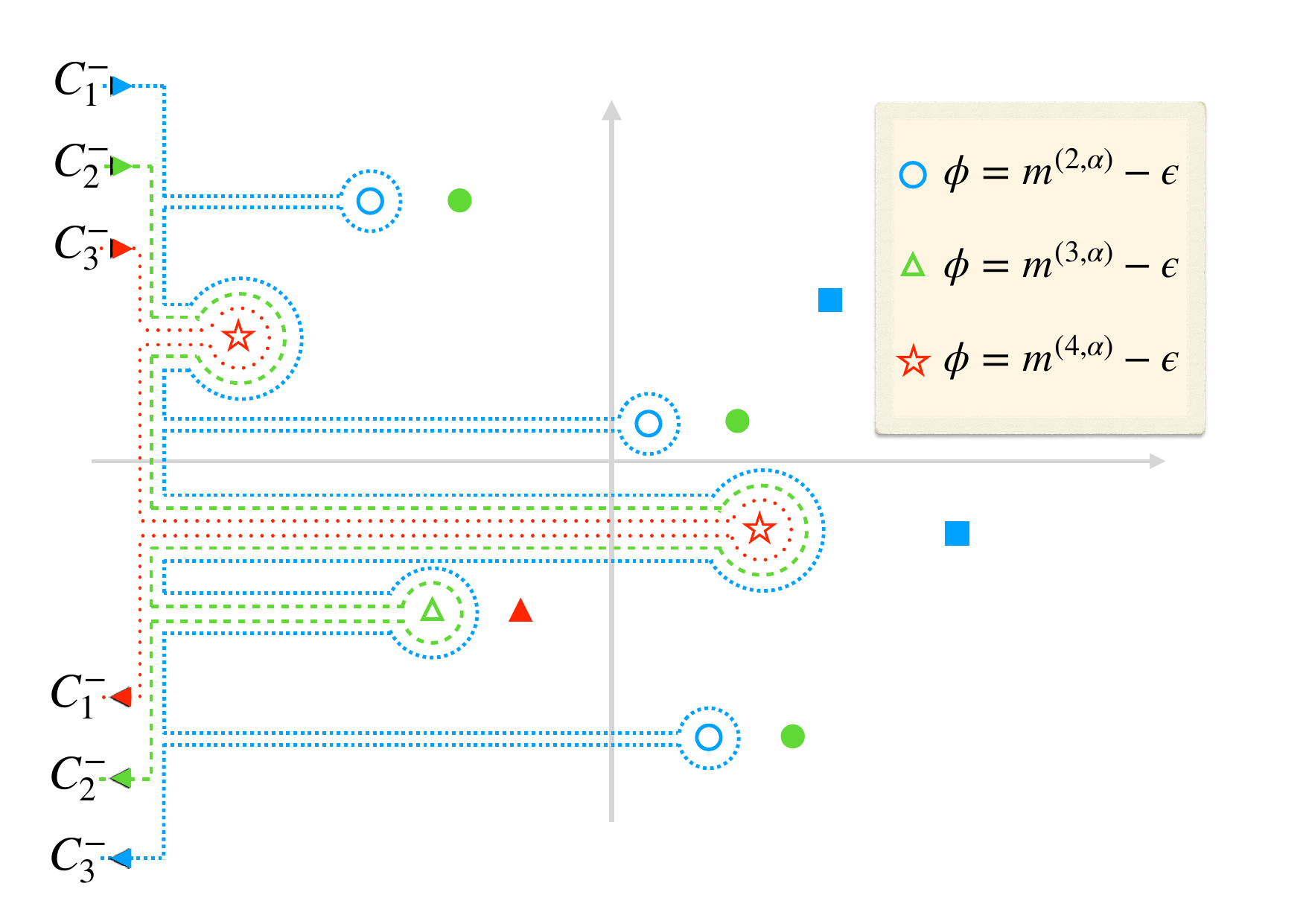}}
\fbox{
\includegraphics[width=82mm, bb = 40 20 800 585]{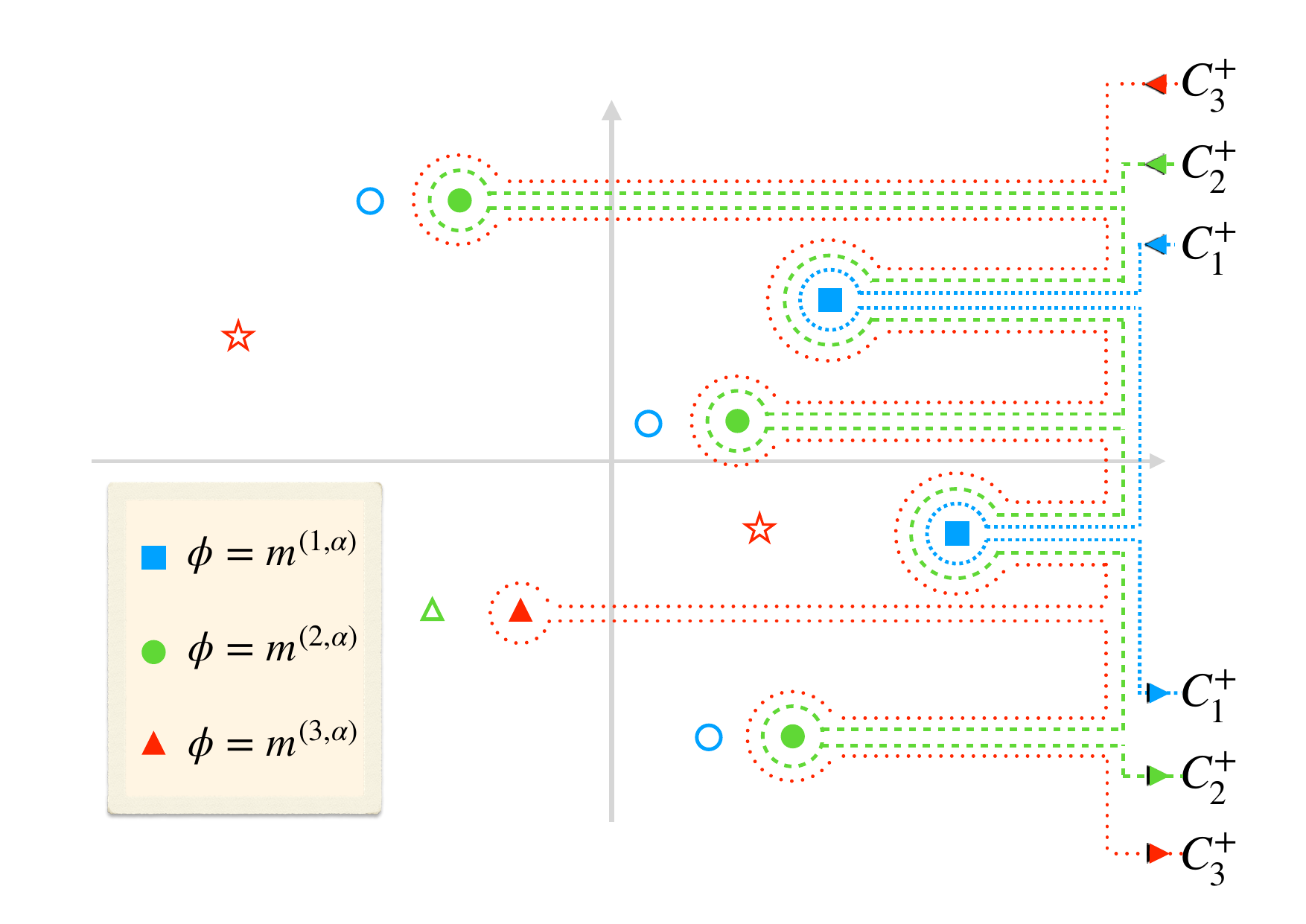}}
\caption{Integration contours. 
For all $r=1,\cdots,N_i$, the integration contours for $\phi_i^r$ are identical path $C_i^+$ (right panel) on the complex $\phi$ plane. Each path $C_i^+$ can be freely deformed as long as it does not cross over the other paths and the poles at $\phi=m^{(i,\alpha)}$ and $\phi=m^{(i+1,\alpha)}-\epsilon$. 
Following this rule, one can deform 
the paths $C_i^+$ to $C_i^-$ (left panel) 
if there is no pole at the infinity.}
\label{fig:contours_duality}
\end{figure}
The integration contours $C_i^+$ are paths 
which are determined from the Jeffrey-Kirwan residue formula to pick up the poles corresponding to the fixed points. 
Explicitly, $C_i^+$ are contours on the complex $\phi$ plane 
starting at a point $\phi = + \infty$, 
surrounding $\phi=m^{(j,\alpha)}~(j=1,\cdots,i)$ 
and $C_j^+~(j=1,\cdots,i-1)$,  
and ending at another point $\phi = + \infty$ 
(see the right panel of Fig.\,\ref{fig:contours_duality}). 
As shown in Appendix \ref{appendix:VP}, 
The contour integral \eqref{eq:topopartition} is given by
the sum of residues at the poles classified by sets of Young tableaux. For set of a Young tableaux 
\beq
{\bf Y} = \left\{Y^{(j,\alpha)} = \left( l_j^{(j,\alpha)},l_{j+1}^{(j,\alpha)},\cdots,l_L^{(j,\alpha)} \right) \right\},
\eeq
the corresponding pole is located at
\beq
\phi_{i}^{(j,\alpha,p)} = m^{(j,\alpha)} + (p-1) \epsilon, \hs{10}
(i=1,\cdots, L,~j=1,\cdots,i,~\alpha=1,\cdots,n_j,~p=1,\cdots,l_i^{(j,\alpha)}),
\eeq
where we have relabeled $\phi_i^r$ as $\phi_i^{(j,\alpha,p)}$. 
Performing the contour integral and evaluating the residues, 
we obtain the partition function of the form
\beq
{\cal Z}^{n_1,n_2,\dots,n_{L+1}}_{~k_1,k_2,\cdots,k_L} &=& \left( \, \prod_{i=1}^L \Lambda_i^{\beta_i k_i} \right) \sum_{{\bf Y}} Z_{\mY }^{\bf Y} Z_{\wt \mY}^{\bf Y} Z_{Z \Phi}^{\bf Y} Z_{W \wt W}^{\bf Y},
\label{eq:Z_residue}
\eeq
where $Z_{\mY}^{\bf Y}$, $Z_{\wt \mY}^{\bf Y}$, $Z_{Z \Phi}^{\bf Y}$ and $Z_{W \wt W}^{\bf Y}$ are contributions from $\mY$, $\wt \mY$, $(Z,\Phi)$ and $(W,\wt W)$, which are respectively given by
\beq
Z_{\mY}^{\bf Y} &=& \prod_{i=1}^L \prod_{\alpha=1}^{n_i} \prod_{k=1}^{i} \prod_{\beta=1}^{n_k} \prod_{q=1}^{l_i^{(k,\beta)}} {\hs{-1} \phantom{\bigg|}}' \frac{1}{m^{(k,\beta)}-m^{(i,\alpha)}+(q-1) \epsilon}, \\
Z_{\wt \mY}^{\bf Y} &=& \prod_{i=1}^L \prod_{j=1}^i \prod_{\alpha=1}^{n_j} \prod_{p=1}^{l_i^{(j,\alpha)}} \prod_{\beta=1}^{n_{i+1}} {\hs{-1}  \phantom{\bigg|}}' \frac{1}{ m^{(i+1,\beta)} - m^{(j,\alpha)} - p \epsilon}, \\
Z_{Z\Phi}^{\bf Y} &=& \prod_{i=1}^L \prod_{j=1}^i \prod_{\alpha=1}^{n_j} \prod_{p=1}^{l_i^{(j,\alpha)}} \prod_{k=1}^i \prod_{\beta=1}^{n_k} \prod_{q=1}^{l_i^{(k,\beta)}} {\hs{-1} \phantom{\bigg|}}' \frac{m^{(k,\beta)} - m^{(j,\alpha)} - (p-q) \epsilon}{m^{(k,\beta)} - m^{(j,\alpha)} - (p-q+1) \epsilon}, \\
Z_{W \wt W}^{\bf Y} &=& \prod_{i=1}^{L-1} \prod_{j=1}^i \prod_{\alpha=1}^{n_j} \prod_{p=1}^{l_i^{(j,\alpha)}} \prod_{k=1}^{i+1} \prod_{\beta=1}^{n_k} \prod_{q=1}^{l_{i+1}^{(k,\beta)}} {\hs{-1} \phantom{\bigg|}}'  \frac{m^{(k,\beta)} - m^{(j,\alpha)} - (p-q+1) \epsilon}{m^{(k,\beta)} - m^{(j,\alpha)} - (p-q) \epsilon},
\eeq
where $\prod'$ indicates that the vanishing factors in the denominator and numerator are omitted. 
We can show that the partition function \eqref{eq:Z_residue} can be rewritten as 
\beq
{\cal Z}^{n_1,n_2,\dots,n_{L+1}}_{~k_1,k_2,\cdots,k_L} = \left( \, \prod_{i=1}^L \Lambda_i^{\beta_i k_i} \right) \sum_{\bf Y} \frac{A^{\bf Y}}{B^{\bf Y}}, 
\label{eq:Z_reduced}
\eeq
with
\beq
A^{\bf Y} &=& \prod_{i=1}^L \prod_{j=1}^i \prod_{\alpha=1}^{n_j} \prod_{k=1}^i \prod_{\beta=1}^{n_k} (-\epsilon)^{l_i^{(j,\alpha)}-l_i^{(k,\beta)}} \left( \frac{m^{(j,\alpha)}-m^{(k,\beta)}}{\epsilon} + 1 \right)_{l_i^{(j,\alpha)}-l_i^{(k,\beta)}}, \\
B^{\bf Y} &=& \prod_{i=1}^L \prod_{j=1}^i \prod_{\alpha=1}^{n_j} \prod_{k=1}^{i+1} \prod_{\beta=1}^{n_k} (-\epsilon)^{l_i^{(j,\alpha)}-l_i^{(k,\beta)}} \left( \frac{m^{(j,\alpha)}-m^{(k,\beta)}}{\epsilon} + 1 \right)_{l_i^{(j,\alpha)}-l_{i+1}^{(k,\beta)}},
\eeq
where $(a)_b$ denotes the Pochhammer symbol 
\beq
(a)_b = \frac{\Gamma(a+b)}{\Gamma(a)}.
\eeq
We can confirm that the result \eqref{eq:Z_reduced}
of the contour integral for 
${\cal Z}^{n_1,n_2,\dots,n_{L+1}}_{~k_1,k_2,\cdots,k_L}$ is proportional to the residue of the integrand for the total vortex partition function in the 2d $\mathcal N=(2,2)$ theory 
\beq
Z = \int \prod_{i=1}^L \prod_{a=1}^{N_i} \left[  \frac{d \sigma_i^a}{2\pi i \epsilon} \, \exp \left( - \frac{2\pi i \sigma_i^a \tau_i}{\epsilon} \right) {\prod_{b=1}^{N_i}} {\hs{-1} \phantom{\bigg|}}' \Gamma \left( \frac{\sigma_i^a -\sigma_i^b}{\epsilon} \right)^{-1} \prod_{c=1}^{N_{i+1}}\Gamma \left(\frac{\sigma_i^a - \sigma_{i+1}^{c}}{\epsilon} \right) \right], 
\label{eq:2d_VP}
\eeq
at the pole 
\beq
\sigma_i^{(j,\alpha)} = - m^{(j,\alpha)} - l_i^{(j,\alpha)}\epsilon.
\eeq
This is consistent with the fact that the total vortex partition function $Z$ can be written as a sum of the contributions from each topological sectors
\beq
Z = {\cal Z}^{n_1,n_2,\dots,n_{L+1}}_{\rm 1-loop} \sum_{k_1=0}^\infty \cdots \sum_{k_L=0}^\infty \left( \, \prod_{i=1}^L \Lambda_i^{\beta_i k_i} \right) {\cal Z}^{n_1,n_2,\dots,n_{L+1}}_{~k_1,k_2,\cdots,k_L}. 
\eeq
Thus, we can confirm that the K\"ahler quotient construction 
gives the correct information on the vortex moduli space and the non-perturbative effects in the parent 2d $\mathcal N=(2,2)$ theory.

\subsection{Duality and partition function}
One of the advantages of using 
the K\"ahler quotient construction is 
that it makes the duality manifest 
in the contour integral expression \eqref{eq:topopartition}. 
We can show that the partition function 
agrees with that of the dual theory as follows.
In the dual theory, the effective vortex action is described 
by the same action as \eqref{eq:vortex_effective_action}
with the duality map of the degrees of freedom \eqref{eq:dual_hADHM} and 
the sign flip of the FI parameters 
\beq
\frac{4\pi}{g_i^2} \rightarrow - \frac{4\pi}{g_i^2}.
\eeq
As the Jeffrey-Kirwan residue formula implies, 
the relevant poles become those enclosed by the contour $C_i^-$ in the left panel of Fig.\,\ref{fig:contours_duality} due to the sign flip of the FI parameters. 
We can see from \eqref{eq:Z_UUt} 
that if $n_i \not =0 ~ (N_i \not = N_{i+1})$ for all $i$, 
the integrand in \eqref{eq:topopartition} 
has no pole at the infinity,  
and hence we can change the contour of integration 
from $C_{i}^+$ to $C_{i}^-$ without changing 
the result of the integration 
(see the left panel of Fig.\,\ref{fig:contours_duality}). 
The residues for the contours $C_i^+$ and $C_i^-$ 
give the vortex partition functions 
in the original and dual theory, respectively, 
and hence they are equivalent.
More precisely, it is easy to check 
by changing the variables as $\phi_i^r \rightarrow - \tilde \phi_{L+1-i}^r$
that 
\begin{eqnarray}
{\cal Z}^{n_1,n_2,\dots,n_{L+1}}_{~k_1,k_2,\dots,k_L}(M_1,M_2,\dots,M_{L+1},\epsilon,\Lambda_i)
 = {\cal Z}^{n_{L+1},n_L\dots,n_2,n_1}_{~k_L,k_{L-1},\dots,k_1}(\epsilon-M_{L+1},\dots,\epsilon-M_2,\epsilon-M_1,\epsilon,\Lambda_{L+1-i}),
\end{eqnarray} 
and thus the duality of the NL$\sigma$M holds also for 
the vortex partition functions. 

\section{Summary and discussion}\label{sec:summary}
In this paper, we analyzed the moduli spaces of the following topological solitons:
\begin{enumerate}
\item 
BPS vortices in the $U(N_1) \times \cdots \times U(N_L)$ GL$\sigma$Ms characterized 
by the linear quiver \eqref{eq:quiver}.
\item 
sigma model instantons (lumps) in the K\"ahler flag manifold sigma models,
which can be obtained in the large gauge coupling limit of the GL$\sigma$Ms.
\end{enumerate}
In these theories, vortices and sigma model instantons 
carrying multiple topological charges $\{ k_i \} \in \mathbb Z^L$ appear.
We analyzed BPS equations for vortices, extracted the data set of the vortex moduli through the moduli matrix method,
and exactly determined the moduli spaces of 1/2 BPS vortices in the GL$\sigma$Ms as shown in Eq.\,\eqref{eq:vtx_moduli}.
We also discussed how to obtain general exact instanton solutions in the K\"ahler flag manifold sigma models.

We showed that the moduli matrix method in the $L=1$ case
can be recast into the ADHM-like quotient construction (the half-ADHM construction) of the moduli space, which has been originally derived from the view point of the D-brane construction in the string theory \cite{Hanany:2003hp}.
Generalizing this technique to the case of general $L$, 
we constructed a quotient construction for the vortex moduli space Eq.\,\eqref{eq:moduli_space_genearl} 
specified by the quiver diagram shown in Fig.\,\ref{fig:ADHM_quiver}.
In this half-ADHM formalism, various features of the moduli space are manifest. 
In particular, we have observed that the duality relations are expressed in the simple form \eqref{eq:duality},
which is realized by reversing all the arrows in Fig.\,\ref{fig:ADHM_quiver}. 
%such that this duality at classical level is manifest in instanton solutions \eqref{eq:inst-sol} and \eqref{eq:GenRmap} in terms of the half-ADHM data. 

As an application, we have used the quotient construction of the vortex moduli space 
to calculate the vortex partition functions in 2d ${\cal N}=(2,2)$ GL$\sigma$Ms. 
Applying the localization formula to the half-ADHM system, 
we have computed the vortex partition from the data of the fixed points of 
the torus action acting on the vortex moduli space. 
From the viewpoint of the vortex partition functions, 
we have confirmed dualities between pairs of quiver gauge theories whose vacuum moduli spaces are identical flag manifolds. 
We have found that the partition functions agree even though 
the structures of the vortex moduli spaces, in particular, 
the fixed point structures, are different between the dual pairs of GL$\sigma$Ms.
The half-ADHM formalism has 
turned out to make the duality manifest even at the quantum level.

One of the future directions is to study the low energy dynamics of vortices and lumps, 
which are described by geodesic motions on the moduli space equipped with a metric. In the case of $L=1$ (Grassmannian sigma models) and local non-Abelian vortices, 
the moduli space metrics have been determined for well-separated vortices \cite{Fujimori:2010fk} 
and low energy dynamics have been studied \cite{Eto:2011pj}. 
Extending these analyses to the quiver gauge theories and the flag manifold NL$\sigma$Ms discussed in this paper would be interesting.

In this paper, we have extended the the half-ADHM formalism for the moduli space of vortices to the gauged linear sigma models characterized by the linear quiver \eqref{eq:quiver}. 
One of the future problems is to investigate to what extent the half-ADHM formalism can be generalized to the vortices 
in arbitrary gauge theories.
We have studied flag manifold NL$\sigma$Ms $G/H$ 
with $G=U(N)$. Extending our work to other groups $G$ 
such as $SO(N)$, $USp(2N)$ and exceptional groups is
an important future work. For $L=1$, such isometry $G$ can be realized by imposing holomorphic constraints (superpotentials in supersymmetric cases)  \cite{Higashijima:1999ki}.

Replacing the base space with a space of different geometry and topology would broaden the application of the half-ADHM formalism. 
However, such a replacement in the base space could drastically change the half-ADHM formalism.
In particular, all the proofs in Appendices \ref{sec:ZPsiPatches} and \ref{sec:non-singular} must 
be reconstructed from scratch. Although research in this direction is challenging, it would deepen our understanding of vortices.

In both theories of the dual pairs discussed in this paper, the quiver diagrams consist of linear chains with all arrows pointing in the same direction. 
According to \cite{Benini:2014mia}, the cluster algebra on quiver diagrams produces more dual pairs of theories. That is, the present theory should be dual to various theories characterized by complicated quiver diagrams,
such as one involving chain loops or arrows in the opposite direction.
It would be interesting to extract the half-ADHM data from the moduli matrices for such models and 
see how those dualities are expressed at both the classical and quantum levels.

Another important future direction is to study sigma model lumps (instantons) in non-\kahler flag NL$\sigma$Ms. 
In this paper, we have studied only BPS lumps (instantons) in \kahler flag NL$\sigma$Ms, 
where there are no forces among lumps, thus admitting the moduli space.
On the other hand, lumps in non-\kahler flag NL$\sigma$Ms are non-BPS; hence there are forces among them. Interaction between non-BPS sigma model instantons would be important when we discuss the non-perturbative aspects of the sigma models from the viewpoint of instantons.

\section*{Acknowledgement}
This work is supported by the Ministry of Education, Culture, Sports, Science, and
Technology(MEXT)-Supported Program for the Strategic Research Foundation at Private Universities
“Topological Science” (Grant No. S1511006) and by the Japan Society for the Promotion of Science
(JSPS) Grant-in-Aid for Scientific Research (KAKENHI) Grant Number (18H01217). This work is also
supported in part by JSPS KAKENHI Grant Numbers JP21K03558 (T. F.) and JP22H01221 (M. N.).

\appendix
\newpage
\section{Riemannian manifolds and \kahler manifolds}
\label{appendix:Riemann}

\def\mA{{\mathcal A}}  
%%%%%%%%%%%%%%%%%%%%%%%%%%%%%%%%%%%%%%%%%%%%%%%%%
\subsection{General Riemannian metric for flag manifolds} \label{sec:appA}
The homogeneous Riemannian metric of the generalized flag manifolds 
$G/H=U(N)/[U(n_1)\times \cdots \times U(n_{L+1})]$ 
has $L(L+1)/2$ parameters (decay constants)
and it becomes a \kahler metric on a $L$ dimensional subspace parameterized by the FI parameters.  
Here we explain the relations between various expressions for the flag manifold sigma models. 
Using $n_i$-by-$N$ matrix valued fields $v_i~(i=1,\cdots,L,L+1)$,  
the flag manifold sigma model can be given in the following form 
\begin{eqnarray}
{\cal L}=\frac12 \sum_{i,j=1}^{L+1} f_{ij}\tr \Big[({\cal D}_\mu v_i v_j^\dagger) ({\cal D}_\mu v_i v_j^\dagger) ^\dagger \Big]
+ \sum_{i,j=1}^{L+1} \tr \Big[ \lambda_{ij}(v_i v_i^\dagger-\delta_{ij}{\bf 1}_{n_i}) \Big],
\end{eqnarray}
where $f_{ij} = f_{ji} > 0$ are coupling constants and 
the covariant derivative on $v_i$ is defined as 
${\cal D}_\mu v_i=\partial_\mu v_i+ia_\mu^i v_i$ with $U(n_i)$ gauge fields $a_\mu^i$.
%We set diagonal elements of $f_{ij}$ to be zero as $f_{ii}=0$ to 
% fix redundancies of $f_{ij}\to f_{ij}+ u_i+ u_j$ with some vector $u_i$.
The $n_i$-by-$n_j$ matrices $\lambda_{ij}$ are
Lagrange multipliers which gives the constraints
\begin{eqnarray}
v_i v_i^\dagger=\delta_{ij}{\bf 1}_{n_i} \quad \Leftrightarrow \quad  
U^\dagger=(v_1^\dagger,v_2^\dagger,\cdots,v_{L+1}^\dagger) \in U(N)
\quad {\rm with} \quad N=\sum_{i=1}^{L+1} n_i,
\end{eqnarray}
and thus the target manifold becomes the generalized flag manifold
when the auxiliary gauge fields are eliminated
\begin{eqnarray}
a_\mu^i=i\partial_\mu v_i v_i^\dagger. 
%\quad  f_{\mu\nu}(a^i)=-i[{\cal D}_\mu, {\cal D}_{\nu}]=-i\partial_\mu v_i({\bf 1}-v_i^\dagger v_i) \partial_\nu v_i^\dagger-(\mu \leftrightarrow \nu)
\end{eqnarray}
Substituting these into the Lagrangian, we obtain
\begin{eqnarray}
{\cal L}= - \frac{1}{2} \sum_{i=1}^L \sum_{j=i+1}^{L+1} f_{ij}\tr[ \partial_\mu P_i \partial^\mu P_j],
\label{eq:L_Riemann}
\end{eqnarray}
where $P_i$ are projection operators 
\beq
P_i\equiv v_i^\dagger v_i, \hs{10} (P_i P_j = \delta_{ij} P_i).
\eeq
Note that the terms proportional to diagonal elements of $f_{ij}$ are introduced only for stability of the auxiliary fields
and disappear in the above Lagrangian since ${\cal D}_\mu v_i v_i^\dagger=0$. 

For some purposes, it might be convenient to express the sigma model with quadratic kinetic terms,
which reduces to certain special cases of the above model
\begin{eqnarray}
\sum_i f_i \tr \Big[ ({\cal D}_\mu v_i) ({\cal D}^\mu v_i)^\dagger \Big]=
\sum_{i,j} f_i \tr \Big[ ({\cal D}_\mu v_i v_j^\dagger) (v_j{\cal D}^\mu v_i^\dagger) \Big]=\frac{1}{2} \sum_{i,j} (f_i+f_j) \tr\Big[ ({\cal D}_\mu v_i v_j^\dagger) ({\cal D}^\mu v_iv_j^\dagger)^\dagger \Big],
\end{eqnarray} where the completeness condition $\sum_j v_j v_j^\dagger={\bf 1}_N$ is used.
For $L>2$, this model does not cover whole space of the homogeneous Riemannian metric,
whereas for $L=1,2$, it reproduces the Riemannian metric with arbitrary decay constants 
since $\#f_i=L+1\ge L(L+1)/2=\# f_{ij}$.

Each field configuration can be viewed as a map form $\mathbb R^2 \cup \{\infty\} =S^2$ 
to the target space $ {\cal M}=G/H$ 
and defines the topological charges 
\begin{eqnarray}
\pi_2 (G/H)
%=\pi_2 \left(\frac{SU(n_1+n_2+\cdots+n_{L+1})} {S[U(n_1)\times U(n_2)\times \cdots\times U(n_{L+1})]}\right) 
=\pi_1(H)=\pi_1 \left(S[U(n_1)\times U(n_2)\times \cdots\times U(n_{L+1})]\right) ={\mathbb Z}^L.
\end{eqnarray}
Explicitly, the topological numbers $m_i~(i=1,\cdots,L+1)$ are given by
\begin{eqnarray}
m_i \equiv -\frac1{2\pi} \int_{\mathbb R^2} dx^2 \, \tr[f_{12}^i]= \frac{i}{2\pi} \int_{\mathbb R^2} \tr [ d v_i \wedge d v_i^\dagger ]
=\frac{1}{2\pi i} \oint_{S^1} \tr [ d v_i  v_i^\dagger ] \in \mathbb Z, \label{eq:Riemancharge}
\end{eqnarray}
where $f_{\mu \nu}^i = \p_\mu a_\nu^i - \p_\nu a_\mu^i + i [a_\mu^i,a_\nu^j]$. 
Note that there are only $L$ independent charges since the total charge vanishes as
\begin{eqnarray}
\sum_{i=1}^{L+1} m_i = \frac{i}{2\pi} \int_{\mathbb R^2}
\sum_{i=1}^{L+1} \tr [ d v_i \wedge d v_i^\dagger ]= -\frac{i}{2\pi} \int_{\mathbb R^2} \tr [ dU U^\dagger \wedge dU U^\dagger]=0.
\end{eqnarray}
%%%%%%%%%%%%%%%%%%%%%%%%%%%%%%%%%%%%%%%%%%%%%%%%%%%%%%%%%%%%%%%%%%%%
\subsection{\kahler condition on decay constants}
Let us discuss the relation between the model with a Rieamannian metric introduced above and 
the nonlinear sigma model discussed in the main text. 
As we have seen in \eqref{eq:kahler_target}, 
the \kahler potential for the \kahler metric is given by
\beq
K =\sum_{i=1}^L r_i \ln \det (\xi_i \xi_i^\dagger). 
\eeq
Note that the set of matrices $\{\xi_i\}$ and $\{v_i\}$ are related as  
\begin{eqnarray}
\left(v_1^\dagger,v_2^\dagger, \cdots, v_i^\dagger \right)^\dagger=\xi_i^o U =\xi_i^o \hat h^{-1}{\cal G}=\hat h_i^{-1} \xi_i^o {\cal G}=\hat h_i^{-1} \xi_i , 
\end{eqnarray} 
where ${\cal G}=\hat h\, U~(\hat h \in \hat H)$, $\hat h_i=\xi_i^o\hat h (\xi_i^o)^\dagger \in GL(N_i,\mathbb C)$.
%and we have used $\hat h_i\hat h_i^\dagger=\xi_i\xi_i^\dagger$.
Using the \kahler metric $g_{A \bar B}(X) = \p_A \bar \p_B K$,
we find that the Lagrangian for the \kahler flag manifold sigma model is given by
\begin{eqnarray}
\mathcal L = - \frac{1}{2} \sum_{i=1}^L r_i \sum_{j,l}^{j\le i <l} \tr[\p_\mu P_j \p^\mu P_l]. 
%= - \frac{1}{2} \sum_{i=1}^L \sum_{j=i+1}^{L+1} f_{ij}\tr[\p_\mu P_i \p^\mu P_j].
\end{eqnarray}
Comparing with \eqref{eq:L_Riemann}, 
we find that the Riemaniann model reduces to the \kahler model 
when the decay constants $f_{ij}$ are given by
\begin{eqnarray}
f_{ij}=\sum_{k=i}^{j-1} r_k \quad   {\rm or~equivalently} \quad  f_{i,i+1}=r_i,\quad f_{ij}=f_{ik}+f_{kj}\quad {\rm for~} i<k<j.
\end{eqnarray}

The vortex numbers $\{k_i\}$ defined in Eq.(\ref{eq:chargeH}) are 
related to topological charges defined in Eq.(\ref{eq:Riemancharge}) as 
\begin{eqnarray}
k_i \equiv\frac1{2\pi i} \int_{\mathbb R^2} \bar \partial \wedge \partial  \ln \det (\xi_i\xi_i^\dagger)
= \sum_{j=1}^i m_j.    
\end{eqnarray}

%%%%%%%%%%%%%%%%%%%%%%%%%%%%%%%%%%%%%%%%%%%%%%%%%%%%%%%%%%%%%%%%%%%%%%%%%%%%
\section{Comments on the master equations}\label{appendix:uniqueness}
\def\Sva{{{\mathfrak S}^{\rm vtx}}}
\def\bomega{{\boldsymbol \omega}}
\def\tu{\tau}
\def\vt{t}
In this appendix, 
we discuss the existence and the uniqueness 
of the solution of the set of the master equations \eqref{eq:mastereq}
\begin{align}
\hat {\cal E}_i \ \equiv \ q_i \Omega_{i+1} q_i^\dagger \Omega_i^{-1} - \Omega_i q_{i-1}^\dagger \Omega_{i-1}^{-1} q_{i-1} +\frac{4}{g_i^2} \partial_{\bar z} \left( \p_z \Omega_i \Omega_i^{-1} \right) -r_i {\bf 1}_{N_i} \ = \ 0.
\label{eq:calE_master}
\end{align}
These are equations for the positive definite Hermitian matrices $\Omega_i \in GL(N_i,\mathbb C)$ 
determined by a given set of the moduli matrices $(q_1,\cdots,q_L)$.
The master equations $\hat {\cal E}_i=0$ are related to 
the original BPS equations ${\cal E}_i = 0$ for the magnetic flux as
\begin{eqnarray}
{\cal E}_i = S_i^{-1}\hat {\cal E}_i S_i
~~~\mbox{with}~~~  
{\cal E}_i \equiv Q_i Q_i^\dagger-Q_{i-1}^\dagger Q_{i-1} -\frac{2}{g_i^2} F_{12}^i-r_i {\bf 1}_{N_i}, 
\end{eqnarray}
where $S_i \in GL(N_i,\mathbb C)$ are the matrices that 
can be obtained from $\Omega_i$ by the Cholesky decomposition $\Omega_i = S_i S_i^\dagger$. 
The matrices $q_i$ and $ S_i$ are related to $Q_i$ and ${\cal D}_\mu=\partial_\mu+iA_\mu^i$ as
\beq
Q_i= S_i^{-1} \mq_i(z) \, S_{i+1}, \hs{10}
A_{\bar z} = \frac{1}{2} (A_1^i+iA_2^i) = -i S_i^{-1} \partial_{\bar z} S_i.
\eeq
The boundary condition for $\Omega_i$ is 
\begin{align}
\lim_{|z|\to \infty }\left\{\mD_i(z)^{-1} \Omega_i(z,\bar z) \, \mD_i(z)^{\dagger -1}\right\}= \Omega_i^{\rm o} \in GL(N_i,\mathbb C), 
\label{eq:bcOmega}
\end{align}
where $\mD_i(z)$ is the $N_i$-by-$N_i$ matrix defined through the relation (see Eq.\,\eqref{eq:def_mD}) 
\beq
\mH_i(z) \equiv q_i(z) \, q_{i+1}(z) \cdots q_L(z) = (\mD_i(z),\wt \mD_i(z))
\eeq 
and $\Omega_i^{\rm o}$ is the constant positive-definite matrix corresponding to the vacuum configuration (see Eq.\,\eqref{eq:general_sol}).
%such that  the flags becomes a full rank constant matrices  at the boundary as 
%\begin{align}
%Q_i Q_{i+1}\cdots Q_L=S^{-1}_i \xi _i (z)  \quad \stackrel{|z|\to \infty}{\to} \quad 
%(S_i^{\rm o})^{-1} \xi_i^{\rm o} ={\rm const.}\quad
%{\rm with~} \Omega_i^{\rm o}=S_i^{\rm o}(S_i^{\rm o})^\dagger.
%\end{align}
By examining the master equation for large $z$, 
we find that the deviation of $\Omega_i$ from the large coupling limit $\Omega_i^{\infty}$ (see Eq.\eqref{eq:sol_omega_large}) is given by,
\begin{align}
\mD_i(z)^{-1} (\Omega_i-\Omega_i^{\infty}) \, \mD_i(z)^{\dagger -1}
=\frac{\p_z \p_{\bar z}}{g_i^2} {\cal O}(|z|^{-2})={\cal O} \left(  |z|^{-4}\right).
\label{eq:OmegaAsymptotic}
\end{align}

Here, let ${\cal F}_\Omega(\mathbb R^2)$ denote 
the space of configurations ${\bf \Omega} = \{\Omega_1,\Omega_2,\dots, \Omega_L \}$ 
where $\Omega_i$ are smooth maps from $\R^2$ to the space of positive definite $N_i$-by-$N_i$ Hermitian matrices 
satisfying the boundary condition \eqref{eq:bcOmega} 
and the asymptotic behavior \eqref{eq:OmegaAsymptotic}
with a given set of moduli matrices $\{q_i(z) \, | \, i=1,2,\dots,L\}$.
Let us take a reference point ${\bf \Omega}^{\rm ref}$ in ${\cal F}_\Omega(\mathbb R^2)$ 
and denote its components as $\Omega_i^{\rm ref}=S_i^{\rm ref} S_i^{\rm ref\dagger}$. 
Since $\Omega_i$ is a positive definite Hermitian matrix, 
we can define an $N_i$-by-$N_i$ Hermitian matrix $\omega_i$ depending on the reference point as
\begin{align}
\omega_i = \log\left[ (S_i^{\rm ref})^{-1}\Omega_i (S_i^{\rm ref \dagger})^{-1}\right]
\quad \Leftrightarrow \quad \Omega_i=S_i^{\rm ref}e^{\omega_i} S_i^{\rm ref \dagger}. \label{eq:omegadef}
\end{align}  
This relation defines a one-to-one map between ${\cal F}_\Omega$ 
and the functional space ${\cal F}_\omega$: 
the space of $\bomega=(\omega_1,\omega_2,\cdots,\omega_L)$
whose components $\omega_i$ are smooth bounded functions from $\R^2$ to 
the space of Hermitian matrices of order $N_i$ 
satisfying the boundary condition and the asymptotic behavior 
\begin{align}
\lim_{|z|\to \infty} \omega_i=0, \hs{5}
\omega_i ={\cal O}(|z|^{-4}). 
\label{eq:bcomega}
\end{align}
\begin{comment}
It is also convenient to consider a functional space ${\cal F}_\omega$ of which element is 
$\bomega=(\omega_1,\omega_2,\cdots,\omega_L)$
such that the above relation  defines a one-to-one map  between ${\cal F}_\Omega $ and 
$ {\cal F}_\omega$ with a given $\{S_i^{\rm ref}\}$.
The boundary condition  \eqref{eq:bcOmega} can be rewritten  in terms of $\omega_i$ as
\begin{align}
\lim_{|z|\to \infty} \omega_i=0. \label{eq:bcomega}
\end{align}
Therefore, ${\cal F}_\omega(\mathbb R^2)$  is defined as
\begin{itemize}
\item $\bomega: $  $\quad \mathbb R^2 \quad \to \quad \bigoplus_{i=1}^L$(space of Hermitian matices of order $N_i$),  \\ \qquad \qquad
$z,\bar z\quad \mapsto  \quad \bomega=\bomega(z,\bar z)=(\omega_1,\omega_2,\cdots,\omega_L) $ 
\item  all entries of $\bomega(z,\bar z)$ are bounded smooth  functions.
\item  ${\bomega}(z,\bar z)$ satisfies the boundary condition \eqref{eq:bcomega}.
\end{itemize}
%It is convenient to map the space ${\cal F}_{\Omega}(\mathbb R^2)$ to a function space ${\cal F}_{\bf \omega}(\mathbb R^2)$ as follows. 
%Since $\Omega_i$ is Hermitian and positive definite, 
%the matrix $\omega_i$ is Hermite and its entries are bounded smooth functions on $\mathbb R^2$. 
%Rewriting the feature \eqref{eq:OmegaAsymptotic} that the solution should have, 
%the following asymptotic behavior  is obtained,
%\begin{align}
%\lim_{|z|\to \infty} \omega_i=0,\quad \omega_i =\omega_i^\infty  +{\cal O}(|z|^{-4}),\label{eq:omegabdcon}
%\end{align}
%where $\omega_i^\infty$ is the image of $\tau_{{\bf S}_{\rm ref}}$ at $\Omega_i^\infty$
%which can be written down explicitly. 
If we choose $S_i^{\rm ref}$ appropriately 
such that $\Omega_i=\Omega_i^{\rm ref}$ satisfies the  feature \eqref{eq:OmegaAsymptotic},
then  the solution for $\omega_i$ satisfies 
\begin{align}
\omega_i ={\cal O}(|z|^{-4}),
\end{align}
although such a choice of $S_i^{\rm ref}$ depends on $\Omega_i^\infty$, that is, 
the moduli matrix $\{ q_i(z)\}$. 
Therefore,
for our purposes of searching for solutions, 
we can impose further conditions on the function spaces ${\cal F}_\Omega$ and ${\cal F}_\omega$ and redefine 
them such that, for instance,
\end{comment}
It is worth noting that the asymptotic behaviors in Eqs.\,\eqref{eq:OmegaAsymptotic} and 
\eqref{eq:bcomega} implies that the following square-integrable conditions are satisfied
\begin{align}
\langle {\cal E}^2\rangle < \infty ~~~\mbox{for $\forall {\bf \Omega}\in {\cal F}_\Omega$},\qquad 
\langle \bomega^2\rangle <\infty ~~~\mbox{for $\forall \bomega \in{\cal F}_\omega$},
\end{align}
where we have used the following bracket notation
\begin{align}
\langle { \cal O} \rangle \equiv   \int d^2x  \sum_{i=1}^L \tr[ {\cal O}_i].
\end{align}
%%%%%%%%%%%%%%%%%%%%%%%%%%%%%%%%%%%%%%%%%%%%%
%%%%%%%%%%%%%%%%%%%%%%%%%%%%%%%%%%%%%%%%%%%%%%%%%%
\subsection%[Linearization of the master equations]
{$\cal H$: Linearization of master equations} \label{sec:H}
First, let us define a linear operator ${\cal H}={\cal H}(q,S)$ by
\begin{align}
({\cal H} {\bf v})_i\equiv 
-\frac{4}{g_i^2} {\cal D}_{\bar z}{\cal D}_z  v_i +
Q_i \left(Q_i^\dagger v_i-v_{i+1}Q_i^\dagger\right)+
\left( v_i Q_{i-1}^\dagger -
 Q_{i-1}^\dagger v_{i-1} \right)Q_{i-1},
\end{align} 
where ${\bf v} = (v_1,v_2,\cdots,v_L)$ is an element of ${\cal F}_\omega^{\mathbb C}\equiv \{{\bf u}+i {\bf w} \big|  {\bf u, w}\in {\cal F}_\omega \}$.
This operator ${\cal H}$, which depends on $q_i, S_i$, 
appears in the linearized master equations 
and plays a central role in the subsequent subsections.  
Explicitly, one can show that the variation $\delta \hat {\cal E}_i$ of $\hat {\cal E}_i$ 
under the infinitesimal shift
$\delta \Omega_i\equiv S_i \omega_i S_i^\dagger$ 
of $\Omega_i=S_iS_i^\dagger \in {\cal F}_\Omega$
with $\omega_i=\omega_i(z,\bar z) \in {\cal F}_\omega$ 
is given in terms of $\mathcal H$ as
\begin{align}
S_i^{-1} \delta \hat {\cal E}_i S_i = - ({\cal H}\,{\bomega})_i. 
\end{align}
This operator ${\cal H}$ is Hermitian and positive semi-definite
\begin{align}
\langle {\bf v} {\cal H} {\bf v}^\dagger \rangle 
% \int d^2x  \sum_{i=1}^L\tr \left[ \omega_i {\cal H} \omega_i\right]
=\int d^2x  \sum_{i=1}^L\tr \left[ \frac{4}{g_i^2} {\cal D}_{\bar z } v_i  {\cal D}_z v_i^\dagger 
+(v_i Q_i-Q_i v_{i+1})(v_i Q_i-Q_i v_{i+1})^\dagger \right] \, \ge 0 
\hspace{3mm}
\mbox{for arbitrary ${\bf v}\in {\cal F}_\omega^{\mathbb C}$.}
\label{eq:innnerproductH}
\end{align}
Furthermore, we can show that there exists a gap in the spectrum of the linear operator ${\cal H}$ as follows.
Suppose that ${\cal H} {\bf v}^\dagger=0$. 
Then, the above inner product vanishes and hence
\begin{align}
0={\cal D}_z v_i^\dagger ={\cal D}_{\bar z}v_i,
%S^\dagger \partial_z (S^{\dagger-1} \omega_i S^{\dagger} ) S^{\dagger-1}, 
\quad v_i Q_i =Q_i v_{i+1}\quad   (v_{L+1}=0)  \quad {\rm for~} i=1,2,\dots,L.
\end{align}
It follows from the second equation that 
\begin{align}
v_i Q_i Q_{i+1} \cdots Q_L = Q_i v_{i+1} Q_{i+1} \cdots Q_L = \cdots = Q_i Q_{i+1} \cdots Q_L v_{L+1} = 0.
\end{align}
This equation implies that $v_i$ must vanish since $Q_i Q_{i+1} \cdots Q_{L}$ 
has the maximal rank 
except for a finite number of points\footnote{The matrix $\xi_ i =q_i q_{i+1} \cdots q_L$ 
has the maximal rank except for a finite number of points. The matrix $Q_i Q_{i+1} \cdots Q_{L} = S_i^{-1} \xi_i$ has the same property since $S_i(z,\bar z) \in GL(N_i ,\mathbb C)$.}. 
Therefore, we find that ${\cal H}$ has no zero mode
\begin{align}
\ker {\cal H}(q,S) = \{ 0 \} ~~~ 
\mbox{for $\forall (q_i,S_i)$}.
\end{align}
From this property, we can immediately conclude that $\boldsymbol{\Omega}$ has no additional moduli parameter and all the moduli parameters of vortex solutions are contained in $\{ q_i(z) \}$.
 
%One can show that the spectrum of ${\cal H}$  is non other than that of 
%fluctuations of the gauge fields around the vortex solution and 
%the above feature is a reflection of the fact that the vacuum of the theory lies in the Higgs phase.
%The spectrum 
%$\Lambda=\{\lambda_a| a=1,2,\cdots, \lambda_a>0\}$ given by
%\begin{align}
%{\bf v}_p \in {\cal F}_\omega,\quad {\cal H}{\bf v}_p= {\bf v}_p  \lambda_p,
%\end{align}
%is decomposed into 
%a  bulk continuous spectrum $\Lambda_{\rm bulk}$ and a discontinuous spectrum $\Lambda_{\rm bd}$ 
%for bound states around vortices 
%which satisfy 
%\begin{align}
%\Lambda=\Lambda_{\rm bulk}\cup \Lambda_{\rm bd},\quad 0< \Delta_\star \equiv \inf \Lambda.
%\end{align}
%Bounded eigenmodes are known as, so called, {\it breather modes},
%which  are  expected to have one mode corresponding to one vortex.

%begin{align}
%    0< \Delta_\star  \langle {\bf v}^2\rangle \le  \langle {\bf v} {\cal H} {\bf v}\rangle <\infty.
%\end{align}

%%%%%%%%%%%%%%%%%%%%%%%%%%%%%%%%%%%%%
%%%%%%%%%%%%%%%%%%%%%%%%%%%%%
\subsection{Vortex action and proof of the uniqueness} \label{sec:convexity}
For a given set of moduli matrices $\{q_i(z)\}$ (with  an appropriate gauge fixing of the $V$-transformations),  we can show that 
there exists a functional $\Sva$ of $\boldsymbol{\Omega}$ such that 
the variation of $\Sva$ with respect to $\Omega_i$ is given by
\begin{align}
\delta \Sva=-\int d^2x \sum_{i=1}^L \tr \left[\delta \Omega_i \Omega_i^{-1} \hat {\cal E}_i\right]
=-\int d^2x \sum_{i=1}^L \tr \left[\omega_i  {\cal E}_i\right]=-\langle \bomega {\cal E}\rangle. \label{eq:devS}
\end{align}
That is, $\Sva$ is the action which gives the full set of the master equations $\{\hat {\cal E}_i=0\}$.

\paragraph{Existence of $\Sva$ \\}
Although such an $\Sva$ may not be unique due to constant and total derivative terms, 
we can show that the following functional gives the full set of the master equations
\begin{align}
\Sva[{\bf \Omega},{\bf \Omega}^{\rm ref}]\equiv \int d^2 x  \left\{{\cal L}({\bf \Omega})-{\cal L}({\bf \Omega}^{\rm ref})\right\}
%\quad D_R\equiv \left\{z=x^1+i x^2 \in \mathbb C \big| |z|\le R \right\}
,\hs{5} {\cal L}({\bf \Omega})=\sum_{i=1}^L\left\{{\cal L}^D_i+\frac1{g_i^2}{\cal L}^K_i\right\}. 
\label{eq:defSva}
\end{align}
Here ${\cal L}_i^D$ and ${\cal L}_i^K$\footnote{These terms are equivarent to the normal kinetic terms plus the Wess-Zumino-Witten terms up to total derivative terms. } are given by
\begin{align}
{\cal L}^D_i&= \tr \left[ \Omega_i^{-1} q_i \Omega_{i+1}q_i^\dagger + r_i \log \Omega_i \right],\\
{\cal L}^K_i&= \tr \left[ 4 |\p_z \psi_i|^2 +2 e^{-2\psi_i} (L_i^{-1} \p_{\bar z} L_i)  \, e^{2\psi_i} \, (L_i^{-1} \p_{\bar z} L_i)^\dagger \right],
\end{align}
where $\psi_i$ is a $N_i$-by-$N_i$ diagonal matrix and $L_i$ is a lower unitriangular matrix obtained by the Cholesky decomposition
\beq
\Omega_i =L_i \, e^{2\psi_i}L_i^\dagger.
\eeq
%$\rho_i=\rho_i(z,\bar z)$ is a certain field-independent density introduced to cancel an infra-red divergence of the action.
Note that although $\Sva$ is not invariant under the $V$-transformations, 
the shift is independent of $\Omega$. 
Hence, the variation of $\Sva$ with respect to $\Omega_i$ reproduce the master equations, 
which are covariant under the $V$-transformations. 
The term ${\cal L}({\bf \Omega}^{\rm ref})$ in the integrand is added 
to make the integral finite. 

To confirm that surface terms vanish in the r.h.s of \eqref{eq:devS}, 
we need to discuss the boundary conditions for $\{\psi_i, L_i\}$.
Note that,  as discussed in  Appendix \ref{subsec:patch_L=1},   an arbitrary given matrix $\mD_i(z)$
can always be transformed into a lower triangular form by a $V$-transformation as
\begin{align}
\mD_i(z)=\begin{pmatrix}
p_{i,1}(z)& 0 & \cdots  &0\\
\star & p_{i,2}(z) & \ddots & \vdots\\
\vdots & \ddots &\ddots& 0 \\
\star &\cdots&\star & p_{i,N_i}(z)
\end{pmatrix}= \left( {\bf 1}_{N_i} +{\cal O}(z^{-1})\right) {\bf p}_i(z) 
\end{align}
where ``$\star$" stand for polynomials and ${\bf p}_i(z)$ is a diagonal matrix 
\begin{align}
{\bf p}_i(z)={\rm diag} \, ( p_{i,1}(z),p_{i,2}(z),\cdots, p_{i,N_i}(z) )    \label{eq:pi}
\end{align} 
such that the $a$-th diagonal entry $p_{i,a}(z)$ is a monic polynomial of degree $l_{i,a}$ and ${\rm deg}(\det {\bf p}_i(z)) = \sum_{a} l_{i,a} = k_i$.
Under this gauge choice, the boundary condition \eqref{eq:asymptoticS} for $\Omega_i$ can be rewritten in terms of $\{ \psi_i, L_i\}$ as 
 \begin{align}
  \psi_i  \to \frac12 \log \Omega_i^o+ \frac12 \log |{\bf p}_i(z)|^2+{\cal O}(|z|^{-2}), \hs{5} L_i\to \mD_i(z) ({\bf 1}_{N_i}+{\cal O}(|z|^{-2})){\bf p}_i(z)^{-1}\quad {\rm for~} |z|\to \infty. 
\end{align}
From these behaviors, we conclude that  the contributions from the surface terms of 
$\delta \Sva$ vanishes as
\begin{align}
\lim_{R\to \infty} {\rm Im}\oint_{|z|=R} dz\sum_i \tr \left[2 \delta \psi_i \partial_{z} \psi_i + e^{-2\psi_i} (L_i^{-1} \delta L_i) \, e^{2\psi_i} \p_{z} L_i^\dagger L_i^{\dagger-1} \right] = 
\lim_{R\to \infty} {\rm Im}\oint_{|z|=R} dz \sum_i \tr \left[\delta \psi_i \frac{\p_z {\bf p}_i(z)}{{\bf p}_i(z)}\right]=0
\end{align}
where we have used $\lim_{|z|\to \infty}\delta \psi_i=0$.
From this property, we conclude that Eq.\eqref{eq:devS} holds. 
Furthermore, this property implies that terms in ${\cal L}(\bf \Omega)$ that could diverge when integrated are independent of $\Omega_i$ 
and thus the counter term ${\cal L}({\bf \Omega}^{\rm ref})$ in the action cancels such terms.
This cancellation makes $\Sva$ finite and hence its convexity, 
discussed in the next paragraph, is well-defined.
Although $\Sva$ depends on the choice of the coordinate patch of the moduli space (the choice of the label $\lambda_i=(l_{i,1},l_{i,2},\dots,l_{i,N_i})$),
there is no problem with the arguments for existence and uniqueness of the solution with fixed moduli parameters. 

\paragraph{Convexity of $\Sva$}
The action $\Sva$ constructed above is always a convex functional.
To see this, let us take a pair of configurations $\Omega^{(a)}_i=S^{(a)}_i (S_i^{(a)})^\dagger (a=1,2)$
which satisfy the same boundary conditions with a given $\{ q_i=q_i(z) \}$.
For such a pair of configurations, let us define 
an Hermitian matrix $\omega_i\in \mathfrak u (N_i)$ by 
\begin{align}
e^{\omega_i}\equiv (S^{(1)}_i)^{-1} \Omega_i^{(2)} (S_i^{(1)})^{\dagger -1},
\end{align}
which satisfies the boundary condition \eqref{eq:bcOmega}.
Using these quantities $\{\omega_i\}$, 
we obtain a set of functions of a parameter $\tu$
\begin{align}
\Omega_i(\tu\bomega)=  S_i(\tu \bomega)S_i(\tu \bomega)^\dagger\equiv S_i^{(1)}e^{\tu\, \omega_i} (S_i^{(1)})^\dagger
\label{eq:Omegat}
\end{align}
which continuously interpolates two given configurations: 
\beq
\Omega_i(0) =\Omega_i^{(1)}\quad {\rm and} \quad  \Omega_i(\bomega) =\Omega_i^{(2)}.
\eeq
Then, substituting $\Omega_i(\tu\bomega)$ to the action
and using 
\begin{align}
 \frac{d \Omega_i (\tu  \bomega)}{d\tu} =S_i(\tu \bomega)\omega_i S_i(\tu \bomega)^\dagger, \hs{10} 
 \frac{d}{d\tu}\left\{
 \frac{d \Omega_i(\tu\bomega)}{d\tu}\Omega_i(\tu\bomega)^{-1}\right\}=0, 
 %\frac{d \Omega_i(t)}{dt}\Omega_i(t)^{-1}= S_i^{(1)}\omega_i  ( S_i^{(1)})^{-1}: \hbox{$t$-indepedent}
\end{align}
we find that $\Sva$ is always a convex functions of $\tu$: 
\begin{align}\forall \tu \in \mathbb R  :\quad
\frac{ d^2 \Sva[{\bf \Omega}(\tu\bomega), {\bf \Omega}^{\rm ref}]}{d \tu^2}=
\langle \bomega {\cal H} \bomega \rangle\Big|_{S_i\to S_i(\tu\bomega)} > 0.
\end{align}
Therefore, 
if $\{\Omega_i^{(1)}\}$ is the solution of the master equations, 
then the action takes the minimum at $\tu=0$ and 
can never have other extrema since the derivative $\p_\tau \Sva$ is monotonically increasing function
for any choice of $\{\Omega_i^{(2)}\}$. 
Therefore, {\it the solution of the master equations must be unique if it exists}.

%%%%%%%%%%%%%%%%%%%%%%%%%%%%%%%%%%%%%%%%%%%%%%%%%%%%%%%%%%%%%%%%%%%%%%%%%%%%

\subsection{Comments on existence of solutions}
In the main text, it is assumed that a solution to the master equations exists 
for an arbitrarily given set of moduli matrices $\{q_i(z)\}$. 
While the existence of the Abelian vortex solutions has been proven in \cite{Taubes:1979tm}, 
for non-Abelian vortices, however, it is generally difficult to prove the existence of solutions except for some limited cases \cite{Tarantello2011NonabelianVE,Chen2012ExistenceOM,HAN2014117}. 
To the best of our knowledge, the proof of existence of vortex solutions in the general systems is not known.
Let us try to give a circumstantial evidence that the solution exists, 
using a rough argument that is not necessarily mathematically rigorous.

First, let $|\!| \cdot |\!|$ be a norm defined in the functional space ${\cal F}_\omega$. 
Let us write an arbitrary element ${\bomega} \in {\cal F}_\omega$ as $\bomega=\tu \hat \bomega$ by using its norm $\tu \equiv |\!| \bomega |\!|$ and the normalized element $\hat \bomega \in {\cal F}_\omega$ satisfying $|\!| \hat \bomega |\!|=1$.  
Then an arbitrary $\bf \Omega \in {\cal F}_\Omega$ can be expressed using Eq.\eqref{eq:omegadef} as 
\begin{align}
\Omega_i=\Omega_i(\tu \hat \bomega) =S_i(\tu \hat \bomega)  (S_i(\tu \hat \bomega))^\dagger 
\equiv S_i^{\rm ref}e^{ \tu \hat \omega_i} (S_i^{\rm ref})^\dagger,
\qquad \Omega_i(0)=\Omega_i^{\rm ref},
\end{align}
and the vortex action $\Sva$ defined in Eq.\eqref{eq:defSva} can be regarded as a function of $\tau$,
\def\Sf{{ {\mathfrak S}_{\hat \bomega}}} 
\begin{align}
\Sf(\tau)\equiv \Sva[{\bf \Omega}(\tu \hat \bomega),{\bf \Omega}^{\rm ref}].
\end{align}
%As is taken in Eq.\eqref{eq:Omegat}, 
%with an arbitrary chosen initial configuration $\Omega_i^{\rm ini}=S_i^{\rm ini}(S_i^{\rm ini})^\dagger \in {\cal F}_\Omega$,   
%an arbitrary configuration of $\Omega_i \in {\cal F}_\Omega$ can be expressed as
%\begin{align}
%\Omega_i=\Omega_i(\tu \bomega) =S_i(\tu \bomega)  (S_i(\tu \bomega))^\dagger 
%\equiv S_i^{\rm ini}e^{ \tu \omega_i} (S_i^{\rm ini})^\dagger
%\end{align}
%by choosing  $\tu \in \mathbb R_{\ge 0}$ and $\bomega=\{\omega_i \}\in {\cal F}_\omega$ with an appropriate normalization, 
%for instance,  $ \langle \bomega^2\rangle=1$.
%\begin{align}
%1=||\bomega ||_{H^2(\mathbb R^2)}\equiv
%\sqrt{  \langle \bomega^2 +|\p_z\bomega|^2+|\p_z\p_z\bomega|^2+|\p_z\p_{\bar z}\bomega|^2\rangle }
%\end{align}
%where $||\bomega ||_{H^2(\mathbb R^2)}$ is a norm for 
%a direct sum of  $\sum_i N_i^2$ copies of the sobolev space $H^2(\mathbb R^2)\subset {\cal C}^0(\mathbb R^2)$.
%An action for $\bf \Omega$ satisfying Eq.\eqref{eq:devS} can always be written as,
%\begin{align}
%\Sva[{\bf \Omega}(\tu \bomega)]=\Sva[{\bf \Omega}(\tu \bomega),{\bf \Omega}^{\rm ref}]=\Sva[{\bf \Omega}^{\rm ref},
%{\bf \Omega}^{\rm ini}]+\Delta \Sva[\tu \bomega],
%\end{align}
%where  $\Delta \Sva[\tu\bomega]=\Sva[{\bf \Omega}(\tu\bomega), {\bf \Omega}^{\rm ini}]$ is defined such that its %Fr\'echet 
Note that this function is convex everywhere as discussed in the previous subsection. 
In the following, we show that for each choice of $\hat{\bomega}$, the function $\Sf(\tau)$  has a minimum at some point with $\tau = \tau_{\hat \bomega} < \infty$. Then, tracking the decreasing sequence of the function $\Sf(\tau_{\hat \bomega})$ in the space of normalized $\hat \bomega$, we can find the solution of the master equation. 
Thus, roughly speaking, 
showing the existence of a solution to the master equations is equivalent to showing that $\Sf(\tau)$ has a minimum for an arbitrary normalized configuration $\hat \bomega \in {\cal F}_{\omega}$ and the space of normalized $\hat \bomega$ is a complete metric space.

\paragraph{Coerciveness of $\Sva$}
%In the limit $\tu \rightarrow \infty$,  $S_i(\tu\hat \bomega)$ is no longer an element of $GL(N_i,\mathbb C)$ 
%and a Coulomb phase can appear there.  
%To prohibit gradient flows towards this Coulomb phase, an alternative argument is needed.
%
Here, we show that for an arbitrary nonzero element $\hat \bomega \in {\cal F}_\omega$
\begin{align}
\lim_{\tu \to  \infty} \frac{d \Sf(\tau)}{d \tu} = \infty \qquad \mbox{and} \qquad \frac{d^2 \Sf(\tau)}{d \tu^2} >0 ~~~ \mbox{for arbitrary $\tu$}, 
\end{align}
i.e. $\Sf(\tau)$ is a coercive and convex function of $\tu$ for an arbitrary $\hat \bomega \in {\cal F}_\omega\backslash \{0\}$.

The derivative of $\Sf(\tau)$ with respect to $\tu$ is given by\footnote{The above property can be viewed as the definition of the function $\Sf(\tau)$, 
which is independent of the details of the original defining equation \eqref{eq:defSva}.
We can confirm that $\Sf(\tau)$ is finite for an arbitrary
$\tau \in\mathbb R$ %$\bomega \in {\cal F}_{\omega}(\mathbb R^2)$ 
since $\langle \hat \bomega {\cal E} \rangle ^2 \le \langle \hat \bomega^2 \rangle \langle  {\cal E}^2 \rangle < \infty.$}
\begin{align} 
\frac{d \Sf(\tau)}{d \tu}=
- \langle  \hat {\bomega}  {\cal E}\rangle  \Big|_{S_i=S_i(\tu\hat \bomega)}  
=- \langle  {\hat \bomega}  {\cal E}\rangle  \Big|_{S_i=S_i^{\rm ref}}  
+ \int_0^\tu ds \langle {\hat \bomega} {\cal H}{\hat \bomega}  \rangle  \Big|_{S_i=S_i(s\hat \bomega)},\qquad
\Sf(0)=0.\label{eq:deltaS}
\end{align}
Note that $\langle \hat \bomega {\cal H}\hat \bomega\rangle$ implicitly depends on $\tu$, since ${\cal H}$ depends on $S_i=S_i(\tu\hat \bomega)$. Let us show that $\langle \hat \bomega {\cal H}\hat \bomega\rangle$ cannot vanish even in the limit $\tu \rightarrow \infty$, 
whereas it must be positive definite for a finite $\tu$ as discussed in Sec.\ref{sec:H}.
First, we show that
\begin{align}
\lim_{\tu\to \infty} \langle {\hat \bomega} {\cal H}{\hat \bomega}  \rangle  \Big|_{S_i=S_i(\tu\hat \bomega)}=0 
%\quad ( \lim_{t\to \infty} \ker {\cal H} \Big|_{S_i=S_i(t)}\not =\emptyset),
~~~\Longrightarrow~~~\hat \omega_i(z,\bar z)=0. 
\label{eq:statement_B35}
\end{align}
From the assumption given in left-hand side of the above statement, we obtain
\begin{align} 
\lim_{\tu \to \infty }{\cal D}_z \hat \omega_i \Big|_{S_i=S_i(\tu\hat \bomega)}=\lim_{\tu \to \infty }{\cal D}_{\bar z} \hat \omega \Big|_{S_i=S_i(\tu\hat \bomega)}=0.
\end{align}
Since entries of $\hat \bomega \in {\cal F}_\omega$ are continuous functions,
%Under the assumption of absolute continuity of $\hat \bomega$,
it follows that      
\begin{align}
\partial_z \tr \hat \omega_i^p = p  \lim_{\tu\to \infty } \tr [ \hat \omega_i^{p-1} {\cal D}_z \hat \omega_i]=0 ~~~~~
\mbox{and hence}~~~~~
\hbox{eigenvalues of } \hat \omega_i ={\rm constant}.
\end{align}
Combining the above result with the boundary condition $\lim_{|z|\to \infty} \hat \omega_i=0$, we find  all $\hat \omega_i$ must vanish everywhere
\begin{align}
\forall i, \forall z,  ~~~~~ 
\hat \omega_i=\hat \omega_i(z,\bar z)=0.
\end{align}
This implies Eq.\,\eqref{eq:statement_B35} and 
its contraposition 
\begin{align}
\forall \hat \bomega\in {\cal F}_\omega\backslash \{0\}: \quad \lim_{\tu\to  \infty} \langle {\hat \bomega} {\cal H}{\hat \bomega} \rangle 
\Big|_{S_i=S_i(\tu\hat \bomega)} > 0, \label{eq:PositivenessInLimit}
\end{align}
where we have used the fact that ${\cal H}$ is a positive semi-definite operator. 
Applying this statement and the convexity of $\Sf$ to Eq.\eqref{eq:deltaS}, we conclude that 
\begin{align}
\forall \hat \bomega   \in {\cal F}_\omega\backslash \{0\}:\qquad
\lim_{\tu \to  \infty} \frac{d \Sf(\tau)}{d \tu} = \infty,\qquad 
\forall \tu : \frac{d^2 \Sf(\tau)}{d \tu^2} > 0
\end{align}
and thus $\Sf(\tau)$ is a coercive and convex function of $\tu$
for arbitrary $\hat \omega_i \in {\cal F}_\omega\backslash \{0\}$. 
This result implies that
$\Sf(\tau)$ has a minimum, $\Sf(\tau_{\hat \bomega}) (\le 0)$, with a certain $\tu=\tu_{\hat \bomega}$ for each $\hat \bomega$. 
Thus, by collecting these minimum, we can define a map from a hypersurface 
$\hat {\cal F}_\omega\equiv \{ \hat \bomega \in {\cal F}_\omega | \ |\!| \hat \bomega  |\!| =1  \}$ to 
${\cal F}_\omega$ as, $\hat \bomega\in \hat {\cal F}_\omega \,\mapsto\, \tu_{\hat \bomega} \hat \bomega \in {\cal F}_\omega$.
Since
\begin{align}
    \inf_{{\bf \Omega}\in {\cal F}_\Omega} \Sva[{\bf \Omega},{\bf \Omega}^{\rm ref}] =\inf_{\hat \bomega \in \hat {\cal F}_\omega} \Sf(\tu_{\hat \bomega}),
\end{align}
assuming ${\cal F}_{\omega}$ is a complete metric space, the coereciveness and convexity of $\Sf(\tu)$ implies that 
$\Sva$ has a minimum with a certain $\bomega=\bomega_{\rm min}$, 
which gives a solution ${\bf \Omega}^{\rm sol}\equiv {\bf \Omega}(\bomega_{\rm min})$ to the master equations.  
This ``proof" for the existence of the solution is, however, not mathematically rigorous, 
since in this argument the ``solution" obtained using a decreasing Cauchy sequence
$\bomega^{(1)},\bomega^{(2)},\cdots \in {\cal F}_\omega$, the limit $\bomega_{\rm min}=\lim_{a\to \infty} \bomega^{(a)}$ 
is not guaranteed to consists of bounded, smooth functions, ${\bomega}_{\rm min} \in {\cal F}_\omega$. 
For a complete proof, therefore, we need to give more mathematically precise arguments.
Nevertheless, the above arguments, especially Eq\eqref{eq:PositivenessInLimit}, 
are expected to be useful for an intuitive understanding of the existence of a solution, 
and can actually distinguish our system from those where the master equations do not have solutions (see the example below).

\paragraph{Compact cases and Bradlow bound}
Most of the above discussion and results can be applied to the models
defined on a compact base space $\Sigma$ with a finite area $A$ as long as $\ker {\cal H} = \emptyset$.  
It is, however, well known that there is a lower bound $A_{\rm Bb}$, the so-called Bradlow bound, on the area for the existence of solutions.
In our case, the lower bound is given by the following set of inequalities
\begin{align}
0 \ \le \ \int_\Sigma d {\rm vol}\, \tr Q_i Q_i^\dagger \ = \ \sum_{j=1}^i
\left( r_j N_j A - \frac{4\pi k_j}{g_j^2}\right),\qquad  A= \int_\Sigma d {\rm vol},
\qquad{\rm for~} i=1,2,\cdots,L,
\end{align} 
where $ d {\rm vol}$ is the volume form on $\Sigma$.
In the Bradlow limit saturating the above bound, 
at least one of {$Q_i$} must vanish everywhere 
and thus the operator ${\cal H}$ has a non-trivial kernel ($\ker {\cal H}\not = \emptyset$),
which implies the argument above is no longer applicable. 
The most significant difference between the compact and non-compact cases is that 
in the compact case, a (covariantly) non-zero constant $\bomega$, 
for which $\langle \bomega {\cal H} \bomega\rangle$ 
may vanish in the limit of $\tu \to \infty$, 
is allowed since the area is finite and the condition \eqref{eq:bcomega} is absent. 
The following $\hat \bomega^{\rm c}_{(i)}\in {\cal F}_\omega$ is the simplest example of such a $\hat \bomega$
\begin{align}
\hat \bomega^{\rm c}_{(i)} = \ \bomega^{\rm c}_{(i)}/|\!|\bomega^{\rm c}_{(i)} |\!| , \qquad 
\bomega^{\rm c}_{(i)}\equiv ({\bf 1}_{N_1},{\bf 1}_{N_2},\cdots, {\bf 1}_{N_i}, {\bf 0}, {\bf 0},\cdots), 
\qquad i \in \{1,2,\cdots,L\}.     
\end{align}
For this configuration, the $i$-th gauge group is restored in the limit $\tau \rightarrow \infty$
\begin{align}
 \lim_{\tu\to \infty}Q_i(\tu \hat \bomega^{\rm c}_{(i)} ) = \lim_{\tu \to \infty }e^{-\tu/|\!|\bomega^{\rm c}_{(i)} |\!|} Q_i^{\rm ref} =0,
 \qquad  \lim_{\tu \to \infty} \langle \hat \bomega {\cal H} \hat \bomega\rangle\big|_{\hat \bomega=\hat \bomega^{\rm c}_{(i)}}=0,
\end{align}
whereas the other quantities, $Q_j(j\not=i), A^j_{\bar z}$ remain invariant.
Thus, we obtain
\begin{align}  
 \lim_{\tu \to \infty } \frac{d \Sf(\tau)}{d\tu } =\frac1{|\!|\bomega^{\rm c}_{(i)} |\!|}
\sum_{j=1}^i\left( r_j N_j A - \frac{4\pi k_j}{g_j^2}\right),  \quad 
{\rm for~} \hat \bomega=\hat \bomega^{\rm c}_{(i)}, \quad i=1,2,\cdots,L, 
\end{align}
which shows that $\Sf(\tau)$ is no longer coercive for a sufficiently small area $A$.
Note that
for $\Sva[{\bf \Omega},{\bf \Omega}^{\rm ref}]$ to be coercive,
it is necessary that $\Sf(\tau)$ for $\hat \bomega=\hat \bomega^{\rm c}_{(i)}$ must be coercive for all $i=1,\cdots,L$.  
Therefore, $\Sva[{\bf \Omega},{\bf \Omega}^{\rm ref}]$ is coercive only when the area $A$ is larger than the Bradlow bound $A > A_{\rm Bb}$.
%\footnote{
%The bound, $A> A_{\rm c}$, that guarantees the coerciveness of $\Sva$ may be stronger than the Bradlow bound, i.e. $A_{\rm c} \ge A_{\rm Bb}$. 
%Indeed, a stronger condition $A_{\rm Lb} \ge A_{\rm Bb}$ has been found from the non-vanishing requirement for the vortex moduli space volume, 
%which has been computed by the localization method \cite{Ohta_2021}. 
%For $A_{\rm c} \ge A \ge A_{\rm Lb}$, 
%the lack of the coerciveness does not immediately imply the absence of a solution, 
%but may mean the appearance of a new solution corresponding to the limit of $\tu \to \infty$.
%Such a solution must be in the Coulomb phase with a constant magnetic field and never appears when $A>A_{\rm c}$.
%}.
In this way, we can show that the discussion on the existence of solutions 
based on the coerciveness of $\Sva[{\bf \Omega},{\bf \Omega}^{\rm ref}]$ 
is consistent with the Bradlow bound.

The Bradlow bound, $A \ge A_{\rm Bb}$ 
is only a necessary condition on the area $A$ for the existence of solutions and 
a necessary and sufficient condition, $A\ge A_{\rm tb}$ might be stronger than this condition, 
$A_{\rm tb}\ge A_{\rm Bb}$ and $A_{\rm tb}$ might depend on a point of the moduli space.
For a generic moduli point, the condition $A\ge A_{\rm tb}$ would be found indirectly from the non-vanishing requirement for the vortex moduli space volume, 
which has been computed by the localization method \cite{Miyake:2011yr,Miyake:2011fq,Ohta:2019odi,Ohta_2021}.
By refining the above argument on coerciveness of the functional $\Sva$,
we expect that it is possible to prove the condition $A\ge A_{\rm tb}$ directly.

%%%%%%%%%%%%%%%%%%%%%%%%%%%%%%%%%%%%%%%%%%%%%%%%
%%%%%%%%%%%%%%%%%%%%%%%%%%%%%%%%%%%%%%%%%%%%%%%%%%%%%
\paragraph{Relaxation method}
%Proving the existence of a solution is difficult and beyond the scope of this paper, but without this assumption most of our argument in this paper would not be possible.Therefore, 
It is important and useful to provide an explicit procedure for obtaining a numerical solution that minimizing $\Sva$ by discretizing the system. 
To find the minimum of $\Sva$, let us consider the following recurcive relation
\begin{align}
S_i^{(n+1)}=S_i^{(n)}e^{ \delta \omega_i^{(n)}}, \quad S_i^{(0)}=S_i^{\rm ref}.
\end{align}
If we choose $\delta \omega_i^{(n)} ={\cal H}^{-1} {\cal E}_i|_{S\to S^{(n)}}$, 
this procedure can be regarded as Newton's method, 
which is, however, impractical since the calculation of ${\cal H}^{-1}$ is known to be very costly.
One simple and effective method is the relaxation method,
where  $\delta \omega_i$ in each step is given by $\delta \omega_i^{(n)} =\alpha {\cal E}_i|_{S_i\to S_i^{(n)}}$ with an appropriate step size $\alpha \in \mathbb R_{>0}$. 
The parameter $\alpha$ must be a sufficiently small to satisfy the Courant-Friedrichs-Lewy condition, 
$\alpha /a_{\rm lat}^2<{\cal O}(1)$, where $a_{\rm lat}$ is the spatial lattice spacing.
With such small $\alpha$, the convexity and coerciveness of $\Sva$ guarantee that  
this sequence converges without being trapped by meta-stable points.
To see that $\Sva$ decreases at each step with a sufficiently small step size $\alpha$,
let us take the continuous limit of $\alpha \to 0$ and 
rewrite the recursive relation into a differential equation by introducing a fictitious time $\vt$.
%
%The existence of the action $\Sva$ satisfying the property \eqref{eq:devS} implies that we can use the steepest descend method to get such a numerical solution and 
%the convexity of $\Sva$ (or equivalently, the uniqueness of the solution) guarantees that  
%this method ends successfully without being trapped by meta-stable points.
%The steepest descend method to solve the master equations $\hat {\cal E}_i=0$,
% which is sometimes also referred to as the relaxation method, is explicitly given  by the following procedure.
Suppose that $S_i $(or $\Omega_i$) is a function of $\vt$ and define its time evolution as follows
\begin{align}
\frac{\p S_i(\vt)}{\p \vt}= S_i(\vt) {\cal E}_i(\vt)  \quad \left(=\hat {\cal E}_i(\vt) S_i(\vt) \right) \quad {\rm with~} q_i(z) {~\rm  fixed}  \label{eq:relaxation}
\end{align}
where ${\cal E}_i(\vt)$ is the quantity obtained by substituting $S_i=S_i(\vt)$ into ${\cal E}_i$.
Under this time evolution, $\Sva$ monotonically decrease as
\begin{align}
\frac{d}{d\vt} \Sva[{\bf \Omega}(\vt),{\bf \Omega}^{\rm ref}]=   -2\langle {\cal E}(\vt)^2\rangle,
%\int d^2x \sum_{i=1}^{N_i}\tr \left[{\cal E}_i(\vt)^2\right]\le 0,
\end{align}
and this gradient flow stops only when $\Sva$ takes the minimum value with ${\cal E}_i=0$. 
The relaxation time needed to obtain a numerical solution with a given accuracy can be estimated as follows. 
From Eq.\,\eqref{eq:relaxation}, 
we can derive the time evolution of ${\cal E}_i(\vt)$ as
\begin{align}
\frac{\p {\cal E}_i(\vt)}{\p \vt}=-2{\cal H} {\cal E}_i(\vt) \qquad {\rm with~~~} {\cal D}_{\bar z} Q_i=0. 
\label{eq:ESchrodinger}
\end{align}
The operator $\cal H$ hss positive definite eigenvalues, 
and hence if ${\cal E}_i$ is expanded in terms of the eigenmodes of $\cal H$, 
each eigenmode decays exponentially.
After a sufficient time of relaxation, 
the deviation from the true solution is dominated by 
the lowest eigenmode ${\cal E}_i^\star$ and decreases exponentially as
\begin{align}
{\cal E}_i(\vt) \approx {\cal E}^\star_i \exp(-{2 \Delta_\star }\vt ) \qquad {\rm with~~~}  {\cal H}{\cal E}^\star_i =\Delta_\star {\cal E}^\star_i,
\end{align}	
where $\Delta_\star \in \mathbb R_{>0}$ is the lowest eigenvalue. 
Thus, we can estimate the accuracy of the numerical solution using the relaxation time $\vt$, 
as long as calculation errors can be ignored. 
In the limit $t \rightarrow \infty$, $S_i(\vt)$ converges to the solution $S_i^{\rm sol}$ 
\begin{align}
\lim_{\vt\to \infty} {\cal E}_i(\vt)=0 \qquad {\rm with} \qquad  \lim_{\vt\to \infty}S_i(\vt)=S_i^{\rm sol}. 
\end{align}

\subsection{K\"ahler metric and potential for the vortex moduli space} \label{sec:Kahlermetric}
By using the vortex action $\Sva$, 
the K\"ahler potential ${\cal K}^{\rm vtx}$ giving  the metric for the vortex moduli space can be naturally introduced. 
Note that since the counter term ${\cal L}({\bf \Omega}^{\rm ref})$ introduced in Eq.\eqref{eq:defSva} has moduli-dependence, 
for the definition of ${\cal K}^{\rm vtx}$,
it is more natural to regularize the integral by introducing a spatial cut-off $R \in \mathbb R_{>0}$. 
The K\"ahler potential on the moduli space ${\cal K}^{\rm vtx}$ can be obtained by substituting 
the solution ${\bf \Omega}={\bf \Omega}^{\rm sol}$ of the master equations to the vortex action with such a regularization
\begin{align}
{\cal K}^{\rm vtx}={\cal K}^{\rm vtx}(\phi^A,\bar \phi^{\bar A})\equiv \int_{D_R}d^2x \, {\cal L}({\bf \Omega}^{\rm sol}),
\quad D_R\equiv \left\{z=x^1+i x^2 \in \mathbb C \, \big| \, |z|\le R \right\}.
\end{align}
Here, $\phi^A \in \mathbb C$ are moduli parameters, which linearly appear in the moduli matrices $q_i(z)=q_i(z,\phi^A)$ when the $V$-transformations are properly fixed. 
There is a convenient formula for the K\"ahler metric 
which is calculable without going back to the definition of ${\cal K}^{\rm vtx}$. 
If the solution $\Omega_i=\Omega_i^{\rm sol}$ is given, 
the K\"ahler metric can be calculated by using the formula 
\begin{align}
g_{A\bar B}\equiv 
\frac{\partial^2 {\cal K}^{\rm vtx}}{\partial \bar \phi^{\bar B}\partial \phi^A}=\int d^2x 
\frac{\p}{\p \bar \phi^{\bar B}}\sum_{i=1}^L \tr \left[ \Omega_{i}^{-1} 
\frac{\partial q_i}{\partial \phi^A} \Omega_{i+1}q_i^\dagger \right]_{\Omega=\Omega^{\rm sol}}. 
\label{eq:Kmetricformula}
\end{align}
In this formula, we can chenck the invariance under the $V$-transformations, 
which were fixed to define ${\cal K}^{\rm vtx}$.
Since the $V$-transformations naturally induce coordinate transitions on the vortex moduli space as explaind in Appendix.\ref{sec:ZPsiPatches}, the invariance under the $V$-transformations allows us to 
choose an arbitrary coordinate patch to describe the moduli space metric.
This metric $g_{A\bar B}$ turns out to be positive definite and thus invertible, as will be shown in the next paragraph.
Furthermore, we can show the regularity of the Riemann curvature tensor, $R^A{}_{BC\bar D}$, 
using the fact that $R^A{}_{BC\bar D}$ can be expressed in terms of the higher derivatives of $\Omega_i$ 
such as $\p_{\phi_A} \p_{\phi_B} \p_{\bar \phi_C} \Omega_i$, 
which can be determined through the differentiated forms of the master equations. 
Thanks to the existence of ${\cal H}^{-1}$, those equations are algebraically solvable. 
Therefore, the above formula implies that {\it the K\"ahler manifold defined by this metric is regular everywhere.}

\paragraph{Moduli space approximation}
The K\"ahler metric $g_{A\bar B}$ defined above is equivalent to that describing the dynamics of vortices.
In the moduli space approximation \cite{Manton:1981mp},  
moduli parameters are promoted to slowly varying functions of time $t$ 
\begin{align}
\phi^A \quad \to \quad \phi^A(t)
\end{align}
The physical fields $Q_i$ and  $A^i_{\bar z}$ 
depend on $t$ only through the moduli parameters 
\begin{align}
Q_i =Q_i^{\rm sol}(z,\bar  z, \phi^A(t), \bar \phi^A(t)), \quad A^i_{\bar z}=A_{\bar z}^{i,\rm sol}(z,\bar z, \phi^A(t), \bar \phi^A(t)).
\end{align}
The gauge potentials $A_t^i$ are given by Eq.\eqref{eq:Amu}, 
for which the linearized equations of motion are satisfied.
By substituting these approximations to the original action,
we obtain the fowlloing terms form the kinetic term:
% \begin{align}
%\frac{\p \Omega_i}{\p \phi^A}=S_i \left( {\cal H}^{-1} T_{i,A} \right)S_i^{\dagger}
%\end{align}
%\begin{align}
%q_i=q_i(z, \phi^A), \quad\p_{\bar z} q_i= \frac{\p q_i}{\p \bar \phi^{\bar B}}=0
%\end{align}
%\begin{align}
%Q_{i, A}\equiv S_i^{-1}\frac{\p q_i}{\p \phi^A} S_{i+1},\quad T_{i,A}\equiv Q_{i,A}Q_i^\dagger -Q_{i-1}^\dagger Q_{i-1,A}
%\end{align}
\begin{align}
\int d^2x \sum_{i=1}^L \tr\left[ {\cal D}_t Q_i {\cal D}_t Q_i^\dagger+\frac{4}{g_i^2} F_{tz}^i F_{t\bar z}^i\right]
=g_{A \bar B} \frac{d\phi ^A}{dt} \frac{d \bar \phi^{\bar B}}{dt},
\end{align}
where $g_{A \bar B}$ is the K\"ahler metric defined in Eq.\eqref{eq:Kmetricformula}. 
Note that this equation show that the metric defined in Eq.\eqref{eq:Kmetricformula} is positive definite.
The coincidence of the two different definitions of the metric $g_{A\bar B}$ is not accidental, 
but is due to the supersymmetry behind the system as shown in \cite{Eto:2006uw}.

\paragraph{Large coupling limit}
In the large coupling limit $g_i\to \infty $ for all $i$,  the K\"ahler potential becomes
\begin{align}
\lim_{g_i\to \infty} {\cal K}^{\rm vtx}= \int_{D_R} \! d^2x \, \sum_{i=1}^L r_i \log \det ( \xi_i(z) \xi_i(z)^\dagger) +{\rm const.} ,
\end{align} 
since   $\det \Omega_i  \sim  \log  \det ( \xi_i(z) \xi_i(z)^\dagger) $ in this limit.
This result is consistent with the moduli space approximation for instantons in the sigma models.  
Note that the above quantity diverges and thus a divergent constants must be subtracted 
to obtain a finite quantity by introducing IR cut-off $R$ and using K\"ahler transformation.
It is convenient to decompose the integrand into the two parts as
\begin{align}
\log \det ( \xi_i(z) \xi_i(z)^\dagger) =\log |\det \mD_i(z)|^2 + \log\det\left( {\bf 1}_{N_i}+\varphi_i(z)\varphi_i(z)^\dagger \right) , 
\end{align}
where we have used $\mD_i(z)^{-1}\xi_i(z)=\left( {\bf 1}_{N_i}, \varphi_i(z) \right) $, and 
$\varphi_i(z)$ is an instanton solution.
These two terms are calculated separately in the subsequent paragraphs.
%Note that since $\varphi_i(z)$ has singular configuration, even for considering instantons in the sigma models,
% the first term in the r.h.s. of the above is needed to calculate the K\"ahler potential in the well-defined way.
\paragraph{Position moduli  for  vortices}
At first glance, the contribution from the first term may seem to disappear since it can be cancelled by a K\"ahler transformation. 
However, after the regularization, careful calculations lead to the following important term 
\begin{align}
\int_{D_R}d^2x \,r_i \log |\det \mD_i(z)|^2 = k_i  \pi r_i  R^2\log \frac{R^2}{e} + \pi r_i \sum_{\alpha=1}^{k_i} |z_{(i,\alpha)}|^2
\end{align}
where $\{z_{(i,\alpha)}\}$ are zeros of $\det \mD_i(z)$. 
Here, the translational invariance is broken due to the regularization. 
Note that since $2\pi r_i$ is the tension (mass) of a vortex,
the second term in the r.h.s. of the above equation gives 
the (dominant parts of) kinetic terms of the position moduli $\{z_{(i,\alpha)}\}$,
whereas the first term is divergent in the limit of $R\to \infty$ and eliminated by the K\"ahler transformation.
It should be noted that this contribution cannot be ignored even in the sigma model.

\paragraph{Non-normalizable moduli}
The contribution from the second term gives the following divergent term in the limit of $R=\infty$ 
\begin{align}
\int_{D_R} d^2x\, r_i \log\det\left( {\bf 1}_{N_i}+\varphi_i(z)\varphi_i(z)^\dagger \right) =
\pi r_i \tr[\Phi_i^{\rm non}  (\Phi_i^{\rm non} )^\dagger] \log R +{\cal O}\left( (R)^0 \right) ,
\end{align}
where an $N_i$-by-$(N-N_i)$ matrix $\Phi_i^{\rm non}\equiv \Psi_i \widetilde \Psi_i$ appears 
in the dominant term of $\varphi_i(z)$ for large $|z|$ as
\begin{align}
\varphi_i(z)=\mD_i(z)^{-1}\widetilde \mD_i(z)= \Psi_i (z{\bf 1}_{k_i}-Z_i)^{-1}\widetilde \Psi_i = \frac1z \Psi_i \widetilde \Psi_i 
+{\cal O}\left(\frac1{z^2}\right),
\end{align}
where we used the half ADHM data $\{Z_i,\Psi_i,\widetilde \Psi_i\}$ discussed in Sec.\ref{sec:kahlerquotient}.
The above divergent term can not be eliminated by a K\"ahler transformation 
and gives a divergent kinetic terms for $\Phi_i^{\rm non}$. 
Thus, the entries of $\Phi_i^{\rm non}$ are non-normalizable moduli. 
Intuitively, this divergence is due to the fact that there is no mass gap in the bulk and 
any fluctuations of $\Phi_i^{\rm non}$ excite zero modes in the bulk 
where the moduli approximation is invalid. 
To describe the dynamics of these moduli, we need to go back to the full field equation.

%%%%%%%%%%%%%%%%%%%%%%%%%%%%%%%%%%%%%%%%
\section{Coordinate patches of moduli space for local votices}\label{sec:ZPsiPatches}
%%%%%%%%%%%%%%%%%%%%%%%%%%%%%%%%%%%%%%%%%%%%%%%
In this appendix, we present more details of 
the vortex moduli space. 
We show the equivalence of the two expressions of the vortex moduli space: one is written in terms of the moduli matrix \eqref{eq:vtx_moduli} and 
the other is written in terms of the half ADHM data \eqref{eq:moduli_space_genearl}. 
In particular, we focus on the $L=1,\,N=n$ case where the two expressions of the vortex moduli space take the forms 
(see below for more precise definitions)
\begin{align}
{\cal M}\equiv& ~ \{ \mD(z) | \det \mD(z) ={\cal O}(z^k) \}/\hbox{$V$-trf.}  \label{eq:AppcalM} \\
\wt {\cal M}\equiv& ~ \{(Z,\Psi) |  \{Z,\Psi\} \hbox {\ on which $GL(k,\mathbb C)$ action is free }\} /GL(k,\mathbb C).
\label{eq:AppcalMt}
\end{align}
We will show that both of these two spaces correspond 
to the vortex moduli space $\mathcal M_{\rm vtx}{\,}_k^{n,0}$ in the $U(n)$ gauge theory with $n$ flavors. 
The space $\mathcal M_{\rm vtx}{\,}_k^{n,0}$ 
can also be viewed as 
the local vortex moduli subspace 
in the total moduli space ${\cal M}_{\rm vtx \,}{}^{n,N-n}_k$ 
for a general flavor number $N \geq n$. 
Once the equivalence of the local vortex moduli subspace
is shown, 
the equivalence of the total moduli space immediately follows\footnote{
The total moduli space ${\cal M}_{\rm vtx \,}{}^{n,N-n}_k$
has additional directions described by $\wt \mD(z)$ and $\wt \Psi$,
which are fibered over the local vortex moduli subspace.  
The equivalence such fiber directions in \eqref{eq:vtx_moduli} and \eqref{eq:moduli_space_genearl}
follows from the one-to-one relation 
$\wt \mD(z) = \mJ(z) \wt \Psi$.}. 
Hence, we focus on the local vortex moduli subspace. 
The equivalence the two spaces above play a fundamental and important role also in the general case with $L \ge 1$,  
where the moduli space can be viewed as 
a set of $L$ copies of the $L=1$ moduli space
subject to the additional conditions.

%%%%%%%%%%%%%%%%%%%%%%%%%%%%%%%%%%%%%
\subsection%[Moduli space of moduli matrix]
{Moduli space ${\cal M}$ of moduli matrix $\mD(z)$ }\label{subsec:patch_L=1}
Let $\C_{n,k}[z]$ denote the set of $n$-by-$n$ matrices with polynomial entries whose determinants are degree $k$ polynomials.
The definition of the moduli space \eqref{eq:moduli_v_L=1} 
can be rephrased as 
\begin{align}
{\cal M} \, \equiv \, \C_{n,k}[z] / \C_{n,0}[z] \, = \, \C_{n,k}[z]/ \sim. 
\label{eq:calM_def}
\end{align}
where the equivalence relation ``$\sim$'' 
for $\mD(z),\,\mD'(z) \in \C_{n,k}[z]$ is defined as 
\begin{align}
\mD(z) \sim \mD'(z) 
~~~\Longleftrightarrow~~
{}^\exists V(z) \in \C_{n,0}[z] ~~~~
\mbox{such that} ~~~~
\mD'(z) = V(z) \mD(z).
\label{eq:V-equiv_L=1}
\end{align}
In this subsection, we provide some details on the coordinates of the moduli space. 

\subsubsection%[Atlas]
{Atlas $\{(\phi_\lambda, {\cal M}_\lambda)\}$ of $\mathcal M$}
For a given set of non-negative integers $\lambda=(l_1,l_2,\cdots,l_n)$ such that $l_1+\cdots+l_n=k$, 
let ${\cal M}_\lambda$ be the space of matrices of the form
\beq
{\renewcommand{\arraystretch}{1.0}
{\setlength{\arraycolsep}{1.0mm} 
\mD_{\lambda}(z) \equiv 
\ba{ccc} z^{l_1} & & \\ & \ddots & \\ & & z^{l_n} \ea
 - 
\ba{ccc} P^{11} & \cdots & P^{1n} \\ \vdots & \ddots & \vdots \\ P^{n1} & \cdots & P^{nn} \ea}}, 
\hs{10}
P^{ab} = \sum_{m=1}^{l_b} T^{ab}_{m} z^{m-1}. 
\label{eq:M_patch}
\eeq
Note that if two matrices of the form \eqref{eq:M_patch} are $V$-equivalent 
$\mD_{\lambda}(z) \sim \mD'_{\lambda}(z)$, 
it follows that they are actually identical matrices 
$\mD_{\lambda}(z)= \mD'_{\lambda}(z)$. 
This is because if a $V$-equivalent pair $\mD(z)$ and $\mD'(z)$ have the same leading order behavior as \eqref{eq:M_patch}, 
the $V$-transformation relating the pair 
$V(z) = \mD'(z) \, \mD(z)^{-1}$ 
behaves as $V(z)={\bf 1}_n+{\cal O}(z^{-1})$ for large $|z|$ 
and hence its regularity implies that $V(z) = \mD'(z) \, \mD(z)^{-1} = {\bf 1}_n$.
Therefore, each element of $\mathcal M_{\lambda}$ specifies a distinct $V$-equivalence class and hence $\mathcal M_{\lambda}$ can be viewed as a subspace of $\mathcal M$. 
The local coordinate system on the coordinate patch (chart) ${\cal M}_{\lambda}$ with $\lambda=(l_1,\cdots,l_n)$ is given by
\beq
\phi_\lambda ({\cal M}_\lambda) \simeq \left\{ T^{ab}_{m} \, \big| \, 1\le a,b \le n, 1\le m\le l_b\right\} \simeq \mathbb C^{kn},
\eeq
where $T^{ab}_{m}$ are coefficients of the polynomials $P^{ab}(z)$. 
Two coordinate patches $\mathcal M_{\lambda}$ and $\mathcal M_{\lambda'}$ with $\lambda \not = \lambda'$ are glued by the coordinate transformation $\tau_{\lambda', \lambda} = \phi_{\lambda'} \circ \phi_{\lambda}^{-1}$ which can be read off from the 
$V$-transformation relating $\mD_{\lambda}(z) \in \mathcal M_{\lambda}$ and $\mD_{\lambda'}(z) \in \mathcal M_{\lambda'}$:
\beq
\mD_{\lambda'}(z) = V_{\lambda',\lambda}(z) \, \mD_{\lambda}(z), \hs{10}
V_{\lambda',\lambda}(z) \in \mathbb C_{n,0}[z],
\eeq
the explicit form of the coordinate transformation 
can be determined by requiring that all the entries of 
$V_{\lambda',\lambda}(z) = \mD_{\lambda'}(z) \mD_{\lambda}(z)^{-1}$ are regular.
Gluing all the coordinate patches $\mathcal M_{\lambda}$,
we obtain a complex manifold
\begin{align}
{\cal M}'= \bigcup_{\lambda\in \Lambda} {\cal M}_{\lambda} \quad {\rm with} \quad  
\Lambda =\left\{ \, (l_1,l_2,\cdots,l_n) \ \bigg| \ l_a \in \mathbb Z_{\ge 0}, \ \sum_{a=1}^nl_a=k\right\}.
\label{eq:calMatlas}
\end{align}
This is a submanifold of $\mathcal M$ ($\mathcal M' \subseteq \mathcal M$), since $\mathcal M_{\lambda} \subset \mathcal M$ for all $\lambda \in \Lambda$. 
In subsection \eqref{subsec:disjoint_union}, 
we show that
$\mathcal M$ can be decomposed into a disjoint union of subspaces
$\mathcal M_{\lambda}^{\rm tri}$ such that $\mathcal M_{\lambda}^{\rm tri} \subset \mathcal M_{\lambda}$. 
This fact implies that $\mathcal M$ is a subspace of $\mathcal M'$: 
\beq
\mathcal M = \bigsqcup_{\lambda \in \Lambda} \mathcal M_{\lambda}^{\rm tri} \subseteq \bigcup_{\lambda\in \Lambda} {\cal M}_{\lambda} = \mathcal M'.
\eeq
Since ${\cal M}' \subseteq {\cal M}$ and ${\cal M}' \supseteq {\cal M}$, we conclude that ${\cal M}={\cal M}'$ and 
$\{(\phi_\lambda, {\cal M}_\lambda)\}$ gives an atlas of $\mathcal M$. 

%%%%%%%%%%%%%%%%%%%%%%%%%%
\subsubsection{Decomposition into  disjoint union}
\label{subsec:disjoint_union}
Here we show the decomposition of the moduli space $\mathcal M$ 
into a disjoint union of subspaces 
$\mathcal M_{\lambda}^{\rm tri}$ such that 
$\mathcal M_{\lambda}^{\rm tri} \subset \mathcal M_{\lambda}$, 
which we have used to show that ${\cal M}' \supseteq {\cal M}$.  
The moduli space $\mathcal M$ is the space of the $V$-equivalence classes of the matrix $\mD(z)$. 
We can pick up a representative in each $V$-equivalence class by fixing the ``gauge redundancy" of the $V$-transformation \eqref{eq:V-equiv_L=1} in the following way. 
Here we focus on the case of $n=2$ for simplicity. 
Let $\mD(z)$ be a generic element in $\C_{2,k}[z]$
\beq
\mD(z) = \ba{cc} f(z) & h(z) \\ g(z) & i(z) \ea \in \C_{2,k}[z],
\eeq
where $f(z),g(z),h(z)$ and $i(z)$ are polynomials. 
Using the Euclidean algorithm (B\'ezout's identity), 
we can show that there exist polynomials $\tilde f(z)$ and $\tilde g(z)$ such that 
\beq
f(z) \tilde f(z) + g(z) \tilde g(z) = p(z),
\eeq
where $p(z)$ is the polynomial greatest common divisor of $f(z)$ and $g(z)$. 
Using $\tilde f(z), \tilde g(z)$ and $p(z)$, 
we can construct $\mathcal V(z) \in \C_{2,0}[z]$ with which 
$\mD(z)$ is transformed into an upper triangular form
\beq
\mD(z) \rightarrow \mathcal V(z) \, \mD(z) = \ba{cc} \tilde f(z) & \tilde g(z) \\ - q_g(z) & q_f(z) \ea \ba{cc} f(z) & h(z) \\ g(z) & i(z) \ea = \ba{cc} p(z) & h'(z) \\ 0 & i'(z) \ea,
\eeq
where $q_f(z)$ and $q_g(z)$ are the polynomials defined by $f(z)=q_f(z)p(z)$ and $g(z) = q_g(z) p(z)$. 
Note that $\tilde f(z)$ and $\tilde g(z)$ are not unique, 
that is, we can further multiply another 
$V$-transformation without changing the upper-triangular form
\beq
\mathcal V(z) \, \mD(z) \rightarrow \mathcal V'(z) \mathcal V(z) \, \mD(z) = 
\ba{cc} a_1 & j(z) \\ 0 & a_2 \ea \ba{cc} p(z) & h'(z) \\ 0 & i'(z) \ea,
\eeq
where $a_1$ and $a_2$ are constants and $j(z)$ is a polynomial.
This redundancy can be fixed by requiring that the diagonal entries are monic polynomials and minimizing the degree of the upper-right element 
\beq
\mathcal V'(z) \mathcal V(z) \, \mD(z) = 
\ba{cc} z^{l_1} & 0 \\ 0 & z^{l_2} \ea + 
\ba{cc} P_{11}(z) & P_{12}(z) \\ 0 & P_{22}(z) \ea,
\label{eq:standard_form}
\eeq
where $l_i~(i=1,2)$ are the degrees of the diagonal engries $(l_1 = {\rm deg} \, p(z),\,l_2={\rm deg}\,i'(z))$ and $P_{ab}~(1 \leq a \leq b \leq 2)$ are polynomials of degree less than $l_b$.
In each $\mathcal V$-equivalence class, 
the form \eqref{eq:standard_form} is unique 
and hence the gauge redundancy is completely fixed. 
This procedure can be generalized to the case of general $n$; 
for any $\mD(z) \in \mathbb C_{n,k}[z]$, we can find 
$\mathcal V(z) \in \mathbb C_{n,0}[z]$ such that
\begin{align}
\mD(z) \rightarrow \mathcal V(z) \mD(z) = \mD_{\lambda}^{\rm tri}(z) = 
{\renewcommand{\arraystretch}{0.8}
{\setlength{\arraycolsep}{0.7mm} 
\ba{cccc} z^{l_1} & & & \\ & z^{l_2} & & \\ & & \ddots & \\ & & & z^{l_n} \ea
 -
\ba{cccc} P_{11} & P_{12} & \cdots & P_{1n} \\ 0 & P_{22} & \ddots & \vdots\\
\vdots &\ddots &\ddots & P_{n-1,n} \\ 0& \cdots & 0 & P_{nn} \ea
}},
\label{eq:Dutrigform}
\end{align}
where $\lambda=(l_1,l_2,\cdots,l_n)$ is a set of non-negative integers 
such that $l_1+\cdots+l_n=k$ and 
$P_{ab}(z)~(1 \leq a \leq b \leq n)$ are polynomials of degree less than $l_b$. 
Since any $\mathcal V$-equivalence class has a unique representative 
of the form \eqref{eq:Dutrigform}, 
the moduli space \eqref{eq:AppcalM2} can be decomposed 
into the disjoint union of the subspaces 
\beq
\mathcal M = \bigsqcup_{\lambda \in \Lambda} \mathcal M_{\lambda}^{\rm tri},
\label{eq:M_disjoint}
\eeq
where $\mathcal M_{\lambda}^{\rm tri}$ are
the sets of matrices $\mD_{\lambda}^{\rm tri}(z)$ 
of the form \eqref{eq:Dutrigform} specified by the set of non-negative integers $\lambda=(l_1,\cdots,l_n)$. 
Note that there is no overlap between them, i.e. 
$\mathcal M_{\lambda}^{\rm tri} \cap \mathcal M_{\lambda'}^{\rm tri} = \emptyset$ for $\lambda \not = \lambda'$. 
Since the form of the matrix \eqref{eq:Dutrigform} 
is a special case of \eqref{eq:M_patch}, 
it follows that $\mathcal M_{\lambda}^{\rm tri}$ is a subspace of $\mathcal M_{\lambda}$. 
Therefore, we conclude that $\mathcal M$ is a subspace of $\mathcal M'$
\beq
\mathcal M = \bigsqcup_{\lambda \in \Lambda} \mathcal M_{\lambda}^{\rm tri} \subseteq \bigcup_{\lambda\in \Lambda} {\cal M}_{\lambda} = \mathcal M'.
\eeq

\subsubsection{Matrices with non-polynomial entries}
So far, entries of the matrices $\mD(z)$ and $\mathcal V(z)$ are 
assumed to be polynomials for simplicity.
Strictly speaking, the most general solution
can have arbitrary entire functions 
of $z \in \mathbb C$ as their entries. 
That is, $\mD(z)$ and $\mathcal V(z)$ are not necessarily elements of 
$\mathbb C_{n,k}[z]$ and $\mathbb C_{n,0}[z]$ 
but they can be elements of larger spaces ${\cal G}_{n,k}[z]$ and ${\cal G}_{n,0}[z]$: 
\begin{align}
\mD(z) \in {\cal G}_{n,k}[z], \quad \quad 
\mathcal V(z) \in {\cal G}_{n,0}[z],
\end{align}
where ${\cal G}_{n,k}[z]$ are
the space of maps from the complex $z$-plane $\C$ 
to the space of $n$-by-$n$ square matrices 
that have the following properties 
\begin{itemize}
\item 
If $X(z) \in \mathcal G_{n,k}[z]$, all the entries of $X(z)$ 
are entire functions of $z \in \mathbb C$ ($\partial_{\bar z}X(z)=0$),
\item  
If $X(z) \in \mathcal G_{n,k}[z]$, $\det X(z)$ has $k$ zeros on $\mathbb C$ and 
${\rm rank} \, X(z) = n$ except at the zeros of $\det X(z)$. 
\end{itemize}  
It worth noting that ${\cal G}_{n,0}[z]$ forms a group under  matrix multiplication\footnote{
For general integers $p \geq 0$ and $q \geq 0$, 
multiplication of elements of ${\cal G}_{n,p}[z]$ and ${\cal G}_{n,q}[z]$ defines 
a map ${\cal G}_{n,p}[z] \times {\cal G}_{n,q}[z] \rightarrow {\cal G}_{n,p+q}[z]$  
\begin{align}
(X(z), Y(z)) \in {\cal G}_{n,p}[z] \times {\cal G}_{n,q}[z] \quad \mapsto \quad X(z) \, Y(z) \in {\cal G}_{n,p+q}[z]. \notag
\end{align}
}.

\paragraph{Definition of $\cal M_{\mathcal G}$}
We define the space $\mathcal M_{\mathcal G}$ as 
\begin{align}
{\cal M}_{\mathcal G} \equiv {\cal G}_{n,k}[z] / {\cal G}_{n,0}[z]= {\cal G}_{n,k}[z]/ \sim. 
\label{eq:AppcalM2}
\end{align}
where the equivalence relation ``$\sim$'' 
for elements $\mD(z)$ and $\mD'(z)$ in ${\cal G}_{n,k}[z]$ is defined by 
\begin{align}
\mD(z) \sim \mD'(z) 
~~~\Longleftrightarrow~~
{}^\exists V(z) \in {\cal G}_{n,0}[z] ~~~~
\mbox{such that} ~~~~
\mD'(z) = V(z) \mD(z).
\end{align}
Replacing ${\cal G}_{n,k}[z]$ with the subspace ${\C}_{n,k}[z] \subset {\cal G}_{n,k}[z]$ 
consisting of matrices with polynomial entries, 
we can go back to the definition of the moduli space $\mathcal M$ given in \eqref{eq:calM_def}.
Although \eqref{eq:AppcalM2} is the most general definition of the moduli space, it actually gives the same space as \eqref{eq:calM_def}, that is, $\mathcal M_{\mathcal G} = \mathcal M$. 
Therefore, we can use the simpler definition of the moduli space $\mathcal M$ based on the space of matrices with polynomial entries ${\C}_{n,k}[z]$. 

To see that $\mathcal M_{\mathcal G} = \mathcal M$, 
let us show that any matrix $\mD(z) \in \mathcal G_{n,k}[z]$ 
can be fixed into the upper-triangular form \eqref{eq:Dutrigform} with polynomial entries
by an element of the $V$-transformation $\mathcal V(z) \in \mathcal G_{n,0}[z]$. 
We will use the following two theorems, 
which will be proven in 
subsections \ref{subsec:division} and \ref{subsec:Euclid}: 

\begin{thm}\label{thm:division}
Let $(f(z), g(z))$ an arbitrary pair of entire functions.
If $g(z)$ has $m$ zeros, 
then, there is a pair of a polynomial $p(z)$ of degree less than $m$ and an entire function $h(z)$
such that,
\begin{align}
f(z)=p(z)+h(z) g(z).
\end{align}
\end{thm}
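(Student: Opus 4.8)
The plan is to recast the desired factorization $f = p + h g$ as an interpolation problem at the zeros of $g$, and to produce $h$ as a quotient whose holomorphy is secured by the removable singularity theorem. Throughout I read ``$g(z)$ has $m$ zeros'' as $m$ zeros counted with multiplicity, consistent with the conventions used in the main text.

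First I would enumerate the zeros of $g$. Since $g$ is entire and has only finitely many zeros, write the distinct zeros as $z_1,\dots,z_r$ with multiplicities $m_1,\dots,m_r$, so that $\sum_{j=1}^r m_j = m$. For $f-p = h g$ to hold with $h$ entire, the entire function $f-p$ must vanish at each $z_j$ to order at least $m_j$, which amounts to the $m$ linear conditions
\begin{align}
p^{(s)}(z_j) = f^{(s)}(z_j), \qquad 0 \le s \le m_j-1, \quad 1 \le j \le r,
\end{align}
on the sought polynomial $p$; these derivative data are well defined because $f$ is entire.

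Next I would solve this Hermite interpolation problem within the space of polynomials of degree less than $m$. That space has complex dimension $m$, so the linear map sending $p$ to the tuple $\{p^{(s)}(z_j)\}$ lands in an $m$-dimensional target, and it suffices to check injectivity. If every listed derivative of $p$ vanishes, then $\prod_{j=1}^r (z-z_j)^{m_j}$, a polynomial of degree $m$, divides $p$, forcing $p\equiv 0$ because $\deg p < m$. Hence the interpolation map is an isomorphism and there is a unique polynomial $p$ with $\deg p < m$ matching the prescribed data.

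Finally I would set $h(z) \equiv (f(z)-p(z))/g(z)$ and verify it is entire. Off the zero set of $g$ the quotient is plainly holomorphic. Near a node $z_j$ I would write $f-p = (z-z_j)^{m_j} u(z)$ and $g = (z-z_j)^{m_j} v(z)$ with $u,v$ holomorphic and $v(z_j)\neq 0$, using that $g$ vanishes to order exactly $m_j$ while $f-p$ vanishes to order at least $m_j$; then $h = u/v$ is holomorphic at $z_j$ as well. The hard part is precisely this removable-singularity step: it works only because the $z_j$ exhaust all zeros of $g$, so that $h$ acquires no poles anywhere on $\C$. This is exactly where the hypothesis that $g$ has finitely many $(m)$ zeros is indispensable; with it $h$ is entire and $f = p + h g$ follows.
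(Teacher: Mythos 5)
Your proof is correct, but it takes a genuinely different route from the paper's. The paper works with the principal parts of $f/g$: it first treats the case where $g$ is a polynomial of degree $m$, subtracts from $f/g$ the sum of its principal parts $p_a(z)/(z-a)^{l_a}$ at the zeros of $g$ to obtain an entire $h$, and reads off $p(z)=\sum_a p_a(z)\,g(z)/(z-a)^{l_a}$ explicitly; the general entire case is then reduced to this one by factoring $g=e^{X}\tilde g$ with $\tilde g$ a polynomial of degree $m$ and absorbing $e^{-X}$ into $h$. You instead pose the problem as Hermite interpolation: the conditions $p^{(s)}(z_j)=f^{(s)}(z_j)$ for $0\le s\le m_j-1$ define a linear map from the $m$-dimensional space of polynomials of degree less than $m$ to $\C^m$, whose injectivity (a nonzero kernel element would be divisible by $\prod_j(z-z_j)^{m_j}$, contradicting the degree bound) yields a unique $p$, after which $h=(f-p)/g$ is entire by the removable-singularity argument, since $f-p$ vanishes at each $z_j$ to order at least $m_j$ while $g$ vanishes there to order exactly $m_j$. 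Your argument handles general entire $g$ in one stroke, with no detour through the polynomial case or the factorization $g=e^{X}\tilde g$, and it establishes uniqueness of $p$ for free; the paper's version, in exchange, produces a closed-form expression for $p$ (and the basis $\{e_{a,n}\}$) in terms of the Laurent data of $f/g$, which is the more constructive output. Both are complete proofs of the statement.
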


\begin{thm}\label{thm:Euclid}
For a pair of entire functions $(f(z), g(z))$,
there is an element $V(z)$ of ${\cal G}_{2,0}[z]$ such that 
\begin{align}
\exists V(z) \in {\cal G}_{2,0}[z], \exists p(z), \quad 
V(z) \begin{pmatrix}
f(z)\\ g(z)
\end{pmatrix}=\begin{pmatrix}
p(z)\\0
\end{pmatrix},
\end{align}
where $p(z)$ is a certain entire function whose a set of zeros is the intersection of those of $f(z)$ and $g(z)$ including their multiplicities. 
In particular, the first row of the above equation indicates that
their is a pair of entire functions, $(\tilde f(z), \tilde g(z))$ 
such that 
\begin{align}
f(z) \tilde f(z)+g(z) \tilde g(z) =p(z).
\label{eq:Bezout_entire}
\end{align}
\end{thm}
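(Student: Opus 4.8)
The plan is to recognize that $\mathcal{G}_{2,0}[z]$ is precisely the group of invertible $2\times 2$ matrices over the ring $\mathcal{O}$ of entire functions whose determinant is a unit, i.e.\ a nowhere-vanishing entire function. Under this identification Theorem~\ref{thm:Euclid} is the statement that a column vector over $\mathcal{O}$ can be row-reduced to $(p,0)^{\T}$, with $p$ a greatest common divisor of $f$ and $g$; this is the Bézout property of the ring of entire functions, the analytic analogue of the polynomial Bézout identity already used in Section~\ref{subsec:disjoint_union}. First I would dispose of the degenerate cases: if one of $f,g$ vanishes identically (say $g\equiv 0$), then $p=f$ works with $V=\mathbf 1_2$, and if both vanish the statement is vacuous, so I may assume $f,g$ are not both trivial.

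For the generic case the first step is to construct the gcd $p(z)$. By the Weierstrass factorization theorem there exists an entire function $p(z)$ whose zero set is exactly the set of common zeros $\{z_j\}$ of $f$ and $g$, with the multiplicity at each $z_j$ equal to $\min(\mathrm{ord}_{z_j}f,\mathrm{ord}_{z_j}g)$; this already secures the claim on the zero set of $p$. Dividing out, I obtain $f=p\,f_1$ and $g=p\,g_1$ with $f_1,g_1\in\mathcal{O}$, where now $f_1$ and $g_1$ share \emph{no} common zero.

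The hard part will be the coprime Bézout identity: producing entire $a,b$ with $a f_1+b g_1=1$ when $f_1,g_1$ have no common zero. My plan is to reduce this to an interpolation problem. At each zero $z_j$ of $f_1$, of order $m_j$, one has $g_1(z_j)\neq 0$, so $1/g_1$ is holomorphic there; I would then invoke the interpolation theorem built from the Weierstrass and Mittag-Leffler theorems to produce an entire $b$ whose Taylor expansion at every $z_j$ agrees with that of $1/g_1$ up to order $m_j-1$. Then $1-b g_1$ vanishes to order at least $m_j$ at each $z_j$, so $a:=(1-b g_1)/f_1$ has only removable singularities and extends to an entire function, giving $a f_1+b g_1=1$. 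This is exactly the step where the non-Noetherian analytic machinery is indispensable: Theorem~\ref{thm:division} supplies the finite (polynomial) reductions, but it cannot by itself control the possibly infinite zero set of $f_1$, so the entire-function interpolation theorem is the genuine obstacle to be cleared.

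Finally I would assemble $V(z)=\left(\begin{smallmatrix} a & b\\ -g_1 & f_1\end{smallmatrix}\right)$. Its entries are entire and $\det V=a f_1+b g_1=1$ is a unit, hence $V$ has rank $2$ everywhere and $V\in\mathcal{G}_{2,0}[z]$. A direct computation then gives
\[
V(z)\binom{f}{g}=\binom{p\,(a f_1+b g_1)}{p\,(f_1 g_1-g_1 f_1)}=\binom{p}{0},
\]
and reading off the first row, $a f+b g=p$, establishes the stated identity with $\tilde f=a$ and $\tilde g=b$. The multiplicity count in the Weierstrass construction of $p$ then yields the asserted description of its zeros as the common zeros of $f$ and $g$.
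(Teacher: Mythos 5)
Your proof is correct and follows essentially the same architecture as the paper's: construct the gcd $p(z)$ by the Weierstrass factorization theorem, divide out to reduce to coprime entire functions, establish the coprime B\'ezout identity, and assemble the matrix $V=\bigl(\begin{smallmatrix} a & b\\ -g_1 & f_1\end{smallmatrix}\bigr)$ with $\det V=1$ — the paper's final matrix is identical. The only point of divergence is the proof of the coprime B\'ezout identity: the paper applies Mittag-Leffler directly to $h=1/(fg)$, splitting its principal parts into the contributions $h_f$, $h_g$ from the zeros of $f$ and of $g$ and setting $\tilde f=g(h_g+r)$, $\tilde g=f h_f$ with $r=h-h_f-h_g$ entire, whereas you interpolate the jets of $1/g_1$ at the zeros of $f_1$ and then divide $1-bg_1$ by $f_1$. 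Both reduce to the same Mittag-Leffler machinery and are equally valid; the paper's version avoids invoking a jet-interpolation theorem as a named black box, while yours makes the ``$\mathcal O(\mathbb C)$ is a B\'ezout domain'' structure slightly more transparent.
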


\paragraph{Decomposition into  disjoint union}
Replacing the polynomials in subsubsection \ref{subsec:disjoint_union} with entire functions and using 
Theorems \ref{thm:division} and \ref{thm:Euclid}, 
we can show that $\mathcal M_{\mathcal G}$ can be decomposed into 
the disjoint union of $\mathcal M_\lambda^{\rm tri}$. 
We again focus on the case of $n=2$ for simplicity. 
Let $\mD(z)$ be a generic element in $\mathcal G_{2,k}[z]$
\beq
\mD(z) = \ba{cc} f(z) & h(z) \\ g(z) & i(z) \ea \in \mathcal G_{2,k}[z],
\eeq
where $f(z),g(z),h(z)$ and $i(z)$ are 
entire functions. 
Using Theorem \ref{thm:Euclid}, 
we can find $\tilde f(z), \tilde g(z)$ and $p(z)$ satisfying Eq.\,\eqref{eq:Bezout_entire}. 
Using such $\tilde f(z)$, $\tilde g(z)$ and $p(z)$ 
we can construct $\mathcal V(z) \in \mathcal G_{2,0}[z]$ with which 
$\mD(z)$ is transformed into an upper triangular form
\beq
\mD(z) \rightarrow \mathcal V(z) \, \mD(z) = \ba{cc} \tilde f(z) & \tilde g(z) \\ - q_g(z) & q_f(z) \ea \ba{cc} f(z) & h(z) \\ g(z) & i(z) \ea = \ba{cc} p(z) & h'(z) \\ 0 & i'(z) \ea,
\eeq
where $q_f(z)$ and $q_g(z)$ are the entire functions defined by $f(z)=q_f(z)p(z)$ and $g(z) = q_g(z) p(z)$. 
Note that $p(z)$ and $i'(z)$ have finite number of zeros since $\mD(z) \in \mathcal G_{n,k}[z]$ with finite $k$. 
The functions $\tilde f(z)$ and $\tilde g(z)$ are not unique, 
that is, we can further multiply another 
$V$-transformation without changing the upper-triangular form
\beq
\mathcal V(z) \, \mD(z) \rightarrow \mathcal V'(z) \mathcal V(z) \, \mD(z) = 
\ba{cc} a_1(z) & j(z) \\ 0 & a_2(z) \ea \ba{cc} p(z) & h'(z) \\ 0 & i'(z) \ea,
\eeq
where $j(z)$ is an arbitrary entire functions and $a_1(z)$, $a_2(z)$ are entire functions without zero. 
This redundancy can be fixed by requiring that the diagonal entries are monic polynomials and fixing the upper-right element to be the minimum degree polynomial using Theorem \ref{thm:division}
\beq
\mathcal V'(z) \mathcal V(z) \, \mD(z) = 
\ba{cc} z^{l_1} & 0 \\ 0 & z^{l_2} \ea + 
\ba{cc} P_{11}(z) & P_{12}(z) \\ 0 & P_{22}(z) \ea,
\label{eq:standard_formG}
\eeq
where $l_i~(i=1,2)$ are the numbers of zeros of the diagonal engries and $P_{ab}~(1 \leq a \leq b \leq 2)$ are polynomials of degree less than $l_b$.
In each $\mathcal V$-equivalence class, 
the form \eqref{eq:standard_formG} is unique 
and hence the gauge redundancy is completely fixed. 
This procedure can be generalized to the case of general $n$
\begin{align}
\mD(z) \rightarrow \mathcal V(z) \mD(z) = \mD_{\lambda}^{\rm tri}(z) = 
{\renewcommand{\arraystretch}{0.8}
{\setlength{\arraycolsep}{0.7mm} 
\ba{cccc} z^{l_1} & & & \\ & z^{l_2} & & \\ & & \ddots & \\ & & & z^{l_n} \ea
 -
\ba{cccc} P_{11} & P_{12} & \cdots & P_{1n} \\ 0 & P_{22} & \ddots & \vdots\\
\vdots &\ddots &\ddots & P_{n-1,n} \\ 0& \cdots & 0 & P_{nn} \ea
}},
\label{eq:DutrigformG}
\end{align}
where $\lambda=(l_1,l_2,\cdots,l_n)$ is a set of non-negative integers 
such that $l_1+\cdots+l_n=k$ and 
$P_{ab}(z)~(1 \leq a \leq b \leq n)$ are polynomials of degree less than $l_b$. 
Since any $\mathcal V$-equivalence class has a unique representative 
of the form \eqref{eq:DutrigformG}, 
the moduli space \eqref{eq:AppcalM2} can be decomposed 
into the disjoint union in a similar way as $\mathcal M$
\beq
\mathcal M_{\mathcal G} = \bigsqcup_{\lambda \in \Lambda} \mathcal M_{\lambda}^{\rm tri},
\label{eq:M_G_disjoint}
\eeq
where $\mathcal M_{\lambda}^{\rm tri}$ are
the sets of matrices $\mD_{\lambda}^{\rm tri}(z)$ 
of the form \eqref{eq:Dutrigform} specified by the set of non-negative integers $\lambda=(l_1,\cdots,l_n)$. 

The decomposition implies that $\mathcal M_{G}$ is 
a subspace of $\mathcal M'$ defined in Eq.\,\eqref{eq:calMatlas} 
\beq
\mathcal M_{\mathcal G} = \bigsqcup_{\lambda \in \Lambda} \mathcal M_{\lambda}^{\rm tri} \subseteq \bigcup_{\lambda\in \Lambda} {\cal M}_{\lambda} = \mathcal M'.
\eeq
On the other hand, $\mathcal M'$ is obviously 
a submanifold of $\mathcal M_{\mathcal G}$. 
Therefore, $\mathcal M_{\mathcal G} = \mathcal M'$ and 
hence definition of the moduli space $\mathcal M = \mathcal M'$ 
defined in terms of polynomials
and $\mathcal M_{\mathcal G}$ 
defined in terms of entire functions are equivalent.

%We can show that \eqref{eq:AppcalM2} and \eqref{eq:moduli_v_L=1} are equivalent by using the fact that in the equivalence class of $\mD(z)$, there exists a unique matrix $\mD^{\rm tri}(z)$ of the form
%\beq
%{\renewcommand{\arraystretch}{1.0}
%{\setlength{\arraycolsep}{1.0mm} 
%\mD^{\rm tri}(z) = 
%\ba{cccc} z^{l_1} & & & \\ & z^{l_2} & & \\ & & \ddots & \\ & & & z^{l_n} \ea
% - 
%\ba{cccc} P_{11} & P_{12} & \cdots & P_{1n} \\ 0 & P_{22} & \ddots & \vdots\\
%\vdots &\ddots &\ddots & P_{n-1,n} \\ 0& \cdots & 0 & P_{nn} \ea,
%}}
%\eeq
%where $(l_1,l_2,\cdots,l_N)$ is a set of non-negative integers and $P_{ab}~(a \leq b)$ are polynomials of degree less that $l_b$. In the following, we show that \eqref{eq:AppcalM2} is equivalent to \eqref{eq:moduli_v_L=1} by looking at the atlas covering the manifold $\cal M$.

%%%%%%%%%%%%%%%%%%%%%%%%%%%%%%%
 \subsubsection{Proof of Theorem \ref{thm:division}}
 \label{subsec:division}
In the previous subsection we have used Theorem \ref{thm:division} 
to show that $\mD(z)$ can always be fixed as \eqref{eq:Dutrigform}. 
Here we give the proof of the theorem.
\paragraph{Theorem \ref{thm:division}.}
{\it Let $(f(z), g(z))$ an arbitrary pair of entire functions on $\C$.
If $g(z)$ has $m$ zeros, 
then, there is a pair of a polynomial $p(z)$ of degree less than $m$ and an entire function $h(z)$}
such that,
\begin{align}
f(z)=p(z)+h(z) g(z).
\end{align}
\begin{proof}
We first assume that $g(z)$ is a polynomial of degree $m$. 
Suppose $z=a$ is a zero of $g(z)$ with multiplicity $l_a$. 
Let $p_a(z)$ be the polynomial related to the principal part of the Laurent series of $f(z)/g(z)$ at $z=a$ as
\begin{align}
\frac{f(z)}{g(z)}&=\frac1{(z-a)^{l_a}}p_a(z) + {\rm regular~term}.
\end{align}
Using $p_a(z)$ defined for the all zeros of $g(z)$, 
we can define an entire function $h(z)$ as 
\begin{align}
h(z) \equiv \frac{f(z)}{g(z)}- \sum_{a \in Z(g)}\frac{p_a(z)}{(z-a)^{l_a}},
\end{align}
where $Z(g)$ denotes the set of zeros of $g(z)$. 
Multiplying $g(z)$ to the both sides of the above,
we find that
\begin{align}
f(z)= h(z) g(z) + p(z),
\label{eq:devision_proof}
\end{align}
where $p(z)$ is given by
\begin{align}
p(z) = \sum_{a \in Z(g)} p_a(z) \frac{g(z)}{(z-a)^{l_a}}. 
\end{align}
Note that $p(z)$ is the polynomial of degree less than $m$  
\beq
p(z) = \sum_a\sum_{n=1}^{l_a} \frac{f^{(n-1)}(a)}{(n-1)!} e_{a,n}(z)
\eeq
where $\{e_{a,n}\}$ is the basis of polynomials of degree less than $m$ 
defined by
\begin{align}
e_{a,n}(z)=\sum_{q=0}^{l_a-n} \frac{1}{q!} \partial_z^{q} \left[ \frac{(z-a)^{l_a}}{g(z)} \right]_{z=a} \times (z-a)^{q+n-1-l_a} g(z),\quad n=1,\cdots,l_a.
\end{align}
%In particular, if $l_a=1$ for all $a$, then $p(z)$ is regarded as the interpolation polynomial in the Legendre form:
%\begin{align}
%p(z)= \sum_{a \in Z(g)} f(a) \, e_{a,1}(z) ~~
%{\rm with~~} e_{a,1}(z) = \prod_{b \in Z(g) \backslash \{a\}}
%\frac{z-b}{a-b},\quad {\rm for~} a \in Z(g). 
%\end{align}

If $g(z)$ is not a polynomial but a generic entire function with $m$ zeros, there exists a polynomial $\tilde g(z)$ of degree m and a entire function $X(z)$ such that $g(z) = e^{X(z)} \tilde g(z)$. As shown above, there exist a polynomial $p(z)$ of degree less than m and a entire function $\tilde h(z)$ such that $f(z) = \tilde h(z) \tilde g(z) + p(z)$. Writing $h(z)=\tilde h(z) e^{-X(z)}$, we find that 
\beq
f(z) = h(z) g(z) + p(z).
\eeq
\end{proof}

 %%%%%%%%%%%%%%%%%%%%%%%%%%%%
\subsubsection{Proof of Theorem \ref{thm:Euclid}}\label{subsec:Euclid}
To prove Theorem \ref{thm:Euclid}, let us first show the following lemma
\begin{lmm}
If  $(f(z), g(z))$ is a pair of entire functions of $z$ 
which have no common zero,
then there is a pair of entire functions, $(\tilde f(z), \tilde g(z))$ 
such that 
\begin{align}
f(z) \tilde f(z)+g(z) \tilde g(z) =1.
\end{align}
\end{lmm}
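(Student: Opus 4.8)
The plan is to read the claimed identity as a B\'ezout relation $f(z)\tilde f(z)+g(z)\tilde g(z)=1$ in the ring of entire functions, and to produce one of the two coefficients by interpolation rather than by any Euclidean division algorithm, since entire functions admit no naive division with remainder. First I would dispose of the degenerate cases: if $f\equiv 0$, then the hypothesis of no common zero forces $g$ to be nowhere vanishing, so $1/g$ is entire and we may take $\tilde f=0$, $\tilde g=1/g$; the case $g\equiv 0$ is symmetric. Hence I assume from now on that neither $f$ nor $g$ vanishes identically, so that the zero set $Z(f)$ is a discrete subset of $\C$ with no accumulation point, each zero $a\in Z(f)$ carrying a finite multiplicity $m_a$.

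The key step is the construction of $\tilde g$. Since $f$ and $g$ have no common zero, $g(a)\neq 0$ for every $a\in Z(f)$, so $1/g$ is holomorphic in a neighbourhood of each such $a$ and has a well-defined finite Taylor jet there. Invoking the Weierstrass--Mittag-Leffler interpolation theorem for entire functions---given a discrete set of points and a prescribed Taylor polynomial at each, there exists an entire function realizing these jets---I would choose an entire function $\tilde g$ whose Taylor expansion at each $a\in Z(f)$ agrees with that of $1/g$ up to order $m_a-1$. Equivalently, $\tilde g-1/g$ vanishes to order at least $m_a$ at every $a\in Z(f)$.

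With $\tilde g$ in hand I would set $\tilde f(z)\equiv\bigl(1-g(z)\tilde g(z)\bigr)/f(z)$ and check that it is entire: near each $a\in Z(f)$ one has $1-g\tilde g=-g\,(\tilde g-1/g)$, and since $g(a)\neq 0$ this product vanishes to order at least $m_a$, exactly matching the order of the zero of $f$ at $a$, while away from $Z(f)$ the quotient is manifestly holomorphic. Thus $\tilde f$ is entire, and by construction $f\tilde f+g\tilde g=(1-g\tilde g)+g\tilde g=1$, which proves the lemma. The substantive point---the only place analysis enters---is the existence of $\tilde g$, that is, the interpolation theorem, which I would justify or cite as the standard combination of the Weierstrass factorization theorem (furnishing an entire function with zeros exactly $Z(f)$ of the prescribed multiplicities) and the Mittag-Leffler theorem (furnishing a meromorphic function whose principal parts are tuned so the product carries the desired jets). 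I expect this to be the main obstacle, in the sense that it is where the infinite-dimensionality of the ring of entire functions is genuinely handled; the remaining algebra is routine and will feed directly into Theorem~\ref{thm:Euclid}, where the common zeros are factored out.
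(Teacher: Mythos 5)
Your proof is correct and is, at bottom, the same construction as the paper's: the paper applies Mittag-Leffler to produce $h_f$ carrying the principal parts of $1/(fg)$ at the zeros of $f$, and its coefficient $\tilde g=f(z)h_f(z)$ is then precisely an entire function whose Taylor jet at each zero $a$ of $f$ (of multiplicity $m_a$) agrees with that of $1/g$ to order $m_a$ --- exactly your interpolant --- while the paper's $\tilde f=g(h_g+r)$ simplifies to $\bigl(1-g\tilde g\bigr)/f$, which is your quotient. The only differences are cosmetic: you phrase the analytic input as jet interpolation at $Z(f)$ rather than as a symmetric Mittag-Leffler splitting of $1/(fg)$, and you add the (harmless, and strictly speaking necessary) treatment of the degenerate cases $f\equiv 0$ or $g\equiv 0$, which the paper glosses over.
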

\begin{proof}	
Let $A_f=\{a_n|n\in \mathbb N\}$ and $A_g=\{b_n| n\in \mathbb N\}$ 
be the ordered sets of zeros of $f(z)$ and $g(z)$, respectively.
According to Mittag-Leffler's theorem, 
we can construct a function $h_f(z)$ such that 
the set of poles of $h_f(z)$ is 
in one-to-one correspondence with the set of zeros of $f(z)$, 
and the principal part at $z = a_n \in A_f$ is $P_{a_n}(z)$
\beq
h_f(z) = P_{a_n}(z) + \{ \, \mbox{terms regular at $z=a_n$} \, \}, \hspace{5mm} {}^\forall a_n \in A_f, 
\eeq
where $P_a(z)$ is the principal part of the function $h(z)=(f(z) g(z))^{-1}$ at $z=a \in A_f \cup A_g$
\beq
h(z) = \frac{1}{f(z) g(z)} = P_a(z) + \{ \, \mbox{terms regular at $z=a$} \, \}. 
\eeq
Similarly, we can construct a function $h_g(z)$. 
The, the function $r(z)$ defined by
\begin{align}
r(z) \equiv  h(z)-h_f(z)-h_g(z) \label{eq:r(z)}
\end{align}
is an entire function since the set of poles of $h(z)$ is $A_f \cup A_g$ and all the poles in the r.h.s. are exactly cancelled. 
In addition, the following two functions are also entire functions:
\begin{align}
\tilde f(z)\equiv g(z) (h_g(z)+r(z)), \quad \tilde g(z) \equiv f(z) h_f(z)  \label{eq:tildefg}
\end{align}
where all poles of $h_f(z)$ and $h_g(z)$ are cancelled with 
the corresponding zeros of $f(z)$ and $g(z)$, respectively. 
By multiplying $h(z)^{-1}=f(z) g(z)$ 
to the both side of Eq.\eqref{eq:r(z)}, we find that
\begin{align}
1=f(z)\tilde f(z)+g(z) \tilde g(z).
\end{align}
Here the pair $(\tilde f(z), \tilde g(z))$ constructed above is not the general solution but a special solution.
Different point sequences $A_f'$ and $A_g'$ obtained by switching the order infinitely many times can give a different solution.
\end{proof}
As an example, let us consider $(f(z),g(z))=(\sin^3(z),\cos(z))$, $h_f(z)$. The functions $h_f(z)$, $h_g(z)$ and $r(z)$ are given by
\begin{align}
h_f(z)&=\frac{1}{z^3}+\frac{1}{z}+\sum_{n=1}^\infty\left(\frac1{(z-n\pi)^3}+\frac1{z-n\pi}+\frac1{(z+n\pi)^3}+\frac1{z+n\pi}\right)=\frac{1}{\tan z}\left(1+\frac1{\sin^2z}\right),\nn
h_g(z)&=-\sum_{n=1}^\infty\left(\frac1{z-(n-\frac12)\pi}+\frac1{z+(n-\frac12)\pi}\right)=\tan z,\quad 
r(z)=\frac1{\sin^3 (z) \cos (z)}-h_f(z)-h_g(z)=0,
\end{align}
and thus a special solution of  $(\tilde f(z), \tilde g(z))$ is given as
\begin{align}
\tilde f(z)=\cos(z) h_g(z)=\sin(z), \quad \quad \quad \tilde g(z)=\sin^3 (z) h_f(z)=\cos(z) (1+\sin^2(z)).
\end{align}
\paragraph{Theorem \ref{thm:Euclid}.}
{\it For a pair of entire functions $(f(z), g(z))$,
there is an element $V(z)$ of ${\cal G}_{2,0}[z]$ such that 
\begin{align}
{}^\exists V(z) \in {\cal G}_{2,0}[z], ~ {}^\exists p(z), \quad \quad 
V(z) \begin{pmatrix}
f(z) \\ g(z)
\end{pmatrix}=\begin{pmatrix}
p(z)\\0
\end{pmatrix},
\label{eq:c.2}
\end{align}
where $p(z)$ is a certain entire function whose set of zeros 
is the intersection of those of $f(z)$ and $g(z)$ including their multiplicities. 
In particular, the first row of the above equation indicates that
their is a pair of entire functions, $(\tilde f(z), \tilde g(z))$ 
such that 
\begin{align}
f(z) \tilde f(z)+g(z) \tilde g(z) =p(z).
\end{align}}
\begin{proof}
Let $A(f,g)$ be an intersection of sets of zeros of $f(z)$ and $g(z)$ including their multiplicities.
According to Weierstrass factorization theorem,
there exists an entire function $p(z)$ whose set of zeros is $A(f,g)$.
Then, functions $f_0(z)$ and $g_0(z)$ defined by
\begin{align}
f_0(z) \equiv \frac{f(z)}{p(z)}, \quad 
g_0(z) \equiv \frac{g(z)}{p(z)}, 
\end{align}
are entire functions without common zero, 
and hence we can apply the above lemma to find 
a pair of entire functions $(\tilde f(z), \tilde g(z))$ satisfying   
\begin{align}
f_0(z) \tilde f(z)+g_0(z) \tilde g(z)=1 \quad \Rightarrow \quad 
f(z) \tilde f(z)+g(z) \tilde g(z)=p(z).
\end{align}
Then, we can construct the matrix $V(z) \in {\cal G}_{2,0}[z]$ satisfying \eqref{eq:c.2} as 
\begin{align}
V(z) =\begin{pmatrix}
\tilde f(z) & \tilde g(z) \\ -g_0(z) & f_0(z) 
\end{pmatrix}.
\end{align}
\end{proof}
%%%%%%%%%%%%%%%%%%%%%%%%%%%%%%%%%%%%%%%%%%

%%%%%%%%%%%%%%%%%%%%%%%%%
\subsection%[MOduli space of the half-ADHM data]
{Moduli space $\wt {\cal M}$ of the half-ADHM data}\label{sec:ADHMdata}
In this subsection, we discuss the moduli space of the half-ADHM data, which is given by the $GL(N,\C)$ quotient of the vector space of matrices $\{ Z, \Psi\}$ 
\begin{eqnarray}
\wt{\cal M} \, \cong \, \left\{ (Z,\Psi) \, \Big| \, \{Z,\Psi\} \ \mbox{on which $GL(k,\mathbb C)$ action is free} \right\}/GL(k,\mathbb C),
\end{eqnarray}
where $Z$ $k$-by-$k$ matrix and $\Psi$ is a $n$-by-$k$ matrix on which $GL(k,\C)$ acts
\beq
Z \rightarrow g^{-1} Z g, \hs{5} 
\Psi \rightarrow \Psi g. 
\eeq
The condition that the $GL(k,\mathbb C)$ action is free 
means that there is no non-trivial $g \in GL(k,\mathbb C)$ 
that fixes $(Z,\Psi)$. 
To examine the moduli space $\wt{\cal M}$, 
it is convenient to rewrite the $GL(k,\C)$ free condition 
as we show below. 

\subsubsection%[Two expressions of GL(k,C) free condition]
{Two expressions of $GL(k,\mathbb C)$ free condition}
\label{sec:TwoExpGLkCfree}
The moduli space of the half-ADHM data is the $GL(k,\C)$ quotient 
of the space of matrices $\{\Psi,Z\}$ on which the $GL(k,\C)$ action is free. 
There are two equivalent conditions for $(\Psi, Z)$ to be 
a pair of matrices on which $GL(k,\mathbb C)$ action is free:\footnote{It is convenient to rewrite the second condition, 
${\cal C}_2(\Psi,Z)$, as 
\begin{align}
    \exists \vec v: {\rm ~a~column~vector~}, \forall z\in \mathbb C,
    \quad  \Psi (z{\bf 1}_n-Z)^{-1} \vec v=0 \quad 
    \Rightarrow\quad \vec v=0.
\end{align}} 
\begin{alignat}{2}
&{\cal C}_1(\Psi,Z):~ 
\mbox{for}~X \in \mathfrak{gl}(k,\C), \quad 
&\quad& \Psi X=0,~[Z,X]=0 ~ \Rightarrow ~ X=0, \\
&{\cal C}_2(\Psi,Z):~ 
\mbox{for}~\vec v \in \C^k, \quad &\quad&
\Psi Z^a \vec v=0 ~~ \mbox{for} ~~ a=0,1,\dots k-1 ~ \Rightarrow ~ \vec v=0. \label{eq:cond_2}
\end{alignat}
To show the equivalence of these conditions, 
let us consider the inclusion relation between the following two sets 
\begin{align}
F_1\equiv \left\{(\Psi,Z) \, \big| \, {\cal C}_1(\Psi,Z) \right\}, 
\quad F_2\equiv \left\{(\Psi,Z) \, \big| \, {\cal C}_2(\Psi,Z) \right\}. \label{eq:TwoF}
\end{align}

\begin{proof}[Proof of $F_2 \subseteq F_1$]
If $\Psi X=0$ and $[Z,X] = 0$ for $X \in \mathfrak{gl}(k,\mathbb C)$, 
it follows that $\Psi Z^a X = 0$ for $a=1,2,\cdots$. 
For an element $(\Psi,Z) \in F_2$, $\Psi Z^a X=0$ implies that $X=0$
and hence 
\begin{align}
\Psi X=0,~[Z,X]=0 \quad \Rightarrow \quad \Psi Z^a X = 0 ~~{\rm for}~~ a=1,2,\dots \quad \Rightarrow \quad X=0.
\end{align}
This shows that $(\Psi,Z) \in F_2 \Rightarrow (\Psi,Z) \in F_1$, or equivalently $F_2 \subseteq F_1$.
\end{proof}
\begin{proof}[Proof of $F_2 \supseteq F_1$]
Here we prove that $F_2 \supseteq F_1$ by 
showing that $\overline{F_2} \subseteq \overline{F_1}$, 
where $\overline{F_1}$ and $\overline{F_2}$ 
are the complements of $F_1$ and $F_2$, respectively.
To show $\overline{F_2} \subseteq \overline{F_1}$, 
we show that there exists a nontrivial $X \in \mathfrak{gl}(k,\C)$ satisfying $\Psi X = 0$ and $[Z,X]=0$
for any element $(\Psi, Z) \in \overline{F_2} = \{(\Psi,Z)\} \backslash F_2$.
If $(\Psi, Z) \in \overline{F_2}$, 
there exist a set of $l$ linearly independent column vectors 
$\{\vec v_p| p=1,\dots, l \le k \}$ satisfying 
$\Psi Z^a \vec v_p=0$ for all $a \in \mathbb Z_{\ge 0}$.
After an appropriate $GL(k,\mathbb C)$ transformation,
therefore,
$\Psi Z^a$ take the following form
\begin{align}
\Psi Z^a =
\begin{pmatrix}
\star \, \Big| \, {\bf 0}_{\tiny k\hbox{-by-} l}
\end{pmatrix} 
\quad {\rm for~} a=0,1,\dots,k-1.
\end{align}
Under this gauge choice,  $Z$ takes the following form
\begin{align}
Z=\begin{pmatrix}
Z_+& {\bf 0}\\  W& Z_-
\end{pmatrix}
\end{align} 
where $Z_+ \in M_{k-l,k-l}$, $Z_- \in M_{l,l}$ and $W \in M_{l,k-l}$ ($M_{n,m}$: space of $n$-by-$m$ matrices). 
Let $h$ be the endomorphism on $M_{l,k-l}$ given by
\begin{align}
h: B \quad \mapsto\quad h(B) = Z_- B - B Z_+.  
\end{align}
If ${\rm dim} ({\rm Ker}(h)) \not = 0$, 
then a non-trivial $X \in \mathfrak{gl}(k,\C)$ can be constructed as 
\begin{align}
X=\begin{pmatrix}
{\bf 0} & {\bf 0}\\ B_0 & {\bf 0}
\end{pmatrix} \quad {\rm with} \quad 
B_0 \in {\rm Ker}(h)\backslash \{ {\bf 0} \}.
\end{align}
This non-trivial element $X \in \mathfrak{gl}(k,\C)$ satisfies 
\begin{align}
\Psi X=0,\quad [Z,X]=\begin{pmatrix}
{\bf 0} & {\bf 0}\\ h(B_0) & {\bf 0}
\end{pmatrix}={\bf 0}.
\end{align}
If ${\rm dim} ({\rm Ker}(h))=0$,
a squared matrix $X$ can be constructed 
by using the inverse map $h^{-1}$. 
For example, 
\begin{align}
X=\begin{pmatrix}
{\bf 0} & {\bf 0}\\ h^{-1}(W) & {\bf 1}
\end{pmatrix} \not={\bf 0} 
\end{align}
satisfies
\begin{align}
\Psi X=0,\quad [Z,X]=\begin{pmatrix}
{\bf 0} & {\bf 0}\\ Z_-h^{-1}(W)-h^{-1}(W) Z_+-W & {\bf 0}
\end{pmatrix}={\bf 0}.
\end{align}
Therefore, we can always construct 
a nontrivial $X \in \mathfrak{gl}(k,\C)$ satisfying 
$\Psi X = 0$ and $[Z,X]=0$
for any element $(\Psi, Z) \in \overline{F_2}$. 
Thus we conclude that $\overline{F_2} \subseteq \overline{F_1}$, 
that is, $F_2 \supseteq F_1$.
\end{proof}
Combining these two facts, $F_1 \supseteq F_2$ and $F_2 \supseteq F_1$, 
we conclude that $F_1=F_2$. 

%%%%%%%%%%%%%%%%%%%%%%%%%%%%
\subsubsection%[Atlas]
{Atlas $\{(\wt \phi_\lambda, \wt {\cal M}_\lambda)\}$ of $\wt {\mathcal M}$ from $\{Z, \Psi \}$}
Using the spaces $F_1$ or $F_2$ given in Eq.\,\eqref{eq:TwoF}, 
we can rewrite the definition of the manifold $\wt {\cal M}$ as 
\begin{align}
\wt {\cal M} \equiv F_1/GL(k,\mathbb C) = F_2 /GL(k,\mathbb C). 
\end{align}
The atlas of this manifold is given as follows. 
Let $\Lambda$ be the same index set $\Lambda$ as that given in Eq.\eqref{eq:calMatlas}
\beq
\Lambda =\left\{(l_1,l_2,\cdots,l_n) \, \Big| \, l_a \in \mathbb Z_{\ge 0},~\sum_{a=1}^nl_a=k\right\}. 
\eeq
For $\lambda \in \Lambda$, 
let $\wt{\mathcal M}_\lambda$ be the subspace of $\wt{\mathcal M}$ 
given by the equivalence classes of the data $(\Psi,Z)$ of the form
\beq
{\renewcommand{\arraystretch}{0.8}
{\setlength{\arraycolsep}{1.0mm} 
\Psi_\lambda = \ba{ccc} \psi_1 & & \\ & \ddots & \\  & & \psi_n \ea + \ba{ccc} \Psi_{11} & \cdots & \Psi_{1n} \\ \vdots ~ & \ddots & \vdots ~ \\ \Psi_{n1} & \cdots & \Psi_{nn} \ea, \hs{10}
Z_\lambda = \ba{ccc} Z_1 & & \\ & \ddots & \\  & & Z_n \ea + \ba{ccc} Z_{11} & \cdots & Z_{1n} \\ \vdots ~ & \ddots & \vdots ~ \\ Z_{n1} & \cdots & Z_{nn} \ea, 
}}
\label{eq:matrixform_tildeM}
\eeq
where $\psi_a$ are $l_a$-component row vectors, 
$\Psi_{ab}$ are $l_b$-component row vectors, 
$Z_a$ are $l_a$-by-$l_a$ matrices and 
$Z_{ab}$ are $l_a$-by-$l_b$ matrices such that
\beq 
\ba{c} \psi_a \\ Z_a \ea = 
{\renewcommand{\arraystretch}{0.9}
{\setlength{\arraycolsep}{1.0mm} 
\overset{l_a}{\ba{ccc} 1 & & \\ & \ddots & \\ & & 1 \\ \hline 0 & \cdots & 0 \ea} \scriptstyle{l_a+1}, \hs{10}
\ba{c} \Psi_{ab} \\ Z_{ab} \ea}} = 
{\renewcommand{\arraystretch}{0.9}
{\setlength{\arraycolsep}{0.5mm} 
\overset{l_b}{\ba{ccc} 0 & \cdots & 0 \\ \vdots & \ddots & \vdots \\ 0 & \cdots & 0 \\ \hline T_{ab,1} & \cdots & T_{ab,l_b} \ea} \scriptstyle{l_a+1}}}. \label{eq:datainPsiZ}
\eeq
To check that $GL(k,\mathbb C)$ action is free on $(\Psi,Z)$ given above,
it is convenient to map $(\Psi, Z)$ into the infinite dimensional
Grassmannian $G(k,\infty)$ given by 
the set of an infinite number of $k$-component row vectors constructed 
from the rows of $\Psi Z^{p-1}$ with $p \in \mathbb N$. 
As shown in Sec.\ref{sec:TwoExpGLkCfree},
the $GL(k,C)$ action is free on $(\Psi,Z)$ if and only if
the image of the mapping to $G(k,\infty)$ contains 
a bases of the $k$-dimensional vector space.
Let us define $k$-component row vectors $\{{\bf e}_{a,p}\}$ as
\begin{align}
{\bf e}_{a,p} \equiv \hbox{the $a$-th row of~} \Psi Z^{p-1}, \quad  a=1,\cdots,n,\quad p=1,\cdots,k.
\end{align}
From $(\Psi,Z)$ given in Eq.\eqref{eq:datainPsiZ}, 
we can construct the following $(l_a+1)$-by-$k$ matrix for each $a$,
\begin{align}
\begin{pmatrix}
{\bf e}_{a,1} \\ \vdots \\ {\bf e}_{a,l_a} \\ \hline {\bf e}_{a,l_a+1}   
\end{pmatrix} =
\left(\begin{array}{c|c|c|c|c|c|c}
& & & & & & \\
{\bf 0} & ~\cdots~ &{\bf 0}& ~~ {\bf 1}_{l_a} ~~ & {\bf 0} & ~\cdots~ & {\bf 0} \\ 
& & & & & & \\ \hline
\vec T_{a,1}&\cdots&\vec T_{a,a-1}&\vec T_{a,a}&\vec T_{a,a+1}&\cdots &
\vec T_{a,n}
\end{array}\right),
\end{align}
with a $l_b$-component row vector 
$\vec T_{ab}=(T_{ab,1},\cdots,T_{ab,l_b})$.
By correcting the first $l_a$ rows for all $a$, 
we find the identity matrix
\begin{align}
E_\lambda = {\bf 1}_k \quad~ {\rm with} \quad~
E_\lambda \equiv 
\begin{pmatrix}
\hat {\bf e}_1(l_1)\\ \hat {\bf e}_2(l_2) \\ \vdots \\ \hat {\bf e}_n(l_n)  
\end{pmatrix} \quad 
{\rm and} \quad 
\hat {\bf e}_a(m) \equiv  
\begin{pmatrix}
{\bf e}_{a,1} \\ {\bf e}_{a,2} \\ \vdots \\ {\bf e}_{a,m}  
\end{pmatrix} 
\label{eq:1inPsiZ}
\end{align}
which immediately indicate that the $GL(k,\mathbb C)$ action is free.
All the moduli parameters in $(\Psi, Z)$ are contained in 
$\{ {\bf e}_{a,l_a+1} \,|\, a=1,\dots,n \} \simeq \mathbb C^{kn}$ as
${\bf e}_{a,l_a+1}=(\vec T_{a,1}\cdots \vec T_{a,n})$
and all the entries are independent.
Therefore, the local coordinate system on the coordinate patch (chart) $\wt{\cal M}_{\lambda}$ with $\lambda=(l_1,\cdots,l_n)$ is given by
\beq
\tilde \phi_\lambda (\wt{\cal M}_\lambda) \simeq \left\{ T^{ab}_{m} \, \big| \, 1\le a,b \le n, 1\le m\le l_b\right\} \simeq \mathbb C^{kn}.
\eeq
The coordinate transformation to another patch $\wt{\mathcal M}_{\lambda'}$
can be constructed by using the matrix $g \in GL(k,\C)$ defined by 
\begin{align}
g=E_{\lambda'}
%\equiv \begin{pmatrix}
%\hat {\bf e}_1(l_1')\\ \vdots \\  \hat {\bf e}_n(l_n') 
%\end{pmatrix}  
\quad {\rm with} \quad \lambda'=(l_1',l_2',\cdots,l_n').
\end{align} 
Using the matrix $g$, 
we can read off the coordinate transformation 
$\wt f_{\lambda'\lambda} =\wt \phi_{\lambda'} \circ \wt \phi_{\lambda}^{-1}$ from the relation
\begin{align}
(\Psi',Z')=(\Psi g^{-1},g Z g^{-1})   
\in \wt {\cal M}_{\lambda}\cap \wt {\cal M}_{\lambda'}.
\end{align}
Gluing all the coordinate patches $\wt{\mathcal M}_{\lambda}$,
we obtain a complex manifold
\begin{align}
\wt{\cal M}'= \bigcup_{\lambda\in \Lambda} \wt{\cal M}_{\lambda}, 
\quad {\rm with} \quad  
\Lambda =\left\{(l_1,l_2,\cdots,l_n) \, \Big| \, l_a \in \mathbb Z_{\ge 0},~\sum_{a=1}^nl_a=k\right\}.
\label{eq:wtcalMatlas}
\end{align}
This is a submanifold of $\wt{\mathcal M}$ since $\wt{\mathcal M}_{\lambda} \subset \wt{\mathcal M}$ for all $\lambda \in \Lambda$. 
In subsection \eqref{subsec:wtdisjoint_union}, 
we show that
$\wt{\mathcal M}$ can be decomposed into a disjoint union of subspaces
$\wt{\mathcal M}_{\lambda}^{\rm tri}$ such that $\wt{\mathcal M}_{\lambda}^{\rm tri} \subset \wt{\mathcal M}_{\lambda}$. 
This fact implies that $\wt{\mathcal M}$ is a subspace of $\wt{\mathcal M}'$: 
\beq
\wt{\mathcal M} = \bigsqcup_{\lambda \in \Lambda} \wt{\mathcal M}_{\lambda}^{\rm tri} \subseteq \bigcup_{\lambda\in \Lambda} \wt{\cal M}_{\lambda} = \wt{\mathcal M}'.
\eeq
Since $\wt{\cal M}' \subseteq \wt{\cal M}$ and $\wt{\cal M}' \supseteq \wt{\cal M}$, we conclude that $\wt{\cal M}=\wt{\cal M}'$ and 
$\{(\wt{\phi}_\lambda, \wt{\cal M}_\lambda)\}$ gives an atlas of $\wt{\mathcal M}$. 

\subsubsection{Decomposition into disjoint union}\label{subsec:wtdisjoint_union}
Here, we show that
$\wt{\mathcal M}$ can be decomposed 
into a disjoint union of subspaces
$\wt{\mathcal M}_{\lambda}^{\rm tri}$ such that $\wt{\mathcal M}_{\lambda}^{\rm tri} \subset \wt{\mathcal M}_{\lambda}$.

The manifold $\wt{\mathcal M}$ is the space of equivalent classes of the matrices $[ (\Psi,Z) ]$ 
satisfying the $GL(k,\C)$-free condition \eqref{eq:cond_2}. 
For each equivalence class, 
we can associate a set of integers $\lambda=(l_1,l_2,\cdots,l_n)$ as follows. 
Let $(\Psi,Z)$ is an representative of a equivalence class 
and ${\bf e}_{a,p}$ be the $k$ component row vectors defined by
\begin{align}
{\bf e}_{a,p} \equiv \hbox{the $a$-th row of~} \Psi Z^{p-1}.
\end{align}
Let $\wt {\mathcal V}_a~(a=1,\cdots,N)$ be the vector spaces spanned by ${\bf e}_{b,p}$ with $b=1,\cdots,a$
\beq
\wt {\mathcal V}_a = {\rm span} \left\{ {\bf e}_{b,p} \, | \, p \in {\mathbb N},~b=1,\cdots,a \right\}.
\eeq
These vector spaces form a flag 
\begin{align}
\{0\}=\wt {\mathcal V}_0\subseteq \wt {\mathcal V}_1 \subseteq  \wt {\mathcal V}_2 \subseteq \cdots \subseteq \wt 
{\mathcal V}_n.
\end{align}
Then, we define $l_a$ as 
\beq
l_a ~=~ {\rm dim}_\C \, \wt{\mathcal V}_a - {\rm dim}_\C \, \wt{\mathcal V}_{a-1}. 
\eeq
Since $(\Psi,Z)$ satisfies the $GL(k,\C)$-free condition \eqref{eq:cond_2}, it follows that
\beq
{\rm dim}_\C \, \wt{\mathcal V}_n = l_1 + \cdots l_n = k.
\eeq
Since the set of integers $\lambda=(l_1,\cdots,l_n)$ is invariant under the $GL(k,\C)$ transformation, each equivalent class $[ (\Psi,Z) ]$ has unique $\lambda$. Therefore, $\wt {\mathcal M}$ can be decomposed into the disjoint union of the spaces of equivalence classes $\wt{\mathcal M}_{\lambda}^{\rm tri}$ classified by $\lambda=(l_1,\cdots,l_n)$
\beq
\wt{\mathcal M} = \bigsqcup_{\lambda \in \Lambda} \wt{\mathcal M}_{\lambda}^{\rm tri}.
\eeq
We can determine ${\rm dim}_\C \, \wt{\mathcal V}_a$ by constructing 
the basis of $\wt {\mathcal V}_a$. 
It can be obtained inductively 
from the basis of $\wt {\mathcal V}_{a-1}$ 
by adding the vectors ${\bf e}_{a,1},\cdots,{\bf e}_{a,l_a}$.
Here, $l_a$ is the maximum number such that ${\bf e}_{a,l_a}$ is linearly independent of 
$\{ {\bf e}_{a,1}, \cdots , {\bf e}_{a,l_a-1} \}$ and any element of $\wt {\mathcal V}_{a-1}$, 
or equivalently, $l_a$ is the minimum number such that
\beq
{\bf e}_{a,l_a+q} ~\in~ \wt {\mathcal V}_{a-1} \cup {\rm span}(\{ {\bf e}_{a,p} \, | \, p=1,\cdots,l_a \})~~~\mbox{for $q=1,2,\cdots$}.
\label{eq:vec_l_aplusq}
\eeq
 
Next, let us show that $\wt{\mathcal M}_\lambda^{\rm tri}$ is a subspace of 
$\wt{\mathcal M}_\lambda$. 
Let $E_\lambda$ be the $k$-by-$k$ matrix whose 
row vectors are ${\bf e}_{a,p}$
\beq
E_\lambda \equiv 
\begin{pmatrix}
\hat {\bf e}_1(l_1)\\ \hat {\bf e}_2(l_2) \\ \vdots \\ \hat {\bf e}_n(l_n)  
\end{pmatrix} \quad~~ 
{\rm with} \quad~~
\hat {\bf e}_a(m) \equiv  
\begin{pmatrix}
{\bf e}_{a,1} \\ {\bf e}_{a,2} \\ \vdots \\ {\bf e}_{a,m}  
\end{pmatrix} 
\eeq
Since $E_\lambda$ is the basis of $\wt{\mathcal V}_n \cong \C^k$, 
there is an element of $GL(k,\C)$ such that
\beq
E_\lambda \rightarrow E_\lambda \, g = \mathbf 1_k, \hs{5} g \in GL(k,\C).
\eeq
After fixing the $GL(k,\C)$ redundancy as $E_\lambda = \mathbf 1_k$, 
${\bf e}_{a,p}$ take the form
\begin{align}
\begin{pmatrix}
{\bf e}_{a,1}\\  \vdots \\ {\bf e}_{a,l_a} \\ \hline  {\bf e}_{a,l_a+1}   
\end{pmatrix} =
\left(\begin{array}{c|c|c|c|c|c|c}
& & & & & & \\
{\bf 0} & ~\cdots~ &{\bf 0}& ~~ {\bf 1}_{l_a} ~~ & {\bf 0} & ~\cdots~ & {\bf 0} \\ 
& & & & & & \\ \hline
\vec T_{a,1}&\cdots&\vec T_{a,a-1}&\vec T_{a,a}&{\bf 0}&\cdots &{\bf 0}
\end{array}\right),
\label{eq:disjoint_tildeM_lambda}
\end{align}
where  ``$\bf 0$''s in ${\bf e}_{a,l_a+1}$ are due to the property in Eq.\eqref{eq:vec_l_aplusq}.
From the above set of row vectors $\{{\bf e}_{a,p} \,|\, a=1,\dots,n;\, p=1,\dots,l_a+1\}$,
the two matrices $\Psi,Z$ can be reconstructed as
\begin{align}
\Psi=
\begin{pmatrix}
{\bf e}_{1,1} \\ {\bf e}_{2,1} \\ \vdots \\{\bf e}_{n,1}
\end{pmatrix},
\hs{10}
Z = E_\lambda Z=
\begin{pmatrix}
\hat {\bf e}_{1}(l_1) \, Z \\ \hat {\bf e}_{2}(l_2) \, Z \\ \vdots\\ \hat {\bf e}_{n}(l_n) \, Z
\end{pmatrix} 
~~~ {\rm with} ~~~ \hat {\bf e}_{a}(l_a) \, Z = 
\begin{pmatrix}
{\bf e}_{a,2} \\ \vdots\\ {\bf e}_{a,l_a} \\{\bf e}_{a,l_a+1}
\end{pmatrix}. \label{eq:e2PsiZ}
\end{align}
From these expression, we find that $\Psi$ and $Z$ take the block lower triangular form 
$\Psi = \Psi_\lambda^{\rm tri}$ and $Z = Z_\lambda^{\rm tri}$ with
\beq
{\renewcommand{\arraystretch}{0.8}
{\setlength{\arraycolsep}{1.0mm} 
\Psi_\lambda^{\rm tri} = \ba{ccc} \psi_1 & & \\ & \ddots & \\  & & \psi_n \ea +
 \ba{cccc} \Psi_{11} & {\bf 0}& \cdots & {\bf 0} \\ \Psi_{21}& \Psi_{22} & \ddots & \vdots ~ \\ 
 \vdots ~  &&\ddots & {\bf 0}\\ \Psi_{n1} & \cdots &\cdots & \Psi_{nn} \ea, \hs{2}
Z_\lambda^{\rm tri} = \ba{ccc} Z_1 & & \\ & \ddots & \\  & & Z_n \ea + 
\ba{cccc} Z_{11} & {\bf 0}& \cdots & {\bf 0} \\  Z_{21}& Z_{22} & \ddots & \vdots ~ \\ 
 \vdots ~  &&\ddots & {\bf 0}\\ Z_{n1} & \cdots &\cdots & Z_{nn} \ea, 
}}
\eeq
with $\psi_a,\Psi_{ab},Z_a,Z_{ab}$ given in Eq.\eqref{eq:datainPsiZ}. 
Since $\wt {\cal M}_\lambda^{\rm tri}$ is the space of $(\Psi,Z)$ of these forms, it is a subspace of $\wt {\cal M}_\lambda$, the space of matrices of the form \eqref{eq:matrixform_tildeM}
\begin{align}
\wt {\cal M}_\lambda^{\rm tri} \subseteq 
\wt {\cal M}_\lambda.
\end{align} 
This fact implies that $\wt{\mathcal M}$ is a subspace of $\wt{\mathcal M}'$: 
\beq
\wt{\mathcal M} = \bigsqcup_{\lambda \in \Lambda} \wt{\mathcal M}_{\lambda}^{\rm tri} \subseteq \bigcup_{\lambda\in \Lambda} \wt{\cal M}_{\lambda} = \wt{\mathcal M}'.
\eeq
Since $\wt{\cal M}' \subseteq \wt{\cal M}$ and $\wt{\cal M}' \supseteq \wt{\cal M}$, we conclude that $\wt{\cal M}=\wt{\cal M}'$ and 
$\{(\wt{\phi}_\lambda, \wt{\cal M}_\lambda)\}$ gives an atlas of $\wt{\mathcal M}$. 

%%%%%%%%%%%%%%%%%%%%%%%%%%%%%%%%%%%%%%%
 \subsection%[Equivalence]
 {Equivalence of $\mathcal M$ and $\wt {\mathcal M}$}
\subsubsection%[Mapping through the half-ADHM mapping relation ]
{ Mapping from ${\cal M}_\lambda $ to $\wt {\cal M}_\lambda $
 through the half-ADHM mapping relation}
Let us recapitulate how to extract the data $(\Psi,Z)$ from $\mD(z), \mJ(z)$ 
through the half-ADHM mapping relation.
The half-ADHM mapping relation is given by
\beq
\mD(z) \Psi = \mJ(z) (z \mathbf 1_k - Z), 
\label{eq:hadhm}
\eeq
which is covariant under both of the $V$ transformation and $GL(k,\mathbb C)$ transformation.
Therefore, this relation defines a mapping between the equivalence classes $[\mD(z)] \in {\cal M} \mapsto [(\Psi,Z)] \in \wt {\cal M}$. 
For a point in the $\lambda=(l_1,l_2,\cdots,l_n)$-patch of $\cal M$,
where $\mD(z)$ takes the form \eqref{eq:M_patch}, 
the corresponding matrix $\mJ(z)$ is given by
\beq
{\renewcommand{\arraystretch}{1.0}
{\setlength{\arraycolsep}{1.0mm} 
\mJ_\lambda(z) = \ba{ccc} j_1 & & \\ & \ddots & \\ & & j_n \ea 
+
\ba{ccc} \mJ_{11} & \cdots & \mJ_{1n} \\ \vdots & \ddots & \vdots \\ \mJ_{n1} & \cdots & \mJ_{nn} \ea,}}
\label{eq:J_patch}
\eeq
where 
$j_a$ and $\mJ_{ab}$ are the following block matrices (row vectors with $l_a$ and $l_b$ components, respectively)
\beq
j_a = \bigg( z^{l_a-1},z^{l_a-2},\cdots,1 \bigg), \hs{10}
\mJ_{ab} = - \left( \sum_{m=1}^{l_b-1} T_{ab,m+1} z^{m-1} \,,\, \sum_{m=1}^{l_b-2}T_{ab,m+2} z^{m-1} \,,\, \cdots \,,\ T_{ab,l_b} \,,\, 0 \right). 
\eeq
Since $\mJ(z)$ also depends on the gauge choice of $GL(k,\mathbb C)$,  
taking $\mJ(z)$ in this form implicitly means that 
we have chosen a certain coordinate patch for $\wt {\cal M}$. 
Noting that $\mJ_{ab}$ satisfies the relation 
\beq
{\renewcommand{\arraystretch}{0.8}
{\setlength{\arraycolsep}{1.1mm} 
z \mJ_{ab} = - P_{ab} ( 1, 0 , \cdots, 0 ) + ( T_{ab,1} , T_{ab,2} , \cdots, T_{ab,l_b} ) + \mJ_{ab} \ba{c|ccc} 0 & 1 & & \\ \vdots & & \ddots & \\ 0 & & & 1 \\ \hline 0 & \cdots & \cdots & 0 \ea,}}
\eeq
we can read off $(\Psi,Z)$ from the half-ADHM mapping relation \eqref{eq:hadhm} and the resulting 
$(\Psi,Z)$ turns out to be exactly equal to those  of $\wt \phi_{\lambda}(\wt {\cal M}_\lambda)$
given in Eq.\eqref{eq:matrixform_tildeM}. 
Therefore the half-ADHM mapping relation defines a one-to-one map between the coordinate patches
\begin{align}
&{i}_\lambda: \phi_\lambda( {\cal M}_\lambda)  \quad \mapsto \quad 
\wt \phi_\lambda( \wt {\cal M}_\lambda) ,\\
&\wt \phi_\lambda^{-1} \circ i_\lambda \circ \phi_\lambda :\quad  {\cal M}_\lambda  \quad \to 
\quad  \wt  {\cal M}_\lambda 
\end{align}
for all $\lambda \in \Lambda$.
Furthermore, since the half-ADHM mapping relation is covariant under the $V$-transformation $\mD'(z)=V(z) \mD(z)$  and the $GL(k, \mathbb C)$ transformation, $(\Psi',Z')=(\Psi g, g^{-1}Z g)$ as
\begin{align}
\mD(z)\Psi =\mJ(z)(z{\bf 1}-Z)\quad \Rightarrow \quad \mD'(z)\Psi' =\mJ'(z)(z{\bf 1}-Z'),\quad 
{\rm with~} \mJ'(z)=V(z)\mJ(z) g,
\end{align}
we find that the following diagram commutes:  for  
$\lambda, \lambda' \in \Lambda,  \lambda\not=\lambda'$,
\begin{align}
    \mD(z) \in \phi_\lambda ({\cal M}_\lambda \cap{\cal M}_{\lambda'}) & \hspace{2cm}\stackrel{i_\lambda}{\to} & (\Psi,Z)\in \wt
     \phi_\lambda (\wt {\cal M}_\lambda \cap\wt {\cal M}_{\lambda'}) \nn
    f_{\lambda' \lambda}=\phi_{\lambda'}\circ  \phi_{\lambda}^{-1}  ~\downarrow \qquad &
    &\qquad \downarrow ~\wt f_{\lambda' \lambda}=\wt \phi_{\lambda'}\circ  \wt \phi_{\lambda}^{-1} \nn
     \mD'(z) \in \phi_\lambda ({\cal M}_\lambda \cap{\cal M}_{\lambda'}) & \hspace{2cm}\stackrel{i_{\lambda'}}{\to} & (\Psi',Z')\in 
     \wt \phi_{\lambda'} (\wt {\cal M}_\lambda \cap\wt {\cal M}_{\lambda'}) 
\end{align}
This fact indicates that  two different maps $i_\lambda, i_{\lambda'}$ with the domain ${\cal M}_\lambda \cap {\cal M}_{\lambda'}$ are consistent 
\begin{align}
\wt \phi_\lambda^{-1} \circ  i_\lambda \circ \phi_\lambda 
=\wt \phi_{\lambda'}^{-1} \circ  i_{\lambda'} \circ \phi_{\lambda'}\quad : \quad 
{\cal M}_\lambda \cap {\cal M}_{\lambda'}  \quad \to \quad \wt {\cal M}_\lambda \cap \wt {\cal M}_{\lambda'}
\end{align}
and the transition function $\wt f_{\lambda'\lambda}$ on the manifold $\wt {\cal M}$ induced by the 
$GL(k,\mathbb C)$ transformation is consistent with 
 $f_{\lambda'\lambda}$ on the manifold ${\cal M}$ induced by the $V$-transformation,
\begin{align}
\wt f_{\lambda'\lambda} =i_{\lambda'}\circ f_{\lambda'\lambda}\circ i_\lambda^{-1}
 : \quad 
\wt \phi_\lambda(\wt {\cal M}_\lambda \cap \wt {\cal M}_{\lambda'})  \quad \to \quad 
\wt \phi_{\lambda'}(\wt {\cal M}_\lambda \cap \wt {\cal M}_{\lambda'}).
\end{align}
Therefore, we conclude that   the two complex manifolds ${\cal M}$ and $\wt {\cal M}$ are 
 biholomorphically equivalent
\begin{align}
{\cal M} \simeq \wt {\cal M},
\end{align}
and the ADHM  relation defines the unique  one-to-one map  between them.

\subsubsection{Examples}
Let us see an example in the case of $k=2$ and $n=2$.
In the $(2,0)$ patch ($\lambda=(2,0)$), the matrix $\mD(z)$ and the corresponding matrix $\mJ(z)$ are given by (see Eqs.\,\eqref{eq:M_patch} and \eqref{eq:J_patch})
\begin{eqnarray}
\mD_{(2,0)}(z)= \left( \begin{array}{cc} z^2-a_2 z - a_1  & 0\\ - b_2 z - b_1& 1 \end{array} \right), \hs{10}
\mJ_{(2,0)}(z) =\left( \begin{array}{cc} z-a_2 & 1 \\ - b_2& 0
\end{array} \right). 
\end{eqnarray}
From the half-ADHM mapping relation, the data $(Z,\Psi)$ can be read off as
\begin{eqnarray}
\Psi_{(2,0)} = \left( \begin{array}{cc} 1 & 0 \\ b_1 & b_2 \end{array} \right),\quad
Z_{(2,0)} = \left( \begin{array}{cc} 0 & 1\\ a_1 & a_2\end{array} \right).
\end{eqnarray}
These matrices correspond to those in Eq.\,\eqref{eq:matrixform_tildeM} with 
\begin{eqnarray}
\psi_1= \left( \begin{array}{cc} 1 & 0 \end{array} \right),\quad
\Psi_{11} = \left( \begin{array}{cc} 0 & 0 \end{array} \right), \quad
\Psi_{21} = \left( \begin{array}{cc} b_1 & b_2 \end{array} \right),\quad
Z_1 = \left( \begin{array}{cc} 0 & 1 \\ 0 & 0 \end{array} \right), \quad
Z_{11} = \left( \begin{array}{cc} 0 & 0 \\ a_1 & a_2 \end{array} \right).
\end{eqnarray}
One can move to the $(1,1)$ patch ($\lambda=(1,1)$) by performing the transformation
\beq
\mD_{(1,1)}(z) = V(z) \mD_{(2,0)}(z), \hs{5}
\mJ_{(1,1)}(z) = V(z) \mJ_{(2,0)}(z) g, \hs{10}
V(z) = \ba{cc} 0 & - v \\ b_2 & z- \tilde u \ea, 
\eeq
as 
\begin{eqnarray}
\mD_{(1,1)}(z)= \left( \begin{array}{cc} z-u & - v  \\ - \tilde v & z - \tilde u \end{array} \right), \hs{10}
\mJ_{(1,1)}(z) =\left( \begin{array}{cc} 1 & 0 \\ 0 & 1 
\end{array} \right),
\end{eqnarray}
where
\beq
u = - \frac{b_1}{b_2}, \hs{5} \tilde u = a_2 + \frac{b_1}{b_2}, \hs{5}
v = \frac{1}{b_2}, \hs{5} \tilde v = a_1 b_2 - b_1 \left( a_2 + \frac{b_1}{b_2} \right).
\eeq
The corresponding half-ADHM data are given by
\begin{eqnarray}
\Psi_{(1,1)} = \left( \begin{array}{cc} 1 & 0\\ 0 & 1 \end{array} \right),\quad
Z_{(1,1)} = \left( \begin{array}{cc} u & v \\ \tilde v & \tilde u \end{array} \right).
\end{eqnarray}
These matrices correspond to those in Eq.\,\eqref{eq:matrixform_tildeM} with
\begin{eqnarray}
\psi_1=\psi_2= 1, \quad \Psi_{ij} = 0 , \quad Z_i = 0 ~~ (i=1,2), \quad Z_{11} = u, \quad Z_{12}=v, \quad Z_{21} = \tilde v, \quad Z_{22}= \tilde u.
\end{eqnarray}
These data in the $(1,1)$ patch is related to that $(2,0)$ patch by
a $GL(2,\C)$ transformation
\beq
\Psi_{(1,1)} = \Psi_{(2,0)} g, \hs{5} 
Z_{(1,1)} = g^{-1} Z_{(2,0)} g, \hs{10}
g = \ba{cc} 1 & 0 \\ u & v \ea \in GL(2,\C). 
\eeq
Similarly, we can show that the data $(0,2)$ patch is also related to $(Z_{(2,0)},\Psi_{(2,0)})$ and $(Z_{(1,1)},\Psi_{(1,1)})$ by $GL(2,\C)$ transformation. 

%%%%%%%%%%%%%%%%%%%%%%%%%%%%%%%%%%%%%%%%%%%%%

\section{Condition of non-singular instanton solution}\label{sec:non-singular}
In this appendix, we discuss the condition for $\xi(z)$ and $(Z,\Psi,\wt \Psi)$ to be the data for non-singular sigma model instanton solutions.
Throughout this section, we focus on the $L=1$ case.

\subsection{Local-semilocal decomposition} \label{sec:LSdecomposition}
Let us define the moduli space of semi-local vortices ${\cal M}_{\rm semi \,}{}^{n,m}_{k}$
as the following subspace of the moduli space of vortices ${\cal M}_{\rm vtx \,}{}^{n,m}_k$
\begin{align}
{\cal M}_{\rm semi \,}{}^{n,m}_{k}\equiv \big\{ \,[\xi(z)]\in  {\cal M}_{\rm vtx \,}{}^{n,m}_k \, \big| \,
 {\rm rank}(\xi(z))=n ~\mbox{for}~\forall z\in \mathbb C \, \big\} \subset  {\cal M}_{\rm vtx \,}{}^{n,m}_k.
\end{align}
In this subsection, we prove 
four lemmas \ref{lmm:1}-\ref{lmm:4} 
which will be used in the later subsections.

\begin{lmm}\label{lmm:1}
Any matrix $\xi(z)$ such that $[\xi(z)] \in {\cal M}_{\rm vtx \,}{}^{n,m}_k$ can be decomposed as
\begin{align}
 \xi(z) = \mD^{\rm lc}(z) \, \xi^{\rm sm}(z) 
 ~~~~ \mbox{with} ~~~ \mD^{\rm lc}(z)\in {\cal G}_{n,k-l}[z] 
 ~~~ \mbox{and} ~~~ [\xi^{\rm sm}(z)]\in {\cal M}_{\rm semi \,}{}^{n,m}_{l}
\end{align}
where $l$ is an integer such that $0 \leq l \leq k$.
\end{lmm}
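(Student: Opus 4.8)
The plan is to peel off the rank-dropping points of $\xi(z)$, collecting them into the $n$-by-$n$ factor $\mD^{\rm lc}(z)$ while leaving behind a representative $\xi^{\rm sm}(z)$ of full rank. First I would observe that the rank-dropping set is finite: since $[\xi(z)]\in{\cal M}_{\rm vtx\,}{}^{n,m}_k$ we may fix the $V$-transformation so that $\det\mD(z)$ is a monic polynomial of degree $k$, so that $\xi(z)=(\mD(z),\wt\mD(z))$ already has rank $n$ off the zeros of $\det\mD(z)$, and the points where ${\rm rank}\,\xi(z)<n$ form a subset $\{z_\alpha\}$ of these $k$ zeros. Away from $\{z_\alpha\}$ the row span of $\xi(z)$ defines a holomorphic map into the Grassmannian $G(n,N)$, and $\xi^{\rm sm}(z)$ is to be the full-rank matrix realizing the holomorphic extension of this map across the $z_\alpha$.

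Next comes the local factorization. Near a fixed $z_\alpha$ (take $z_\alpha=0$) I would work over the ring $\mathcal O_0$ of germs of holomorphic functions, which is a discrete valuation ring with uniformizer $z$. The Smith normal form of the $n$-by-$N$ matrix $\xi(z)$ over $\mathcal O_0$ yields $U(z)\in GL(n,\mathcal O_0)$ and $\Pi(z)\in GL(N,\mathcal O_0)$ with $U(z)\,\xi(z)\,\Pi(z)=(\,{\rm diag}(z^{a_1},\dots,z^{a_n})\mid 0\,)$ and $0\le a_1\le\cdots\le a_n$ finite, because $\xi$ has generic rank $n$. This gives the local splitting $\xi(z)=\mD_\alpha(z)\,\eta_\alpha(z)$ with $\mD_\alpha(z)=U(z)^{-1}{\rm diag}(z^{a_1},\dots,z^{a_n})$ of local vortex multiplicity $\sum_i a_i$ at $z_\alpha$, and $\eta_\alpha(z)$ of rank $n$ at $z_\alpha$. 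I would then assemble a global entire $n$-by-$n$ factor $\mD^{\rm lc}(z)$ whose local behaviour at each $z_\alpha$ matches $\mD_\alpha(z)$ up to an invertible germ and which is invertible elsewhere; since $\mathbb C$ is Stein, such an interpolation of matrix germs exists, and it can be carried out concretely using the entire-function division and B\'ezout statements of Theorems \ref{thm:division} and \ref{thm:Euclid}. Setting $\xi^{\rm sm}(z):=\mD^{\rm lc}(z)^{-1}\xi(z)$, the matching guarantees that $\xi^{\rm sm}(z)$ has no poles and has rank $n$ at every $z_\alpha$, hence everywhere, so $\partial_{\bar z}\xi^{\rm sm}=0$ and ${\rm rank}\,\xi^{\rm sm}(z)=n$ for all $z$, i.e. $[\xi^{\rm sm}(z)]\in{\cal M}_{\rm semi\,}{}^{n,m}_l$ by the criterion \eqref{eq:cond_rank}.

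Finally comes the bookkeeping: writing $\xi^{\rm sm}=(\mD^{\rm sm},\wt\mD^{\rm sm})$ and taking determinants in $\mD=\mD^{\rm lc}\mD^{\rm sm}$ gives $\det\mD(z)=\det\mD^{\rm lc}(z)\,\det\mD^{\rm sm}(z)$, so that $\det\mD^{\rm sm}$ has $l$ zeros and $\det\mD^{\rm lc}$ has $k-l$ zeros with $0\le l\le k$; normalizing $\mD^{\rm lc}(z)\to{\bf 1}_n$ as $|z|\to\infty$ by absorbing the residual $V$-freedom keeps $\xi^{\rm sm}$ compatible with the boundary condition and confirms $\mD^{\rm lc}(z)\in{\cal G}_{n,k-l}[z]$. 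The main obstacle is precisely the global patching step: producing a single entire $\mD^{\rm lc}(z)$ carrying the prescribed local Smith data at all $z_\alpha$ simultaneously, and verifying that $\mD^{\rm lc}(z)^{-1}\xi(z)$ is genuinely entire rather than merely meromorphic. I expect to handle this either abstractly, via the removable-singularity (extension) theorem for holomorphic maps into the compact Grassmannian together with the triviality of holomorphic bundles over the Stein manifold $\mathbb C$, or constructively by an induction on $k$ that peels off one unit of local vorticity at a time through a localized $V$-transformation, reducing each step to the scalar division lemma already established in Theorem \ref{thm:division}.
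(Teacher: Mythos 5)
Your proposal is correct in substance but reaches the factorization by a genuinely different route from the paper. The paper's proof is a one-shot \emph{global} reduction: it applies the row-reduction machinery of Appendix \ref{subsec:patch_L=1} (built on Theorems \ref{thm:division} and \ref{thm:Euclid}, i.e.\ division and B\'ezout over entire functions) to the columns of $\xi(z)$, producing a single $V(z)\in{\cal G}_{n+m,0}[z]$ with $\xi(z)=(\mD^{\rm lc}(z),{\bf 0})\,V(z)$; the factorization $\xi=\mD^{\rm lc}\cdot({\bf 1}_n,{\bf 0})V$ and the full-rank property of $\xi^{\rm sm}=({\bf 1}_n,{\bf 0})V$ then fall out immediately from the global invertibility of $V(z)$, with no local analysis at all. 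You instead go local-to-global: Smith normal form over the discrete valuation ring $\mathcal O_{z_\alpha}$ at each rank-dropping point, followed by a patching of the local matrix germs into one entire $\mD^{\rm lc}(z)$. That patching step, which you correctly identify as the crux, does go through (the local data define a rank-$n$ locally free subsheaf of $\mathcal O^n$, trivial over the open Riemann surface $\mathbb C$ by the Oka--Grauert principle, and a global frame yields $\mD^{\rm lc}$), but it imports machinery the paper deliberately avoids; your alternative of inductively peeling off one unit of local vorticity via a localized $V$-transformation is closer in spirit to what the paper actually does, just performed one zero at a time rather than all at once. What your approach buys is a cleaner geometric picture (the local vortex factor as the Smith data of $\xi$ at its degeneration points, and $\xi^{\rm sm}$ as the holomorphic extension of the Grassmannian map); what the paper's buys is self-containment, since everything reduces to the two entire-function lemmas already proved. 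Your degree bookkeeping via $\det\mD=\det\mD^{\rm lc}\det\mD^{\rm sm}$ matches the paper's (note only that the statement's labels $l$ and $k-l$ are swapped relative to the body of the paper's own proof, a purely notational matter).
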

\begin{proof}[Proof of Lemma \ref{lmm:1}]
Applying the method used in Sec.\,\ref{subsec:patch_L=1} to the rows of $\xi(z) = (\mD(z), \wt \mD(z))$ instead of columns,
we can find matrices $V(z) \in {\cal G}_{n+m,0}[z]$ and $\mD^{\rm lc}(z) \in {\cal G}_{n,l}[z]~(0 \leq l \leq k)$ such that 
\begin{align}
\xi(z) = (\mD^{\rm lc}(z), {\bf 0}) \, V(z).
\end{align}
Then, this matrix $\xi(z)$ can be rewritten as 
\begin{align}
\xi(z) = \mD^{\rm lc}(z) \, \xi^{\rm sm}(z) 
~~~ \mbox{with} ~~~ \xi^{\rm sm}(z) = ({\bf 1}_n, {\bf 0}) \, V(z).
\end{align}
Since $\det V(z) \not = 0$ for all $z \in \C$, it follows that ${\rm rank}(\xi^{\rm sm}(z)) = n$ for all $z \in \C$ and hence $[\xi^{\rm sm}(z)] \in {\cal M}_{\rm semi \,}{}^{n,m}_{k-l}$. 
\end{proof}
Note that the decomposition is not unique 
since the following transformation does not change the matrix $\xi(z)$
\beq
\mD^{\rm lc}(z) \rightarrow \mD^{\rm lc}(z) V'(z)^{-1}, \hs{5}
\xi_{\rm sm}(z) \rightarrow V'(z) \xi_{\rm sm}(z) ~~\mbox{with}~~ V'(z) \in {\cal G}_{n,0}[z]. 
\label{eq:lc_sm_Vtransf}
\eeq
We can show the the equivalence class $[\xi_{\rm sm}(z)]$ is unique. 
\begin{lmm}\label{lmm:2}
The decomposition in Lemma \ref{lmm:1} is unique up to $V$-equivalence relations \eqref{eq:lc_sm_Vtransf}.
\end{lmm}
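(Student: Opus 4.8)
The plan is to prove uniqueness of the decomposition
\[
\xi(z) = \mD^{\rm lc}(z)\,\xi^{\rm sm}(z)
\]
by showing that any two such decompositions are related by a transformation of the form \eqref{eq:lc_sm_Vtransf}. Suppose we are given two decompositions
\[
\xi(z) = \mD^{\rm lc}(z)\,\xi^{\rm sm}(z) = \mD^{\rm lc}{}'(z)\,\xi^{\rm sm}{}'(z),
\]
where $\mD^{\rm lc}(z),\mD^{\rm lc}{}'(z)\in\mathcal G_{n,k-l}[z]$ and $[\xi^{\rm sm}(z)],[\xi^{\rm sm}{}'(z)]\in\mathcal M_{\rm semi}{}^{n,m}_{l}$. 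The goal is to exhibit a single $V'(z)\in\mathcal G_{n,0}[z]$ with
\[
\mD^{\rm lc}{}'(z) = \mD^{\rm lc}(z)\,V'(z)^{-1},\qquad
\xi^{\rm sm}{}'(z) = V'(z)\,\xi^{\rm sm}(z).
\]
The obvious candidate is the matrix $V'(z)$ defined, wherever $\det\mD^{\rm lc}(z)\neq 0$, by $V'(z) = \mD^{\rm lc}(z)^{-1}\mD^{\rm lc}{}'(z)$. The entire content of the lemma is that this $V'(z)$ extends to an element of $\mathcal G_{n,0}[z]$, i.e.\ that it is holomorphic everywhere on $\C$ and has nowhere-vanishing determinant.

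First I would establish that $\xi^{\rm sm}(z)$ and $\xi^{\rm sm}{}'(z)$ determine the same flag (here a point in the Grassmannian, since $L=1$) at every $z$ where both local factors are invertible. At a generic point $z_0$ with $\det\mD^{\rm lc}(z_0)\neq 0$ and $\det\mD^{\rm lc}{}'(z_0)\neq 0$, the two semi-local matrices span the same row space as $\xi(z_0)$ (since the local factors are invertible there), so there is a unique invertible $n$-by-$n$ matrix $V'(z_0)$ with $\xi^{\rm sm}{}'(z_0)=V'(z_0)\,\xi^{\rm sm}(z_0)$; this $V'(z_0)$ coincides with $\mD^{\rm lc}(z_0)^{-1}\mD^{\rm lc}{}'(z_0)$. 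Because $\xi^{\rm sm}(z)$ has full rank $n$ for all $z\in\C$ (the defining property of the semi-local moduli space), I can solve for $V'(z)$ from the equation $\xi^{\rm sm}{}'(z)=V'(z)\,\xi^{\rm sm}(z)$ by contracting with $\xi^{\rm sm}(z)^\dagger$,
\[
V'(z) = \xi^{\rm sm}{}'(z)\,\xi^{\rm sm}(z)^\dagger\bigl(\xi^{\rm sm}(z)\,\xi^{\rm sm}(z)^\dagger\bigr)^{-1},
\]
which shows that $V'(z)$ is defined and holomorphic on all of $\C$, since $\det\bigl(\xi^{\rm sm}(z)\xi^{\rm sm}(z)^\dagger\bigr)\neq 0$ everywhere by full-rankness. (One must first argue that the holomorphic relation $\xi^{\rm sm}{}'=V'\,\xi^{\rm sm}$ actually holds globally; this follows by analytic continuation from the generic-point identity together with the fact that both sides are entire.)

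The main obstacle — and the step I would treat most carefully — is showing that this globally holomorphic $V'(z)$ has nowhere-vanishing determinant, so that $V'(z)^{-1}$ is also entire and hence $V'(z)\in\mathcal G_{n,0}[z]$. The symmetric statement applies to $\xi^{\rm sm}(z)=V'(z)^{-1}\xi^{\rm sm}{}'(z)$, from which the same argument produces an entire $V''(z)=\mD^{\rm lc}{}'(z)^{-1}\mD^{\rm lc}(z)$; since $V''(z)V'(z)=\mathbf 1_n$ on the dense open set where everything is invertible and both are entire, we get $V''(z)=V'(z)^{-1}$ globally, forcing $\det V'(z)$ to be a nowhere-vanishing entire function. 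Equivalently, one compares degrees: from $\mD^{\rm lc}{}'(z)=\mD^{\rm lc}(z)V'(z)^{-1}$ and $\det\mD^{\rm lc}=\det\mD^{\rm lc}{}'$ having the same number $k-l$ of zeros (both lie in $\mathcal G_{n,k-l}[z]$), we conclude $\det V'(z)$ vanishes nowhere. Once $V'(z)\in\mathcal G_{n,0}[z]$ is in hand, the two relations above are precisely \eqref{eq:lc_sm_Vtransf}, establishing uniqueness up to the stated $V$-equivalence. I would also remark that the freedom in $V'(z)$ is exactly the residual group $\mathcal G_{n,0}[z]$ acting on the local factor, so that the equivalence class $[\xi^{\rm sm}(z)]\in\mathcal M_{\rm semi}{}^{n,m}_{l}$ is genuinely well defined, which is the assertion the later lemmas will rely on.
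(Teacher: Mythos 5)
Your argument is correct in substance, but it takes a genuinely different route from the paper's. The paper exploits the explicit normal form produced in the proof of Lemma \ref{lmm:1} (and stated as Lemma \ref{lmm:3}): writing $\xi_i^{\rm sm}(z)=(\mathbf 1_n,\mathbf 0)V_i(z)$ with $V_i(z)\in\mathcal G_{n+m,0}[z]$, it reduces the uniqueness claim to the identity $(\mD^{\rm lc}_1(z),\mathbf 0)=(\mD^{\rm lc}_2(z),\mathbf 0)\,V_2(z)V_1(z)^{-1}$ and reads off that $V_2V_1^{-1}$ is block lower-triangular with upper-left block $V_{12}(z)\in\mathcal G_{n,0}[z]$; the nowhere-vanishing of $\det V_{12}$ then comes for free from $\det(V_2V_1^{-1})\neq 0$. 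You instead work intrinsically with the defining full-rank property of $\xi^{\rm sm}(z)$: the transition matrix, holomorphic away from the finitely many zeros of the local determinant, agrees there with the globally smooth expression $\xi^{\rm sm}{}'\xi^{\rm sm\dagger}(\xi^{\rm sm}\xi^{\rm sm\dagger})^{-1}$, hence is bounded near those zeros and extends to an entire matrix by Riemann's removable singularity theorem, and the symmetric construction of its inverse (or the zero count of $\det\mD^{\rm lc}=\det\mD^{\rm lc}{}'\cdot\det V'$, which is legitimate once both factors are known to be entire) forces $\det V'$ to vanish nowhere. Your version is more self-contained, since it never invokes the $(\mathbf 1_n,\mathbf 0)V(z)$ representation, at the price of a complex-analytic extension argument where the paper has a purely algebraic block computation. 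One bookkeeping slip: with your definition $V'(z)=\mD^{\rm lc}(z)^{-1}\mD^{\rm lc}{}'(z)$ the relations come out as $\mD^{\rm lc}{}'=\mD^{\rm lc}V'$ and $\xi^{\rm sm}{}'=V'^{-1}\xi^{\rm sm}$, i.e.\ your candidate is the inverse of the element appearing in \eqref{eq:lc_sm_Vtransf}; since $\mathcal G_{n,0}[z]$ is a group this is harmless, but the conventions should be made consistent.
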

\begin{proof}[Proof of Lemma \ref{lmm:2}]
Let us assume that $[\xi(z)]\in {\cal M}_{\rm semi \,}{}^{n,m}_{k}$ is rewritten in the two different ways as
\begin{align}
\xi(z) = \mD^{\rm lc}_1(z) \xi_1^{\rm sm}(z) = \mD^{\rm lc}_2(z) \xi_2^{\rm sm}(z)
~~~ \mbox{with}~~~
\xi_i^{\rm sm}(z) = ({\bf 1}_n, {\bf 0}) V_i(z)~~~(i=1,2).
\end{align}
The above equation can be rewritten as
\begin{align}
(\mD^{\rm lc}_1(z), {\bf 0}) = (\mD^{\rm lc}_2(z), {\bf 0}) \, V_2(z) V_1(z)^{-1}, \quad 
\Big( V_2(z) V_1(z)^{-1} \in {\cal G}_{n+m,0}[z] \Big).
\end{align}
This implies that $V_2(z) V_1(z)^{-1}$ takes the form
\begin{align}
V_2(z) V_1(z)^{-1} = \ba{cc} V_{12}(z) & {\bf 0} \\ \ast & \ast \ea,
\end{align}
where $V_{12}(z) \in {\cal G}_{n,0}[z]$ and $\ast$'s are undetermined entries.  
From this equation, we find that  $\mD_1^{\rm lc}(z)=\mD_2(z)^{\rm lc} \, V_{12}(z)$ and hence 
$\xi_1^{\rm sm}(z)$ and $\xi_2^{\rm sm}(z)$ are related as
\begin{align}
\xi_1^{\rm sm}(z) =  V_{12}(z)^{-1}\xi_2^{\rm sm}(z).
\end{align}
Therefore, the decomposition $\xi(z) = \mD^{\rm lc}(z) \xi^{\rm sm}(z)$ 
defines an unique equivalent class $[\xi^{\rm sm}(z)] \in {\cal M}_{\rm semi \,}{}^{n,m}_{k}$.
%\begin{align}
%(\mD^{\rm lc}(z), \xi''(z)) \sim  (\mD^{\rm lc}(z) V'(z)^{-1}, V'(z) \xi''(z)) \quad {\rm with ~} V'(z)\in {\cal G}_{n,0}[z], \label{eq:Vdecompose}
%\end{align}
\end{proof}

\begin{lmm}\label{lmm:3}
Any element $[\xi^{\rm sm}(z)] \in {\cal M}_{\rm semi \,}{}^{n,m}_{k}$ can be written as
\begin{align}
\xi^{\rm sm}(z)=({\bf 1}_n,{\bf 0}) \, V(z)\quad {\rm with~} V(z) \in {\cal G}_{n+m,0}[z].
\end{align}
%which implies 
%\begin{align}
%{\cal M}_{\rm semi \,}{}^{n,m}_{k}  \subset  \frac{{\cal G}_{n+m,0}[z]}{{\cal G}_{n,0}[z]\times {\cal G}_{m,0}[z]}.
%\end{align}
\end{lmm}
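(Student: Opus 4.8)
The plan is to read the identity $\xi^{\rm sm}(z)=(\mathbf 1_n,\mathbf 0)\,V(z)$ as a \emph{completion} statement: it asks us to exhibit the given $n$-by-$(n+m)$ matrix $\xi^{\rm sm}(z)$ as the first $n$ rows of a square matrix $V(z)\in\mathcal G_{n+m,0}[z]$, i.e. a matrix of entire functions whose determinant never vanishes. The only hypothesis available is $[\xi^{\rm sm}(z)]\in{\cal M}_{\rm semi \,}{}^{n,m}_{k}$, which by definition means ${\rm rank}\,\xi^{\rm sm}(z)=n$ for every $z\in\mathbb C$; equivalently, the maximal ($n$-by-$n$) minors of $\xi^{\rm sm}(z)$ have no common zero on $\mathbb C$.

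The core of the argument is a Smith-type reduction over the ring of entire functions, which is the matrix generalization of Theorem~\ref{thm:Euclid}. First I would show that there exist $A(z)\in\mathcal G_{n,0}[z]$ and $B(z)\in\mathcal G_{n+m,0}[z]$ such that
\[
A(z)\,\xi^{\rm sm}(z)\,B(z)=(\,\Delta(z)\,|\,\mathbf 0\,),\qquad \Delta(z)={\rm diag}(d_1(z),\dots,d_n(z)).
\]
This is obtained by iterating the entire-function Euclidean step of Theorem~\ref{thm:Euclid} to clear a column and then a row, exactly as polynomial entries were triangularized in Sec.~\ref{subsec:patch_L=1}; every elementary transformation used lies in some $\mathcal G_{\bullet,0}[z]$, so $A$ and $B$ have nowhere-vanishing determinant by construction. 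Because $A(z)$ and $B(z)$ are invertible at each individual point $z$, the pointwise rank is preserved, so ${\rm rank}(\,\Delta(z)\,|\,\mathbf 0\,)=n$ for all $z$. The only nonvanishing maximal minor of $(\,\Delta\,|\,\mathbf 0\,)$ is $\det\Delta=d_1\cdots d_n$, hence $d_1(z)\cdots d_n(z)\neq0$ for all $z$, which forces each $d_i(z)$ to be nowhere zero and therefore a unit of $\mathcal O(\mathbb C)$; in particular $\Delta(z)\in\mathcal G_{n,0}[z]$.

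With the reduction in hand the completion is immediate. Writing $(\,\Delta\,|\,\mathbf 0\,)=\Delta\,(\,\mathbf 1_n\,|\,\mathbf 0\,)$ and inverting $A,B$,
\[
\xi^{\rm sm}(z)=A(z)^{-1}\Delta(z)\,(\,\mathbf 1_n\,|\,\mathbf 0\,)\,B(z)^{-1}
=(\,\mathbf 1_n\,|\,\mathbf 0\,)\begin{pmatrix} A(z)^{-1}\Delta(z) & \mathbf 0 \\ \mathbf 0 & \mathbf 1_m \end{pmatrix}B(z)^{-1},
\]
so that $V(z)\equiv\begin{pmatrix} A^{-1}\Delta & \mathbf 0 \\ \mathbf 0 & \mathbf 1_m \end{pmatrix}B^{-1}$ does the job: it is a product of elements of $\mathcal G_{n+m,0}[z]$, hence lies in $\mathcal G_{n+m,0}[z]$, and its first $n$ rows reproduce $\xi^{\rm sm}(z)$.

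The main obstacle I anticipate is the Smith reduction itself. For a single two-component column this is precisely Theorem~\ref{thm:Euclid}, but promoting it to an $n$-by-$(n+m)$ matrix requires an honest induction in which one clears a pivot column, then the corresponding row, and argues that the size of the remaining block (or the number of zeros of its entries) strictly decreases so the process halts—this is the statement that $\mathcal O(\mathbb C)$ is an elementary divisor domain, and making the termination and the bookkeeping of $\mathcal G_{\bullet,0}[z]$ membership fully rigorous is where the real work lies. An alternative, more conceptual route would bypass the reduction by an induction on $m$: one would only need to adjoin a single entire row $s(z)$ with $\begin{pmatrix}\xi^{\rm sm}(z)\\ s(z)\end{pmatrix}$ of rank $n+1$ everywhere (reducing $(n,m)\mapsto(n+1,m-1)$ and closing the induction), the existence of such a transverse row following from the no-common-zero condition together with the Bézout property of entire functions already exploited in Theorems~\ref{thm:division} and \ref{thm:Euclid}.
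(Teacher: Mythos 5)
Your proof is correct and takes essentially the same route as the paper: the paper invokes its Lemma \ref{lmm:1} (a column-only reduction $\xi(z)=(\mD^{\rm lc}(z),\mathbf 0)\,V(z)$ built by iterating the entire-function Euclidean step of Theorem \ref{thm:Euclid}), observes that the everywhere-full-rank condition forces $\det\mD^{\rm lc}(z)$ to have no zeros so that $\mD^{\rm lc}(z)\in\mathcal G_{n,0}[z]$, and absorbs this factor into $V(z)$ --- precisely your ``isolate the square factor, show it is a unit, absorb it'' argument. The only difference is that you perform a two-sided Smith-type reduction to a diagonal $\Delta(z)$ where the paper needs only the one-sided triangularization of Lemma \ref{lmm:1}, so the termination bookkeeping you flag as the main obstacle is correspondingly lighter in the paper's version.
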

\begin{proof}[Proof of Lemma \ref{lmm:3}]
If $\det \mD^{\rm lc}(z) $ has a zero, 
then rank of $\xi(z)$ decreases at that point.       
Equivalently, if $\xi(z)$ has the maximal rank everywhere ($[\xi(z) ]\in {\cal M}_{\rm semi \,}{}^{n,m}_{k}$), 
then $\det \mD^{\rm lc}(z)$ must have no zero 
($\mD^{\rm lc}(z) \in {\cal G}_{n,0}[z]$) and 
hence $\mD^{\rm lc}(z)$ can be absorbed by the $V$-transformation $V(z) \in {\cal G}_{n,0}[z]$.   
Therefore, for any $[\xi(z) ]\in {\cal M}_{\rm semi \,}{}^{n,m}_{k}$, there is a representative $\xi(z)$ of the form 
\begin{align}
\xi(z) = ({\bf 1}_n,{\bf 0}) \, V(z).
\end{align}
\end{proof}  

\begin{lmm}\label{lmm:4}
If two equivalent classes $[\xi_i(z)] = [(\mD_i(z),\wt \mD_i(z)] \in {\cal M}_{\rm semi \,}{}^{n,m}_{k_i}~(i=1,2)$ satisfy
\begin{align}
\mD_1^{-1}(z) \, \wt \mD_1(z) = \mD_2^{-1}(z) \, \wt \mD_2(z) \quad \hbox{or equivalently} \quad  
\mD_1^{-1}(z) \, \xi_1(z)=\mD_2^{-1}(z) \, \xi_2(z),
\label{eq:lmm4_assumption}
\end{align}
then, they are equivalent 
\begin{align}
k_1=k_2, \quad\quad [\xi_1(z)]=[\xi_2(z)].
\end{align}
\end{lmm}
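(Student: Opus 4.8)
\textbf{Proof proposal for Lemma \ref{lmm:4}.}

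The plan is to exploit the uniqueness of the local-semilocal decomposition established in Lemmas \ref{lmm:1} and \ref{lmm:2}, combined with the characterization of semilocal matrices in Lemma \ref{lmm:3}. The key observation is that $\varphi(z) \equiv \mD_i^{-1}(z)\,\wt \mD_i(z)$ is the inhomogeneous coordinate of the sigma-model instanton, and the assumption \eqref{eq:lmm4_assumption} says the two configurations give the \emph{same} holomorphic map into the Grassmannian. First I would use Lemma \ref{lmm:3} to write each semilocal representative as $\xi_i(z) = (\mathbf 1_n, \varphi_i(z)) \,\mD_i(z)$ up to $V$-transformation; more precisely, since $[\xi_i(z)]\in {\cal M}_{\rm semi\,}{}^{n,m}_{k_i}$, I can choose representatives with $\xi_i(z) = (\mathbf 1_n,{\bf 0})\,V_i(z)$ for some $V_i(z)\in {\cal G}_{n+m,0}[z]$, so that $\mD_i(z)$ absorbs no zeros and $\mD_i^{-1}(z)\xi_i(z) = (\mathbf 1_n, \varphi_i(z))$ with $\varphi_i = \mD_i^{-1}\wt \mD_i$.

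Under the hypothesis $\varphi_1(z) = \varphi_2(z)$, the two matrices $\mD_1^{-1}\xi_1$ and $\mD_2^{-1}\xi_2$ coincide as $n$-by-$N$ matrices with entire (in fact rational) entries. The natural next step is to define the transition matrix $V(z) \equiv \mD_2(z)\,\mD_1(z)^{-1}$ and show that it is a genuine element of ${\cal G}_{n,0}[z]$, i.e. it is entire and nowhere rank-deficient, so that $\xi_2(z) = V(z)\,\xi_1(z)$ and hence $[\xi_1(z)] = [\xi_2(z)]$. Writing out $V(z)\,\xi_1(z) = \mD_2\,\mD_1^{-1}\xi_1 = \mD_2\,(\mathbf 1_n,\varphi_1) = \mD_2\,(\mathbf 1_n,\varphi_2) = \mD_2\,\mD_2^{-1}\xi_2 = \xi_2$, the algebraic identity $\xi_2 = V\xi_1$ is immediate once $V$ is established to be regular; the degree count for $\det V = \det\mD_2/\det\mD_1$ then forces $k_1 = k_2$ since both $\xi_i$ are semilocal (their $\det\mD_i$ have the same number $k_i$ of zeros, read off from the vortex number).

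The main obstacle I anticipate is controlling the analytic behavior of $V(z) = \mD_2(z)\,\mD_1(z)^{-1}$: a priori this could have poles at the zeros of $\det\mD_1(z)$. The resolution is that the product $V(z)\,\xi_1(z) = \xi_2(z)$ is entire of maximal rank everywhere (by the semilocal property of $\xi_2$ via Lemma \ref{lmm:3}), which constrains the potential singularities of $V$; since $\xi_1$ also has maximal rank everywhere, any apparent pole of $V$ at a zero $z_0$ of $\det\mD_1$ would force a rank drop in $\xi_2$ at $z_0$ unless it cancels. I would make this precise by invoking Lemma \ref{lmm:2}: applying the (essentially unique) decomposition of Lemma \ref{lmm:1} to the common matrix $\mD_1^{-1}\xi_1 = \mD_2^{-1}\xi_2$, both $\xi_1$ and $\xi_2$ are recovered as representatives of the \emph{same} semilocal equivalence class times a trivial local factor, and the uniqueness up to \eqref{eq:lc_sm_Vtransf} directly yields $k_1 = k_2$ and $[\xi_1(z)] = [\xi_2(z)]$. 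This route avoids a hands-on pole analysis and instead leverages the uniqueness machinery already proven, which I expect to be the cleanest way to close the argument.
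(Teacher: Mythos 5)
Your overall strategy coincides with the paper's: represent each semilocal class via Lemma~\ref{lmm:3} as $\xi_i(z)=(\mathbf 1_n,\mathbf 0)V_i(z)$, form the transition matrix $V(z)=\mD_2(z)\mD_1(z)^{-1}$, and reduce the lemma to showing $V(z)\in\mathcal G_{n,0}[z]$. The paper closes this step by rewriting the hypothesis \eqref{eq:lmm4_assumption} as $\mD_2\mD_1^{-1}(\mathbf 1_n,\mathbf 0)=(\mathbf 1_n,\mathbf 0)\,V_2V_1^{-1}$ with $V_2V_1^{-1}\in\mathcal G_{n+m,0}[z]$: the matrix $\mD_2\mD_1^{-1}$ is thereby exhibited as the upper-left block of an entire, everywhere-invertible matrix whose upper-right block vanishes, and the block-triangularity argument of Lemma~\ref{lmm:2} gives entirety and a nowhere-vanishing determinant simultaneously, whence $k_1=k_2$ and $\xi_2=V_{12}\xi_1$.

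The step you ultimately commit to as "the cleanest way to close the argument" has a genuine gap: you propose to apply the decomposition of Lemma~\ref{lmm:1} (and the uniqueness of Lemma~\ref{lmm:2}) to the "common matrix" $\mD_1^{-1}\xi_1=\mD_2^{-1}\xi_2=(\mathbf 1_n,\varphi(z))$. That matrix is not in the domain of those lemmas: its entries generically have poles at the zeros of $\det\mD_i$, while Lemmas~\ref{lmm:1}--\ref{lmm:2} decompose entire matrices representing classes in $\mathcal M_{\rm vtx\,}{}^{n,m}_{k}$, and the factorization $\xi_i=\mD_i\cdot(\mathbf 1_n,\varphi)$ is not of the form $\mD^{\rm lc}(z)\,\xi^{\rm sm}(z)$ with $\xi^{\rm sm}$ entire and of maximal rank everywhere, so the uniqueness up to \eqref{eq:lc_sm_Vtransf} cannot be invoked. (The theorem in Appendix~\ref{sec:inst-sol} repairs exactly this by first multiplying by a polynomial $p(z)$ that clears the poles before applying Lemma~\ref{lmm:1} --- but that theorem cites Lemma~\ref{lmm:4} for its uniqueness claim, so you cannot lean on it here without circularity.) By contrast, the direct argument you sketch and then set aside is the one that actually works and should be completed: at a would-be pole $z_0$ of $V$, full row rank of $\xi_1(z_0)$ provides a local holomorphic right inverse $R$ with $V=\xi_2R(\xi_1R)^{-1}$ regular; the symmetric argument applied to $V^{-1}=\mD_1\mD_2^{-1}$ shows $\det V$ is also nowhere zero --- a point your degree count for $\det V=\det\mD_2/\det\mD_1$ tacitly assumes --- and only then do $k_1=k_2$ and $[\xi_1(z)]=[\xi_2(z)]$ follow. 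Either finish that direct analysis or adopt the paper's block-matrix route; as written, the proof is not closed.
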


\begin{proof}[Proof of Lemma \ref{lmm:4}]
By applying Lemma \ref{lmm:3} to $[\xi_i(z)] \in {\cal M}_{\rm semi \,}{}^{n,m}_{k_i}$, 
the assumption \eqref{eq:lmm4_assumption}
can be further rewritten as
\begin{align}
\mD_1(z)^{-1}({\bf 1}_n,{\bf 0}) V_1(z)=\mD_2(z)^{-1}({\bf 1}_n,{\bf 0})V_2(z).
\end{align}
This equation can be rewritten as 
\begin{align}
\mD_2(z)\mD_1(z)^{-1}({\bf 1}_n,{\bf 0})=({\bf 1}_n,{\bf 0})V_2(z)V_1(z)^{-1},\quad 
\Big( V_2(z)V_1(z)^{-1} \in {\cal G}_{n+m,0}[z] \Big).
\end{align}
Applying the same argument as Lemma \ref{lmm:2}, 
we conclude that $\mD_2(z)=V_{12} (z) \mD_1(z)$ with $V_{12}(z)\in {\cal G}_{n,0}[z]$,
and hence $\xi_2(z)=V(z) \xi_1(z)$, which means 
that they belong to the same equivalent class.
\end{proof}

%%%%%%%%%%%%%%%%%%%%%%%%%%%%%%%%%%%%%%%%%
\subsection{Condition of semilocal vortices}\label{sec:cond_semi}
For $[\xi(z)] \in {\cal M}_{\rm vtx \,}{}_k^{n,m}$, 
the condition that $[\xi(z)]$ belongs to the semilocal vortex moduli space ${\cal M}_{\rm semi \,}{}_k^{n,m}$ is 
given by 
\begin{align}
{\cal C}_{\rm semi}(\xi(z)) ~ : ~ \det \xi(z) \xi(z)^\dagger \not=0 ~~~ \mbox{for} ~~~ \forall z \in \mathbb C.
\end{align}
This condition on $\xi(z)$ can be translated into the following condition on the corresponding half-ADHM data $(Z,\Psi,\wt \Psi)$
\begin{align}
{\cal C}_{\rm free}(Z,\wt \Psi) ~ : ~ \mbox{If $\exists \, \vec v \in \C^k~\mbox{(row vector)}$ s.t. $\vec v Z^{p-1} \wt \Psi = 0$ for $\forall p \in \mathbb N ~ \Rightarrow ~ \vec v=0$}.
\end{align}
In this subsection, 
we prove the equivalence of the two conditions ${\cal C}_{\rm semi}(\xi(z))$ and ${\cal C}_{\rm free}(Z,\wt \Psi)$\footnote{
The proof here is a more concise version of the one given in Appendix C of \cite{Eto:2007yv}.}.

\begin{thm}
The conditions ${\cal C}_{\rm semi}(\xi(z))$ and ${\cal C}_{\rm free}(Z,\wt \Psi)$ are equivalent. 
\end{thm}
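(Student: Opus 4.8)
The plan is to establish the equivalence of $\mathcal C_{\rm semi}(\xi(z))$ and $\mathcal C_{\rm free}(Z,\wt\Psi)$ by relating the rank of $\xi(z)$ to the existence of a nonzero left null vector that survives through the half-ADHM mapping relation. The key observation is that $\det\xi(z)\xi(z)^\dagger=0$ at some point $z=z_0$ is equivalent to the existence of a nonzero row vector $\vec u\in\C^n$ annihilating $\xi(z_0)$ from the left, i.e. $\vec u\,\xi(z_0)=\vec u\,(\mD(z_0),\wt\mD(z_0))=0$. So the failure of $\mathcal C_{\rm semi}$ produces, at some point, a simultaneous left null vector of both $\mD(z_0)$ and $\wt\mD(z_0)$. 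The strategy is to convert this geometric degeneracy condition into the algebraic condition on $(Z,\wt\Psi)$ using the relations $\mD(z)\Psi=\mJ(z)(z\mathbf 1_k-Z)$ and $\wt\mD(z)=\mJ(z)\wt\Psi$, exactly paralleling the logic already used to establish $\mathcal C_1=\mathcal C_2$ (i.e.\ $F_1=F_2$) in Appendix \ref{sec:ADHMdata}.

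First I would prove the contrapositive in each direction. To show $\neg\mathcal C_{\rm semi}\Rightarrow\neg\mathcal C_{\rm free}$: suppose ${\rm rank}\,\xi(z_0)<n$, so there is a nonzero $\vec u$ with $\vec u\,\mD(z_0)=0$ and $\vec u\,\wt\mD(z_0)=0$. Since $\det\mD(z)=\det(z\mathbf 1_k-Z)$ has $k$ zeros and $\vec u\,\mD(z_0)=0$ forces $\det\mD(z_0)=0$, the point $z_0$ is an eigenvalue of $Z$; I would then extract an associated (generalized) eigenvector $\vec w$ of $Z$ and use $\wt\mD(z)=\mJ(z)\wt\Psi$ together with $\mD(z)^{-1}\mJ(z)=\Psi(z\mathbf 1_k-Z)^{-1}$ to translate $\vec u\,\wt\mD(z_0)=0$ into a statement $\vec v\,Z^{p-1}\wt\Psi=0$ for all $p$ with $\vec v\neq0$. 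The natural candidate is $\vec v$ built from the left eigenvector of $Z$ at $z_0$, since $\vec v\,(z\mathbf 1_k-Z)^{-1}\wt\Psi$ then develops a pole only if the residue is nonzero, and vanishing of $\vec u\,\wt\mD$ kills that residue. Conversely, for $\neg\mathcal C_{\rm free}\Rightarrow\neg\mathcal C_{\rm semi}$, I would start from a nonzero $\vec v$ with $\vec v\,Z^{p-1}\wt\Psi=0$ for all $p$, use the Cayley-Hamilton-type argument (as in the $F_2\subseteq F_1$ step) to produce an invariant subspace of $Z^{\rm T}$, pick an eigenvalue $z_0$ living in it, and reconstruct a left null vector $\vec u$ of $\xi(z_0)$, thereby forcing ${\rm rank}\,\xi(z_0)<n$.

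The cleanest route is probably to package both directions through the identity
\beq
\mD(z)^{-1}\,\xi(z) \;=\; \big(\mathbf 1_n,\ \varphi(z)\big),\qquad
\varphi(z)=\Psi(z\mathbf 1_k-Z)^{-1}\wt\Psi,
\eeq
which holds by \eqref{eq:inst-sol}. The semilocal condition $\mathcal C_{\rm semi}$ is then equivalent to $\varphi(z)$ having no pole that signals a genuine rank drop, i.e.\ that the apparent poles of $\Psi(z\mathbf 1_k-Z)^{-1}\wt\Psi$ at eigenvalues of $Z$ are spurious. I would make this precise by a partial-fraction/residue analysis: at each eigenvalue $z_0$ of $Z$, a rank drop of $\xi(z_0)$ corresponds exactly to a left vector in the kernel of the residue matrix of $(z\mathbf 1_k-Z)^{-1}\wt\Psi$, which is precisely what $\mathcal C_{\rm free}(Z,\wt\Psi)$ forbids. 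This ties the theorem directly to the already-established $GL(k,\C)$-free criterion and lets me reuse the block-triangular normal-form argument from Appendix \ref{sec:ADHMdata}.

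The main obstacle I anticipate is the bookkeeping when $Z$ is non-diagonalizable (coincident vortex positions): here the simple ``left eigenvector'' picture must be replaced by generalized eigenvectors and Jordan blocks, and I must check that the ``for all $p\in\mathbb N$'' quantifier in $\mathcal C_{\rm free}$ correctly captures vanishing of \emph{all} principal-part coefficients of $\varphi(z)$ at $z_0$, not merely the leading one. Establishing that $\vec v\,(z\mathbf 1_k-Z)^{-1}\wt\Psi\equiv0$ as a rational function is equivalent to $\vec v\,Z^{p-1}\wt\Psi=0$ for all $p$ (a finite set of conditions by Cayley-Hamilton) is the delicate algebraic lemma, and it is exactly the $\wt\Psi$-analogue of the $F_1=F_2$ equivalence; I expect to invoke that lemma almost verbatim, with $\Psi$ replaced by $\wt\Psi$ and left/right actions interchanged, so the degeneration case reduces to the already-proven statement rather than requiring a fresh argument.
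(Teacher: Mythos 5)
Your route is genuinely different from the paper's. The paper argues globally: it factorizes $\xi(z)=\mD_{\rm lc}(z)\,\xi_{\rm sm}(z)$ via the local--semilocal decomposition (Lemmas \ref{lmm:1}--\ref{lmm:4}), computes the half-ADHM data of the product explicitly, and reads off that a nontrivial local factor forces $Z$ into block-lower-triangular form with $\wt\Psi=(\mathbf 0;\wt\Psi_{\rm sm})$, and conversely; the uniqueness Lemma \ref{lmm:4} then closes the second direction by a vortex-number mismatch $k'<k$. Your argument is pointwise and spectral. Its forward half is in fact cleaner than you suggest: if $\vec u\,\xi(z_0)=0$ with $\vec u\neq 0$, set $\vec v:=\vec u\,\mJ(z_0)$; then $\vec v\neq 0$ because $\mathbf 1_n=\mD(z)P^{\mD}+\mJ(z)P^{\mJ}$, while $\vec v\,(z_0\mathbf 1_k-Z)=\vec u\,\mD(z_0)\Psi=0$ and $\vec v\,\wt\Psi=\vec u\,\wt\mD(z_0)=0$, so $\vec v\,Z^{p-1}\wt\Psi=z_0^{p-1}\,\vec v\,\wt\Psi=0$ for all $p$. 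No generalized eigenvectors are needed there, and in the converse the Jordan-block worry is also moot, since the subspace of offending row vectors is invariant under right multiplication by $Z$ and therefore contains an honest left eigenvector $\vec v_0$ with $\vec v_0Z=z_0\vec v_0$ and $\vec v_0\wt\Psi=0$.

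The genuine gap is the step you describe as ``reconstruct a left null vector $\vec u$ of $\xi(z_0)$'' in the direction $\neg\,\mathcal C_{\rm free}\Rightarrow\neg\,\mathcal C_{\rm semi}$. What you must show is that the injection $\vec u\mapsto\vec u\,\mJ(z_0)$ from the left kernel of $\mD(z_0)$ into the left kernel of $z_0\mathbf 1_k-Z$ is onto (or at least that its image contains a nonzero vector annihilated by $\wt\Psi$ on the right); when the geometric multiplicity of $z_0$ exceeds one this is not automatic, and without it you cannot pull $\vec v_0$ back to a $\vec u$ killing both $\mD(z_0)$ and $\wt\mD(z_0)$. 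The gap is fillable: the unimodularity of the matrix $\mathcal N$ of Sec.\,\ref{sec:kahlerquotient} ($\det\mathcal N=1$) gives ${\rm dim}\,{\rm ker}_L\,\mD(z_0)={\rm dim}\,{\rm ker}_L(z_0\mathbf 1_k-Z)$, so the injection is a bijection; concretely $\vec u=\vec v_0P^{\mJ}$ does the job. But this is the crux of that direction and must be supplied, not asserted. Separately, your ``cleanest route'' mischaracterizes $\mathcal C_{\rm semi}$: semilocal configurations generically have genuine poles in $\varphi(z)=\Psi(z\mathbf 1_k-Z)^{-1}\wt\Psi$ (e.g.\ $\varphi=c/(z-a)$ with $c\neq 0$ for a single semilocal vortex), so the condition is \emph{not} that the poles at eigenvalues of $Z$ are spurious; taken literally, that packaging would prove the wrong statement. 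Stick to the contrapositive formulation in your second paragraph, supplemented by the surjectivity lemma above.
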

\begin{proof}[Proof of ${\cal C}_{\rm free}(Z,\wt \Psi) \to {\cal C}_{\rm semi}(\xi(z))$] 
Let us prove the contrapositive $\neg \, {\cal C}_{\rm semi}(\xi(z)) \to \neg \, {\cal C}_{\rm free}(Z,\wt \Psi)$. If $[\xi(z)]\in {\cal M}_{\rm vtx \,}{}_k^{n,m}$ does not satisfy the condition ${\cal C}_{\rm semi}(\xi(z))$, then Lemma \ref{lmm:1} in Appendix \ref{sec:LSdecomposition} implies that the matrix $\xi(z)$ can be decomposed as
\begin{align}
\xi(z)=(\mD(z),\wt \mD(z))=\mD_{\rm lc}(z)(\mD_{\rm sm}(z),\wt \mD_{\rm sm}(z)) = \mD_{\rm lc}(z)\xi_{\rm sm}(z) ,
\end{align}
where the $n$-by-$n$ matrices $\mD(z), \mD_{\rm lc}(z),\mD_{\rm sm}(z)$ 
and the $n$-by-$m$ matrices $\wt \mD(z), \wt \mD_{\rm sm}(z)$ satisfy
\begin{align}
\det \mD(z)={\cal O}(z^k),\quad 
\det \mD_{\rm lc}(z)={\cal O}(z^l),\quad \det \mD_{\rm sm}(z)={\cal O}(z^{k'}),\nn
\det \mD(z)^{-1}\wt \mD(z)={\cal O}(z^{-1}),\quad \det \mD_{\rm sm}(z)^{-1}\wt \mD_{\rm sm}(z)={\cal O}(z^{-1}),
\end{align}
with a certain nonzero positive integer $l>0$ and $k'=k-l\ge 0 $. 
For these matrices, we can obtain $(Z,\Psi,\wt \Psi)$, $(Z_{\rm lc},\Psi_{\rm lc})$ and $(Z_{\rm sm},\Psi_{\rm sm},\wt \Psi_{\rm sm})$ through the half-ADHM mapping relations 
\begin{alignat}{2}
z \mJ(z) &=\mD(z) \Psi+\mJ(z)Z, & \wt \mD(z)  &=\mJ(z)\wt \Psi, \label{eq:hADHM1_AppendixD} \\
z \mJ_{\rm lc}(z) &=\mD_{\rm lc}(z) \Psi_{\rm lc}+\mJ_{\rm lc}(z)Z_{\rm lc}, \label{eq:hADHM2_AppendixD} \\
z \mJ_{\rm sm}(z) &=\mD_{\rm sm}(z) \Psi_{\rm sm}+\mJ_{\rm sm}(z)Z_{\rm sm}, \quad \quad & \wt \mD_{\rm sm}(z) &=\mJ_{\rm sm}(z)\wt \Psi_{\rm sm}, \label{eq:hADHM3_AppendixD}
\end{alignat} 
where $\mJ(z)$, $\mJ_{\rm lc}(z)$ and $\mJ_{\rm sm}(z)$ are matrices satisfying
\begin{align}
\mD(z)^{-1} \mJ(z)={\cal O}(z^{-1}),\quad\mD_{\rm lc}(z)^{-1} \mJ_{\rm lc}(z)={\cal O}(z^{-1}),
\quad \mD_{\rm sm}(z)^{-1} \mJ_{\rm sm}(z)={\cal O}(z^{-1}).
\end{align}
Note that the matrices are related as
\begin{align}
\mD(z)=\mD_{\rm lc}(z)\mD_{\rm sm}(z),\quad \wt \mD(z)=\mD_{\rm lc}(z)\wt \mD_{\rm sm}(z).
\end{align}
The matrix $\mJ(z)$ can always be chosen as
\begin{align}
\mJ(z)=(\mJ_{\rm lc}(z),\mD_{\rm lc}(z)\mJ_{\rm sm}(z)).
\end{align}
We can check the condition $\mD(z)^{-1} \mJ(z) = {\cal O}(z^{-1})$ as \begin{align}
\mD(z)^{-1}\mJ_{\rm lc}(z)=\mD_{\rm sm}(z)^{-1}{\cal O}(z^{-1})={\cal O}(z^{-1}), \hs{5}
\mD(z)^{-1} \mD_{\rm lc}(z) \mJ_{\rm sm}(z) = \mD_{\rm sm}(z)^{-1}  \mJ_{\rm sm}(z) = {\cal O}(z^{-1}),
\end{align}
where we have used the transformation \eqref{eq:lc_sm_Vtransf} to fix $\mD_{\rm sm}(z)$ to the form \eqref{eq:M_patch} 
so that it satisfies $\mD_{\rm sm}(z)^{-1}={\cal O}(1)$. 
From the half-ADHM mapping relations Eqs.\,\eqref{eq:hADHM1_AppendixD}-\eqref{eq:hADHM3_AppendixD}
we find that
\begin{align}
z \mJ(z) \ = \ (z \mJ_{\rm lc}(z),\mD_{\rm lc}(z)(z \mJ_{\rm sm}(z))) \ = \ (\mD_{\rm lc}(z) \Psi_{\rm lc}+\mJ_{\rm lc}(z)Z_{\rm lc},\,\mD_{\rm lc}(z)\mD_{\rm sm}(z) \Psi_{\rm sm}+\mD_{\rm lc}(z)\mJ_{\rm sm}(z)Z_{\rm sm} ). \label{eq:lsdata0}
\end{align}
Using the constant matrices $P_{\rm sm}^\mD,P_{\rm sm}^\mJ$ defined by
\begin{align}
{\bf 1}_n=\mD_{\rm sm}(z) P^\mD_{\rm sm}+\mJ_{\rm sm}(z)P^\mJ_{\rm sm},
\end{align}
Eq.\,\eqref{eq:lsdata0} can be further rewritten as
\begin{align}
z \mJ(z) =\mD_{\rm lc}(z)\mD_{\rm sm}(z) (P^\mD_{\rm sm}\Psi_{\rm lc}, \,\Psi_{\rm sm})+
(\mJ_{\rm lc}(z),\mD_{\rm lc}(z)\mJ_{\rm sm}(z))
\begin{pmatrix}
Z_{\rm lc} & {\bf 0}\\ P_{\rm sm}^\mJ \Psi_{\rm lc}&Z_{\rm sm}
\end{pmatrix}.
\label{eq:lsdata1}
\end{align}
In addition, we find the following relation for $\wt \mD(z)$
\begin{align}
\mJ(z)\wt \Psi=\wt \mD(z)=\mD_{\rm lc}(z)\wt \mD_{\rm sm}(z)
=\mD_{\rm lc}(z)\mJ_{\rm sm}(z)\wt \Psi_{\rm sm}
=(\mJ_{\rm lc}(z),\mD_{\rm lc}(z)\mJ_{\rm sm}(z))
\begin{pmatrix}
{\bf 0}\\  \wt \Psi_{\rm sm}
\end{pmatrix}.\label{eq:lsdata2}
\end{align}
From Eqs.\,\eqref{eq:lsdata1} and \eqref{eq:lsdata2},
the half-ADHM data $(Z,\Psi, \wt \Psi)$ can be read off as
\begin{align}
\Psi= (P^\mD_{\rm sm}\Psi_{\rm lc}, \,\Psi_{\rm sm}),\quad 
Z=\begin{pmatrix}
Z_{\rm lc} & {\bf 0}\\ P_{\rm sm}^\mJ \Psi_{\rm lc}&Z_{\rm sm} 
\end{pmatrix},\quad 
\wt \Psi=\begin{pmatrix}
{\bf 0}\\  \wt \Psi_{\rm sm}
\end{pmatrix}.
\end{align}
Using these data, we find that
\begin{align}
Z^{p-1} \wt \Psi = \begin{pmatrix}
{\bf 0}\\ Z_{\rm sm}^{p-1} \wt \Psi_{\rm sm}
\end{pmatrix},\quad {\rm with~} p=1,2,\cdots.
\end{align}
From this expression, we find that there are nonzero row vectors such that $\vec v Z^{p-1} \wt \Psi = 0$.
Therefore, we find that the contrapositive ``$\neg \, {\cal C}_{\rm semi}(\xi(z)) \to \neg \, {\cal C}_{\rm free}(Z,\wt \Psi)$'' is true 
and hence the lemma ``$ {\cal C}_{\rm free}(Z, \wt \Psi) \to  {\cal C}_{\rm semi}(\xi(z))$'' is also true. 
\end{proof}
%%%%%%%%%%%%%%%%%%%%%%%%%%%%%%%%%%%%%%%%%%
\begin{proof}[Proof of ${\cal C}_{\rm semi}(\xi(z)) \to {\cal C}_{\rm free}(Z,\wt \Psi)$]
Let us prove the contrapositive $\neg \, {\cal C}_{\rm free}(Z,\wt \Psi) \to \neg \, {\cal C}_{\rm semi}(\xi(z))$. 
If $GL(k,\mathbb C)$ does not freely act on  $(Z,\wt \Psi)$,  
that is, 
there exists  a certain non-zero row vector $\vec v$ satisfying 
\begin{align}
\vec v Z^{p-1} \wt \Psi = 0~~\mbox{for}~~\forall p \in \mathbb N \quad (\neg \, {\cal C}_{\rm free}(Z,\wt \Psi)~\mbox{is true}),
\end{align}
$(Z,\Psi,\wt \Psi)$ can be transformed by a $GL(k,\mathbb C)$ transf  into the form 
\begin{align}
\Psi=(\Psi_{\rm lc}, \Psi_{\rm sm}),\quad 
Z=\begin{pmatrix}
Z_{\rm lc} & {\bf 0}\\ W_{\rm ls} & Z_{\rm sm}
\end{pmatrix}, 
\quad \wt \Psi =\begin{pmatrix}
{\bf 0}\\  \wt \Psi_{\rm sm}
\end{pmatrix},
\label{eq:D33}
\end{align}
where $Z_{\rm sm}, \Psi_{\rm sm}$ and $\wt \Psi_{\rm sm}$ 
are $k'$-by-$k'$ $(k'<k)$, $n$-by-$k'$ and $k'$-by-$m$ matrices, respectively. 
The set of matrices $(Z_{\rm sm}, \Psi_{\rm sm},\wt \Psi_{\rm sm})$ can be regarded as the half-ADHM data satisfying ${\cal C}_{\rm free}(Z_{\rm sm},\wt \Psi_{\rm sm})$ with a smaller vortex number $k'(<k)$ since there is no $k'$-component column vector $\vec v$ such that $\Psi_{\rm sm} Z_{\rm sm}^{p-1} \vec v=0$
\begin{align}
\Psi_{\rm sm} Z_{\rm sm}^{p-1} \vec v=0 \quad \to \quad
\Psi Z^{p-1} 
\begin{pmatrix}
{\bf 0}\\ \vec v
\end{pmatrix}=\begin{pmatrix}
{\bf 0}\\ \Psi_{\rm sm} Z_{\rm sm}^{p-1} \vec v
\end{pmatrix}={\bf 0}\quad \to \quad \vec v=0.
\end{align}
According to the lemma ``$ {\cal C}_{\rm free}(Z, \wt \Psi) \to  {\cal C}_{\rm semi}(\xi(z))$'' shown above, the condition ${\cal C}_{\rm free}(Z_{\rm sm},\wt \Psi_{\rm sm})$
immediately indicates that the equivalence class of the corresponding matrix $[\xi_{\rm sm}(z)]$ is an element of ${\cal M}_{\rm semi \,}{}_{k'}^{n,m}$.
Furthermore, we can show that for $\xi_{\rm sm}(z)=(\mD_{\rm sm}(z),\wt \mD_{\rm sm}(z))$ and $\xi=(\mD(z),\wt \mD(z))$ corresponding to $(Z,\Psi,\wt \Psi)$,
\begin{align}
\mD_{\rm sm}(z)^{-1}\wt \mD_{\rm sm}(z) = \Psi_{\rm sm} (z{\bf 1}_{k'}-Z_{\rm sm})^{-1}\wt \Psi_{\rm sm} =
\Psi (z{\bf 1}_k-Z)^{-1}\wt \Psi = \mD(z)^{-1}\wt \mD(z),
\label{eq:D35}
\end{align}
where we have used Eqs.\,\eqref{eq:hADHM1_AppendixD},  \eqref{eq:hADHM2_AppendixD} and the relation $\Psi Z^{p-1}\wt \Psi =\Psi_{\rm sm} Z^{p-1}_{\rm sm}\wt \Psi_{\rm sm}$ which follows from Eq.\,\eqref{eq:D33}. The relation $\mD_{\rm sm}(z)^{-1}\wt \mD_{\rm sm}(z)=\mD(z)^{-1}\wt \mD(z)$ implies that $[\xi(z)] \not \in {\cal M}_{\rm semi \,}{}_{k}^{n,m}$. 
This is because if the opposite is true ($[\xi(z)] \in {\cal M}_{\rm semi \,}{}_{k}^{n,m}$), 
Lemma \ref{lmm:4} implies that 
the relation \eqref{eq:D35} leads to 
$[\xi(z)]=[\xi_{\rm sm}(z)]$ and $k=k'$, 
which is inconsistent with $k'<k$.
Thus, we find that ``$ \neg \, {\cal C}_{\rm free}(Z,\wt \Psi) \to \neg \, {\cal C}_{\rm semi}(\xi(z))$'', and hence the lemma ``$ {\cal C}_{\rm semi}(\xi(z)) \to  {\cal C}_{\rm free}(Z, \wt \Psi) $'' is shown. 
\end{proof}
%%%%%%%%%%%%%%%%%%%%%%%%%%%%%%%%%%%%%%%%%%%%%%%%%%%
%%%%%%%%%%%%%%%%%%%%%%%%%%%%%%%%%%%%%%%%%%%%%%%%%
\subsection{Instanton solutions in the Grassmannian sigma model}
\label{sec:inst-sol}
Any semilocal vortex solution becomes a instanton solution in the sigma model limit $g \rightarrow \infty$. 
In this subsection, we show that 
there is actually a one-to-one correspondence 
between the semilocal vortex and instanton solutions.
\begin{thm} 
Let $\varphi=\varphi(z,\bar z)$ be 
an $n$-by-$m$ matrix valued field 
(inhomogeneous coordinates of $G(n,n+m)$) 
on the base space $\mathbb C$.
If $\varphi$ satisfies the BPS instanton equation
\begin{align}
\partial_{\bar z} \varphi(z,\bar z) = 0, 
\quad\lim_{|z|\to \infty}\varphi(z,\bar z)=0,
\label{eq:sm_eq}
\end{align}
and each matrix entry of $\varphi(z)$ has 
a finite number of poles, 
one can uniquely determine 
the corresponding equivalent class $[\xi(z)] = [(\mD(z),\wt \mD(z))] \in {\cal M}_{\rm semi \,}{}^{n,m}_k$ 
and the half-ADHM data $[(Z,\Psi,\wt \Psi)]$. 
Explicitly, the instanton solution 
$\varphi(z)$ can be always written as
\begin{align}
\varphi(z)=\mD(z)^{-1}\wt \mD(z)=\Psi (z{\bf 1}_k-Z)^{-1}\wt \Psi.
\end{align}
\end{thm}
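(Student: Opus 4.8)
The plan is to split the statement into an existence part (every admissible $\varphi$ arises from a semi-local class together with its half-ADHM data) and a uniqueness part (the class and the data are fixed by $\varphi$), and to reduce both to the lemmas already established in Appendix~\ref{sec:non-singular} together with the biholomorphism $\mathcal M\simeq\wt{\mathcal M}$ of Appendix~\ref{sec:ZPsiPatches}. The first thing I would record is that the hypotheses force $\varphi$ to be a \emph{strictly proper rational} matrix function: each entry is meromorphic on $\C$ with finitely many poles of finite order and tends to $0$ as $|z|\to\infty$, so subtracting the finitely many principal parts leaves an entire, bounded function vanishing at infinity, hence identically zero by Liouville's theorem. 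Thus $\varphi$ equals the sum of its principal parts, a rational matrix with $\varphi(z)=\mathcal O(z^{-1})$.

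For existence I would first manufacture \emph{some} polynomial factorization. Let $p(z)$ be the monic least common multiple of the entrywise denominators of $\varphi$; then $\mD_0(z)=p(z)\mathbf 1_n$ and $\wt\mD_0(z)=p(z)\varphi(z)$ have polynomial entries, with $\wt\mD_0=\mathcal O(z^{\deg p-1})$ since $\varphi=\mathcal O(z^{-1})$, and $\det\mD_0=p^n$ has degree $k_0=n\deg p$, so $[\xi_0]=[(\mD_0,\wt\mD_0)]\in{\cal M}_{\rm vtx\,}{}^{n,m}_{k_0}$ and $\mD_0^{-1}\wt\mD_0=\varphi=\mathcal O(z^{-1})$. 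This $\xi_0$ need not be full rank everywhere, so I would invoke Lemma~\ref{lmm:1} to write $\xi_0=\mD^{\rm lc}(z)\,\xi^{\rm sm}(z)$ with $[\xi^{\rm sm}]\in{\cal M}_{\rm semi\,}{}^{n,m}_{k}$. Writing $\xi^{\rm sm}=(\mD,\wt\mD)$, the common left factor cancels,
\begin{align}
\varphi=\mD_0^{-1}\wt\mD_0=(\mD^{\rm lc}\mD)^{-1}(\mD^{\rm lc}\wt\mD)=\mD^{-1}\wt\mD,
\end{align}
so $[\xi^{\rm sm}]$ is a semi-local class realizing $\varphi$. The half-ADHM data $(Z,\Psi,\wt\Psi)$ is then read off through the mapping relation \eqref{eq:hADHM}, and the identity $\varphi=\mD^{-1}\wt\mD=\Psi(z\mathbf 1_k-Z)^{-1}\wt\Psi$ is exactly \eqref{eq:inst-sol}. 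The semi-local condition ${\cal C}_{\rm semi}$ on $\xi^{\rm sm}$ translates, by the theorem of Appendix~\ref{sec:cond_semi}, into the freeness condition ${\cal C}_{\rm free}(Z,\wt\Psi)$, while freeness on $\{Z,\Psi\}$ holds by construction because the columns of $\mJ$ are independent.

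For uniqueness, suppose $[\xi_1],[\xi_2]\in{\cal M}_{\rm semi}$ both satisfy $\mD_1^{-1}\wt\mD_1=\mD_2^{-1}\wt\mD_2=\varphi$. This is precisely the hypothesis of Lemma~\ref{lmm:4}, which yields $k_1=k_2$ and $[\xi_1]=[\xi_2]$; hence $k$ is determined by $\varphi$ and the semi-local class is unique. Transporting this fact along the one-to-one correspondence $\mathcal M\simeq\wt{\mathcal M}$ of Appendix~\ref{sec:ZPsiPatches} shows that the equivalence class $[(Z,\Psi,\wt\Psi)]$ is likewise uniquely fixed.

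The routine parts are the degree bookkeeping and the cancellation of the common left factor. The step carrying the real content is the reduction to the already-proven local--semilocal decomposition, so the main obstacle I anticipate is not analytic but organizational: verifying that the naive common-denominator factorization can be brought to a coprime (full-rank, semi-local) form \emph{without} spoiling the boundary behavior $\mD^{-1}\wt\mD=\mathcal O(z^{-1})$, and that the integer $k$ extracted this way is the genuine vortex number. Both points are supplied by Lemmas~\ref{lmm:1}--\ref{lmm:4}, so once $\varphi$ is shown to be strictly proper rational the remainder is largely assembly of the earlier results.
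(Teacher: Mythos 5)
Your proposal is correct and follows essentially the same route as the paper: form $(p(z)\mathbf 1_n,\,p(z)\varphi(z))$ from the common denominator, strip the local factor via Lemma~\ref{lmm:1} to land in ${\cal M}_{\rm semi\,}{}^{n,m}_k$, read off the half-ADHM data through the mapping relation, and get uniqueness from Lemma~\ref{lmm:4} together with the one-to-one correspondence with $\wt{\cal M}$. The only addition is your explicit Liouville argument for the partial-fraction form of $\varphi$, which the paper simply asserts.
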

\begin{proof}
The solution of Eq.\,\eqref{eq:sm_eq} can always be written 
in the following form
\begin{align}
\varphi(z)=\sum_{\alpha}\sum_{p=1}^{k_\alpha}\frac{C_{\alpha,p}}{(z-z_\alpha)^p}
\end{align}
with $n$-by-$m$ constant matrices $C_{\alpha,p}$.
For this solution, let us consider an $n$-by-$(n+m)$ matrix $\xi'(z)$ given by
\begin{align}
\xi'(z)=(p(z){\bf 1}_n, p(z) \varphi(z))\quad {\rm with} \quad   
p(z)=\prod_{\alpha}(z-z_\alpha)^{k_\alpha},\quad k'=\sum_\alpha k_\alpha.
\end{align}
Note that all the entries of a matrix $p(z) \varphi(z)$ are polynomials since all the poles in $\varphi(z)$ are cancelled with zeros of $p(z)$.
The equivalence class $[\xi'(z)]$
is an element of ${\cal M}_{\rm vtx \,}{}^{n,m}_{nk'}$
and hence, according to Lemma \ref{lmm:1}, 
there exists a unique equivalence class 
$[\xi(z)]=[(\mD(z),\wt \mD(z))] \in {\cal M}_{\rm semi \,}{}^{n,m}_k$ with an integer $k \in \mathbb Z_{\ge 0}$ 
such that 
\begin{align}
\xi'(z)=\mD_{\rm lc}(z) \xi(z)=\mD_{\rm lc}(z)(\mD(z),\wt \mD(z)), \quad 0 \le k \le nk',
\end{align}
with $\mD_{\rm lc}(z) \in {\cal G}_{n,nk'-k}[z]$  and  $\mD(z)\in {\cal G}_{n,k}[z]$. 
These matrices are related as 
\begin{align}
p(z) {\bf 1}_n=\mD_{\rm lc}(z)\mD(z),  \quad p(z) \varphi(z)=\mD_{\rm lc}(z)\wt \mD(z).
\end{align}
In terms of the matrices $(\mD(z), \wt \mD(z))$,  
the instanton solution $\varphi(z)$ can always be rewritten as  
\begin{align}
\varphi(z) = \frac1{p(z)} (p(z) \varphi(z))=(\mD_{\rm lc}(z) \mD(z))^{-1} \mD_{\rm lc}(z) \wt \mD(z) =\mD(z)^{-1}\wt\mD(z).
\end{align}
Furthermore, using the half-ADHM mapping relation, we can rewrite $\varphi(z)$ in terms of the corresponding half-ADHM data $(Z,\Psi, \wt \Psi)$ as,
\begin{align}
\varphi(z)=\mD(z)^{-1}\wt\mD(z)=\mD(z)^{-1}\mJ(z)\wt \Psi=\Psi (z{\bf 1}_k-Z)^{-1}\wt \Psi.
\end{align}
According to Lemma \ref{lmm:4}, for a given $\varphi(z)$, 
the equivalent class $[(\mD(z), \wt \mD(z))] \in {\cal M}_{\rm semi \,}{}^{n,m}_k$ satisfying the above is unique,
and the equivalent class of the half-ADHM data $[(Z,\Psi,\wt \Psi)]$ satisfying ${\cal C}_{\rm free}(Z,\wt \Psi)$ is also unique.
\end{proof}
%\begin{align}
%\varphi(z)\quad  \mapsto \quad \left([(\mD(z), \wt \mD(z))], k\right) \in \bigsqcup_{k\in \mathbb Z_{\ge 0}} {\cal M}_{\rm semi \,}{}^{n,m}_k: \quad \varphi(z)=\mD(z)^{-1}\wt \mD(z).
%\end{align}

%%%%%%%%%%%%%%%%%%%%%%%%%%%%%%%%%%%%%%%%%%%%%%%%
\section{Embedding of Grassmannian case}
\label{appendix:embedding}
In this appendix, we discuss vortices 
obtained by embedding from the $L=1$ case.
\subsection%[Embedding of vortices from L=1 to L=2]
{Embedding of vortices from $L=1$ to $L=2$ }
Let us consider first consider the embedding of the matrix $\mH$ from $L=1$ to $L=2$. 
For example, in the case with $(k_1,k_2)=(k,0)$, 
$\mH_i$ are given by,    
\begin{eqnarray}
\mH_1(z)=( \mD(z), \widetilde \mD(z), {\bf 0} ), \hs{10}
\mH_2(z)=\left( \begin{array}{ccc} {\bf 1}_{n_1} & {\bf 0} & {\bf 0} \\ {\bf 0} & {\bf 1}_{n_2}& {\bf 0} \end{array} \right), 
\end{eqnarray}
where $\xi=(\mD(z), \widetilde \mD(z))$ is the matrix for the $L=1$ case 
with $n=N_1=n_1$ and $N=N_2=n_1+n_2$.
For the case with $(k_1,k_2)=(0,k)$, 
one can find that the general solution turns out to be     
\begin{eqnarray}
\mH_1 (z)=( {\bf 1}_{n_1},  {\bf 0},  {\bf 0} ), \hs{10}
\mH_2(z)=\left( \begin{array}{ccc} {\bf 1}_{n_1}& {\bf 0} &{\bf 0} \\ {\bf 0} & \mD(z) &\widetilde \mD(z)\end{array} \right), 
\end{eqnarray}
where $\mD(z)$ and $\widetilde \mD(z)$ are 
those for the  $L=1$ case with 
$n=n_2$ and $N=n_2+n_3$. 
Thus, we find that the moduli spaces of vortices 
for $(k_1,k_2)=(k_1,0)$ and $(k_1,k_2)=(0,k_2)$ are identical with 
those of the $L=1$ case
\begin{align}
{\cal M}_{\rm vtx \,}{\,}^{n_1,n_2,n_3}_{k_1,k_2=0} \simeq {\cal M}_{\rm vtx \,}{\,}^{n_1,n_2}_{k_1} , \quad  {\cal M}_{\rm vtx \,}{\,}^{n_1,n_2,n_3}_{k_1=0,k_2} \simeq {\cal M}_{\rm vtx \,}{\,}^{n_2,n_3}_{k_2}.
\end{align}
For $(k_1,k_2)=(k,k)$,  
there exist a subspace in the moduli space 
where $\mH$ of the $L=1$ case can be embedded as 
\begin{eqnarray}
\mH_1(z)=( \mD(z),  {\bf 0}, \widetilde \mD(z) ), \quad \mH_2(z)=\left( \begin{array}{ccc} \mD(z) & {\bf 0} &\widetilde \mD(z) \\ 0 & {\bf 1}_{n_2}& 0 \end{array} \right),
\end{eqnarray}
where $\mD(z)$ and $\widetilde \mD(z)$ 
are those for the  $L=1$ case with taking $n=n_1$ 
and $N=n_1+n_3$. 
This means that the moduli space of vortices with $(k_1,k_2)=(k,k)$ contains the $L=1$, $k$-vortex moduli space 
\begin{align}
{\cal M}_{\rm vtx \,}{\,}^{n_1,n_2,n_3}_{k_1=k,k_2=k} \supset {\cal M}_{\rm vtx \,}{\,}^{n_1,n_3}_{k}.
\end{align}

\paragraph{Embedding of half-ADHM data \\}
The half-ADHM data can also be obatained by embedding that of the $L=1$ case.

\noindent $\bullet ~ (k_1,k_2)=(k,0)$ \\
If $(k_1,k_2)=(k,0)$, we can determine $\mD_1$ and $\mD_2$ 
from the condition ${\rm deg} (\det \mD_i) = k_i$
\beq
\mD_1 = z^k + \sum_{n=0}^{k-1} a_n z^n, \hs{10}
\mD_2 = \ba{cc} 1 & 0 \\ 0 & 1 \ea, \hs{10}( \mbox{up to $V$-transformations}). 
\eeq
From the relations
\beq
\xi_1 = (\mD_1 , \wt \mD_1) = q_1 q_2 , \hs{10}
\xi_2 = (\mD_2 , \wt \mD_2) = q_2, \hs{10} 
\mD_i^{-1} \wt \mD_i = \mathcal O(z^{-1}),
\eeq
the matrices $q_1$, $q_2$, $\wt \mD_1$ and $\wt \mD_2$ can be determined as 
\beq
q_1 = ( P(z), Q(z) ) , \hs{5}
q_2 = \ba{ccc} 1 & 0 & 0 \\ 0 & 1 & 0 \ea, \hs{10}
\wt \mD_1 = (Q(z),0), \hs{5}
\wt \mD_2 = \ba{c} 0 \\ 0 \ea. 
\label{eq:q_(k,0)}
\eeq
where $P(z)$ and $Q(z)$ are polynomials of the forms
\beq
P(z) = z^k + \sum_{n=0}^{k} a_n z^n, \hs{7}
Q(z) = \sum_{n=0}^{k-1} b_n z^n. 
\label{eq:PandQ}
\eeq
The $k$-component row vector $\mJ_1(z)$ can be determined 
from the condition $\mD_1^{-1} \mJ_1 = \mathcal O(z^{-1})$ as
\beq
\mJ_1 = (\tilde P_{k-1}, \tilde P_{k-2} , \cdots , \tilde P_0 ), ~~~\mbox{with}~~~ \tilde P_l = \sum_{n=0}^{l} a_{n+k-l} z^{n}.
\label{eq:Jexample}
\eeq
Note that the $N_2$-by-$k_2$ matrix $\mJ_2$ does not exist since $k_2=0$ in this case. 
From $\mD_1$, $\wt \mD_1$ and $\mJ_1$, 
the matrices $Z_1$, $\mY_1$ and $\mY_2$ 
can be determined as 
\beq
Z_1 =
{\renewcommand{\arraystretch}{1.0}
{\setlength{\arraycolsep}{1.0mm} 
\ba{c|ccc}
0 & 1 & & \\
\vdots & & \ddots & \\
0 & & & 1 \\ \hline
a_0 & a_1 & \cdots & a_{k-1}
\ea, 
\hs{5}
\mY_1 = \ba{cccc} 1 & 0 & \cdots & 0 \ea, 
\hs{5}
\wt \mY_1 = \ba{c} \tilde b_1 \\ \tilde b_2 \\ \vdots \\ \tilde b_k \ea,}}
\eeq
where $\tilde b_l~(l=1,\cdots,k)$ are constants such that
\beq
Q(z) = \sum_{l=1}^k \tilde b_l \tilde P_l . 
\label{eq:tildeb}
\eeq
Note that $(Z_2,\mY_2,\wt\mY_2)$ and $(W_1 \tilde W_1)$ do not exist since $k_2=0$ in this case. 
All these moduli data are identical to those of $U(1)$ semi-local vortex $(L=1,~N_1=1,~\NF=2)$ with $q_1=(P(z),Q(z))$. 

\noindent $\bullet ~ (k_1,k_2)=(0,k)$ \\
For $(k_1,k_2)=(0,k)$, the matrices $q_1$ and $q_2$ are given by
\beq
q_1 = (1, 0) \hs{10}
q_2 = \ba{ccc} 1 & 0 & 0 \\ 0 & P(z) & Q(z) \ea,
\eeq
where $P(z)$ and $Q(z)$ are polynomials of the same form 
as the previous case \eqref{eq:PandQ}. 
The matrices $\xi_i =(\mD_i,\wt \mD_i)$ take the forms
\beq
{\renewcommand{\arraystretch}{0.8}
{\setlength{\arraycolsep}{0.8mm} 
\xi_1 = (1, 0, 0), \hs{5}
\xi_2 = \ba{ccc} 1 & 0 & 0 \\ 0 & P(z) & Q(z) \ea, \hs{3}
\bigg( \mD_1 = 1, ~ \wt \mD_1 = (0,0), ~
\mD_2 = \ba{cc} 1 & 0 \\ 0 & P(z) \ea, ~
\wt \mD_2 = \ba{c} 0 \\ Q(z) \ea \bigg).}}
\eeq
The matrix $\mJ_1(z)$ does not exist and 
$\mJ_2$ can be determined as
\beq
J_2 = \ba{cccc} 0 & 0 & \cdots & 0 \\ \tilde P_{k-1} & \tilde P_{k-2} & \cdots & \tilde P_{0} \ea,
\eeq
where $\tilde P_l~(l=0,\cdots,k-1)$ are the same polynomials as \eqref{eq:Jexample}.
Since $k_1=0$, the matrices $(Z_1,\mY_1,\wt \mY_1)$ and $(W_1,\wt W_1)$ do not exist and $(Z_2,\mY_2,\wt \mY_2)$ are given by
\beq
Z_2 = 
{\renewcommand{\arraystretch}{1.0}
{\setlength{\arraycolsep}{1.0mm} 
\ba{c|ccc}
0 & 1 & & \\
\vdots & & \ddots & \\
0 & & & 1 \\ \hline
a_0 & a_1 & \cdots & a_{k-1}
\ea}},
\hs{5}
\mY_2 = \ba{cccc} 0 & 0 & \cdots & 0 \\ 1 & 0 & \cdots & 0 \ea, 
\hs{5}
\wt \mY_2 = \ba{c} \tilde b_1 \\ \tilde b_2 \\ \vdots \\ \tilde b_k \ea,
\eeq
where $\tilde b_l~(l=1,\cdots,k)$ are constants defined in \eqref{eq:tildeb}. Again, the moduli data are identical to those of $U(1)$ semi-local vortex $(L=1,~N_1=1,~\NF=2)$ with $q_1=(P(z),Q(z))$.

\subsection%[Embedding of vortices from L=1 to general L]
{Embedding of vortices from $L=1$ to general $L$ }
Let $I,J$ be integers such that $1 \leq I < J \leq L+1$
and prepare a $n_I$-by-$n_I$ matrix $\mD(z)$ and 
a $n_I$-by-$n_J$ matrix $\widetilde \mD(z)$
satisfying $\det \mD(z)={\cal O}(z^k)$ and $\mD(z)^{-1}\widetilde \mD(z)={\cal O}(z^{-1})$.
Next, let us embed these matrices into $q_{J-1}$ as 
\begin{align}
q_{J-1}(z)=\left( 
\begin{array}{ccc|c}
{\bf 1}_{N_{I-1}}&{\bf 0}&{\bf 0}&{\bf 0}\\
{\bf 0}&\mD(z)&{\bf 0}&\widetilde \mD(z)\\
{\bf 0}&{\bf 0}&{\bf 1}_{N_{J-1}-N_{I}}&{\bf 0} 
\end{array}
\right), \hs{20}
\end{align}
and set the other $q_i$ to be trivial $q_i=({\bf 1}_{N_i},\bf 0)$ for $i \not= J-1$. This setting gives 
\begin{eqnarray}
&& \hs{38} \overbrace{}^{N_{J-1}-N_i} \hs{6}  \overbrace{}^{N-N_{J}} \\
\mH_i(z) &=& \left(\begin{array}{ccc|ccc}
{\bf 1}_{N_{I-1}}&{\bf 0}&{\bf 0}&\bf 0&{\bf 0}&
\bf 0\\
\, {\bf 0} \, &\mD(z)&{\bf 0}&{\bf 0}&\widetilde \mD(z)& \, {\bf 0} \,\\
{\bf 0}&{\bf 0}&{\bf 1}_{N_{i}-N_{I}}&{\bf 0} &{\bf 0}&{\bf 0}
\end{array}
\right) ~~~ {\rm for} ~~~ i\in [I, J-1], 
\end{eqnarray}
and $\mH_i(z)=({\bf 1}_{N_i},{\bf 0})$ for $i\not \in [I, J-1]$. The vortex numbers are given by
\begin{eqnarray}
(k_1,k_2,\cdots,k_L)=(\underbrace{0,\cdots,0}_{I-1},\underbrace{k,k,\cdots,k}_{J-I},0,\cdots,0).
\end{eqnarray}
In the case of $d=1$, this construction gives general configurations with $k$ elementary vortices.
In cases with $d \ge 2$, however, 
the above construction gives special configurations 
where each of $k$ objects can be regarded as 
a composite state of $J-I$ types of elementary vortices.
This construction also gives the following sigma model instanton solution 
\begin{eqnarray}
\varphi_{ij}=\delta_i^{I} \delta_j^{J}  \mD(z)^{-1}\widetilde \mD(z),
\end{eqnarray}
where $\varphi_{ij}$ are inhomogeneous coordinates of the flag manifold defined in Eq.\,\eqref{eq:inhomogeneous}. 
%Note that $d$ implies a level of compression of vortices and even in the nonlinear sigma model it still keeps physical meaning due to the existence of the complex structure which giving the ordering of the unitary groups $\{U(n_1),\cdots,U(n_{L+1})\}$, otherwise we can freely replace a role for $U(n_{i_c+d})$ and that for $U(n_{i_c+1})$. If the complex structure and tensions of vortices are not taken  into account, there is no  difference between an  elemental vortex and  a composite vortex.

The corresponding half-ADHM data takes the form
\begin{eqnarray}
\Upsilon_{i}={\bf 0} \quad{\rm for}~ i\not=I,\quad \widetilde \Upsilon_i={\bf 0} \quad
{\rm for} ~i\not=J-1,\\
Z_{I}=Z_{I+1}=\cdots=Z_{J-1},\quad\\
W_{I}=W_{I+1}=\cdots =W_{J-2}={\bf 1}_k
\end{eqnarray}
Here $d = J-I$ implies a level of compression of vortices and 
turns out to corresponds to height of the Young tableaux.

%%%%%%%%%%%%%%%%%%%%%%%%%%%%%%%%%%%%
%%%%%%%%%%%%%%%%%%%%%%%%%%%%%%%%%%%%%%%%%%%%%%%%

\section{Brane construction of vortices}\label{sec:brane}
In this appendix, we discuss the D-brane construction of BPS vortices. 
By embedding our system into a 4d $\mathcal N =2$ supersymmetric gauge theory, 
we can identify the D-brane configuration corresponding to the BPS vortex configurations. 
For $L=1$, the D-brane construction of the vortex moduli space has been mentioned in \cite{Hanany:2003hp} and for $L>1$ with $N_1=\cdots=N_L=N$, 
the D-brane configuration for the local vortices has been discussed in \cite{Chen:2011sj}. 
The left figure in Fig.\,\ref{fig:brane-vacua} shows the brane configuration for the Coulomb branch of the model. 
The 4d $\mathcal N =2$ quiver gauge theory corresponding to our system can be realized as the worldvolume effective theory on D4-branes attached to NS5-branes. 
There are $N_i$ D4-branes between neighboring NS5 branes 
and they correspond to the $U(N_i)$ subgroup of the gauge group. 
The gauge coupling constants $1/g_i^2~(i=1,\cdots,L)$ are proportional to 
the separations of the NS5 branes $\Delta x^6_{{\rm NS5},i} = x^6_{{\rm NS5},i+1}-x^6_{{\rm NS5},i}$. 
The bi-fundamental fields $Q_i~(i=1,\cdots,L)$ (hypermultiplets) 
corresponds to the fundamental strings 
stretched between D4-branes in the $i$-th and $i+1$ intervals.
In the presence of the FI parameters, 
which correspond to 
$\Delta x^{7,8,9}_{{\rm NS5},i} = x^{7,8,9}_{{\rm NS5},i+1}-x^{7,8,9}_{{\rm NS5},i}~(i=1,\cdots,L)$, 
the vacuum is in the Higgs phase as shown in the right figure of Fig.\,\ref{fig:brane-vacua}. 
There are $n_i=N_i-N_{i-1}$ D4-branes attached to the $i$-th NS5-brane. 
Fig.\,\ref{fig:brane-vortex} shows an example of the D-brane configurations for BPS vortices. 
The vortices with $i$-th magnetic flux correspond to D2-branes stretched 
between the $i$-th and $(i+1)$-th D4-branes. 
The vortex worldsheet theory 
is a 2d $\mathcal N =(2,2)$ quiver gauge theory 
on the $(x^0,x^1)$ plane in Table \ref{tab:brane}.
The matrices $(Z_i,\Upsilon_i,\wt \Upsilon_i,W,\wt W)$ 
are identified with the component fields of the chiral multiplets
which are identified with the degrees of freedom in the brane configuration as follows 
\begin{itemize}
\item
$Z_i$ : positions of $i$-th D2-branes on $(x^2,x^3)$ plane.
\item
$\Upsilon_i$ : F1 strings between $i$-th D4-branes and $i$-th D2-branes. 
\item
$\wt \Upsilon_i$ : F1 strings between $(i+1)$-th D4-branes and $i$-th D2-branes.
\item
$W, \wt W$ : F1 strings between $i$-th and $(i+1)$-th D2-branes.
\end{itemize}
The moduli space of BPS vortices are identified with 
that of vacua of this quiver gauge theory 
determined by solving the $D$-term condition and 
the $F$-term constraint coming from the cubic superpotential
\beq
\supW=\sum_{i=1}^{L-1}\tr\left[ \wt \mW_i  
\left(\wt \mY_i \mY_{i+1}- Z_i\mW_i+\mW_i Z_{i+1}\right)\right]. 
\eeq
If we turn on hypermultiplets masses, 
which correspond to the positions of D6-branes on the $(x^4,x^5)$ plane, 
only the fixed points of the $SU(N)$ flavor symmetry are left 
as stable BPS configurations. 
Fig.\,\ref{fig:brane-massive} shows an example of the D-brane configurations for such fixed point configurations.
In the presence of the $\Omega$-deformation on the $(x^0,x^1)$ plane, 
all D2-branes are localized at the origin 
and they form clusters which are characterized by Young tableaux. 
Such configurations are the fixed points of the torus action, 
which are relevant to the supersymemtric localization. 

\begin{table}[h!]
\centering
\begin{tabular}{c|ccccccccccccc}
& $x^0$ & $x^1$ & $x^2$ & $x^3$ & $x^4$ & $x^5$ & $x^6$ & $x^7$ & $x^8$ & $x^9$ \\ \hline
D4 & $\times \ $ & $\times \ $ & $\times \ $ & $\times \ $ & & & $\times \ $ & & &  \\
D6 & $\times \ $ & $\times \ $ & $\times \ $ & $\times \ $ & & & & $\times \ $ & $\times \ $ & $\times \ $ \\
NS5 & $\times \ $ & $\times \ $ & $\times \ $ & $\times \ $ & $\times \ $ & $\times \ $ & & & & \\
D2 & $\times \ $ & $\times \ $ & & & & & & & & $\times \ $ 
\end{tabular}
\caption{Brane configuration ($\times$'s indicate the directions in which the branes extend).}
\label{tab:brane}
\end{table}
\begin{figure}[!ht]
\begin{center}
\fbox{
\includegraphics[bb = 20 20 800 600, width=80mm]{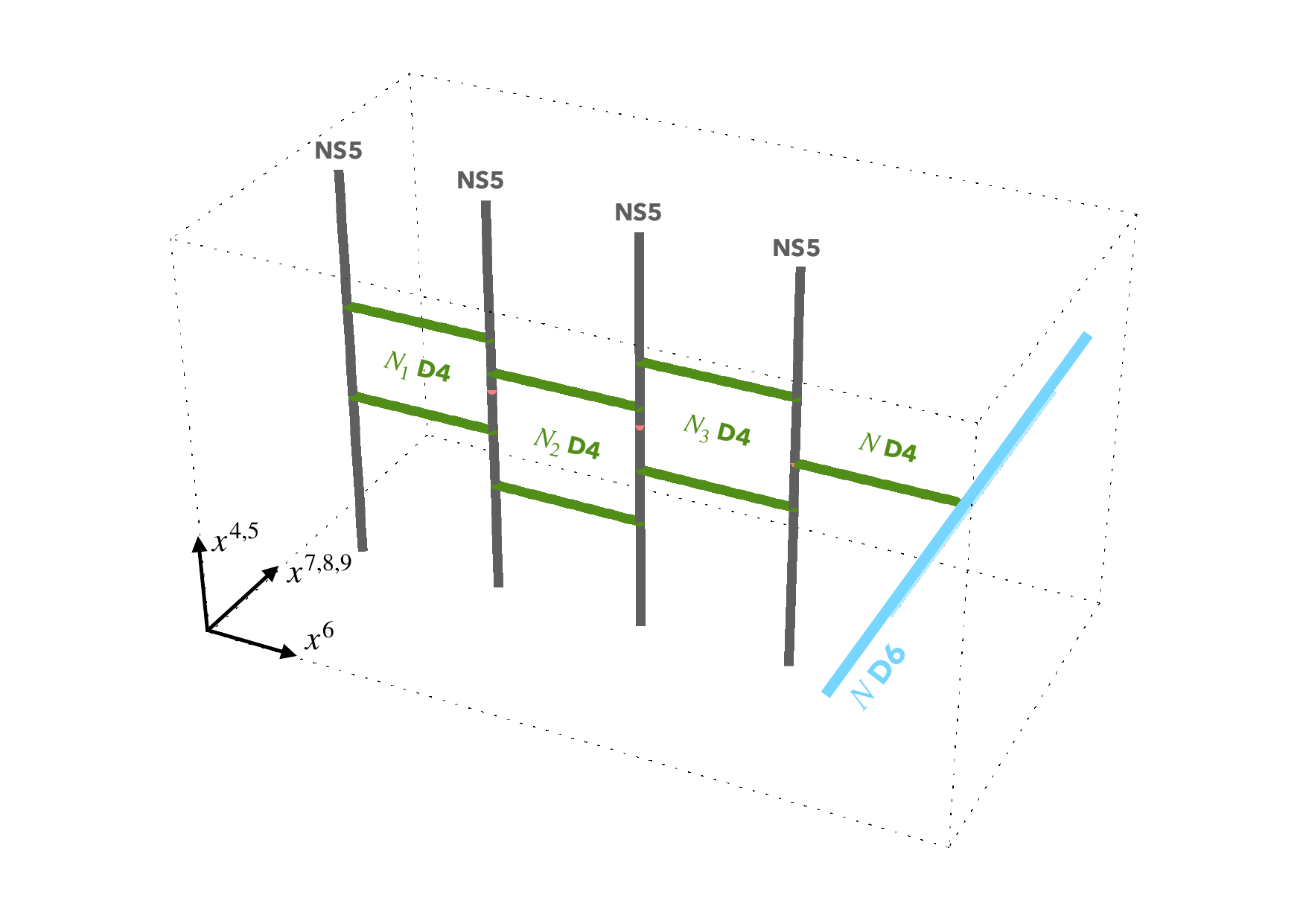}}
\fbox{
\includegraphics[bb = 20 20 800 600, width=80mm]{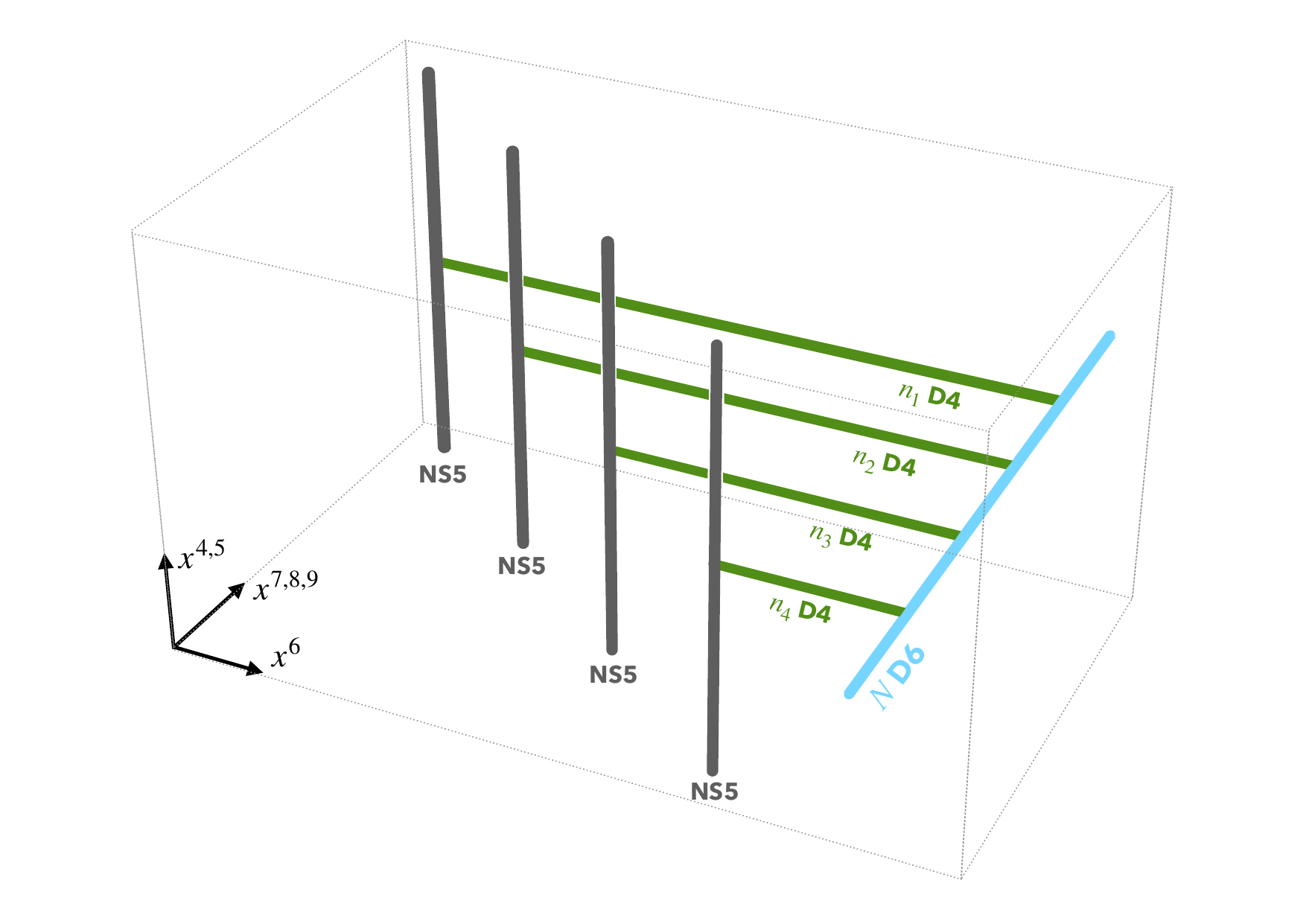}}
\caption{D-brane configurations for the Coulomb branch (left) and the Higgs phase in the presence of FI parameters (right).}
\label{fig:brane-vacua}
\end{center}
\end{figure}

\begin{figure}[!ht]
\begin{center}
\fbox{
\includegraphics[bb = 20 20 800 600, width=120mm]{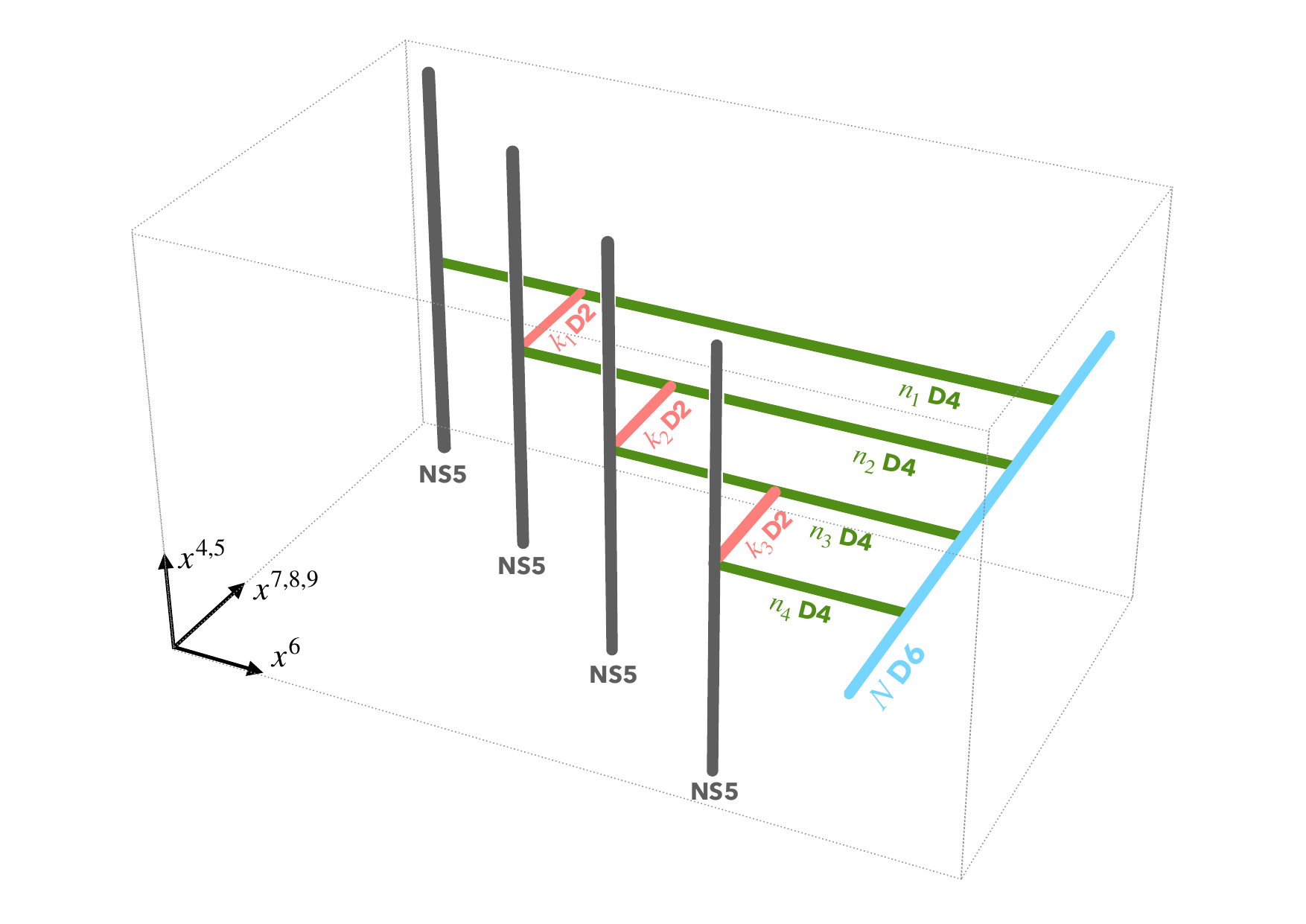}}
\caption{D-brane configuration for BPS vortices. This example shows $(k_1,k_2,k_3)$ vortices in 4d $\mathcal N = 2$ $U(n_1) \times U(n_1+n_2) \times U(n_1+n_2+n_3)$ gauge theory with $\NF=n_1+n_2+n_3+n_4$ hypermultiplets.}
\label{fig:brane-vortex}
\end{center}
\end{figure}

\begin{figure}[!ht]
\begin{center}
\fbox{
\includegraphics[bb = 20 20 800 600, width=120mm]{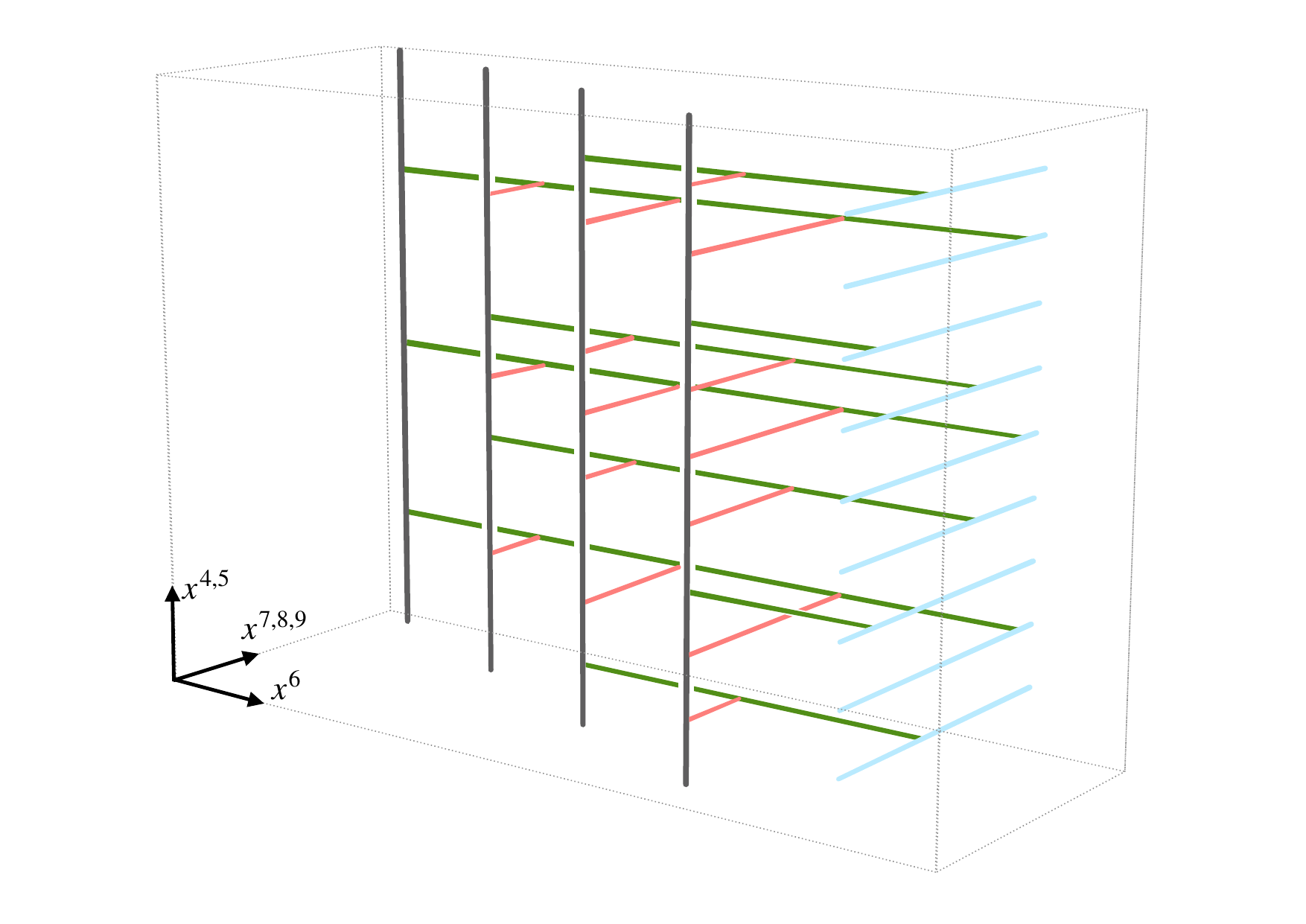}}
\caption{D-brane configuration for BPS vortices in the massive theory. The positions of D6-branes in the $x_4$ and $x_5$ directions correspond to hypermultiplet masses. In the presence of the $\Omega$-background, each cluster of D2-branes (pink line) corresponds to a composite of vortices characterized by a Young tableau.}
\label{fig:brane-massive}
\end{center}
\end{figure}

%%%%%%%%%%%%%%
\section{Smoothness of the moduli space}\label{sec:tW}
The moduli space of vortices for $L>1$ is constructed from the space of matrices satisfying the constaraitns \eqref{eq:constraint}. 
In this appendix, we show that those constraints do not cause any singularities on the moduli space.

\subsection{Singular points on algebraic varieties}
Let us first recall that a singularity on an algebraic variety is 
a point where a tangent space is ill-defined. 
For example, for a subspace $\mathcal M_0$ in 
$\mathbb C^K = \{ \phi_1,\cdots,\phi_K \}$ 
defined as the intersection of the zero loci of polynomials $F_I(\phi_1,\cdots,\phi_K)~(I=1,2,\cdots,n')$,  
a singularity on ${\cal M}_0$ is defined 
as a point where the rank of the matrix
$(J_F)_I{}^i \equiv \partial F_I/\partial \phi_i$ decreases. 
Let $H_{ij}$ be the hessian of the function $V$ defined as 
\begin{align}
V = \sum_{I} |F_I|^2 , \hs{10}
H_{ij}= \frac{\partial^2 V}{\partial \phi_i \partial \bar \phi_j}= \sum_I (J_F)_I^i (J_F^*)_I^j.
\end{align}
Then, at the singular point, 
an extra flat directions (zero eigenvectors) appears 
since $H_{ij}$ has a lower rank. 
If there is a symmetry group $G$ that preserves $V$, the equation $F=0$ reduces to a constraint equation $\tilde F=0$ that defines a subspace $\cal M$ in the quotient space $(\mathbb C^K-\{0\})/G$. 
Since ${\rm rank} \, J_{\tilde F} = {\rm rank} \, J_{F}$ for any smooth quotient space,
singularities of $\mathcal M$ can be determined by looking at the rank of $J_F$ on $\mathcal M_0$.
In particular, $\mathcal M$ is smooth if the rank of $J_F$ is constant everywhere. 
%For example, embedding of a Grassmannian $G(M,N)$ into $\mathbb CP^{K}$ with $K=N!/M!/(N-M)!-1$ causes several Pl\"ucker relations and $K-N(N-M)$ within them are linearly independent.   
If $J_F$ has the maximal rank $n'$ everywhere, that is,  
\begin{align}
0=\Lambda^I  (J_F)_I{}^i=\frac{\partial \Lambda^I F_I}{\partial \phi^i} \quad \mbox{implies} \quad \Lambda^I=0 \quad \mbox{for all points on ${\cal M}_0$},
\label{eq:J_max_rank}
\end{align}
all the constraints $F_I=0$ are linearly independent
and hence $\cal M$ is a $(K-{\rm dim} \, G - n')$-dimensional smooth manifold.  
\subsection{Constraints and smoothness of vortex moduli space}
The constraints in Eq.(\ref{eq:constraint}) implies that the 
vortex moduli space for $L>1$ is the intersection of the zero loci of $F_I = \wt \mY_i \mY_{i+1}- Z_i\mW_i+\mW_i Z_{i+1}$. 
These constraints can be introduced by turning on the potential 
\begin{eqnarray}
\mathcal V = \sum_{i=1}^{L-1} \tr\left[ |\wt \mW_i|^2 + i \wt \mW_i  
\left(\wt \mY_i \mY_{i+1}- Z_i\mW_i+\mW_i Z_{i+1} \right) + (c.c.) \right],
\label{eq:super_potential}
\end{eqnarray}
where the $k_{i+1}$-by-$k_i$ matrix $\wt \mW_i$ are  
an auxiliary fields which give the on-shell potential
\beq
V = \sum_{i=1}^{L-1} \tr \left| \wt W_i \right|^2 = \sum_{i=1}^{L-1} \tr \left| \wt \mY_i \mY_{i+1}- Z_i\mW_i+\mW_i Z_{i+1} \right|^2.
\eeq 
The variations with respect to the other degrees of freedom give
\begin{alignat}{2}
0 &\ = \ \frac{\partial \mathcal V ~~~~ }{\partial \mY_{i+1}} &\ = \ & \wt W_i \wt \mY_i, \label{eq:WY}\\
0 &\ = \ \frac{\partial \mathcal V}{\partial \wt \mY_i} &\ = \ & \mY_{i+1}\wt W_i,  \label{eq:WYt}\\
0  &\ = \ \frac{\partial \mathcal V}{\partial \mW_i} &\ = \ & Z_{i+1}\wt W_i-\wt W_i Z_i, \label{eq:WW}
\end{alignat}
for $1 \le i\le L-1$ and 
\begin{eqnarray}
0=\frac{\partial \mathcal V}{\partial Z_i}=\wt W_{i-1} W_{i-1}- W_i \wt W_i \quad {\rm with} \quad \mW_{0,L}=\wt \mW_{0,L}=0,
\label{eq:WZ}
\end{eqnarray}
for $1 \le i\le L$.
As we have seen in Eq.\,\eqref{eq:J_max_rank}, 
the moduli space has no singularity if and only if $\wt W_i$ always vanishes when Eqs.\,\eqref{eq:WY}-\eqref{eq:WZ} are satisfied.
For any solution, 
we can show 
\begin{eqnarray}
\left(\mY_j \mW_j\mW_{j+1}\cdots W_{i} Z_{i+1}^{p-1}\right) \wt \mW_i
&\stackrel{{\rm Eq.}(\ref{eq:WW})}{=}&
\mY_j \mW_j\mW_{j+1}\cdots W_{i} \wt \mW_i Z_{i}^{p-1}\nn
&\stackrel{{\rm Eq.}(\ref{eq:WZ})}{=}&
\mY_j \wt \mW_{j-1}\mW_{j-1}W_{j}\cdots W_{i-1} Z_{i}^{p-1}\nn
&\stackrel{{\rm Eq.}(\ref{eq:WYt})}{=}&0,
\end{eqnarray} 
for $1\leq j \leq i$ and $1 \leq p \leq k$.
Under the $\prod_{i=1}^L GL(k_i,\C)$ free condition (\ref{eq:GLkifree}), this equation implies that 
$\wt \mW_i=0$ for all $i$.
Therefore, the vortex moduli space is smooth and 
all the elements of the constraint (\ref{eq:constraint}) are independent, 
that is, the number of degrees of freedom suppressed by the constraints is the same as that of the components of $\{ \wt W_i \}$.
%%%%%%%%%%%%%%%%%%%%%%%

\section{The torus action on the moduli spaces and on the \kahler quotient}\label{sec:TorusActions}
In this appendix, we summarize the BPS vortex solutions 
in the presence of the omega background and the mass deformation. 
In such a case, BPS configurations have 
to minimize the deformation terms 
induced by the omega background $\epsilon$ and 
mass parameters $M={\rm diag}(m^1, \cdots, m^N)$
\beq
\delta \mathcal L ~=~ \sum_{i=1}^L \left| i \epsilon (z \D_z - \bar z \D_{\bar z}) q_i + \Sigma_i q_i - q_i \Sigma_{i+1} \right|^2 \hs{3}
\mbox{with} \hs{3} \Sigma_{L+1} = - M,
\eeq
where $\Sigma_i~(i=1,\cdots,L)$ are $SU(N_i)$ adjoint scalar fields.\footnote{In 2d $\mathcal N = (2,2)$ models, $\Sigma_i$ can be interpreted as the adjoint scalar fileds in the vector multiplets and become auxiliary fields in the nonlinear sigma model limit.} 
Since $\delta L$ is positive semi-definite, it is minimized when $\delta \mathcal L =0$, that is 
\beq
i \epsilon (z \D_z - \bar z \D_{\bar z}) q_i + \Sigma_i q_i - q_i \Sigma_{i+1} = 0, \hs{5}
(i=1,\cdots,L,~\Sigma_{L+1} = - M).
\label{eq:fixed_infinitesimal}
\eeq
This condition implies that the vortex configuration must be 
invariant under the (infinitesimal) spatial rotation and 
the flavor rotation up to gauge transformations $\Sigma_i$. 
Such fixed points are classified by a set of $N$ Young tableaux 
$Y^{(j,\alpha)}$ where $\alpha = 1, \cdots, n_i$ for each $j=1,\cdots,L$. 
The height of $Y^{(j,\alpha)}$ is $L-j+1$ and 
we denote the length of $i$-th row as $l_{i+j-1}^{(j,\alpha)}$, i.e. 
\beq
Y^{(j,\alpha)} = \left( l_j^{(j,\alpha)},l_{j+1}^{(j,\alpha)},\cdots,l_L^{(j,\alpha)} \right), \hs{10} 
l_j^{(j,\alpha)} > l_{j+1}^{(j,\alpha)} > \cdots > l_L^{(j,\alpha)} > 0.
\label{eq:Young_tab}
\eeq
The integers $l_i^{(j,\alpha)}$ are related to 
the magnetic flux at the fixed point
\beq
\frac{1}{2\pi} \int F_i ~=~ \mbox{block-diag}(\boldsymbol l_i^1,\cdots,\boldsymbol l_i^i)
\hs{5} \mbox{with} \hs{5}
\boldsymbol l_i^j = 
{\rm diag}\left( l_i^{(j,1)},\cdots, l_i^{(j,n_j)} \right),
\eeq
%\beq
%\frac{1}{2\pi} \int F_i ~=~ 
%{\renewcommand{\arraystretch}{0.8}
%{\setlength{\arraycolsep}{1.2mm}
%\ba{ccc} \boldsymbol l_i^1 & & \\ & \ddots & \\ & & %\boldsymbol l_i^i \ea}} 
%\hs{5} \mbox{with} \hs{5}
%\boldsymbol l_i^j = 
%{\renewcommand{\arraystretch}{0.6}
%{\setlength{\arraycolsep}{0mm}
%\ba{ccc} l_i^{(j,1)} & & \\ & \ddots & \\ & &~ l_i^{(j,n_j)} %\ea}},
%\eeq
where $\boldsymbol l_i^j$ is the $n_j$-by-$n_j$ diagonal block of 
the $SU(N_i)$ magnetic flux of the $i$-th gauge group. 
They are also related to the winding numbers of the scalar fields 
\beq
q_i = 
{\renewcommand{\arraystretch}{0.5}
{\setlength{\arraycolsep}{1.0mm}
\ba{ccc|ccc} \boldsymbol q_i^1 & & & 0 & \cdots & 0 \\ & \ddots & & \vdots & \ddots & \vdots \\ & & \boldsymbol q_i^i \phantom{\Big|} & \phantom{\Big|} 0 & \cdots & 0 \ea}}, 
\hs{5} \mbox{with} \hs{5}
\boldsymbol q_i^j = \boldsymbol f_i^j(r) \, \exp \left( i \boldsymbol \nu_i^j \theta \right), \hs{5} \boldsymbol \nu_i^j \equiv \boldsymbol l_i^j - \boldsymbol l_{i+1}^j, 
\eeq
where $\boldsymbol f_i^j(r)$ and $\boldsymbol \nu_i^j$ 
are diagonal matrices of profile functions and winding numbers, respectively. 
We can confirm that $q_i(z)$ is invariant under 
the torus action (the combination of the spatial rotation and the Cartan part of the flavor rotation) up to $V$-transformations
\begin{eqnarray}
q_i(z) = V_i \, q_i(e^{i\epsilon }z) \, V_{i+1}^{-1}, 
\hs{10} 
V_i = \exp (i \Sigma_i), \hs{5} 
V_{L+1}(z) = \exp (-i M). 
\end{eqnarray}
Note that the left hand side of the fixed point condition \eqref{eq:fixed_infinitesimal} is 
the infinitesimal version of this transformation. 
The element of the $V$-transformations are 
specified by the fixed point values of the adjoint scalar 
$\Sigma_i$, which take the forms 
\beq
\Sigma_i = 
\mbox{block-diag} \,
(\boldsymbol \sigma_i^1, \cdots, \boldsymbol \sigma_i^i)
\hs{10}
\boldsymbol \sigma_i^j = {\rm diag} \,
(\sigma_i^{(j,1)}, \cdots, \sigma_i^{(j,n_j)}),
\eeq
with the eigenvalues
\beq
\sigma_i^{(j,\alpha)} = - m^{(j,\alpha)} - l_i^{(j,\alpha)} \epsilon,
\eeq
where we have labeled the eigenvalues of the mass matrix as
\beq
M = 
\mbox{block-diag} \, 
(\boldsymbol m^1, \cdots, \boldsymbol m^{L+1}) , 
\hs{10} 
\boldsymbol m^j = 
{\rm diag} \,
( m^{(j,1)}, \cdots m^{(j,n_i)}).
\end{eqnarray}

\subsection{Half-ADHM data at fixed points}
We can show that the vortex data 
corresponding to the fixed point specified by the Young tableaux $Y^{(j,\alpha)}$ take the form
\beq
\mD_i = 
\mbox{block-diag} \, 
( \boldsymbol \mD_i^1 , \cdots , \boldsymbol \mD_i^i) 
\hs{5} \mbox{with} \hs{5} 
\boldsymbol \mD_i^j = 
{\rm diag} \, (z^{l_i^{(j,1)}} , \cdots , z^{l_i^{(j,n_j)}}) 
\hs{5} \mbox{and} \hs{5} \wt \mD_i = 0.
\eeq
This implies that each diagonal component represents axially symmetric Abelian vortices with flux $l_i^{(j,\alpha)}$
and hence all the matrix data can be obtained 
by embedding those of Abelian vortices. 
For an axially symmetric Abelian vortex configuration $\mD = z^l$, 
the vortex data satisfying $\mD \Psi = \mJ (z \mathbf 1_{l} - Z)$ 
are given by (see Sec. \ref{subsec:patch_L=1})
\beq
\mJ(l) = ( z^{l-1} \,,\, z^{l-2} \,,\, \cdots \,,\, 1 ), \hs{10}
\Psi(l) = ( 1 \,,\, 0 \,,\, \cdots \,,\, 0 ), \hs{10}
Z(l) = \left. 
{\renewcommand{\arraystretch}{0.7}
{\setlength{\arraycolsep}{0.8mm}
\ba{c|ccc} 0 \phantom{|} & 1 & & \\ \vdots \phantom{|} & & \ddots & \\ 0 \phantom{|} & & & 1 \\ \hline 0 \phantom{|} & 0 & \cdots & 0 
\ea}} \ \right\} l.
\eeq
By embedding these matrices, 
we can construct the matrices satisfying 
$\mD_i \Psi_i = \mJ_i (z \mathbf 1_{k_i} - Z_i)$
as 
\beq
\mJ_i = 
\mbox{block-diag} \, ( \boldsymbol \mJ_i^1 , \cdots , \boldsymbol \mJ_i^{i}), \hs{5}
\Psi_i = 
\mbox{block-diag} \, (\boldsymbol \Psi_i^1 , \cdots , \boldsymbol \Psi_i^{i}), \hs{5}
Z_i = 
\mbox{block-diag} \, (\boldsymbol Z_i^1, \cdots , \boldsymbol Z_i^{i}) ,
\eeq
with
\beq
\boldsymbol \mJ_i^j \! = \!
{\rm diag}
(\mJ(l_i^{(j,1)}), \cdots, \mJ(l_i^{(j,n_j)})), \,
\boldsymbol \Psi_i^j \! = \! {\rm diag} (\Psi(l_i^{(j,1)}), \cdots, \Psi(l_i^{(j,n_j)})), \,
\boldsymbol Z_i^j \! = \!
{\rm diag} (Z(l_i^{(j,1)}), \cdots, Z(l_i^{(j,n_j)})). 
\eeq
Note that $\wt \Psi_i =0$ since $\wt \mD_i = 0$
for the fixed point configurations. 
The matrices $\mY_i$ and $\wt \mY_i$ defined in \eqref{eq:def_upsilon} can be extracted from 
$\Psi_i$ and $\wt \Psi_i$ as
\beq
\mY_i = \ba{c|c} \mathbf 0_{n_i,k_{i-1}} & \boldsymbol \Psi_i^i  \ea, \hs{10}
\wt \mY_i = 0. 
\eeq
The matrix $W_i$ can be determined 
by solving the constraint 
$Z_i W_i - W_i Z_{i+1} = \wt \mY_i \mY_{i+1}$ as 
\beq
W_i = 
{\renewcommand{\arraystretch}{0.8}
{\setlength{\arraycolsep}{0.7mm}
\ba{ccc|ccc} \boldsymbol W_i^1 & & & ~ \mathbf 0 & ~ \cdots & ~ \mathbf 0 \\ & \ddots & & ~ \vdots & ~ \ddots  & ~ \vdots \\ & & \boldsymbol W_i^{i} & ~ \mathbf 0 & ~ \cdots & ~ \mathbf 0 \ea
}}, \hs{10}
\boldsymbol W_i^j = 
{\renewcommand{\arraystretch}{0.6}
{\setlength{\arraycolsep}{0.1mm}
\ba{ccc} W(l_i^{(j,1)}, l_{i+1}^{(j,1)}) & & \\ & \ddots & \\ & & W(l_i^{(j,n_j)},l_{i+1}^{(j,n_j)})
\ea}}
\eeq
where $W(l,l')$ is the matrix satisfying 
$Z(l) W(l,l') - W(l,l') Z(l') = 0$, which takes the form
\beq
W(l,l') = \ba{c} \mathbf 1_{l'} \\ \mathbf 0_{l-l',l'} \ea.
\eeq
Note that $\wt W_i = 0$ as shown above. 

\subsection{Torus action on half-ADHM data}
\label{Appendix:Torus action on Half-ADHM data}
The above set of matrices $\{Z_i,\mY_i,\tilde \mY_i,W_i,\wt W_i\}$
corresponds to the BPS configuration in the presence of the deformations. 
This satisfies the fixed point condition of the 
torus action 
\begin{eqnarray}
\{Z_i,\mY_i,\wt \mY_i,\mW_i,\wt \mW_i\} \to \{ e^{-i\Phi_i-i\epsilon} Z_i e^{i\Phi_i}, e^{-iY_i}\mY_i e^{i\Phi_i}, e^{-i\Phi_i-i\epsilon} \wt \mY_i e^{i Y_{i+1}},
e^{-i\Phi_{i}} \mW_ie^{i\Phi_{i+1}}, e^{-i\Phi_{i+1}} \wt \mW_i e^{i\Phi_i+i\epsilon} \}, 
\label{eq:hADHM_torus}
\end{eqnarray}
where $\Phi_i$ are the elements of ${\mathfrak gl}(k_i)$ given by
\beq
\Phi_i = 
\mbox{block-diag}
(\boldsymbol \Phi_i^1, \cdots,\boldsymbol \Phi_i^{i}), \hs{3}
\boldsymbol \Phi_i^j = 
\mbox{block-diag} (\Phi_i^{(j,1)},\cdots, \Phi_i^{(j,n_j)}), \hs{3}
\Phi_i^{(j,\alpha)} = 
{\rm diag}
( \phi_i^{(j,\alpha,1)}, \cdots, \phi_i^{(j,\alpha,p)}),
\eeq
with the eigenvalues\footnote{
The matrices $\Phi_i^{(j,\alpha)}$ can be determined by solving  the equations
\beq
[\Phi_i^{(j,\alpha)} , Z_i^{(j,\alpha)}] + \epsilon Z_i^{(j,\alpha)} = 0, 
\hs{5}
m^{(i,\alpha)} \Psi_i^{(i,\alpha)} - \Psi_i^{(i,\alpha)} \Phi_i^{(i,\alpha)} = 0, \hs{5} \Phi_i^{(j,\alpha)} - \Phi_{i+1}^{(j,\alpha)} = 0 ~~~ (\mbox{for $j=1,\cdots,i$}). \notag
\eeq}
\beq
\phi_i^{(j,\alpha,p)} =  m^{(j,\alpha)} + (p-1) \epsilon.
\eeq
These eigenvalues at the fixed point 
correspond to the poles of the integrand 
for the vortex partition function \eqref{eq:topopartition}, 
whose residue give the contribution of the fixed point configuration.  
The fixed point condition can also be rewritten 
by using the infinitesimal form of the torus action 
as 
\beq
&[\Phi_i, Z_i] - \epsilon Z_i = 0, \hs{5}
M_i \mY_i - \mY_i \Phi_i = 0, \hs{5}
\Phi_i \wt \mY_i - \wt \mY_i M_{i+1} + \epsilon \wt \mY_i = 0,& \\
&\Phi_i W_i - W_i \Phi_{i+1} = 0, \hs{5} 
\Phi_{i+1} \wt W_i - \wt W_i \Phi_i + \epsilon \wt W_i = 0.&
\eeq

One can explicitly check that 
the torus action on the half-ADHM data
is consistent with that on $(\mD_i(z),\wt \mD_i(z))$ as follows.
With $\hat M_j$ and $\wt M_j$ defined by 
\begin{eqnarray}
\hat M_j={\rm diag} (\boldsymbol m_1,\boldsymbol m_2,\cdots,\boldsymbol m_j), 
\hs{10} \wt M_j={\rm diag} (\boldsymbol m_{j+1},\cdots,\boldsymbol m_L, \boldsymbol m_{L+1}),
\end{eqnarray}
the torus action on $(\mD_j(z),\wt \mD_j(z))$ 
can be read off from that on $q_i$ as
\begin{eqnarray}
(\mD_j(z),\wt \mD_j(z)) \quad \to \quad (\mD_j'(z),\wt \mD_j'(z))
= V_j(z)\left(\mD_j(e^{i\epsilon}z)e^{i\hat M_j},
\wt \mD_j(e^{i\epsilon}z) e^{i\wt M_j}\right).
\end{eqnarray}
Since $\mJ'_j(z)$ must satsify
\begin{eqnarray}
{\cal O}(z^{-1}) = \mD'_j(z)^{-1} \mJ'_j(z) = e^{-i\hat M_j} \mD_j(e^{i\epsilon}z)^{-1}(V_j(z)^{-1}\mJ'_j(z)),
\end{eqnarray}
we find that $\mJ'_j(z)$ is given by
\begin{eqnarray}
\mJ'_j(z) = V_j(z) \mJ_j(e^{i\epsilon}z) e^{i\Phi'_j}, \hs{10}
\left( \because \mD_j(z)^{-1} \mJ_j(z) = \mathcal O({z}^{-1}) \right),
\end{eqnarray}
where $\Phi'_j \in {\mathfrak gl}(k_j,\C)$ is a certain constant square matrix.
Since the torus action on $(Z_j, \Psi_j, \wt \Psi_j) \rightarrow (Z_j', \Psi_j', \wt \Psi_j') $ must be consistent with the half-ADHM mapping relation
\begin{eqnarray}
%&&\mD_j(e^{i\epsilon}z)\Psi_j e^{-i\epsilon}=\mJ_j(e^{i\epsilon}z)(z{\bf 1}-Z_je^{-i\epsilon}),\quad \wt \mD_j(e^{i\epsilon}z)=\mJ_j(e^{i\epsilon}z)\wt \Psi_j, \\
&& \mD'_j(z)\Psi'_j=\mJ'_j(z)(z{\bf 1}-Z'_j),\quad \wt \mD'_j(z) = \mJ'_j(z)\wt \Psi'_j,
\end{eqnarray}
it follows that 
\begin{eqnarray}
\mD_j(e^{i\epsilon} z)e^{i\hat M_j}\Psi'_je^{-i\Phi'_j}=\mJ_j(e^{i\epsilon}z)  (z{\bf 1}-e^{i\Phi'_j}Z'_je^{-i\Phi'_j}),
\hs{5} 
\wt \mD_j(e^{i\epsilon} z)
=\mJ_j(e^{i\epsilon}z) e^{i\Phi'_j}\wt \Psi'_j e^{-i\wt M_j}.
\end{eqnarray}
Comparing with the original half-ADHM mapping relation
$\mD_j(z) \Psi_j =\mJ_j (z)  (z {\bf 1} - Z_j)$ and 
$\wt \mD_j(z) =\mJ_j(z) \wt \Psi_j$, 
we obtain the torus action on $\{Z_j,\,\Psi_j,\wt \Psi_j\}$ as
\begin{eqnarray}
(Z_j,\,\Psi_j,\wt \Psi_j) \quad \rightarrow \quad (Z'_j,\,\Psi'_j,\wt \Psi'_j)
= \{ e^{-i\Phi_j-i\epsilon}Ze^{i\Phi_j},\, e^{-i\hat M_j}\Psi_j e^{i\Phi_j},\, e^{-i\Phi_j-i\epsilon}\wt \Psi_j e^{i \wt M_j} \},
\end{eqnarray}
where we have defined $\Phi_j=\Phi'_j-\epsilon \bf 1$.
The torus action on $(\mY_i,\,\wt \mY_i)$ can be read off 
from that on $(\Psi_j,\wt \Psi_j)$
\begin{eqnarray}
( \mY_i,\,\wt \mY_i ) \quad \to \quad 
( Z_i',\, \mY_i',\,\wt \mY_i' )
= ( e^{-i\Phi_i-i\epsilon} Z_ie^{i\Phi_i},\, e^{-iM_i }\mY_i e^{i\Phi_i},\, e^{-i\Phi_i-i\epsilon} \wt \mY_i e^{i  M_{i+1}} ).
\end{eqnarray}
The torus action on $(W_i, \wt W_i)$ can be obtained from 
$q_i'(z) \mJ_{i+1}'(z)=\mJ_i'(z) \mW_i'$, 
which can be rewritten as
\begin{eqnarray}
q_i(e^{i\epsilon}z) \mJ_{i+1}(e^{i\epsilon}z)e^{i\Phi_{i+1}+i\epsilon }=\mJ_i(e^{i\epsilon}z) e^{i\Phi_{i}+i\epsilon } \mW_i', 
\end{eqnarray}
Comparing with $q_i(z) \mJ_{i+1}(z)=\mJ_i(z) \mW_i$, 
we find that
\begin{eqnarray}
( \mW_i,\, \wt \mW_i ) \quad \to \quad (\mW_i',\, \wt \mW_i' )  
= ( e^{-i\Phi_{i}} \mW_ie^{i\Phi_{i+1}} ,\, e^{-i\Phi_{i+1}} \wt \mW_i e^{i\Phi_i+i\epsilon} ),
\end{eqnarray}
where we have determined the torus action on $\wt W_i$
so that $\mathcal W$ in Eq.\,\eqref{eq:super_potential} is invariant.

\subsection{Fluctuation around the fixed points}
Next, let us consider the fluctuation around the fixed point configuration discussed in the previous subsection. 
Let us label the fluctuations of $q_i$ around the fixed point as
\beq
\delta q_i = 
\ba{ccc|c} 
\delta \boldsymbol q_i^{11} & \cdots & \delta \boldsymbol q_i^{1j} & \delta \boldsymbol q_i^{1,i+1} \\
\vdots  & \ddots & \vdots & \vdots \\
\delta \boldsymbol q_i^{i1} & \cdots & \delta \boldsymbol q_i^{ii} & \delta \boldsymbol q_i^{i,i+1} 
\ea
\hs{5} \mbox{with} \hs{5}
 \delta \boldsymbol q_i^{jk} = 
\ba{ccc} 
\delta q_i^{(j,1),(k,1)} & \cdots & \delta q_i^{(j,1),(k,n_k)} \\
\vdots & \ddots  & \vdots \\
\delta q_i^{(j,n_j),(k,1)} & \cdots & \delta q_i^{(j,n_j),(k,n_k)} 
\ea,
\eeq
where $\delta q_i^{(j,\alpha),(k,\beta)}$ are polynomials of $z$. 
Similarly, we label the fluctuations of $\xi_i$ as
\beq
\delta \xi_i = \ba{ccc|ccc} 
\delta \boldsymbol \xi_i^{11} & \hs{-2} \cdots & \hs{-2} \delta \boldsymbol \xi_i^{1j} & \delta \boldsymbol \xi_i^{1,i+1} & \hs{-2} \cdots & \hs{-2} \delta \boldsymbol \xi_i^{1,L+1} \\
\vdots  & \hs{-2} \ddots & \hs{-2} \vdots & \vdots & \hs{-2} \ddots & \hs{-2} \vdots \\
\delta \boldsymbol \xi_i^{i1} & \hs{-2} \cdots & \hs{-2} \delta \boldsymbol \xi_i^{ii} & \delta \boldsymbol \xi_i^{i,i+1} & \hs{-2} \cdots & \hs{-2} \delta \boldsymbol \xi_i^{i,L+1}
\ea
\hs{3} \mbox{with} \hs{3}
 \delta \boldsymbol \xi_i^{jk} = 
\ba{ccc} 
\delta \xi_i^{(j,1),(k,1)} & \hs{-2} \cdots & \hs{-2} \delta \xi_i^{(j,1),(k,n_k)} \\
\vdots & \ddots  & \vdots \\
\delta \xi_i^{(j,n_j),(k,1)} & \hs{-2} \cdots & \hs{-2} \delta \xi_i^{(j,n_j),(k,n_k)}
\ea,
\eeq
where $\delta \xi_i^{(j,\alpha),(k,\beta)}$ are polynomials of $z$, which we denote
\beq
\delta \xi_i^{(j,\alpha),(k,\beta)} = \left\{ \begin{array}{ll} \displaystyle \delta \mD_i^{(j,\alpha),(k,\beta)} & \mbox{for $k \leq i$} \\
\delta \wt \mD_i^{(j,\alpha),(k,\beta)} & \mbox{for $k \geq i+1$} \end{array} \right..
\eeq
For a fixed point spacified by the Young tableaux $Y^{(j,\alpha)}=(l_j^{(j,\alpha)},l_{j+1}^{(j,\alpha)},\cdots,l_L^{(j,\alpha)})$, 
$\delta \mD_i^{(j,\alpha),(k,\beta)}$ and $\wt \delta \mD_i^{(j,\alpha),(k,\beta)}$ are polynomial of degree $l_i^{(k,\beta)}-1$ and $l_i^{(j,\alpha)}-1$, respectively. 

Since $\xi_i = q_i \xi_{i+1}$, the fluctuations of $\xi_i$ 
must satisfy the recursive relations
\beq
\delta \xi_i^{(j,\alpha),(k,\beta)} = z^{l_i^{(j,\alpha)} - l_{i+1}^{(j,\alpha)}} \delta \xi_{i+1}^{(j,\alpha),(k,\beta)} + \delta q_i^{(j,\alpha),(k,\beta)} z^{l_{i+1}^{(k,\beta)}}.
\label{eq:recursive relations}
\eeq
This condition gives a constraint to the fluctuations $\delta q_i$. 
To find such constraints, let us write
\beq
\delta \xi_i^{(j,\alpha),(k,\beta)} &=& \sum_{n} c_i^{(j,\alpha),(k,\beta),n} z^n, \\
\delta q_i^{(j,\alpha),(k,\beta)} &=& \sum_{n} a_i^{(j,\alpha),(k,\beta),n} z^n. 
\eeq
Then the recursive relation \eqref{eq:recursive relations}
can be written as
\beq
c_i^{(j,\alpha),(k,\beta),n} = c_{i+1}^{(j,\alpha),(k,\beta), n - l_i^{(j,\alpha)} + l_{i+1}^{(j,\alpha)}} + a_i^{(j,\alpha),(k,\beta), n - l_{i+1}^{(k,\beta)}}. 
\eeq
Solving these equation, we find that
\begin{alignat}{3}
c_i^{(j,\alpha),(k,\beta),n} &= 0, \hs{10} &
&(\mbox{for $k \leq i$ and $n \geq l_i^{(k,\beta)}$})  \\
a_i^{(j,\alpha),(k,\beta), n} &= 0, \hs{10} &&(\mbox{for $k \geq i+2,$ and $n \geq l_i^{(j,\alpha)} - l_{i+1}^{(k,\beta)}$}) \\
a_{l}^{(j,\alpha),(k,\beta),n} &= - c_{l+1}^{(j,\alpha),(k,\beta), n +  l_{l+1}^{(k,\beta)} - l_{l}^{(j,\alpha)} + l_{l+1}^{(j,\alpha)}}, \hs{3} 
& &(\mbox{for $l_l^{(j,\alpha)} - l_{l+1}^{(k,\beta)} \leq n < l_l^{(k,\beta)} - l_{l-1}^{(j,\alpha)} + l_l^{(j,\alpha)} - l_{l+1}^{(k,\beta)}$}).
\end{alignat}
The coefficients $a_i^{(j,\alpha),(k,\beta),p}$ satisfying these conditions can be regarded as the coordinates around the fixed point. 
We can check that the number of the degrees of freedom agrees with the dimension of the moduli space. They transform under the torus action as
\beq
a_i^{(j,\alpha),(k,\beta), p} &\rightarrow&a_i^{(j,\alpha),(k,\beta),p} \, \exp i \left[ \sigma_i^{(j,\alpha)} - \sigma_{i+1}^{(k,\beta)} + p \epsilon \right] \notag \\
&=&  a_i^{(j,\alpha),(k,\beta),p} \, \exp i \left[ m^{(j,\alpha)} - m^{(k,\beta)} - (l_i^{(j,\alpha)} - l_i^{(k,\beta)} + p ) \epsilon \right] , 
\label{eq:torus_action_fluc}
\eeq
It is worth noting that the vortex partition function \eqref{eq:2d_VP} can be obtained from these transformation properties.
Having solved the constraints for the fluctuations of $(\mD,\wt \mD)$, 
we can determine those of the half ADHM data satisfying the constraints through the linearized version of the half-ADHM mapping relation.

\section{Vortex partition function}\label{appendix:VP}
%According to the torus action 
%\begin{eqnarray}
%\mY_j \mW_j\mW_{j+1}\cdots \mW_{i-1} Z_i^{p-1} \quad \to \quad e^{-i Y_j}(\mY_j \mW_j\mW_{j+1}\cdots \mW_{i-1} Z_i^{p-1})e^{i \Phi_i-i(p-1)\epsilon}
%\end{eqnarray}
%around the fixed point given by Eq.(\ref{eq:Kahlerfp}), 
%%\begin{equation}
%%(\mY_j \mW_j\mW_{j+1}\cdots \mW_{i-1} Z_i^{p-1})^{(j;a)_j}{}_{(j';b,p')_i}=\delta^j_{j'} \delta^a_b\delta^p_{p'}
%%\end{equation}
%we easily read $\Phi_i$ as, with $\alpha=(j;a,p)_i$
%\begin{eqnarray}
%(\Phi_i)^\alpha{}_\beta=\delta^\alpha_\beta \sigma_{i,\alpha},\qquad  \sigma_{i,\alpha}=\sigma_{i,(j;a,p)_i}=y_{j,a}+(p-1)\epsilon. 
%\label{eq:fpsigma}
%\end{eqnarray}
%\begin{eqnarray}
%(\delta \mY_i)^{(a)_i}{}_{(j;b,p)_i}\quad {\rm for~} a \in \{ a| l_{ii}^a=0\}:&& y_{j,b}+(p-1)\epsilon-y_{i,a} \nn
% (\delta \wt  \mY_i)^{(j;a,p)_i}{}_{(b)_{i+1}}:&& y_{i+1,b}-y_{j,a} -p \epsilon \nn
% (\delta Z_i)^{(j;a,p)_i}{}_{(j';b,p')_i}\quad {\rm with~} p=l_{ij}^a:&& y_{j',b}-y_{j,a}+(p'-i_{ij}^a-1)\epsilon \nn
%  (\delta W_i)^{(j;a,p)_i}{}_{(j';b,p')_{i+1}}\quad {\rm for~}   2\le  p \le  l_{ij}^a,:&& y_{j',b}-y_{j,a}+(p'-p)\epsilon 
%\end{eqnarray}
%\begin{eqnarray}
%(\wt W_i^\dagger )^{(j;a,p)_{i}}{}_{(j';b,p')_{i+1}}: y_{j',b}-y_{j,a}+(p'-p-1)\epsilon
%\end{eqnarray}
In this appendix, we derive the integration formula for the vortex partition function \eqref{eq:topopartition}. 
The vortex partition function is given by the determinant of 
the torus action on the moduli space \eqref{eq:PF_wegiht},
which can also be obtained from the torus action on the fluctuation \eqref{eq:torus_action_fluc} or that on the half-ADHM matrices. 

Let us first consider the character of the torus action on the fluctuations around the fixed point specified by each Young tableaux \eqref{eq:Young_tab}. 
The contributions of $(\Upsilon_i,\wt \Upsilon_i,Z_i,W_i)$ to the character can be read off from the torus action \eqref{eq:hADHM_torus} as 
\begin{eqnarray}
\chi(\delta \mY_i) &=& \sum_{\alpha \in \lambda_i} \sum_{j=1}^i\sum_{\beta=1}^{n_j}\sum_{p=1}^{l_i^{(j,\beta)}} \exp \left[ i m^{(j,\beta)} + i (p-1) \epsilon - i m^{(i,\alpha)} \right], \nn
\chi(\delta \wt \mY_i) &=& \sum_{j=1}^i\sum_{\alpha=1}^{n_i} \sum_{p=1}^{l_{i}^{(j,\alpha)}} \sum_{\beta=1}^{n_{i+1}} \exp \left[ im^{(i+1,\beta)} - i m^{(j,\alpha)} - i p \epsilon  \right], \nn
\chi(\delta Z_i) &=& \sum_{j=1}^i \sum_{\alpha \in \bar \lambda_i} \sum_{k=1}^i \sum_{b=1}^{n_{k}} \sum_{q=1}^{l_{i}{(k,\beta)}} 
\exp \left[ im^{(k,\beta)}-im^{(j,\alpha)}+i(q-l_{i}^{(j,\alpha)}-1) \epsilon \right], \nn
\chi(\delta \mW_i)&=& \sum_{j=1}^i \sum_{\alpha=1}^{n_j} \sum_{p=2}^{l_i^{(j,\alpha)}} \sum_{k=1}^{i+1} \sum_{\beta=1}^{n_{k}} \sum_{q=1}^{l_{i+1}^{(j,\beta)}}
\exp \left[ im^{(k,\beta)}-im^{(j,\alpha)}+i(q-p) \epsilon \right].
\end{eqnarray}
where $\lambda_i = \{\alpha \, | \, 1\le \alpha \le n_i, l_i^{(i,\alpha)}=0 \}$ and $\bar \lambda_i = \{ \alpha \, | \, 1\le \alpha \le n_i, l_i^{(i,\alpha)} \not= 0 \}$. 
Note that contributions eliminated by the constraints (\ref{eq:constraint}) must be removed from these characters. 
We can see from the ``superpotential term" in Eq.\,\eqref{eq:super_potential} that the contributions eliminated by the constraints can be identified with that of $\wt W_i^\dagger$
\begin{eqnarray}
\chi(\wt \mW_i^\dagger )= \sum_{j=1}^i \sum_{\alpha=1}^{n_j} \sum_{p=1}^{l_i^{(j,\alpha)}} \sum_{k=1}^{i+1} \sum_{\beta=1}^{n_{k}} \sum_{q=1}^{l_{i+1}^{(k,\beta)}}
\exp \left[ im^{(k,\beta)}-im^{(j,\alpha)}+i (q-p-1)\epsilon \right].
\end{eqnarray}
In total, the character is given by
\begin{eqnarray}
\chi_{\rm fp} &=&
\left[\sum_{i=1}^L \left(\chi(\delta \mY_i)+\chi(\delta \wt \mY_i)+\chi(\delta Z_i)\right)+\sum_{i=1}^{L-1} 
\left(\chi(\delta W_i)\right)\right]_{\rm constrained}\nn
&=&\sum_{i=1}^L \left(\chi(\delta \mY_i)+\chi(\delta \wt \mY_i)+\chi(\delta Z_i)\right)+\sum_{i=1}^{L-1} 
\left(\chi(\delta W_i)-\chi(\wt W_i^\dagger)\right)\nn
&=&\sum_{i=1}^L \left(\chi(\mY_i)+\chi( \wt \mY_i)+\chi( Z_i)-\chi(\mU_i)\right)+\sum_{i=1}^{L-1} 
\left(\chi(W_i)-\chi(\wt W_i^\dagger)\right)\nn
&=& \sum_{i=1}^L \left(
\tr[e^{i\Phi_i}] \tr[e^{-iM_i}]+e^{-i\epsilon}\tr[e^{-i\Phi_i}]\tr[e^{iM_{i+1}}]+(e^{-i\epsilon}-1)\tr[e^{i\Phi_i}]\tr[e^{-i\Phi_{i}}]\right)\nn
&&\qquad {}+\sum_{i=1}^{L-1}(1-e^{-i\epsilon}) \tr[e^{i\Phi_{i+1}}]\tr[e^{-i\Phi_i}],
\end{eqnarray}
where we have rewritten the characters of the fluctuations ($\chi(\delta \Upsilon_i), \cdots$) into those of the matrices ($\chi(\Upsilon_i), \cdots$) by subtracting the contributions eliminated by the gauge $GL(k_i,\mathbb C)$ action $\chi(\mU_i)=\tr[e^{i \Phi_i}]\tr[e^{-i\Phi_i}]$.
From the character, we can read off the determinant as
\beq
\chi_\sigma = \sum_{a=1}^d e^{i \omega_a}  
~ \Longrightarrow ~ \frac{1}{\det \mathcal M_\sigma} = \prod_{a=1}^d \frac{1}{\omega_a},
\eeq
where $d$ is the dimension of the moduli space. 
Furthermore, by using the relation  
\begin{eqnarray}
f(\phi_\sigma) = \oint \frac{d \phi}{2\pi i} \frac{1}{\phi-\phi_\sigma} f(\phi),
\end{eqnarray}
the determinant can be rewritten into a contour integral as 
\beq
\frac{1}{\det \mathcal M_\sigma} = \oint_{C_\sigma} 
\prod_{i=1}^L \prod_{r=1}^{k_i} \frac{d \phi_i^r}{2\pi i \epsilon} 
 \left[\prod_{i=1}^L {\cal Z}_i^{\mY \wt \mY} {\cal Z}_i^{Z \Phi} \prod_{i=1}^{L-1} {\cal Z}_i^{W \wt W} \right].
\end{eqnarray}
where $ {\cal Z}_i^{\mY\wt \mY}$, ${\cal Z}_i^{Z \Phi}$ and $ {\cal Z}_i^{W \wt W}$ are given by
\begin{eqnarray}
{\cal Z}_i^{\mY \wt \mY} &\equiv&
\prod_{r=1}^{k_i} \left[\prod_{\alpha=1}^{n_i} \frac{1}{\phi_i^r-m^{(i,\alpha)}}\prod_{\beta=1}^{n_{i+1}}\frac{1}{m^{(i+1,\beta)}-\phi_i^r-\epsilon} \right], \\
{\cal Z}_i^{Z \Phi} &\equiv& \prod_{r=1}^{k_i} \prod_{s=1}^{k_i} \hs{-1} {\phantom{\bigg|}}' \frac{\phi_i^r-\phi_i^s}{\phi_i^r-\phi_i^s-\epsilon}, \\
{\cal Z}_i^{W \wt W} &\equiv& 
\prod_{r=1}^{k_i} \prod_{s=1}^{k_{i+1}}\frac{\phi_{i+1}^s-\phi_i^r-\epsilon}{\phi_{i+1}^s-\phi_i^r}.
\end{eqnarray}
where $\prod'$ indicates that the factors with $\alpha=\beta$ are omitted. The integration contour $C_\sigma$ is the path surrounding the poles corresponding to the fixed point values of $\phi$. 
Since the integrand is common for all the fixed points, 
the vortex partition function can be obtained 
by integrating the same integrand along the contour surrounding all the poles corresponding to the fixed points. 
We can check that such contour is given by 
$C_{i}^{\pm}$ (Fig.\,\ref{fig:contours_duality}) as follows. 

\begin{figure}[h!]
\centering
\fbox{
\includegraphics[width=82mm, bb = 40 0 800 585]{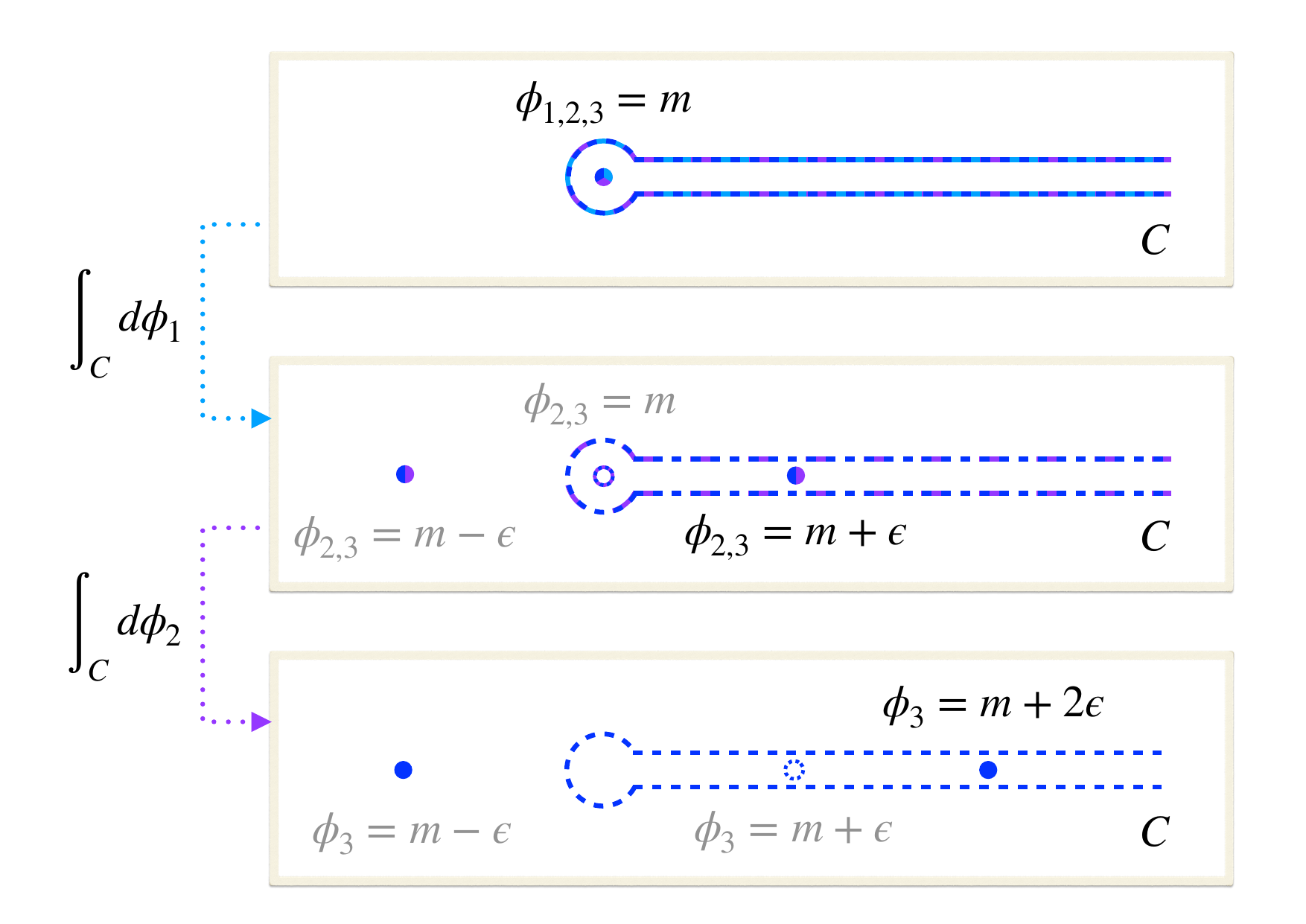}}
\caption{Integration contours for the vortex partition function in the Abelian gauge theory with a single charged scalar field ($L=1, N_1=1, N_2=0, k=3$).}
\label{fig:contours_duality1}
\end{figure}
\begin{figure}[h!]
\centering
\fbox{
\includegraphics[width=100mm, bb = 30 0 820 370]{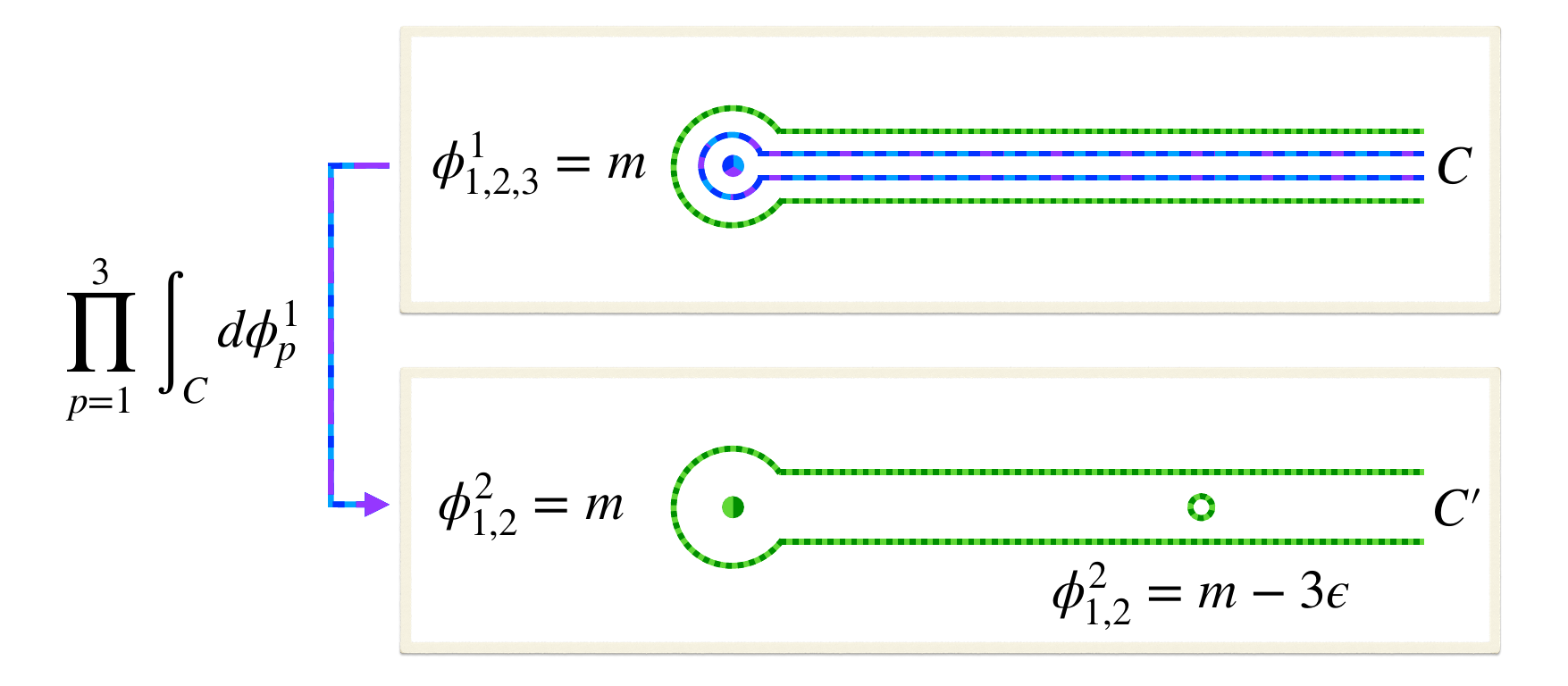}}
\caption{Integration contours in the case of $L=2, N_1=1, N_2=1, k_1=3, k_2=2$).}
\label{fig:contours_duality2}
\end{figure}
 
First, let us consider the case of $L=1, N_1=1, N_2=0, k=3$
(Fig.\,\ref{fig:contours_duality1}). 
In this case, the contour $C_1^{+}$ is the path surrounding the pole of $\mathcal Z_i^{\mY \wt \mY}$ located at $\phi = m~(=m^{(1,1)})$. 
If we first integrate $\phi_1$, 
the residue at the pole $\phi = m$ gives the poles at $\phi = m \pm \epsilon$ and the pole at $\phi=m$ is eliminated 
due to the factor $\mathcal Z_1^{Z\Phi}$.
Then, the integration of $\phi_2$ is 
given by the residue at the pole $\phi = m + \epsilon$.
which has a pole at $\phi = m + 2 \epsilon$
whose residue gives the final result of the integration. 
In this way, we can show that $C_1^+$ is the contour
surrounding all the poles corresponding to the fixed points. 

We can generalize the discussion to the case of $L > 1$. 
the contour $C_1^+$ can be decomposed into the paths 
surrounding the poles at $\phi = m^{(1,\alpha)}$. 
Then, we can repeat the same discussion as in the case of $L=1$
to show that the integration of $\phi_1^{(1,\alpha,p)}~(\alpha=1,\cdots,n_1,~p=1,\cdots,l_1^{(1,\alpha)})$ is given by the residues at the poles $\phi_1^{(1,\alpha,p)} = m^{(1,\alpha)} + (p-1) \epsilon$.
The only new ingredient for $L>1$ is the factors $\mathcal Z_i^{W \wt W}$, which have zeros at $\phi = m^{(1,\alpha)} + l_1^{(1,\alpha)} \epsilon$ (see Fig.\,\ref{fig:contours_duality2}). 
Due to these zeros the integrations of $\phi_i^{(1,\alpha,p)}~(i>1)$, 
which are again given by the residues at $\phi_i^{(1,\alpha,p)} = m^{(1,\alpha)} + (p-1) \epsilon$, 
vanish if $l_{i}^{(1,\alpha)} > l_{1}^{(1,\alpha)}$.
Repeating this argument, we can show that 
the contributions are nonzero only when $l_i^{(j,\alpha)} < l_{i'}^{(j,\alpha)}$ for $i > i'$. 
In other words, the nonzero contributions can be 
classified by the same Young tableaux 
corresponding to the fixed points. 
In this way, we can show that $C_i^+$ are the contours
surrounding all the poles corresponding to the fixed points.

\bibliographystyle{ieeetr}
\bibliography{flag_lump.bib}

\end{document}